\newcommand*{\addFileDependency}[1]{
  \typeout{(#1)}
  \@addtofilelist{#1}
  \IfFileExists{#1}{}{\typeout{No file #1.}}
}
\newtheorem{theorem}{Theorem}
\newtheorem{condition}{Condition}
\newtheorem{example}[theorem]{Example}
\newtheorem{lemma}[theorem]{Lemma}
\newtheorem{proposition}[theorem]{Proposition}
\newtheorem{remark}{Remark}
\newcommand{\Y}{Y}
\newcommand{\M}{M}
\renewcommand{\S}{S}
\newcommand{\X}{\boldsymbol{X}}
\newcommand{\DE}{\tau_{\S}}
\newcommand{\nM}{J}
\newcommand{\betaX}{\boldsymbol{\beta}_{\X}}
\newcommand{\betaM}{\beta_M}
\newcommand{\betaMn}{\beta_{M,n}}
\newcommand{\hatbetaM}{\hat{\beta}_M}
\newcommand{\hatbetaMn}{\hat{\beta}_{M,n}}
\newcommand{\hatbetaMgn}{\hat{\beta}_{M,\perp',n}}
\newcommand{\alphaS}{\alpha_{\S}}
\newcommand{\alphaSn}{\alpha_{\S,n}}
\newcommand{\hatalphaS}{\hat{\alpha}_{\S}}
\newcommand{\hatalphaSn}{\hat{\alpha}_{\S,n}}
\newcommand{\hatalphaSgn}{\hat{\alpha}_{\S,\perp,n}}
\newcommand{\alphaX}{\boldsymbol{\alpha}_{\X}}
\newcommand{\hatalphaXn}{\hat{\boldsymbol{\alpha}}_{\X,n}}
\newcommand{\eY}{\epsilon_Y}
\newcommand{\eM}{\epsilon_M}
\newcommand{\hatenY}{\hat{\epsilon}_{Y,n}}
\newcommand{\hatenM}{\hat{\epsilon}_{M,n}}
\newcommand{\error}{\epsilon}
\newcommand{\hatenYi}{\hat{\epsilon}_{Y,n,i}}
\newcommand{\hatenMi}{\hat{\epsilon}_{M,n,i}}
\newcommand{\hatenMperpboot}{\hat{\epsilon}_{M,\perp,n}}
\newcommand{\hatenYperpboot}{\hat{\epsilon}_{Y,\perp', n}}
\newcommand{\hatenMperpbooti}{\hat{\epsilon}_{M,\perp,n, i}}
\newcommand{\hatenYperpbooti}{\hat{\epsilon}_{Y,\perp', n, i}}
\newcommand{\hatsigmaalphaperpboot}{\hat{\sigma}_{\alphaS ,\perp,n}^*}
\newcommand{\hatsigmabetaperpboot}{\hat{\sigma}_{\betaM, \perp', n}^*}
\newcommand{\Sperp}{\S_{\perp}}
\newcommand{\Sperpi}{\S_{\perp,i}}
\newcommand{\hatSperp}{\hat{\S}_{\perp}}
\newcommand{\hatSperpi}{\hat{\S}_{\perp,i}}
\newcommand{\Sperpboot}{\S_{\perp}^*}
\newcommand{\Sperpbooti}{\S_{\perp,i}^*}
\newcommand{\Mperp}{\M_{\perp}}
\newcommand{\Mperpi}{\M_{\perp,i}}
\newcommand{\hatMperp}{\hat{\M}_{\perp}}
\newcommand{\hatMperpi}{\hat{\M}_{\perp,i}}
\newcommand{\Mperpboot}{\M_{\perp}^*}
\newcommand{\Mperpbooti}{\M_{\perp,i}^*}
\newcommand{\Mperpp}{\M_{\perp'}}
\newcommand{\Mperppi}{\M_{\perp',i}}
\newcommand{\hatMperpp}{\hat{\M}_{\perp'}}
\newcommand{\hatMperppi}{\hat{\M}_{\perp',i}}
\newcommand{\Mperppboot}{\M_{\perp'}^*}
\newcommand{\Mperppbooti}{\M_{\perp',i}^*}
\newcommand{\Yperppi}{\Y_{\perp',i}}
\newcommand{\hatYperpp}{\hat{\Y}_{\perp'}}
\newcommand{\hatYperppi}{\hat{\Y}_{\perp',i}}
\newcommand{\Yperppboot}{\Y_{\perp'}^*}
\newcommand{\Yperppbooti}{\Y_{\perp',i}^*}
\newcommand{\ZZSbootn}{\mathbb{Z}_{\S,n}^*}
\newcommand{\ZZMbootn}{\mathbb{Z}_{\M,n}^*}
\newcommand{\ZZSperpbootn}{\mathbb{Z}_{\S, \perp, n}^*}
\newcommand{\ZZMperpbootn}{\mathbb{Z}_{\M, \perp', n}^*}
\newcommand{\VVSn}{\mathbb{V}_{\S,n}}
\newcommand{\VVMn}{\mathbb{V}_{\M,n}}
\newcommand{\VVSbootn}{\mathbb{V}_{\S,n}^*}
\newcommand{\VVMbootn}{\mathbb{V}_{\M,n}^*}
\newcommand{\VVSperpbootn}{\mathbb{V}_{\S, \perp, n}^*}
\newcommand{\VVMperpbootn}{\mathbb{V}_{\M, \perp', n}^*}
\newcommand{\sigmaeY}{\sigma_{\eY}}
\newcommand{\sigmaeM}{\sigma_{\eM}}
\newcommand{\sigmaalpha}{\sigma_{\alphaS}}
\newcommand{\sigmabeta}{\sigma_{\betaM}}
\newcommand{\hatsigmaalpha}{\hat{\sigma}_{\alphaS}}
\newcommand{\hatsigmabeta}{\hat{\sigma}_{\betaM}}
\newcommand{\hatsigmaalphan}{\hat{\sigma}_{\alphaS,n}}
\newcommand{\hatsigmabetan}{\hat{\sigma}_{\betaM,n}}
\newcommand{\bbeta}{b_{\beta}} 
\newcommand{\balpha}{b_{\alpha}} 
\newcommand{\hatmaxpn}{\hat{\theta}_n}
\newcommand{\maxpnull}{\theta_0}
\newcommand{\maxpnulln}{\theta_{0,n}}
\newcommand{\myindicator}{\mathrm{I}}
\newcommand{\realnumber}{\mathbb{R}}
\newcommand{\mytrans}{\top}
\newcommand{\myident}{I} 
\newcommand{\RNum}[1]{\uppercase\expandafter{\romannumeral #1\relax}}
\title[]{Adaptive Bootstrap Tests for Composite Null Hypotheses in the Mediation Pathway Analysis}
\author[]{Yinqiu He}
\address{University of Wisconsin, Madison, USA.}
\author[]{Peter X.-K. Song, and Gongjun Xu\footnote{Correspondence.  Gongjun Xu, Department of Statistics, University of Michigan, Ann Arbor, USA. \href{gongjun@umich.edu}{Email: gongjun@umich.edu}}}
\address{University of Michigan, Ann Arbor, USA.}
\begin{document}

\begin{abstract}
Mediation analysis aims to assess if, and how, a certain exposure influences an outcome of interest through intermediate variables. This problem has recently gained a surge of attention due to the tremendous need for such analyses in scientific fields. Testing for the mediation effect is greatly challenged by the fact that the underlying null hypothesis (i.e. the absence of mediation effects) is composite. Most existing mediation tests are overly conservative and thus underpowered. 
To overcome this significant methodological hurdle, we develop an adaptive bootstrap testing framework that can accommodate different types of composite null hypotheses in the mediation pathway analysis. Applied to the product of coefficients (PoC) test and the joint significance (JS)  test, our adaptive testing procedures provide type \RNum{1} error control under the composite null, resulting in much improved statistical power compared to existing tests. Both theoretical properties and numerical examples of the proposed methodology are discussed. 	 
\end{abstract}

\keywords{Mediation analysis, Structural equation model, Composite hypothesis, Bootstrap}

\section{Introduction}
Mediation analysis plays a crucial role in investigating
the underlying mechanism or pathway between  an exposure and an outcome through an intermediate variable called a  mediator  \citep{mackinnon2008introduction,vanderweele2015explanation}. 
It decomposes the ``total effect'' of an exposure on an outcome into an indirect effect that is through a given mediator and a direct effect, not through the mediator.
The former holds the key to uncovering the exposure-outcome mechanism and is often known as the mediation effect. 
The mediation effect was initially studied under structural equation models (SEMs) in social sciences \citep{sobel1982asymptotic,baron1986moderator} and has  been given formal causal definitions  \citep{robins1992identifiability,peardirect2001,imai2010identification} within the counterfactual framework  \citep{imbens2015causal}. 
Examining 
the presence or absence of the mediation effect  
can facilitate a deeper understanding of the underlying causal pathway from the exposure to the outcome and can give essential  insights into 
intervention consequences, e.g.,   manipulating the mediator to change the exposure-outcome mechanism. 
As a result, it is of interest to apply  mediation analysis
in many scientific fields, 
such as psychology \citep{mackinnon2009current,valeri2013mediation}, 
genomics \citep{zhao2014more,huang2016hypothesis,huang2018joint,guo2022high}, and 
 epidemiology \citep{barfield2017testing,fulcher2019estimation}, among others.

To analyze the mediation effect, one classical setting  models the relationship between the exposure, the potential mediator, and the outcome as a directed acyclic graph; see Figure \ref{fig:model}. 
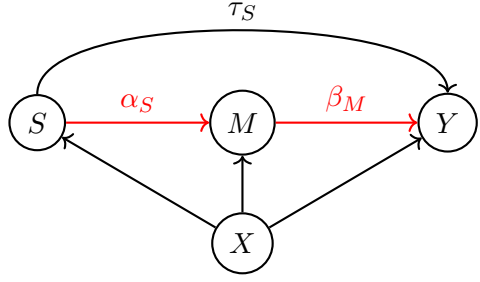
\begin{figure}[!tbhp]
\begin{center}
\vspace{0.5em}
\begin{tikzpicture}
\node[shape=circle,draw, thick] (S) at (-2.7,0) {$S$};   
\node[shape=circle,draw, thick] (M) at (0,0) {$M$}; 
\node[shape=circle,draw, thick] (Y) at (2.7,0) {$Y$}; 
\node[shape=circle,draw, thick] (X) at (0, -1.6) {$X$}; 

\draw[->, thick, red] (S.east) -- node[midway,above] {{$\alpha_S$}} (M.west);
\draw[->, thick, red] (M.east) -- node[midway,above] {{$\beta_M$}} (Y.west);
\draw[->, thick] (X.north) --  (M.south);
\draw[->, thick] (X) --  (Y);
\draw[->, thick] (X) --  (S);
\draw[->, thick] (S) .. controls +(up:15.5mm) and + (up:14.5mm)  .. node[midway,above] {{$\tau_S$}}  (Y);
\end{tikzpicture}
\vspace{-1em}
\end{center}
\caption{Directed Acyclic Graph for Mediation Analysis. The exposure is $\S$; the mediator is $\M$; the outcome is $\Y$; the potential confounders are $X$.}\label{fig:model}
\end{figure}
Specifically, 
let $\alphaS$ parametrize the causal effect of the exposure on the mediator, and $\betaM$ parametrize the causal effect of the mediator on the outcome conditioning on the exposure. 
Then in the classical linear SEM   \citep{sobel1982asymptotic,baron1986moderator}, 
the causal mediation effect is proportional to  $\alphaS \betaM$
under suitable identification assumptions  \citep{imai2010general}.  
More generally, 
this product expression $\alphaS \betaM$ may also appear in the  causal mediation effect  under many other models, 
such as generalized linear models and survival analysis models \citep{vanderweele2010odds,vanderweele2011causal,huang2016mediation}. 
Therefore, 
the important scientific question of whether or not the mediation effect is absent can be formulated as the hypothesis testing problem $H_0: \alphaS \betaM = 0$ against $H_A: \alphaS\neq 0$ and $\betaM \neq 0$ \citep{mackinnon2008introduction}. 
Note that $H_0: \alphaS\betaM=0$ holds if and only if $\alphaS=0$ or $\betaM = 0$,
corresponding to two parameter sets $\mathcal{P}_{\alpha}=\{(\alphaS,\betaM) : \alphaS = 0\}$
and $\mathcal{P}_{\beta}=\{(\alphaS,\betaM) : \betaM = 0\}$, respectively. 
It follows that the parameter set of $H_0:\alphaS\betaM=0$ is the union of two sets  $\mathcal{P}_{\alpha}$ and $\mathcal{P}_{\beta}$. 
We visualize  $\mathcal{P}_{\alpha}$, $\mathcal{P}_{\beta}$, and their \textit{union}  $\mathcal{P}_{\alpha}\cup \mathcal{P}_{\beta}$
in Figures \ref{fig:illusalphas0}--\ref{fig:visualofcompositenull}, respectively.

\begin{figure}[!htbp]
\begin{subfigure}{0.24\textwidth}
\begin{tikzpicture}
\pgfplotsset{width=5cm}
  \begin{axis}[grid=none,ymin=-1,ymax=1,xmax=1,xmin=-1,xticklabel=\empty,yticklabel=\empty,
               minor tick num=0, 
               axis line style = {line width=1pt, draw=gray!45}, 
               grid style={line width=.1pt, draw=gray!45},
               axis lines = middle, 
               major tick style={draw=none}, 
               xlabel=$\alphaS$,ylabel=$\betaM$,label style =
               {at={(ticklabel cs:1.18)}, font=\footnotesize}]
 \addplot[-stealth, green!60!black, line width=1.7pt] coordinates {(0,-1) (0,1)}; 
 \addplot[-stealth, line width=0.5pt, gray] coordinates { (0.24,0.44) (0.02,0.44)};
   \addplot[-, line width=.5pt, draw=gray!45] coordinates { (-1,-1) (-1,1)};
   \addplot[-, line width=.5pt, draw=gray!45] coordinates { (-1,-1) (1,-1)};
   \addplot[-, line width=.5pt, draw=gray!45] coordinates { (1,-1) (1,1)};
   \addplot[-, line width=.5pt, draw=gray!45] coordinates { (1,1) (-1,1)};

 \node[green!60!black, align=center, thick] at (axis cs: 0.61,0.46) {\footnotesize $\alphaS=0$};

  \end{axis} 
\end{tikzpicture}
\caption{$\alphaS=0$}\label{fig:illusalphas0}
\end{subfigure}
\hspace{1pt}
\begin{subfigure}{0.24\textwidth}
\begin{tikzpicture}
\pgfplotsset{width=5cm}
  \begin{axis}[grid=none,ymin=-1,ymax=1,xmax=1,xmin=-1,xticklabel=\empty,yticklabel=\empty,
               minor tick num=0,
               axis line style =  {line width=1pt, draw=gray!45}, 
               grid style={line width=.1pt, draw=gray!45},
               axis lines = middle, 
               minor tick length=0.1pt,
               major tick style={draw=none}, 
               xlabel=$\alphaS$,ylabel=$\betaM$,label style =
               {at={(ticklabel cs:1.18)}, font=\footnotesize}]
 \addplot[-stealth, blue!85!black, line width=1.6pt] coordinates
           {(-1,0) (1,0)}; 
 \addplot[-stealth, line width=0.5pt, gray] coordinates { (0.58,0.34) (0.58,0.05)};
   \addplot[-, line width=.5pt, draw=gray!45] coordinates { (-1,-1) (-1,1)};
   \addplot[-, line width=.5pt, draw=gray!45] coordinates { (-1,-1) (1,-1)};
   \addplot[-, line width=.5pt, draw=gray!45] coordinates { (1,-1) (1,1)};
   \addplot[-, line width=.5pt, draw=gray!45] coordinates { (1,1) (-1,1)};

 \node[blue!85!black, align=center, thick] at (axis cs:  0.58,0.46) {\footnotesize $\betaM=0$}; 
  \end{axis} 
\end{tikzpicture}
\caption{$\betaM=0$}\label{fig:illusbetam0}
\end{subfigure} 
\hspace{1pt}
\begin{subfigure}{0.24\textwidth}
\begin{tikzpicture}
\pgfplotsset{width=5cm}
  \begin{axis}[grid=none,ymin=-1,ymax=1,xmax=1,xmin=-1,xticklabel=\empty,yticklabel=\empty,
               minor tick num=0, 
                {line width=1pt, draw=gray!45}, 
               grid style={line width=.1pt, draw=gray!45},
               axis lines = middle, 
               major tick style={draw=none}, 
               xlabel=$\alphaS$,ylabel=$\betaM$,label style =
               {at={(ticklabel cs:1.18)}, font=\footnotesize}]
   \addplot[-, line width=.5pt, draw=gray!45] coordinates { (-1,-1) (-1,1)};
   \addplot[-, line width=.5pt, draw=gray!45] coordinates { (-1,-1) (1,-1)};
   \addplot[-, line width=.5pt, draw=gray!45] coordinates { (1,-1) (1,1)};
   \addplot[-, line width=.5pt, draw=gray!45] coordinates { (1,1) (-1,1)};

 \addplot[-stealth, red, line width=1.5pt] coordinates {(-1,0) (1,0)}; 
 \addplot[-stealth, line width=0.5pt, gray] coordinates { (0.6,0.45) (0.6,0.05)};
 \node[red, align=center, thick] at (axis cs:  0.52,0.7) {\footnotesize $\alphaS\betaM=0$};
  \addplot[-stealth, line width=0.5pt, gray] coordinates {(0.38,0.44) (0.02,0.3)};  
  \addplot[-stealth, red, line width=1.5pt] coordinates {(0,-1) (0,1)}; 
  
  \end{axis} 
\end{tikzpicture}
\caption{$\alphaS\betaM=0$} \label{fig:visualofcompositenull}
\end{subfigure}
\hspace{1pt}
\begin{subfigure}{0.24\textwidth}
\begin{tikzpicture}
\pgfplotsset{width=5cm}
  \begin{axis}[grid=none,ymin=-1,ymax=1,xmax=1,xmin=-1,xticklabel=\empty,yticklabel=\empty,
               minor tick num=0, 
               axis line style =  {line width=1pt, draw=gray!45}, 
               grid style={line width=.1pt, draw=gray!45},
               axis lines = middle, 
               major tick style={draw=none}, 
               xlabel=$\alphaS$,ylabel=$\betaM$,label style =
               {at={(ticklabel cs:1.18)}, font=\footnotesize}] 
   \addplot[-, line width=.5pt, draw=gray!45] coordinates { (-1,-1) (-1,1)};
   \addplot[-, line width=.5pt, draw=gray!45] coordinates { (-1,-1) (1,-1)};
   \addplot[-, line width=.5pt, draw=gray!45] coordinates { (1,-1) (1,1)};
   \addplot[-, line width=.5pt, draw=gray!45] coordinates { (1,1) (-1,1)};

 \addplot[-stealth, line width=0.5pt, gray] coordinates { (0.45,0.37) (0.11,0.11)};
 \node[orange, align=center, thick] at (axis cs: 0.5,0.79) {\footnotesize $\alphaS=0$};
  \node[orange, align=center, thick] at (axis cs: 0.5,0.53) {\footnotesize $\betaM=0$};
 
 \node[regular polygon,regular polygon sides=3, orange, draw, fill=orange, inner sep=1.7pt] at (axis cs: 0,0) {};

  \end{axis} 
\end{tikzpicture}
\caption{$\alphaS=\betaM=0$}\label{fig:visualofsingleton}
\end{subfigure}
\caption{Visualization of parameter spaces of $(\alphaS, \betaM)$ under different constraints}
\label{fig:visualizationoftypesofnull}
\end{figure}
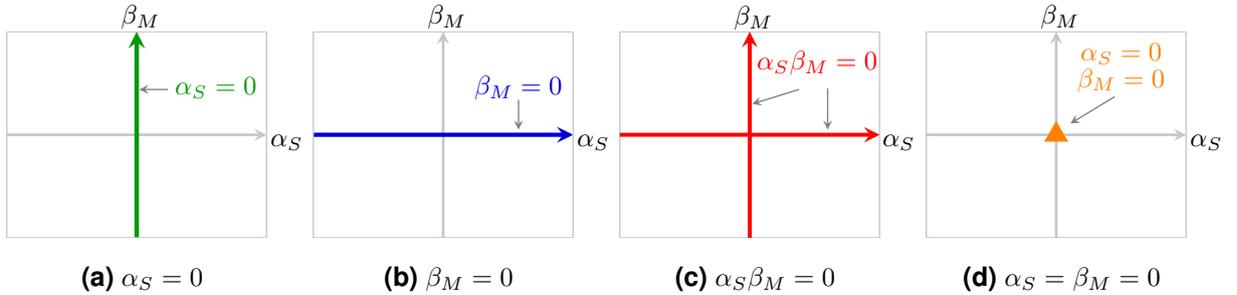

To test $H_0:\alphaS\betaM=0$,   
a broad class of methods 
 is  based on the  product of coefficients (PoC)
 $\hatalphaSn \hatbetaMn$, where $\hatalphaSn$ and $\hatbetaMn$ denote the sample estimates 
 of parameters $\alphaS$ and $\betaM$, respectively. 
One popular  PoC method 
is Sobel's test  \citep{sobel1982asymptotic}, which is a Wald-type test and approximates the variance of $\hatalphaSn \hatbetaMn$ by the first-order Delta method. 
In addition,  the joint significance (JS) test \citep{fritz2007required}, also known as the MaxP test, is another widely used test that rejects the $H_0$ of no mediation effect 
if both $\hatalphaSn$ and $\hatbetaMn$ pass a certain cutoff of statistical significance. \cite{liu2020large} pointed out that 
the MaxP test is a kind of likelihood ratio test under  normality assumptions.

Although there are 
various procedures available for testing mediation effects, 
properly controlling the type \RNum{1} error remains a challenge due to the  intrinsic structure of the null parameter space. 
In particular,  $H_0: \alphaS\betaM=0$ is  composed of three different parameter cases: 
(i) $\alphaS = 0$ and $ \betaM \neq 0$; 
(ii) $\alphaS \neq 0$ and $\betaM = 0$;
(iii) $\alphaS = 0$ and $\betaM = 0$.  
Case (iii), illustrated in Figure \ref{fig:visualofsingleton}, 
 is a singleton given by the \textit{intersection} set $\mathcal{P}_{\alpha}\cap \mathcal{P}_{\beta}$.  
 Under case (iii), both parameters $\alphaS$ and $\betaM$ are fixed at $0$, 
 whereas cases (i) and (ii) have one fixed parameter and the other parameter to be estimated.
This intrinsic difference leads to distinct asymptotic behaviors of test statistics.
Since the underlying truth is typically unknown in practice, it is difficult to obtain  proper $p$-values under the composite null hypothesis.  

Particularly, in the popular Sobel's test and the MaxP test, the asymptotic distributions of the test statistics under cases (i) and (ii) are known to be different from those under case (iii). These tests have been shown to be overly conservative in case (iii), because statistical inference is carried out according to the asymptotic distributions determined in cases (i) and (ii) \citep{mackinnon2002comparison,fritz2007required}. 
This issue has gained a surge of attention 
in recent genome-wide  epidemiological studies, 
where for the majority of omics markers, 
it holds that $\alphaS =\betaM = 0$, and 
the classical tests are generally underpowered \citep{barfield2017testing}.  
 Some recent work    \citep{liu2020large,dai2020multiple,du2022methods}  utilized the relative proportions of the cases (i)--(iii) in the population, but they rely on accurate estimation of the true proportions.  
\cite{huang2019genome,huang2019variance}  adjusted the  composition of $H_0:\alphaS\betaM=0$ through the variances of test statistics but required that the non-zero 
coefficients are weak and sparse, which can be violated when the sample size is large. 
Another line of related research \citep{sampson2018fwer,djordjilovic2019global,djordjilovic2020optimal,derkach2020grouptesting} used a screening step
to control the family-wise error rate or the false discovery rate for a large group of hypotheses,
but they did not directly provide proper  $p$-values for each of the  composite null hypotheses. 
\cite{van2021nearly} proposed to  construct a critical region for testing that can nearly    control the type \RNum{1} error at one prespecified significance level.   \label{pageref:citevan} 
 \cite{miles2021optimal}  construct a rejection region that can achieve type \RNum{1} error control at significance level $\omega$ with $\omega^{-1}$ a positive integer.   \label{pageref:miles}
Despite these developments, the fundamental issue of correctly characterizing the distributions of test statistics  to obtain 
well-calibrated $p$-values under a composite null hypothesis remains an important challenging  problem in the current literature of mediation analyses.

In this paper, we develop a new hypothesis testing methodology to  address the challenge of obtaining \textit{uniformly distributed} $p$-values under the composite null hypothesis 
 of no mediation effect.  
Particularly, we propose an adaptive bootstrap procedure that can flexibly accommodate different types of null hypotheses. 
In the current literature, 
the nonparametric bootstrap is   directly applied to the PoC test statistic  $\hatalphaSn\hatbetaMn$,
which has been, unfortunately, found numerically to be
 overly conservative when $\alphaS=\betaM=0$  \citep{fritz2007required,barfield2017testing}. 
This paper unveils analytically the reason for the failure of the conventional  nonparametric bootstrap method, which stems from non-regular limiting behaviors 
of the PoC test statistic at the neighborhood of  $(\alphaS, \betaM)=(0,0)$. 
To overcome the non-regularity near $(\alphaS, \betaM)=(0,0)$,
we derive an explicit representation of the asymptotic distribution of the PoC test statistic through a local model, 
and perform a consistent bootstrap estimation by incorporating suitable thresholds.
In addition, for the JS test, 
we show that the conventional nonparametric bootstrap also fails to control type I error properly, which can be fixed by an adaptive bootstrap test similar to the procedure of the PoC test.
For both the PoC test and the JS test, 
the proposed methods can  circumvent the nonstandard  limiting behaviors of the test statistics 
and therefore uniformly adapt to different types of null cases  of no mediation effect.

The structure of this paper is as follows. In Section  \ref{sec:review}, 
we briefly review the basic problem setting and several popular testing methods in the literature. 
In Section \ref{sec:abt}, we introduce the adaptive bootstrap method that can be applied to the representative PoC and JS tests under classical linear SEMs. 
In Section \ref{sec:sim}, we conduct extensive simulation studies to compare the finite sample performances of the proposed tests with  popular counterparts. 
In Section \ref{sec:datanalysis}, we apply our adaptive bootstrap tests to investigate the mediation pathways of metabolites on the association of  environmental exposures with a health outcome. 
In Section \ref{sec:extendmodel}, we develop extensions of the  adaptive bootstrap, including joint testing of multivariate mediators and testing mediation effects under non-linear models.  
We conclude the paper and discuss interesting extensions in Section \ref{sec:discussion}. 

\textit{Notation}. For two sequences of real numbers $\{a_n\}$ and $\{b_n\}$,
we write $a_n=o(b_n)$ if $\lim_{n\to \infty} a_n/b_n=0$. We let $\xrightarrow{d}$ denote convergence in distribution.  
We let $\overset{d^*}{\leadsto} $ denote   bootstrap consistency relative to the Kolmogorov-Smirnov distance; 
see an introduction of this consistency notion in Section 23 of \cite{van2000asymptotic}. To ensure clarity, we also provide the definitions of all the convergence modes in Section \ref{sec:defnotationconv} of the Supplementary  Material.\label{page:notationconv}

\section{Hypothesis Tests of No Mediation Effect} \label{sec:review}
\noindent \textit{Mediation Analysis Model.} \  To examine the mediation effect of the exposure $\S$ on the outcome $\Y$ through the intermediate variable $\M$, the causal inference literature  utilizes the counterfactual framework  \citep{vanderweele2015explanation}. 
In particular, let $M(s)$ denote the potential value of the mediator under the exposure $\S=s$, and let $\Y(s, m)$ denote the potential outcome that would have been observed if $\S$ and $\M$ had been set to $s$ and $m$, respectively. 
Throughout the paper, we adopt the  Stable Unit Treatment Value Assumption \citep{rubin1980randomization}, so that $\M=\M (\S)$ and $Y=Y (\S,\M(S))$.
Then the mediation effect or the natural indirect effect of $\S=s$ versus $s^*$ \citep{imai2010identification} is defined as 
\begin{align}
	\mathrm{E}\big\{ \Y(s, \M(s)) - \Y(s, \M(s^*))  \big\}. \label{eq:MEpotential}
\end{align} 
For  ease of illustration, we consider the popular linear Structural Equation Model (SEM)  \citep{mackinnon2008introduction,vanderweele2015explanation}: 
\begin{eqnarray}
	\M &=& \alphaS \S +  \X^\mytrans \alphaX +\eM, \label{eq:fullmod1} \\  \Y   &=&  \betaM \M + \X^\mytrans  \betaX + \DE \S + \eY,   \notag  
\end{eqnarray}
where $\X$ denotes a vector of confounders with the first element being 1 for the intercept, 
and $\eY$ and $\eM$ are independent error terms with mean zero and finite variances $\sigmaeY^2$ and $\sigmaeM^2$, respectively.
We assume that there are $n$ independent and identically distributed (i.i.d.) observations, $\{(\S_i, \X_i, \M_i, \Y_i): i=1,\ldots, n\}$,  sampled from Model \eqref{eq:fullmod1}. 
The independence of $\eY$ and $\eM$  holds under the following no unmeasured confounding assumptions. 
In particular,  
 let $A\perp  B \mid C$ denote the independence of $A$ and $B$ conditional on $C$,
and we assume that for all levels of $s, s^*$ and $m$, 
(i) $\Y(s,m) \perp \S \mid \{\X =\boldsymbol{x}\}$, no confounder for the relation of $\Y$ and $\S$;  (ii) $\Y(s,m) \perp \M \mid \{\S=s, \X=\boldsymbol{x}\}$, no confounder for the relation of $\Y$ and $\M$ conditioning on $\S=s$; (iii) $\M(s)\perp \S \mid \{\X=\boldsymbol{x}\}$, no confounder for the relation of $\M$ and $\S$;  (iv) $\Y(s,m) \perp \M(s^*) \mid \{\X = \boldsymbol{x}\}$,   
 no confounder for the $\M$-$\Y$ relation that is affected by $\S$ 
 \citep{vanderweele2009conceptual}.   
Under these assumptions, 
 the model can be visualized by the directed acyclic graph in Figure \ref{fig:model}, and 
the mediation effect $\eqref{eq:MEpotential}$ equals $\alphaS\betaM (s-s^*)$.

Therefore,
 the scientific goal of detecting the presence of a mediation effect can be formulated as the following hypothesis testing problem:
\begin{align*}
	H_0: \alphaS  \betaM = 0 \quad \text{\ versus\ } \quad H_A: \alphaS \betaM  \neq 0.
\end{align*}
This null hypothesis is composite and 
can be decomposed into three disjoint cases:
\begin{equation}
	H_0: \begin{cases}
H_{0,1}:  & \alphaS = 0, ~\betaM \neq 0; \\ 
H_{0,2}: & \alphaS \neq 0,~ \betaM =0; \\
H_{0,3}: & \alphaS = \betaM =0, 
\end{cases} \label{eq:compositenull}
\end{equation}
and the alternative hypothesis is $H_A: \alphaS \neq 0 \text{ and } \betaM \neq 0.$

\begin{remark}
Composite null problems similar to  \eqref{eq:compositenull} can occur in  settings other than Model \eqref{eq:fullmod1}; the latter is considered to demonstrate the essential analytic details useful for possible generalizations.  
Similar issues have also been observed in many other scenarios, including partially linear models \citep{hines2021robust}, 
survival analysis \citep{vanderweele2011causal}, and high-dimensional models \citep{zhou2020estimation}.   
The analytic details of the methodology development in this paper can pave the path for useful generalizations to other important statistical models and applications. 
\end{remark}

To test  the composite null \eqref{eq:compositenull}, various methods have been proposed, and a comprehensive survey can be found in  \cite{mackinnon2002comparison}.
There are two representative classes of tests: 
(I) the product of coefficients (PoC) test, which corresponds to a Wald-type test, 
 and (II) the joint significance (JS) test, which  is the likelihood ratio test under  normality of the error terms \citep{liu2020large}. 
\textit{(I)} 
The first class of methods examine the PoC: $\hatalphaSn \hatbetaMn$, where  $\hatalphaSn$ and $\hatbetaMn$  denote 
consistent estimates of $\alphaS$ and $\betaM$, respectively.  
One common practice is to apply a normal approximation 
to $\hatalphaSn \hatbetaMn$ divided by  its standard error, where  
\cite{sobel1982asymptotic} derives the standard error formula by   the  first-order Delta method. 
In addition to the large-sample approximation, 
the bootstrap has also been used to evaluate the distribution of  $\hatalphaSn\hatbetaMn$  \citep{mackinnon2004confidence,fritz2007required}.   
\textit{(II)}  The JS test, also known as the MaxP test, rejects $H_0: \alphaS \betaM = 0$ if  $\max\{p_{\alphaS}, p_{\betaM}\} < \omega$,
where $\omega$ is a prespecified significance level,
and $p_{\alphaS}$  and $p_{\betaM}$ denote the $p$-values for $H_0: \alphaS=0$ (the link $\S \rightarrow \M$)  and $H_0: \betaM=0$ (the link $\M \rightarrow \Y$), respectively. 
Despite their popularity, these methods have been found numerically to be  overly conservative  under $H_{0,3}$ in \eqref{eq:compositenull}   \citep{mackinnon2002comparison,barfield2017testing}. 
See a further discussion on the non-regular asymptotic behaviors of statistics underlying the conservatism in Section \ref{sec:abt}.

Similar issues have also been broadly recognized for Wald tests in various   statistical problems including three-way contingency table analysis and  factor analysis  
\citep{glonek1993behaviour,dufour2013wald,drton2016wald}. \label{literature} 
However, characterizing non-regular asymptotic behaviors under the singular null hypothesis $H_{0,3}$  is still insufficient to address intrinsic technical challenges in testing \eqref{eq:compositenull}. 
In particular, the composite null  \eqref{eq:compositenull} includes not only the singular case $H_{0,3}$ but also the other two non-singular cases $H_{0,1}$ and $H_{0,2}$.
Since a test statistic follows different distributions under different null cases, and the underlying true null case is unknown,  
it is difficult to obtain \textit{uniformly distributed} $p$-values through one simple asymptotic  distribution under \eqref{eq:compositenull}. 
To address this technical difficulty, 
we adopt, justify, and evaluate an adaptive bootstrap procedure. 
For both  Wald-type PoC test and non-Wald JS test, we will show that the proposed procedure can naturally adapt to the three types of null hypotheses in \eqref{eq:compositenull} and yield uniformly distributed $p$-values across all different null cases.

\section{Adaptive Bootstrap Tests} \label{sec:abt}
In this section, we propose a general  Adaptive Bootstrap (AB)  procedure for testing the composite null hypothesis \eqref{eq:compositenull}. 
For illustration, we apply the  adaptive bootstrap to the representative PoC test and show it can address the  non-regularity issue. 
We emphasize that this general strategy can be applied in a wide range of scenarios. We also derive adaptive bootstrap procedure  in  other examples, including the Joint Significance test (Section \ref{sec:jst} in the Supplementary Material), joint testing of multivariate mediators (Section \ref{sec:jointestmulti}) and testing mediation effect under the generalized linear models  (Sections \ref{sec:asymbinarymedout} and \ref{sec:modelbinmed}.).

To conduct hypothesis testing or estimate confidence intervals for statistics whose limiting distributions deviate  from the normal,
a simple and powerful approach is to apply the bootstrap resampling technique.  
	However, the classical bootstrap is not a panacea,  and on some occasions it can fail to work properly, including unfortunately the non-regular scenarios considered in this paper. 
In particular, it has been observed through simulation studies that the classical  bootstrap technique is overly conservative under $H_{0,3}: \alphaS=\betaM=0$  \citep{mackinnon2002comparison,barfield2017testing}. 
We next unveil the key insights  underlying the failure of the classical bootstrap, which motivates our use of the adaptive bootstrap. 
\label{page:motivationbootstrap}

\medskip
\noindent \textit{Non-Regularity of the PoC Test.} \quad 
When  $(\alphaS, \betaM)\neq (0,0)$, one of the first-order gradients $\frac{\partial \alphaS \betaM}{\partial \alphaS} = \betaM$ and $\frac{\partial \alphaS \betaM}{\partial \betaM} = \alphaS$ is  nonzero. 
Thus the Delta method  can be applied to support 
the use of Sobel's test (based on asymptotic normality) and classical bootstrap \citep{barfield2017testing}. 
However, under $H_{0,3}: (\alphaS, \betaM)=(0,0)$, 
the gradients $\frac{\partial \alphaS \betaM}{\partial \alphaS} =\frac{\partial \alphaS \betaM}{\partial \betaM} = 0$,  
and validity of Sobel's test and the classical bootstrap cannot be  obtained as above.  

We next illustrate the non-regular limiting behavior of the PoC $\hatalphaSn \hatbetaMn$ under $H_{0,3}$. For ease of exposition, consider a  special case of \eqref{eq:fullmod1}: 
$\M = \alphaS \S +\eM,$ and $\Y = \betaM \M +  \eY.$ 
Let $(\hatalphaSn,\hatbetaMn)$ denote the ordinary least squares estimators of $(\alphaS,\betaM)$, and let $(\hatalphaSn^*, \hatbetaMn^*)$ the corresponding nonparametric bootstrap estimators. 
Here and throughout this paper, 
we use the superscript ``$*$" to indicate estimators obtained from the nonparametric bootstrap, namely ``bootstrap in pairs" in the regression setting. 
By classical asymptotic theory \citep{van2000asymptotic}, under mild conditions,  
\begin{align}\label{eq:simplecltsep}
	\sqrt{n}(\hatalphaSn - \alphaS, \, \hatbetaMn - \betaM)^{\mytrans} \xrightarrow{d} (Z_{\S,0}, \, Z_{\M,0})^{\mytrans},
\end{align} 
 where  
 $(Z_{S,0},Z_{M,0})^{\top}$ denotes a mean-zero normal random vector with a covariance matrix given by that of the random vector 
$(\eM \S /V_{\S,0}, \eY \M/V_{\M,0})^{\mytrans}$,   $V_{\S,0}=\mathrm{E}(\S^2)$,  $V_{\M,0}=\mathrm{E}(\M^2)$. 

Moreover, 
\begin{align}\label{eq:simplecltsepboot}
\sqrt{n}(\hatalphaSn^* - \hatalphaSn, \hatbetaMn^* - \hatbetaMn)^{\mytrans} 
\overset{d^*}{\leadsto} 
(Z_{\S,0}', \, Z_{\M,0}')^{\mytrans}, 
\end{align} 
where $(Z_{\S,0}', Z_{\M,0}')$ is an independent copy of  $(Z_{\S,0}, Z_{\M,0})$ in \eqref{eq:simplecltsep} under the same distribution.
By \eqref{eq:simplecltsep}, 
$n( \hatalphaS \hatbetaM - \alphaS \betaM )\xrightarrow{d} Z_{\S,0}Z_{\M,0}$ under $H_{0,3}$, 
{with the convergence rate $n$ different from the standard parametric $\sqrt{n}$ rate.}
By \eqref{eq:simplecltsep}--\eqref{eq:simplecltsepboot},
we have
$n(\hatalphaSn^*\hatbetaMn^* - \hatalphaSn \hatbetaMn) =  n\{(\hatalphaSn^* - \hatalphaSn)\hatbetaMn + (\hatbetaMn^* - \hatbetaMn)\hatalphaSn +(\hatalphaSn^* - \hatalphaS )(\hatbetaMn^* -  \hatbetaMn)\} \overset{d^*}{\leadsto}  {Z_{\S,0}'Z_{\M,0}} + {Z_{\S,0}Z_{\M,0}'} +{Z_{\S,0}'Z_{\M,0}'}.$    
We can see that the limit of $n(\hatalphaSn^*\hatbetaMn^* - \hatalphaSn \hatbetaMn)$ is different from that of 
$n( \hatalphaSn \hatbetaMn - \alphaS \betaM )$,
implying inconsistency of the classical nonparametric  bootstrap.

\bigskip
\noindent \textit{Adaptive Bootstrap of the PoC Test.}\quad  
To address the challenge of correctly evaluating the distribution of the  PoC statistic, 
we utilize the local asymptotic analysis framework. 
{Intuitively, the goal is to evaluate if a small change in the target parameters leads to little change on the limit of the statistics.} 
{To this end, 
given targeted parameters $(\alphaS,\betaM)$,
we define their locally perturbed counterparts as $\alphaSn = \alphaS + n^{-1/2}\balpha$, and $\betaMn = \betaM + n^{-1/2}\bbeta$, respectively, where 
  $(\balpha, \bbeta)$ denote the local parameters of perturbations from  our targeted coefficients $(\alphaS, \betaM). $  } 
We then frame the problem under the local linear SEM as follows:  
\begin{eqnarray}
	\M &=& \alphaSn \S +  \X^\mytrans \alphaX +\eM, \hspace{1.5em}  \Y=\betaMn \M + \X^\mytrans  \betaX + \DE \S + \eY,\label{eq:fullmodlocal2}  
\end{eqnarray}
 where $\eM$ and $\eY$ are independent error terms with mean zero and finite variances. 
Fixing the target parameters $(\alphaS,\betaM)$, according to \cite{van2000asymptotic}  the formulation given in \eqref{eq:fullmodlocal2} may also be viewed as a local statistical experiment with  local parameters $(\balpha, \bbeta)$ under which we are interested in examining the limit of test statistics.  
Note that with the local parameters $(\balpha,\bbeta)=(0,0)$, \eqref{eq:fullmodlocal2} reduces to the original model  \eqref{eq:fullmod1} with  the parameters $(\alphaS,\betaM)$.
Our inference goal remains the same: that is, to test the underlying true coefficients $(\alpha_{\S}, \beta_{\M})$.  Technically, we consider a $\sqrt{n}$-vicinity of local neighboring values $(\alpha_{\S,n}, \beta_{\M, n})$ only for the theoretical investigation of  local asymptotic behaviors.  \label{page:localmodel1} 
Such an idea has also been used for studying  
 other non-regularity issues  \citep[][etc.]{mckeague2015adaptive,wang2018testing}. 
To examine  the limit  of $\hatalphaSn \hatbetaMn - \alphaSn \betaMn$ under \eqref{eq:fullmodlocal2}, we assume the following general regularity  Condition \ref{cond:inversemoment}.

\begin{condition}\label{cond:inversemoment} 
(C1.1) $\mathrm{E}(\eM| \X, \S )=0$ and $\mathrm{E}(\eY| \X, \S, \M )=0$. 
(C1.2) $\mathrm{E}(\boldsymbol{D}\boldsymbol{D}^{\mytrans})$ 
is a positive definite matrix with bounded eigenvalues, 
where $\boldsymbol{D} = (\X^{\mytrans}, \M, \S)^{\mytrans}$.
(C1.3) The second moments of $(\eM, \eY, \Sperp, \Mperpp, \eM\Sperp, \eY\Mperpp)$ are finite,
where  $\Sperp = \S - \X^{\mytrans}Q_{1,\S}$  with  $Q_{1,\S}= \{\mathrm{E}(\X \X^{\mytrans})\}^{-1}\times \mathrm{E}(\X \S)$, and $\Mperpp =\M - \tilde{\X}^{\mytrans} Q_{2,\M}$   with  $\tilde{\X}=(\X^{\mytrans},\S)^{\mytrans}$ and $Q_{2,\M} = \{\mathrm{E}(\tilde{\X}^{\mytrans}\tilde{\X})\}^{-1}\times \mathrm{E}(\tilde{\X}\M)$.   
\end{condition}

Similarly to our above discussions under the simplified model,  Theorem \ref{thm:prodlimit2} establishes the limits of 
 $\sqrt{n}\times (\hatalphaSn \hatbetaMn - \alphaS \betaM)$  and $n\times (\hatalphaSn \hatbetaMn - \alphaS \betaM)$ 
 when $(\alphaS, \betaM) \neq (0,0)$ and $(\alphaS, \betaM) = (0,0)$, respectively.
\begin{theorem}[Asymptotic Property] \label{thm:prodlimit2}
Assume Condition \ref{cond:inversemoment}. 
Under the local model \eqref{eq:fullmodlocal2}, \vspace{-0.5em}
\begin{enumerate}
\item[(i)] when $(\alphaS, \betaM) \neq (0,0)$, 
$
\sqrt{n}\times (\hatalphaSn \hatbetaMn - \alphaSn \betaMn) \xrightarrow{d} \alphaS {Z_{\M}} + \betaM {Z_{\S}};	
$ 
\item[(ii)] when $(\alphaS, \betaM) = (0,0)$, 
$
n\times (\hatalphaSn \hatbetaMn - \alphaSn \betaMn) \xrightarrow{d} \balpha {Z_{\M}} + \bbeta {Z_{\S}} + {Z_{\M}Z_{\S}},
$
\end{enumerate}
where  $(Z_S, Z_M)^{\mytrans}$ is a mean-zero normal random vector with a covariance matrix given by that of the random vector $(\eM \Sperp/V_S, \eY \Mperpp/V_M)^{\mytrans}$ with $V_S=\mathrm{E}(\Sperp^2)$, and $V_M=\mathrm{E}(\Mperpp^2)$.  
\end{theorem}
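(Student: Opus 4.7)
The plan is to reduce the problem to a joint central limit theorem for the two OLS coefficients via a Frisch--Waugh--Lovell (FWL) representation, and then pass to the product via Slutsky's theorem plus an algebraic decomposition. The distinction between parts (i) and (ii) will be entirely accounted for by which terms in that decomposition are leading order, so I can treat both cases uniformly before specializing.

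The first step is to establish
\begin{equation*}
\sqrt{n}\bigl(\hatalphaSn - \alphaSn,\ \hatbetaMn - \betaMn\bigr)^{\mytrans} \xrightarrow{d} (Z_{\S},\ Z_{\M})^{\mytrans}
\end{equation*}
under the local model \eqref{eq:fullmodlocal2}. By FWL applied to the first regression, I can write $\hatalphaSn - \alphaSn = (\sum_i \hat\S_{\perp,i}\eM_i)/(\sum_i \hat\S_{\perp,i}^2)$, where $\hat\S_{\perp,i}$ is the sample residual of regressing $\S_i$ on $\X_i$; similarly $\hatbetaMn - \betaMn = (\sum_i \hat\M_{\perp',i}\eY_i)/(\sum_i \hat\M_{\perp',i}^2)$ with $\hat\M_{\perp',i}$ the sample residual of regressing $\M_i$ on $(\X_i,\S_i)$. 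The denominators converge in probability to $V_{\S}$ and $V_{\M}$ by Condition \ref{cond:inversemoment} (C1.2)--(C1.3) and the law of large numbers, while the numerators, after replacing $\hat\S_{\perp,i},\hat\M_{\perp',i}$ by their population counterparts $\Sperpi,\Mperppi$ at a negligible cost (standard argument using (C1.1) and (C1.2)), are i.i.d.\ sums to which the multivariate CLT applies. The resulting joint Gaussian limit has the covariance structure claimed because $(\eM, \eY)$ have mean zero conditional on $(\X, \S, \M)$ by (C1.1). Crucially, this joint normality is not disturbed by the $O(n^{-1/2})$ shifts $\balpha, \bbeta$ in the local model, since those drop out of $\hatalphaSn - \alphaSn$ and $\hatbetaMn - \betaMn$ by construction.

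With that in hand, I use the identity
\begin{equation*}
\hatalphaSn\hatbetaMn - \alphaSn\betaMn = (\hatalphaSn - \alphaSn)\betaMn + \alphaSn(\hatbetaMn - \betaMn) + (\hatalphaSn - \alphaSn)(\hatbetaMn - \betaMn).
\end{equation*}
For part (i), I multiply by $\sqrt{n}$. Since $\alphaSn \to \alphaS$ and $\betaMn \to \betaM$, Slutsky yields $\sqrt{n}(\hatalphaSn - \alphaSn)\betaMn \xrightarrow{d} \betaM Z_{\S}$ and $\sqrt{n}\alphaSn(\hatbetaMn - \betaMn) \xrightarrow{d} \alphaS Z_{\M}$; jointly these give $\alphaS Z_{\M} + \betaM Z_{\S}$ by the continuous mapping theorem. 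The cross term is $\sqrt{n} \cdot O_p(n^{-1/2}) \cdot O_p(n^{-1/2}) = o_p(1)$. For part (ii), I multiply by $n$ and use $\sqrt{n}\alphaSn = \balpha$, $\sqrt{n}\betaMn = \bbeta$. Then $n(\hatalphaSn - \alphaSn)\betaMn = \sqrt{n}(\hatalphaSn - \alphaSn)\cdot \sqrt{n}\betaMn \xrightarrow{d} \bbeta Z_{\S}$, analogously for the middle term, and the cross term $\sqrt{n}(\hatalphaSn - \alphaSn)\cdot \sqrt{n}(\hatbetaMn - \betaMn)\xrightarrow{d} Z_{\S}Z_{\M}$ by joint convergence and continuous mapping. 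Summing gives the stated limit.

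The main obstacle I anticipate is technical rather than conceptual: making the joint CLT with sample-residual partialling rigorous in the \emph{local} model \eqref{eq:fullmodlocal2}, where the data-generating coefficients drift at rate $n^{-1/2}$. The argument needs (a) a uniform approximation of $\hat\S_{\perp,i}$ and $\hat\M_{\perp',i}$ by $\Sperpi$ and $\Mperppi$, (b) verification that the cross-moment structure between the two numerator sums yields exactly the covariance of $(\eM\Sperp/V_{\S}, \eY\Mperpp/V_{\M})$ (which uses (C1.1) to kill the cross terms involving $\eM\eY$-type products conditional on the covariates), and (c) checking that joint convergence of the two components holds rather than just the marginals; this follows from the Cramér--Wold device applied to the bivariate i.i.d.\ summands. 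The integrability needed for the CLT is precisely the second-moment clause of (C1.3). Once these are in place the decomposition above delivers both parts immediately.
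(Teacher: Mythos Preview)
Your proposal is correct and mirrors the paper's own proof almost exactly: the paper likewise uses the FWL representation (its Lemma~\ref{lm:regresstrans}) to obtain $\hatalphaSn-\alphaSn=\mathbb{P}_n(\hat S_\perp\eM)/\mathbb{P}_n(\hat S_\perp^2)$ and the analogous expression for $\hatbetaMn-\betaMn$, establishes the joint CLT \eqref{eq:limitcoef}, and then applies the same three-term product decomposition with Slutsky in each case. Your anticipated technical checks (replacing $\hat S_\perp,\hat M_{\perp'}$ by $S_\perp,M_{\perp'}$, joint convergence via Cram\'er--Wold, moment conditions from (C1.3)) are exactly the ones the paper carries out.
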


Theorem \ref{thm:prodlimit2} suggests the 
limit 
of $\sqrt{n}(\hatalphaSn\hatbetaMn - \alphaSn \betaMn )$  is not uniform with respect to $(\alphaS , \betaM)$, and the non-uniformity occurs around $(\alphaS , \betaM)=(0,0)$. 
On the other hand, 
in the neighborhood of  $(\alphaS , \betaM)=(0,0)$, the limit of $n(\hatalphaSn\hatbetaMn - \alphaSn \betaMn )$ is continuous as a function of $(\balpha, \bbeta)^{\mytrans} \in \realnumber^2$ into the space of distribution functions. Therefore, using this local limit in the bootstrap, we expect better finite-sample accuracy, compared to the classical nonparametric bootstrap that does not take into account the local asymptotic behavior.

Moreover, to discern the null cases, 
we will consider a decomposition of the statistic.  
The idea is to isolate the possibility that $(\alphaS, \betaM ) \neq (0,0)$ by comparing the absolute value of the standardized statistics  $T_{\alpha,n}=\sqrt{n}\hatalphaSn/\hatsigmaalphan$ and $T_{\beta, n}=\sqrt{n}\hatbetaMn /\hatsigmabetan$ to some thresholds,
where $\hatsigmaalphan$  and $\hatsigmabetan$ denote the 
sample  standard deviations  of $\hatalphaSn$ and $\hatbetaMn$, respectively. 
Specifically, we decompose 
\begin{align}
\hatalphaSn \hatbetaMn - \alphaSn \betaMn=&~ (\hatalphaSn \hatbetaMn - \alphaSn \betaMn )\times  (1-\myindicator_{\alphaS, \lambda_n}\myindicator_{\betaM, \lambda_n} )\label{eq:proddecomp1}  \\
& +(\hatalphaSn \hatbetaMn - \alphaSn \betaMn)\times  \myindicator_{\alphaS, \lambda_n}\myindicator_{\betaM, \lambda_n}\notag 
\end{align} 
with the indicators 
$\myindicator_{\alphaS, \lambda_n}=\myindicator \{ |T_{\alpha,n}| \leq \lambda_{n}, \alphaS = 0\}$ and $\myindicator_{\betaM, \lambda_n}=\myindicator\{|T_{\beta,n}| \leq \lambda_{n}, \betaM = 0\}$,  
where $\myindicator \{E\}$ represents the indicator function of an event $E$, and $\lambda_n$ is a certain threshold to be specified. 
When $(\alphaS, \betaM) \neq (0,0)$, the classical  bootstrap is consistent for the first term in \eqref{eq:proddecomp1}.
For the second term in \eqref{eq:proddecomp1}, 
we next develop a bootstrap statistic motivated by  Theorem \ref{thm:prodlimit2} (ii).

To construct the bootstrap statistic, 
we introduce some notation following the convention in the empirical process literature \citep{van2000asymptotic}. 
Particularly, throughout the paper, 
$P_n$ denotes the population probability measure of $(\S, \X, \M, \Y)$, 
$\mathbb{P}_n$ denotes the empirical measure with respect to the i.i.d. observations $\{(\S_i, \X_i, \M_i, \Y_i): i=1,\ldots, n\}$, and $\mathbb{P}_n^*$ denotes the nonparametric bootstrap version of $\mathbb{P}_n$.  
For any measurable function $f(\S, \X, \M, \Y)$, we define the empirical process $\mathbb{G}_n f = \sqrt{n}(\mathbb{P}_n -P_n)f=\sqrt{n}[n^{-1}\sum_{i=1}^n f(\S_i, \X_i, \M_i, \Y_i) -\mathrm{E}\{f(\S, \X, \M, \Y)\} ]$, and its nonparametric bootstrap version is $\mathbb{G}_n^*=\sqrt{n}(\mathbb{P}_n^*- \mathbb{P}_n)$. 
With the above notation, we define the sample versions of 
$Q_{1,\S}, Q_{2,\M}, \Sperp,$ and $\Mperpp$ in Condition \ref{cond:inversemoment} as
$\hat{Q}_{1,\S}= \{\mathbb{P}_n(\X \X^{\mytrans})\}^{-1} \mathbb{P}_n(\X \S)$, 
$\hat{Q}_{2,\M} = \{\mathbb{P}_n( \tilde{\X}  \tilde{\X}^{\mytrans})\}^{-1} \mathbb{P}_n( \tilde{\X} \M)$,
$\hatSperp = \S - \X^{\mytrans}\hat{Q}_{1,\S}$, and $\hatMperpp = \M - \tilde{\X}^{\mytrans}\hat{Q}_{2,\M},$
respectively, 
where we use ``$\hat{\quad}$'' to denote the sample counterparts in this paper. 
Similarly, we define their nonparametric bootstrap counterparts $(Q_{1,\S}^*, Q_{2,\M}^*, \Sperpboot, \Mperppboot )$
by replacing $\mathbb{P}_n$ with $\mathbb{P}_n^*$ in the above definitions.

When $(\alphaS, \betaM)=(0,0)$, motivated by Theorem \ref{thm:prodlimit2} (ii), 
we construct a bootstrap statistic $\mathbb{R}_{n}^*(\balpha, \bbeta)$ as a bootstrap counterpart of of $\balpha Z_M + \bbeta Z_S+ Z_MZ_S$. 
In particular, $\mathbb{R}_{n}^*(\balpha, \bbeta) = {\balpha \mathbb{Z}_{\M,n}^*}  +{\bbeta\mathbb{Z}_{\S,n}^*} + {\mathbb{Z}_{\S,n}^*\mathbb{Z}_{\M,n}^* },$ 
where 
 $\mathbb{Z}_{\S,n}^*= \mathbb{G}_n^*(\hatenM \Sperp^*)/\mathbb{V}_{\S, n}^*$, $\mathbb{Z}_{\M,n}^*=\mathbb{G}_n^*(\hatenY \Mperpp^*)/\mathbb{V}_{\M, n}^*,$ $\mathbb{V}_{\S, n}^* = \mathbb{P}_n^*\{(\Sperp^*)^2\},$ $\mathbb{V}_{\M, n}^* = \mathbb{P}_n^*\{(\Mperpp^*)^2\}$, 
 and 
$\hatenM$ and $\hatenY$ denote the sample residuals obtained from the ordinary least squares regressions of the two models in \eqref{eq:fullmodlocal2}.   
When $(\alphaS, \betaM)\neq (0,0)$, we still consider 
the classical nonparametric bootstrap estimator $\hatalphaSn^* \hatbetaMn^*$. 
 To develop an adaptive bootstrap test, we utilize the decomposition \eqref{eq:proddecomp1} and 
 propose to replace the indicators $\myindicator_{\alphaS, \lambda_n}$ and $\myindicator_{\betaM, \lambda_n}$ in \eqref{eq:proddecomp1} by
\begin{align}
	\myindicator_{\alphaS, \lambda_{n}}^*  = \myindicator \{ |T_{\alpha,n}^*|\leq \lambda_{n}, ~|T_{\alpha,n}|\leq \lambda_{n} \} \quad \text{and} \quad  	\myindicator_{\betaM, \lambda_{n}}^* = \myindicator \{ |T_{\beta,n}^*|\leq \lambda_{n},~ |T_{\beta,n}|\leq \lambda_{n} \}, \label{eq:indicatoralphabetasep}
\end{align} 
 where $T_{\alpha,n}^*=\sqrt{n}\hatalphaSn^*/\hatsigmaalphan^*$ and $T_{\beta, n}^*=\sqrt{n}\hatbetaMn^* /\hatsigmabetan^*$ 
denote the classical nonparametric bootstrap versions of  $T_{\alpha, n}$ and $T_{\beta, n}$, respectively. 
Following the decomposition in \eqref{eq:proddecomp1}, we define a statistic 
\begin{align*}
U^*=(\hatalphaSn^* \hatbetaMn^*  - \hatalphaSn \hatbetaMn ) \times (1- \myindicator_{\alphaS, \lambda_n}^* \myindicator_{\betaM, \lambda_n}^*  ) + n^{-1} \mathbb{R}_{n}^*( \balpha, \bbeta ) \times \myindicator_{\alphaS, \lambda_n}^*  \myindicator_{\betaM, \lambda_n}^*,
\end{align*}
termed as Adaptive Bootstrap (AB) test statistic in this paper.  
  Theorem \ref{thm:bootstrapprodcomb} below establishes the bootstrap consistency of $U^*$.

\begin{theorem}[Adaptive Bootstrap Consistency] \label{thm:bootstrapprodcomb}
Assume the conditions of Theorem \ref{thm:prodlimit2} are satisfied. When  $\lambda_n=o(\sqrt{n})$ and $\lambda_n\to \infty$ as $n\to \infty$, 
$$c_nU^* \overset{d^*}{\leadsto} c_n(\hatalphaSn \hatbetaMn - \alphaS \betaM),$$
where $c_n$ is a non-random scaling factor satisfying
\begin{align}\label{eq:cndef} 
	c_n=	\begin{cases}
	\sqrt{n}, & \text{ when }(\alphaS, \betaM) \neq (0,0) \\
	n, &  \text{ when } (\alphaS, \betaM) = (0,0)
	\end{cases}. 
\end{align}
\end{theorem}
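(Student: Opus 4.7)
The plan is to split on the two cases of $c_n$ in \eqref{eq:cndef} and, via the indicator product $\myindicator_{\alphaS,\lambda_n}^*\myindicator_{\betaM,\lambda_n}^*$, show that exactly one of the two summands in $U^*$ dominates. In Case (I), $(\alphaS,\betaM)\neq(0,0)$, I would assume without loss of generality $\alphaS\neq 0$: under the local model $\hatalphaSn\xrightarrow{P}\alphaS\neq 0$, so $|T_{\alpha,n}|$ diverges in probability, and the condition $\lambda_n=o(\sqrt{n})$ forces $\myindicator_{\alphaS,\lambda_n}^*\overset{P}{\to}0$ simultaneously in data and bootstrap probability. Hence $U^*$ equals the classical bootstrap quantity $\hatalphaSn^*\hatbetaMn^*-\hatalphaSn\hatbetaMn$ with probability tending to one. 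In Case (II), $(\alphaS,\betaM)=(0,0)$: Theorem \ref{thm:prodlimit2}(ii) together with the bootstrap CLT for OLS makes $(T_{\alpha,n},T_{\beta,n})$ and $(T_{\alpha,n}^*,T_{\beta,n}^*)$ jointly $O_P(1)$, and $\lambda_n\to\infty$ drives both indicators to $1$, so $U^*=n^{-1}\mathbb{R}_n^*(\balpha,\bbeta)$ with probability tending to one.

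For Case (I), I would invoke the standard bootstrap in pairs for OLS to get $\sqrt{n}(\hatalphaSn^*-\hatalphaSn,\hatbetaMn^*-\hatbetaMn)^{\mytrans}\overset{d^*}{\leadsto}(Z_S',Z_M')^{\mytrans}$, an independent copy of the data-side Gaussian limit. Expanding
\begin{align*}
\sqrt{n}\bigl(\hatalphaSn^*\hatbetaMn^*-\hatalphaSn\hatbetaMn\bigr)=\sqrt{n}(\hatalphaSn^*-\hatalphaSn)\hatbetaMn^*+\hatalphaSn\sqrt{n}(\hatbetaMn^*-\hatbetaMn),
\end{align*}
Slutsky with $\hatbetaMn^*\xrightarrow{P}\betaM$ and $\hatalphaSn\xrightarrow{P}\alphaS$ yields the limit $\alphaS Z_M'+\betaM Z_S'$, matching the sampling limit from Theorem \ref{thm:prodlimit2}(i). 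For Case (II), the bootstrap CLT for the empirical process $\mathbb{G}_n^*$ applied to $\hatenM\Sperp^*$ and $\hatenY\Mperpp^*$, together with the consistency statements $\mathbb{V}_{\S,n}^*\xrightarrow{P}V_S$ and $\mathbb{V}_{\M,n}^*\xrightarrow{P}V_M$, gives $(\mathbb{Z}_{\S,n}^*,\mathbb{Z}_{\M,n}^*)\overset{d^*}{\leadsto}(Z_S,Z_M)$. Continuous mapping then yields $\mathbb{R}_n^*(\balpha,\bbeta)\overset{d^*}{\leadsto}\balpha Z_M+\bbeta Z_S+Z_SZ_M$, matching Theorem \ref{thm:prodlimit2}(ii). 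Combining the two cases establishes the claimed bootstrap consistency in Kolmogorov-Smirnov distance.

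The main obstacle I anticipate is the joint handling of the random indicator $\myindicator_{\alphaS,\lambda_n}^*\myindicator_{\betaM,\lambda_n}^*$ with the bootstrap statistic it multiplies, since both depend on the same data. One must show that contributions from the boundary event where $|T_{\alpha,n}|$ or $|T_{\beta,n}|$ lies near $\lambda_n$ vanish in outer probability, and then combine the limits with a Slutsky-type argument valid for $\overset{d^*}{\leadsto}$. The dual condition $\lambda_n\to\infty$ and $\lambda_n=o(\sqrt{n})$ is engineered precisely for this decoupling: the former suppresses $O_P(1)$ fluctuations of $T$ under the null, while the latter forces divergent $T$ under non-null to exit the interval. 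The two-sided form $\{|T_{\alpha,n}|\le\lambda_n,\;|T_{\alpha,n}^*|\le\lambda_n\}$ of $\myindicator_{\alphaS,\lambda_n}^*$ is crucial here: it forces the indicator to agree on both sample and bootstrap test statistics, preventing spurious bias from their shared randomness.
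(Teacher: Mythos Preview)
Your proposal is correct and follows essentially the same route as the paper: split on whether $(\alphaS,\betaM)=(0,0)$, show the indicator product $\myindicator_{\alphaS,\lambda_n}^*\myindicator_{\betaM,\lambda_n}^*$ converges in conditional probability to $\myindicator\{\alphaS=0,\betaM=0\}$ under the dual condition $\lambda_n\to\infty$, $\lambda_n=o(\sqrt n)$, and then establish the bootstrap limits of the two summands separately via the paired-bootstrap CLT for the OLS coefficients plus Slutsky. The only place where the paper is more explicit than your outline is in verifying $(\mathbb{Z}_{\S,n}^*,\mathbb{Z}_{\M,n}^*)\overset{d^*}{\leadsto}(Z_S,Z_M)$: because $\hatenM$ depends on the data and $\Sperp^*$ on the bootstrap sample, a direct ``standard bootstrap CLT'' does not apply off the shelf, and the paper handles this by decomposing $\mathbb{G}_n^*(\hatenM\Sperp^*)=\mathbb{G}_n^*\{\hatenM(\Sperp^*-\Sperp)\}+\mathbb{G}_n^*\{(\hatenM-\eM)\Sperp\}+\mathbb{G}_n^*(\eM\Sperp)$ and showing the first two terms are $o_{P^*}(1)$ via consistency of the projection coefficients $\hat Q_{1,\S},Q_{1,\S}^*$.
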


\smallskip
Theorem \ref{thm:bootstrapprodcomb} suggests that
under the original model \eqref{eq:fullmod1}, i.e., $(\balpha, \bbeta)=(0,0)$,
the AB statistic $U^*$ is a consistent bootstrap estimator for $\hatalphaSn \hatbetaMn - \alphaS \betaM$ with a proper scaling. 
Moreover, 
for any fixed targeted parameters $(\alphaS, \betaM)$, in their local neighborhoods, 
i.e.,  $(\balpha, \bbeta)\neq (0,0)$, the bootstrap consistency still holds as a smooth function of  $(\balpha, \bbeta)$.  
Intuitively, this suggests that  a small change in the target parameters does not affect the consistency property, and $U^*$ is ``regular'' under the local model.   
 \label{page:localmodel2} 
In practice,  without knowing which case is the true null we rely on $U^*$ as the bootstrap statistic for $\hatalphaSn \hatbetaMn - \alphaSn \betaMn$ generally.   
This strategy is viable because with a given finite  sample size $n$, 
using $\sqrt{n}U^*$ for bootstrapping  $\sqrt{n}(\hatalphaSn \hatbetaMn - \alphaSn \betaMn)$ is  equivalent to using $nU^*$ for bootstrapping  $n(\hatalphaSn \hatbetaMn - \alphaSn \betaMn)$. 
Therefore, as desired, $U^*$ will approximate well the distribution of $\hatalphaSn \hatbetaMn - \alphaSn \betaMn$ regardless of the underlying null case.  
\label{page:localmodel3}

\begin{remark}
As a comparison, we also discuss the naive non-parametric bootstrap when $(\alphaS, \betaM) = (0,0)$. Specifically, we obtain the following expression (in Remark \ref{rm:classboot00} of the Supplementary Material),
\begin{align} \label{eq:classboot00expand}
n(\hatalphaSn^* \hatbetaMn^* - \hatalphaSn \hatbetaMn)
=&~\mathbb{R}_{n}^*(\balpha, \bbeta)+  \mathbb{Z}_{\S,n}\mathbb{Z}_{\M,n}^* + \mathbb{Z}_{\M,n}\mathbb{Z}_{\S,n}^*, 
\end{align}
where 
  $\mathbb{Z}_{\S,n}= \mathbb{G}_n(\eM \hatSperp)/\mathbb{V}_{\S, n}$, $\mathbb{Z}_{\M,n}=\mathbb{G}_n(\eY \hatMperpp)/\mathbb{V}_{\M, n},$ $\mathbb{V}_{\S, n} = \mathbb{P}_n(\hatSperp^2),$ and $\mathbb{V}_{\M, n} = \mathbb{P}_n(\hatMperpp^2)$. 
In addition to the term $\mathbb{R}_{n}^*(\balpha, \bbeta)$, \eqref{eq:classboot00expand} has two extra random terms $\mathbb{Z}_{\S,n}\mathbb{Z}_{\M,n}^* + \mathbb{Z}_{\M,n}\mathbb{Z}_{\S,n}^*$, which suggests that using \eqref{eq:classboot00expand} in the bootstrap would not be consistent. 
The issue of the classical bootstrap being inconsistent at  $(\alphaS, \betaM) = (0,0)$ is circumvented by the proposed local bootstrap statistic $\mathbb{R}_n^*(\balpha, \bbeta )$.  
\end{remark} 

\bigskip 

\noindent \textit{Adaptive Bootstrap Test Procedure.} 
We introduce a consistent bootstrap test procedure for $\hatalphaSn \hatbetaMn$ 
based on 
Theorem \ref{thm:bootstrapprodcomb}. 
Given a nominal level $\omega$, let $q_{\omega/2}$ and $q_{1-\omega/2}$ denote the lower and upper $\omega/2$ quantiles, respectively, of the bootstrap estimates $U^*$. If $\hatalphaSn \hatbetaMn$ falls outside  the interval $(q_{\omega/2}, q_{1-\omega/2} )$, we reject the composite null \eqref{eq:compositenull}, and conclude that the mediation effect is statistically significant at the level $\omega$.  
We clarify that the goal is to test the underlying true coefficients $(\alpha_{\S}, \beta_{\M})$. The reason to consider their $\sqrt{n}$-local coefficients $(\alpha_{\S,n}, \beta_{\M, n})$ is merely for theoretical investigation of local asymptotic behaviors. Therefore, to test \eqref{eq:compositenull} under the original model \eqref{eq:fullmod1}, it suffices to calculate $U^*$ with $\balpha=\bbeta=0$.
We point out that the rejection region in the adaptive procedure may also be constructed  through the asymptotic distribution as an alternative to the bootstrap; nevertheless,
 the proposed bootstrap procedure is more flexible and does not rely on a particular form of the limiting distributions,  and therefore, it can be easily extended under various mediation models; see more examples in Section \ref{sec:extendmodel}.  
\label{page:asymvsboot}

\bigskip


\noindent \textit{Choice of the Tuning Parameters.} 
 Under the conditions of Theorem \ref{thm:bootstrapprodcomb}, which specify   $\lambda_n=o(\sqrt{n})$ and $\lambda_n\to \infty$ as $n\to \infty,$ we have   
 $\lim_{n\to \infty} \Pr(|T_{\alpha, n}|>\lambda_n, |T_{\beta, n}|>\lambda_n \mid  \alphaS = \betaM = 0 ) = 0$, 
 suggesting that $\myindicator_{\alphaS, \lambda_n}\myindicator_{\betaM, \lambda_n}$ can provide a consistent test for $\alphaS=\betaM=0$. 
If $\lambda_n$ remains bounded as $n\to \infty$, 
$U^*$ asymptotically reduces to $\hatalphaSn^*\hatbetaMn^*-\hatalphaSn\hatbetaMn$, i.e., the classical nonparametric bootstrap procedure, which  may be conservative.  
In the simulation experiments, we set $\lambda_n=\lambda\sqrt{n}/\log n$ and find that a fixed constant $\lambda$, e.g.,  $\lambda=2$ can give a good performance.
In general settings, 
we can choose the tuning parameter through the double bootstrap \citep{chen2016peter}; see Section \ref{sec:tuningpar} of the Supplementary Material for more implementation details. 
\label{page:doublebootdesc}

\begin{remark}\label{rm:compare1} 
Our proposed adaptive procedures examine the non-regular asymptotic behaviors of test statistics 
through local models. 
In effect, the idea of local models may be traced back to  econometrics \citep{andrews2001testing} and was utilized in other statistical problems, such as classification and post-selection inference 
\citep{laber2011adaptive,mckeague2015adaptive,mckeague2019marginal,wang2018testing}. 
Nevertheless, 
we emphasize that there are unique statistical challenges of      
testing mediation effects. 
First, in terms of the parameter space, 
the null hypothesis of no mediation effect is essentially a \textit{union} of individual hypotheses. 
This results in a  non-standard shape of the null parameter space, on which both regular and non-regular asymptotic behaviors can occur, as illustrated in  Figure  \ref{fig:visualofcompositenull}. 
Second, in terms of the behavior of the estimator,
we unveil a fundamental zero-gradient phenomenon.   
This is caused by the special form of the product statistic and cannot be directly addressed by  the existing adaptive procedures mentioned above.  
Third, in terms of the models, 
the mediation analysis involves a system of  structural equations. 
Ignoring the model structure in the implementation could lead to slow computation;  see Section \ref{sec:suppcomputation} of the Supplementary Material for more  details on computation.  
Due to these unique challenges, new developments  in methodology, theory, and computation are necessary.      
\end{remark}

\noindent \textsf{Adaptive Bootstrap for the Joint Significance Test.} \label{page:adjst}
In addition to the Wald-type PoC test, 
we also address the non-regularity issue of the non-Wald joint significance/maxP test through our proposed adaptive bootstrap. 
It is noteworthy that non-regular behaviors of the JS and PoC tests under the singleton  $H_{0,3}$ are distinct, as the two statistics take different forms. 
Particularly, PoC statistic has the zero-gradient issue discussed above, whereas JS statistic has a certain inconsistent convergence issue. 
Despite that difference, we can similarly develop an adaptive bootstrap for  the JS test and obtain \textit{uniformly distributed} $p$-values.   
Refer to the detail in Section  \ref{sec:jst} of  the Supplementary Material.  
This suggests that our proposed adaptive bootstrap is not restricted to the Wald-type test and may be further generalized to other tests with similar circumstances.   

\bigskip 
\noindent \textsf{On Multivariate Mediators.} \label{page:discussionmultimed} 
It is worth noting that the proposed strategy can be generalized to deal with multiple   mediators under suitable identifiability conditions.
 In the following, 
we delve into three scenarios of practical importance.   

\medskip
\textit{(i)} We consider the group-level joint Mediation Effect (ME) via a set of mediators $\boldsymbol{M}=(M_1,\ldots, M_J)$ shown by the red path in Figure \ref{fig:gourpeffect} below. 
This type of joint ME has been considered in the literature by \cite{huang2016hypothesis} and \cite{hao2022simultaneous}, among others. 
We  generalize the AB method to test the joint mediation  effect in Section \ref{sec:jointestmulti}.   

\medskip
\textit{(ii)} We consider multiple mediators that are causally uncorrelated \citep{jerolon2020causal} or governed by  the parallel path model \citep{hayes2017introduction}.  
In this case, 
the indirect effect of one single mediator 
can be identified under the known identifiability assumptions outlined in \cite{imai2013identification}. In particular, under  
the multivariate  linear SEM \eqref{eq:fullmodmult1} with no
causal interplay between mediators,  
the null hypothesis of  
no individual indirect effect via one mediator, \textit{say}, $M_1$, could be formed as 
$H_0:\alpha_{S,1}\beta_{M,1}=0$, illustrated in Figure \ref{fig:singleeffect} below. 
To apply the AB test to $\alpha_{S,1}\beta_{M,1}$,
we note that \eqref{eq:fullmodmult1} can be equivalently rewritten as 
$M_1=\alpha_{S,1}S+{\boldsymbol{X}}^{\top}\boldsymbol{\alpha}_{{\boldsymbol{X}},1}+\epsilon_{M,1},
 $ and 
$
Y=\beta_{M,1}M_1+ \boldsymbol{M}_{(-1)}\boldsymbol{\beta}_{(-1)} + \X \boldsymbol{\beta}_{\X} +\tau_S S+\epsilon_Y,$ 
where  $\boldsymbol{\beta}_{(-1)}=(\beta_{M,2},\ldots, \beta_{M,J})^{\top}$ and $ \boldsymbol{M}_{(-1)}=(M_2,\ldots, M_J)^{\top}$.
This form resembles \eqref{eq:fullmod1}, and  
 the AB method in Section \ref{sec:abt} can be employed to test $\alpha_{S,1}\beta_{M,1}=0$ by adjusting $(\boldsymbol{M}_{(-1)},\X)$ in the outcome model.
 We provide details including the identification assumptions in Section \ref{sec:parallelmediator} of the Supplementary Material.  

\medskip
\textit{(iii)} When the mediators are causally correlated, evaluating individual indirect effects along different posited paths requires correct specification of the mediators' causal structure  \citep{vanderweele2014effect}.  
 To relax such stringent assumptions, 
 researchers have proposed alternative methods, one of which is to examine the interventional indirect effects specific to each distinct mediator \citep{loh2021disentangling}. 
Intuitively, 
the interventional indirect effect via a target mediator $M_1$ is supposed to capture all of the exposure-outcome effects that are mediated by $M_1$ as well as any other mediators causally preceding $M_1$;  
see a diagram in Figure \ref{fig:singleeffectinter} below. 
Under a typical class of linear and additive mean models, estimators of interventional indirect effects take the same product form   of coefficients as that in the above Case (ii).   
Thus the proposed AB method in Section \ref{sec:abt} can be similarly applied with little effort. 
We provide relevant details including the definition and identification assumptions of the interventional indirect effects in Section \ref{sec:intervenindirect} of the Supplementary Material.

\begin{figure}[h]  
\centering 
\begin{subfigure}[b]{0.26\linewidth}
  \begin{tikzpicture}[scale=0.9]  
\node[shape=circle, thick] (S) at (-2.2,0) {$S$};   

\node (M_all) at (0,0) {$ 
     \boldsymbol{M}=\left\lbrack\begin{aligned}
        M_1\\
        \vdots\ \, \\
        M_J\\
      \end{aligned}
    \right\rbrack$};
\node[shape=circle, thick] (Y) at (2.2,0) {$Y$};   
   
\node[shape=circle, thick] (X) at (0,-2.1) {$\boldsymbol{X}$};   
	
\draw[-stealth, thick, red] (S.east) -- node[midway,above] {\footnotesize $\boldsymbol{\alpha}_S$} (M_all.west);
\draw[-stealth, thick, red] (M_all.east) -- node[midway,above] {\footnotesize $\boldsymbol{\beta}_M$} (Y.west);
\draw[-stealth, thick,black!20] (X.west) .. controls + (left:8mm) and + (down:20mm) ..  node[midway,above] {} (S);
\draw[-stealth, thick,black!20] (X) -- (M_all);
\draw[-stealth, thick,black!20] (X.east) .. controls + (right:8mm) and + (down:20mm) .. node[midway,above] {} (Y); 

\draw[-stealth, thick] (S) .. controls + (up:20mm) and + (up:20mm)  ..  (Y);
\end{tikzpicture}
  \caption{\centering  Joint ME via a group of mediators $(M_1,\ldots, M_J)$}  \label{fig:gourpeffect}  
\end{subfigure}
\hspace{1.9em} %
\begin{subfigure}[b]{0.26\linewidth}
    \begin{tikzpicture}[scale=0.9]   
\node[shape=circle, thick] (S) at (-2.2,0) {$S$};   

\node (M_all) at (0,0) {$M_1$};

\node[shape=circle, thick] (Y) at (2.2,0) {$Y$};   

\node[thick] (X) at (0,-2.1) {$\boldsymbol{X}$};   

\node[thick] (Mother) at (-1.2,-1.2) {$\boldsymbol{M}_{(-1)}$};

\draw[-stealth, thick, red] (S.east) -- node[midway,above] {\footnotesize $\alpha_{S,1}$} (M_all.west);

\draw[-stealth, thick, red] (M_all.east) -- node[midway,above] {\footnotesize $\beta_{M,1}$} (Y.west);
\draw[-stealth, black!20, thick] (X.west) .. controls + (left:8mm) and + (down:20mm) .. node[midway,above] {} (S);
\draw[-stealth, thick, black!20] (X.north) -- (M_all.south); 

\draw[dashed, thick, black!40] (Mother.north) -- (M_all.215); 

\draw[-stealth, thick, black!20] (X.175) -- (Mother.south); 

\draw[-stealth, thick, black!20] (X.east) .. controls + (right:8mm) and + (down:20mm)  ..  node[midway,above] {} (Y); 

\draw[-stealth, thick] (S) .. controls + (up:20mm) and + (up:20mm)  ..  (Y);
\end{tikzpicture}%
    \caption{\centering  Individual ME via one mediator $M_1$} \label{fig:singleeffect}  
  \end{subfigure}
  \hspace{1.9em} %
\begin{subfigure}[b]{0.26\linewidth}
    \begin{tikzpicture}[scale=0.9]   
\node[shape=circle, thick] (S) at (-2.2,0) {$S$};   

\node (M_all) at (0,0) {$M_1$};

\node[shape=circle, thick] (Y) at (2.2,0) {$Y$};   

\node[thick] (X) at (0,-2.1) {$\boldsymbol{X}$};   

\node[thick] (Mother) at (-1.2,-1.4) {$\boldsymbol{M}_{1,\text{pre}}$};

\draw[-stealth, thick, red] (S.east) -- node[midway,above] {} (M_all.west);

\draw[-stealth, thick, red] (M_all.east) -- node[midway,above] {} (Y.west);

\draw[-stealth, black!20, thick] (X.west) .. controls + (left:8mm) and + (down:20mm) .. node[midway,above] {} (S);
\draw[-stealth, thick, black!20] (X.north) -- (M_all.south); 

\draw[-stealth, thick, red] (S.east) -- node[midway,above] {} (Mother.102);

\draw[-stealth, thick, red] (Mother.78) -- (M_all.215); 

\draw[-stealth, thick, black!20] (X.175) -- (Mother.south); 

\draw[-stealth, thick, black!20] (X.east) .. controls + (right:8mm) and + (down:20mm)  ..  node[midway,above] {} (Y); 

\draw[-stealth, thick] (S) .. controls + (up:20mm) and + (up:20mm)  ..  (Y);
\end{tikzpicture}%
    \caption{\centering  Interventional ME via one mediator $M_1$} \label{fig:singleeffectinter}  
  \end{subfigure}
\caption{Path diagram of the mediation model with multiple mediators: Dashed lines represent possible non-causal correlations or independence, and solid arrowed lines represent possible causal relationships.
In Panel (c), $\boldsymbol{M}_{1,\operatorname{pre}}$ represents mediators that are causally preceding $M_1$. }\label{fig:multiplemed}
\end{figure}
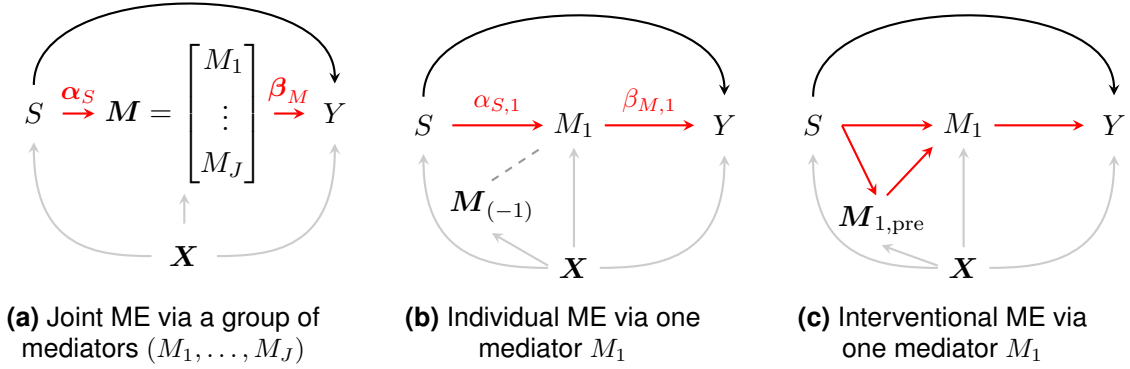

\section{Numerical Experiments} \label{sec:sim}
In this section, we conduct simulation experiments to evaluate the finite-sample performance of the proposed adaptive bootstrap PoC and JS tests. 
Particularly, we generate data through the following model:
\begin{eqnarray}
	\M &=&  \alphaS \S + \alpha_I+ \alpha_{X,1} X_1 + \alpha_{X,2} X_2  +\eM, \label{eq:simmodel}  \\ 
		\Y & = & \betaM \M +  \beta_I + \beta_{X,1} X_1 + \beta_{X,2} X_2  + \DE \S + \eY.  \notag
\end{eqnarray}
In the model \eqref{eq:simmodel}, the exposure variable $\S$ is simulated from the Bernoulli distribution with the success probability 0.5; the covariate $X_1$ is continuous and simulated from a standard normal distribution $\mathcal{N}(0,1)$; the covariate $X_2$ is discrete and  simulated from  the Bernoulli distribution with the success probability 0.5; two error terms $\eM$ and $\eY$ are simulated independently from $\mathcal{N}(0,\sigma_{\eM}^2)$ and $ \mathcal{N}(0,\sigma_{\eY}^2)$, respectively.  
We set the  parameters  $(\alpha_I, \alpha_{X,1}, \alpha_{X,2}) = (1, 1, 1)$, $(\beta_I, \beta_{X,1}, \beta_{X,2}) = (1, 1, 1)$, $\DE=1$, and $\sigma_{\eY}=\sigma_{\eM}=0.5$. 
Moreover,  we consider sample sizes $n\in \{200, 500\}$, and set the bootstrap sample size at 500.

In simulation studies, we compare eight testing methods: 
 the adaptive bootstrap for the PoC test (PoC-AB),
 the classical nonparametric bootstrap for the PoC test (PoC-B),
 Sobel's test (PoC-Sobel),
 the  adaptive bootstrap  for the JS test (JS-AB),
 the classical nonparametric bootstrap for the JS test (JS-B),
 the MaxP test (JS-MaxP), 
 the nonparametric bootstrap method in the causal mediation analysis R package  \cite{tingley2014mediation} (CMA),
 and the method in 
\cite{huang2019genome} (MT-Comp). 
It is noteworthy that \cite{huang2019genome}'s MT-Comp made specific model assumptions, 
which are not fully compatible with our simulation  settings, and we include this method just for the purpose of comparison.
Some other methods \citep[e.g.,][]{liu2020large,dai2020multiple} relied on estimating the relative proportions of the three cases, which is  not directly  applicable here  and thus not included.

\subsection{Null Hypotheses: Type \RNum{1} Error Rates} \label{sec:nulltype1error} 

\bigskip
\noindent \textit{Setting 1: Under a fixed type of null.} \quad
In the first setting, we 
simulate data under a fixed null hypothesis over 2000 Monte Carlo replications  to  estimate the distribution of $p$-values. 
Particularly, we consider 
three types of null hypotheses below:
\begin{align}
 H_{0,1}: (\alphaS, \betaM)=(0,0.5),\quad \ H_{0,2}: (\alphaS, \betaM)=(0.5,0), \quad \ H_{0,3}: (\alphaS, \betaM)=(0,0). \label{eq:simcompnull}
\end{align} 
We draw the Q-Q plots with $n=200$ in  Figure \ref{fig:fixnulln200}. 
QQ-plots under $n=500$ are similar and presented in Figure  \ref{fig:fixnulln500} of the Supplementary Material. 
In Figure \ref{fig:fixnulln200}, three subfigures in the first row present the results of the PoC tests under three fixed  nulls $H_{0,1}, H_{0,2}$, and $H_{0,3}$, respectively, and  three subfigures in  the second row present the corresponding results of the JS tests, respectively.

 \begin{figure}[!htbp]
 \captionsetup[subfigure]{labelformat=empty} 
     \centering
       \caption{Q-Q plots of $p$-values under the fixed null with $n = 200$.} \label{fig:fixnulln200}
  \begin{subfigure}[b]{0.29\textwidth}
      \caption{\footnotesize{\quad \  $H_{0,1}: (\alphaS, \betaM)=(0, 0.5)$}}
  \includegraphics[width=\textwidth]{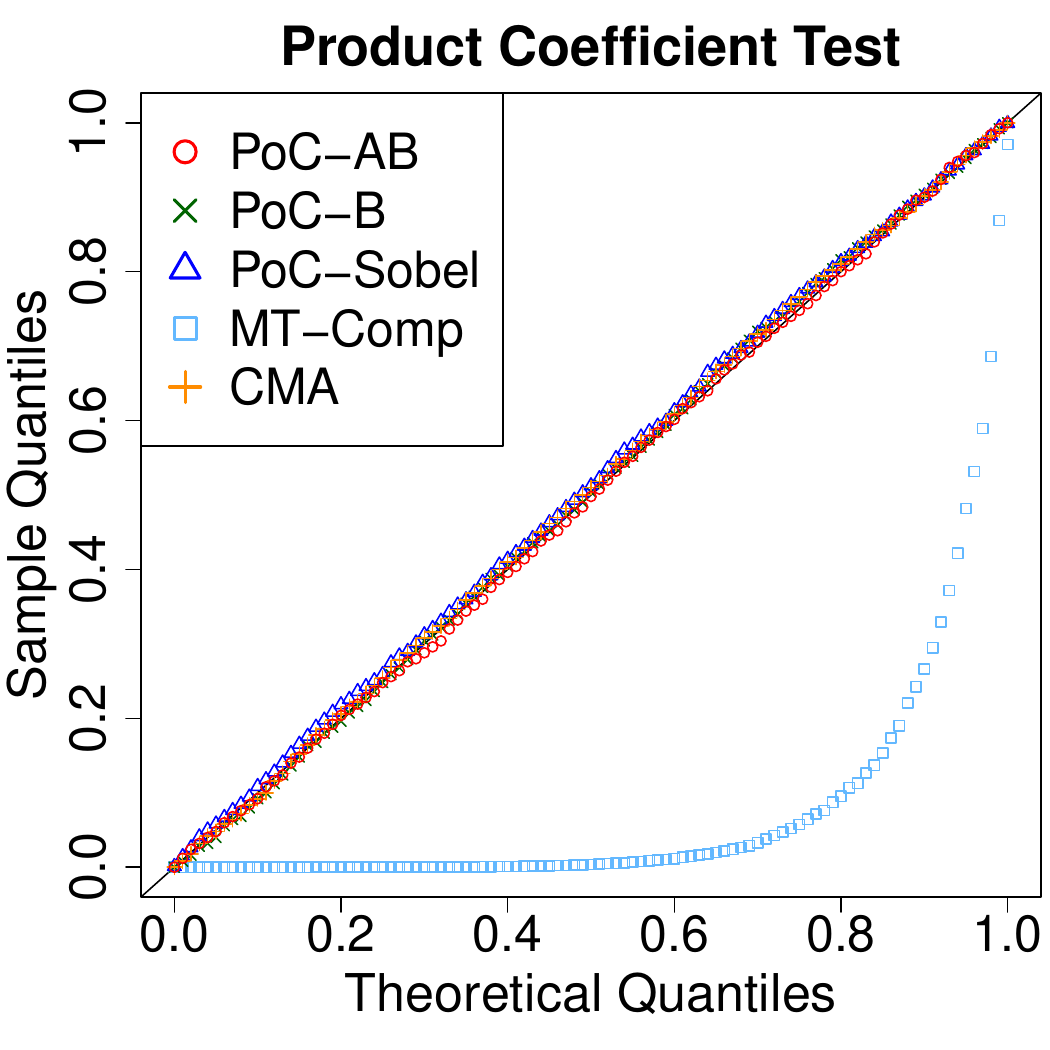}
     \end{subfigure} \  
       \begin{subfigure}[b]{0.29\textwidth}
   \caption{\footnotesize{\quad \  $H_{0,2}: (\alphaS, \betaM)=(0.5, 0)$}}
     \includegraphics[width=\textwidth]{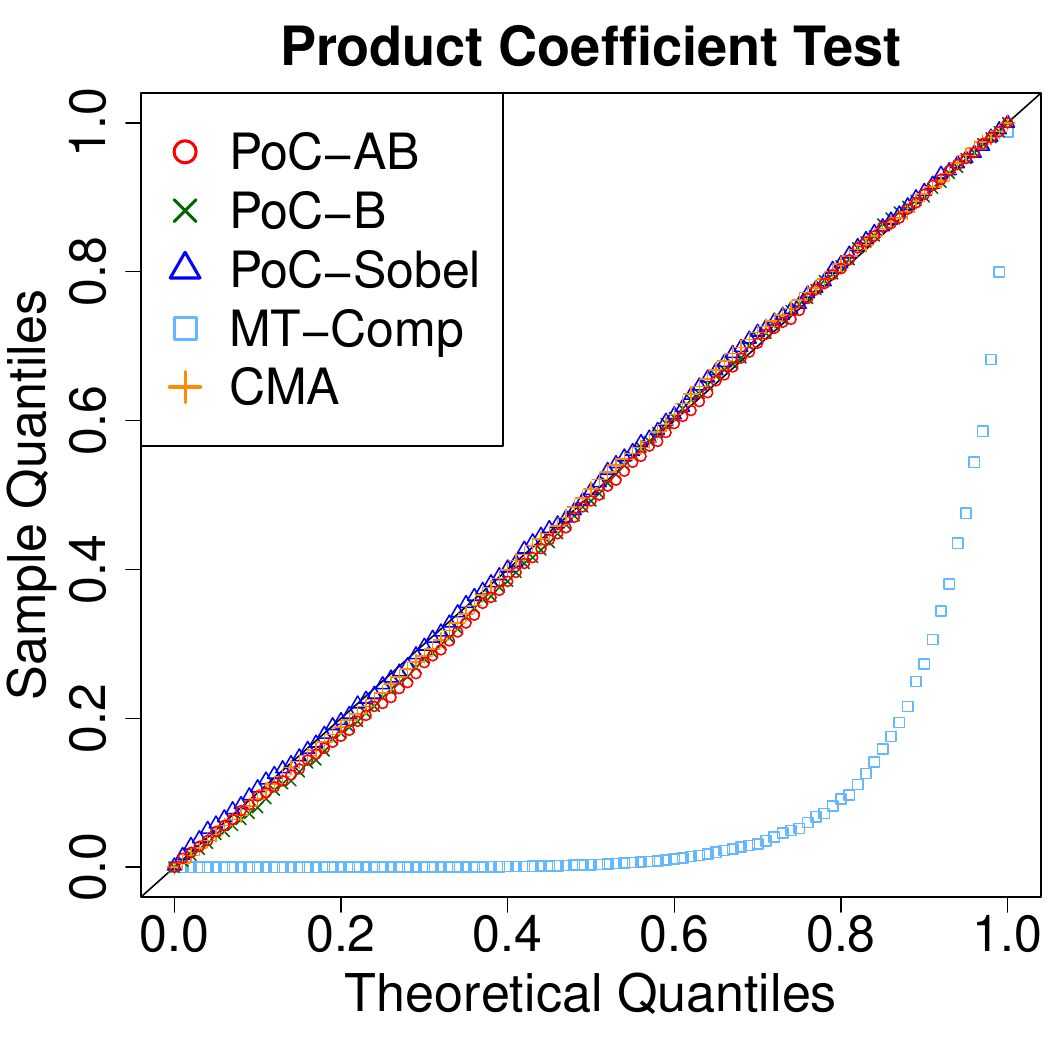}
     \end{subfigure} \
       \begin{subfigure}[b]{0.29\textwidth}
   \caption{\footnotesize{\quad \  $H_{0,3}: (\alphaS, \betaM)=(0,0)$}}
     \includegraphics[width=\textwidth]{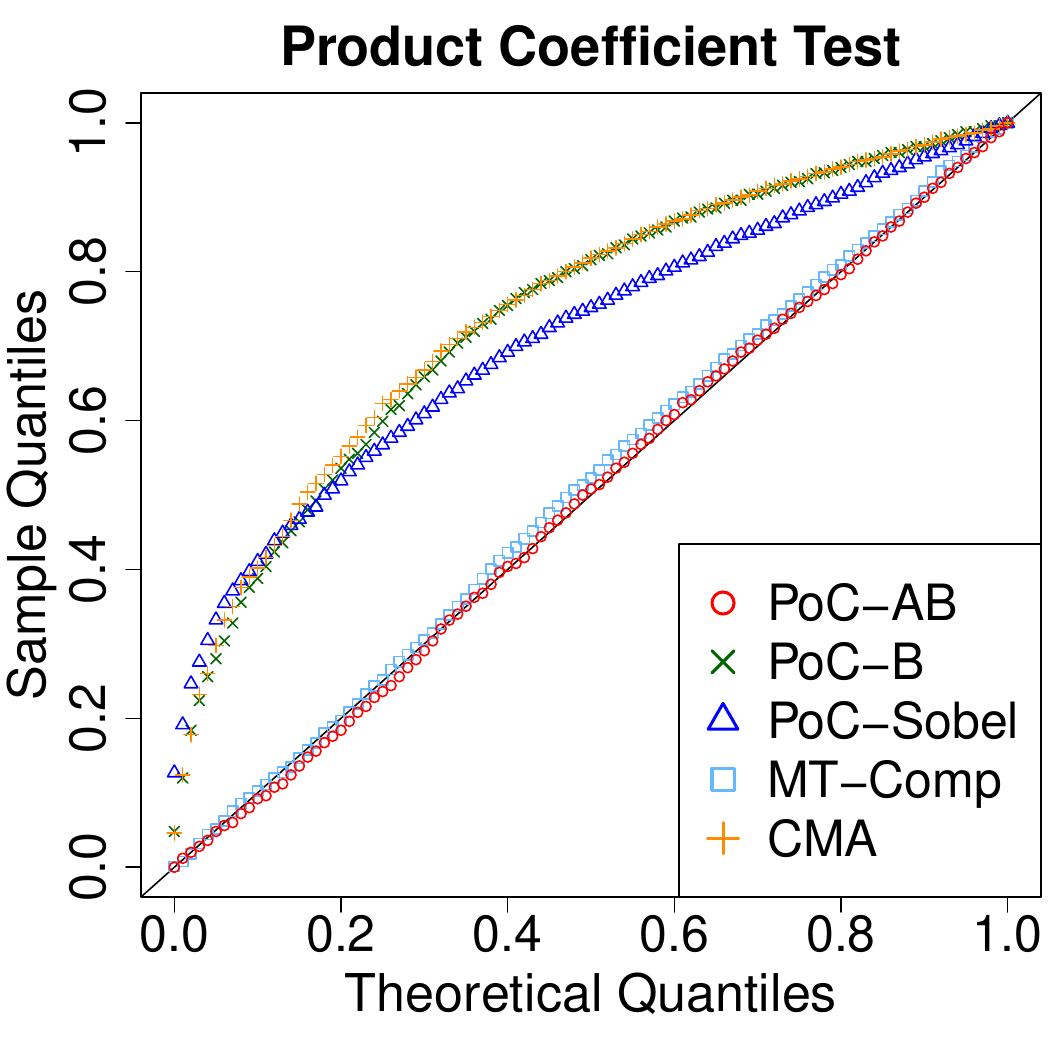}
   \end{subfigure} \\ 
       \begin{subfigure}[b]{0.29\textwidth}
             \includegraphics[width=\textwidth]{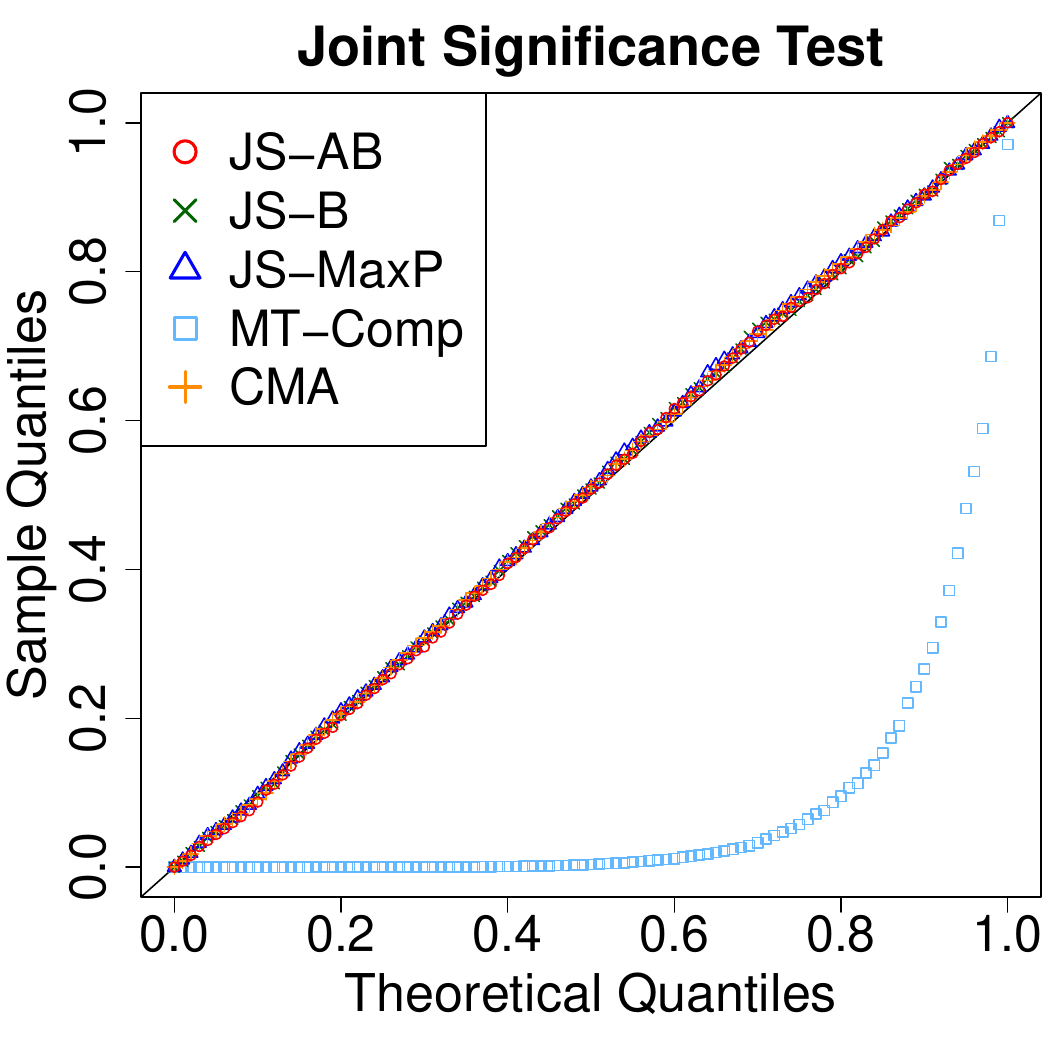}
       \end{subfigure}\
       \begin{subfigure}[b]{0.29\textwidth}
             \includegraphics[width=\textwidth]{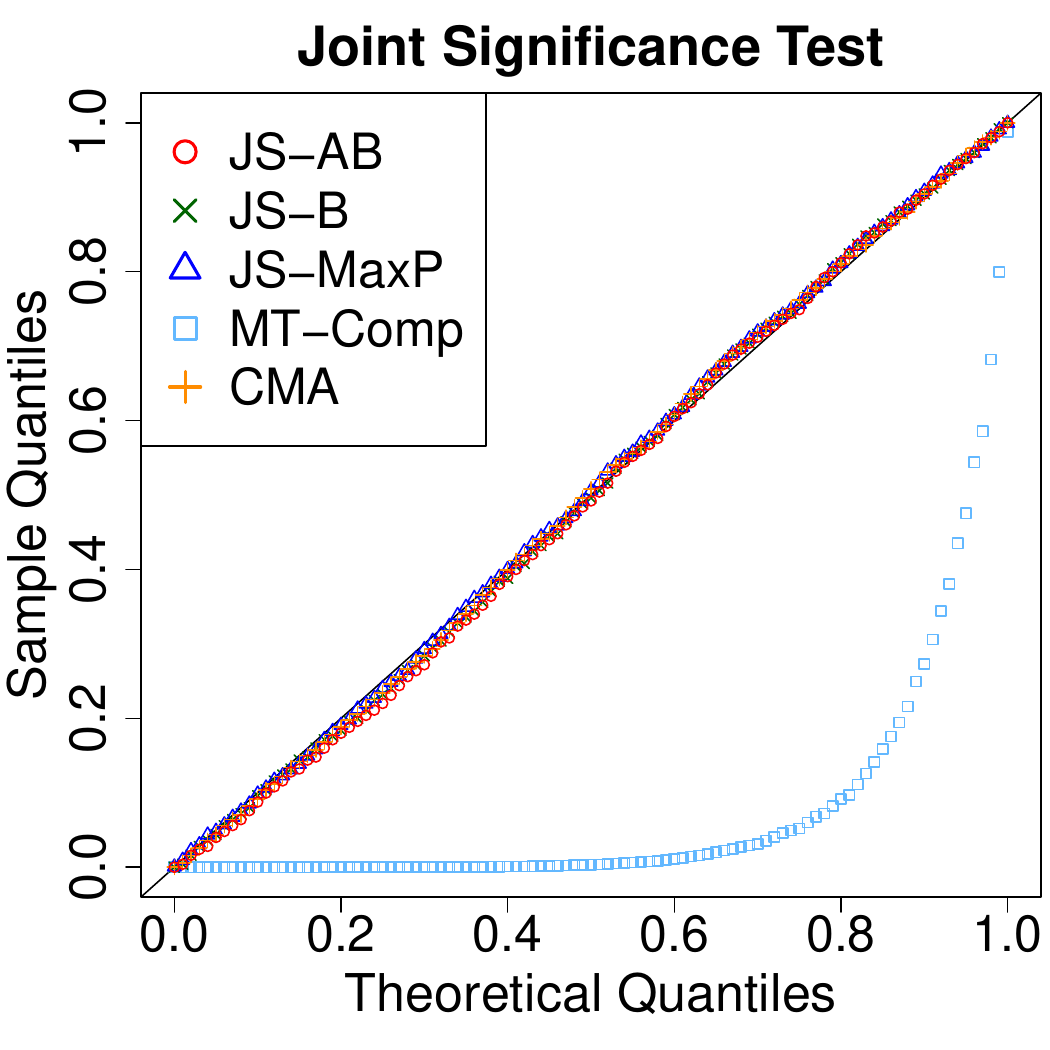}
       \end{subfigure} \ 
   \begin{subfigure}[b]{0.29\textwidth}
             \includegraphics[width=\textwidth]{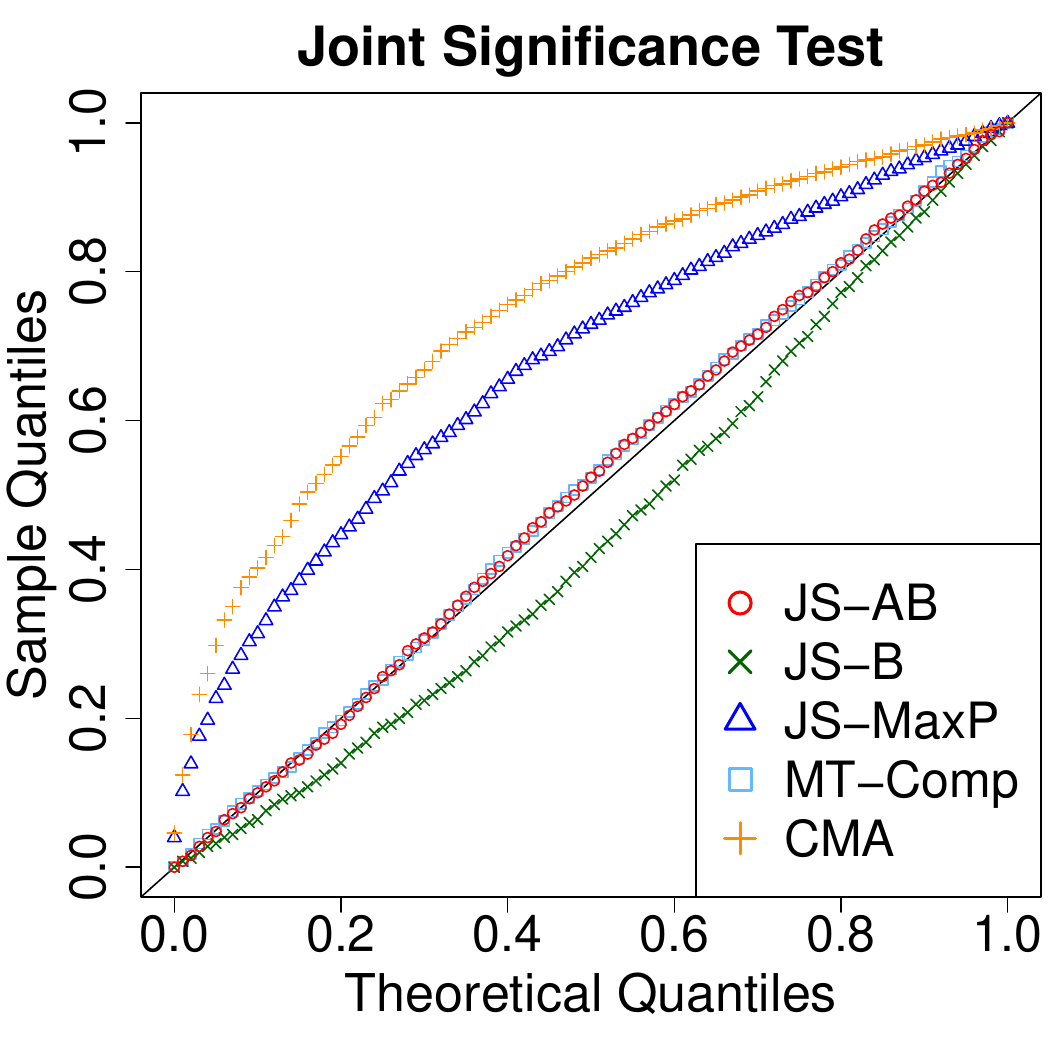}
       \end{subfigure}

 \end{figure}

Figure \ref{fig:fixnulln200}  shows that for the PoC type of tests, 
under $H_{0,1}$ or $H_{0,2}$,
the PoC-AB, the  PoC-B, and the PoC-Sobel can correctly approximate the distribution of the PoC test statistic.  
However, under $H_{0,3}$,  
the PoC-B and the PoC-Sobel  become conservative, while 
the proposed PoC-AB still approximates the distribution of the PoC statistic well.   
Similarly, for the JS type of tests,  
under $H_{0,1}$ or $H_{0,2}$,
the JS-AB, the JS-B, and the JS-MaxP all work well. 
In contrast, under $H_{0,3}$, 
the JS-B inflates,
and the JS-MaxP becomes conservative,  
while the JS-AB still exhibits a good performance. 
In addition, Figures \ref{fig:fixnulln200} and \ref{fig:fixnulln500} also display the results of both \cite{huang2019genome}'s MT-Comp and the causal mediation analysis R package CMA \citep{tingley2014mediation} for comparison.
We observe that the MT-Comp  properly controls the type \RNum{1} error under $H_{0,3}$, but fails to do so   under $H_{0,2}$ and $H_{0,3}$ with inflated type I errors.  This may be because the models considered in \cite{huang2019genome} are not compatible with our simulation settings.  
On the other hand, the causal mediation R package \citep{tingley2014mediation} produces uniformly distributed $p$-values under $H_{0,1}$ and $H_{0,2}$,
but is conservative under $H_{0,3}$.  This means that the R package CMA test is underpowered.

\medskip
\noindent \textit{Setting 2: Under a random type of null.} \quad
In the second setting, we simulate data 
over 2000 Monte Carlo replications, where  in each replication, the null hypothesis is not fixed but   randomly  selected from   $H_{0,1}$--$H_{0,3}$ in  \eqref{eq:simcompnull}. 
Specifically, for $(H_{0,1},H_{0,2}, H_{0,3})$, 
we consider 
three selection probabilities (I) $(1/3, 1/3, 1/3)$, (II) $(0.2, 0.2, 0.6)$, and (III) $(0.05, 0.05, 0.9)$,  respectively. 
We provide QQ-plots of $p$-values with $n=200$ in Figure \ref{fig:mixnulln200}, and QQ-plots under  $n=500$ are similar and  provided in Figure \ref{fig:mixnulln500} of the Supplementary Material. 
In Figure \ref{fig:mixnulln200}, three subfigures in the first row present the results of the PoC tests with three null selection  probabilities (I)--(III), respectively, 
and three subfigures in the second row present the corresponding results of the JS test, respectively.

Figures \ref{fig:mixnulln200} shows that  
the adaptive bootstrap procedures for the PoC and JS tests perform well under different settings. 
The  PoC-B test, PoC-Sobel's test, the JS-MaxP test, and the R package CMA \citep{tingley2014mediation} are conservative, and they become more conservative as the probability of choosing $H_{0,3}$ increases.  
We mention that in many biological studies such as genomics, $H_{0,3}$  predominates the null cases,
hence these  tests that are conservative under $H_{0,3}$ may not be preferred.  
Moreover, the JS-B test and the MT-Comp method can have inflated type \RNum{1} errors.  The performance of JS-B becomes worse as the proportion of $H_{0,3}$ rises, while  the MT-Comp method deteriorates as the proportions of $H_{0,1}$ and $H_{0,2}$ increase. 

 \begin{figure}[!htbp]
 \captionsetup[subfigure]{labelformat=empty}
     \centering
       \caption{Q-Q plots of $p$-values under the mixture of  nulls: $n = 200$.} \label{fig:mixnulln200}
   \begin{subfigure}[b]{0.29\textwidth}
   \caption{\footnotesize{\quad \  (I) $(1/3, 1/3, 1/3)$}}
     \includegraphics[width=\textwidth]{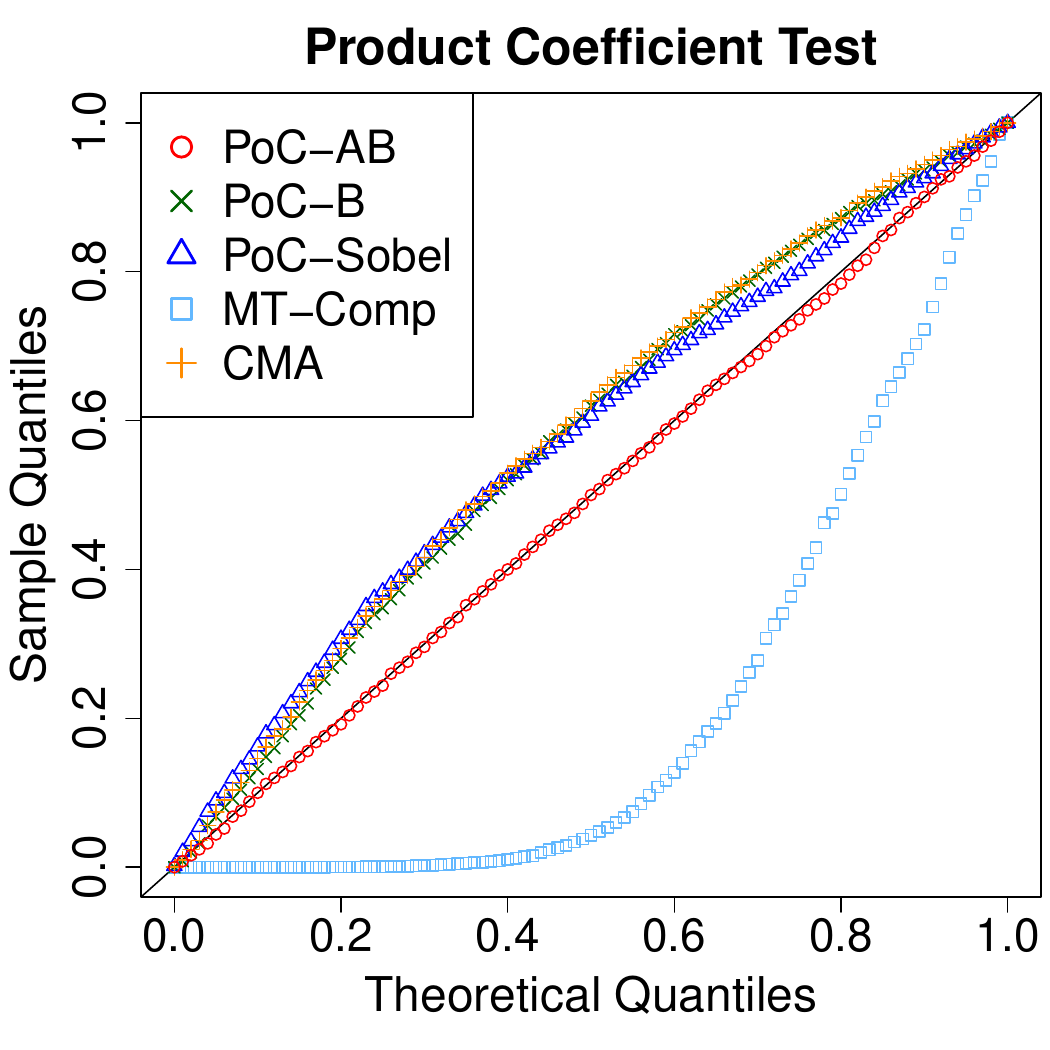}
   \end{subfigure} \    
  \begin{subfigure}[b]{0.29\textwidth}
      \caption{\footnotesize{\quad \  (II) $(0.2, 0.2, 0.6)$}}
  \includegraphics[width=\textwidth]{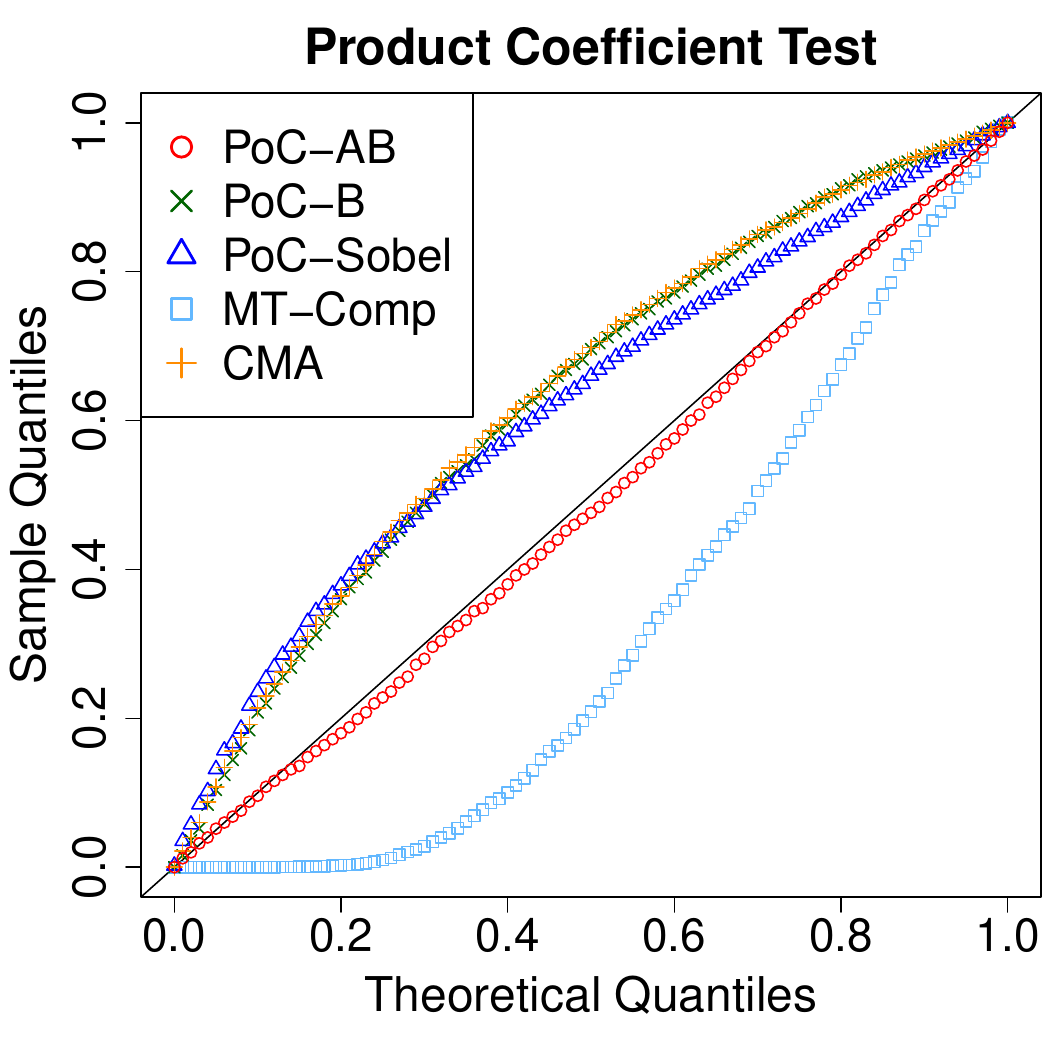}
     \end{subfigure} \  
       \begin{subfigure}[b]{0.29\textwidth}
   \caption{\footnotesize{\quad \  (III) $(0.05, 0.05, 0.9)$}}
     \includegraphics[width=\textwidth]{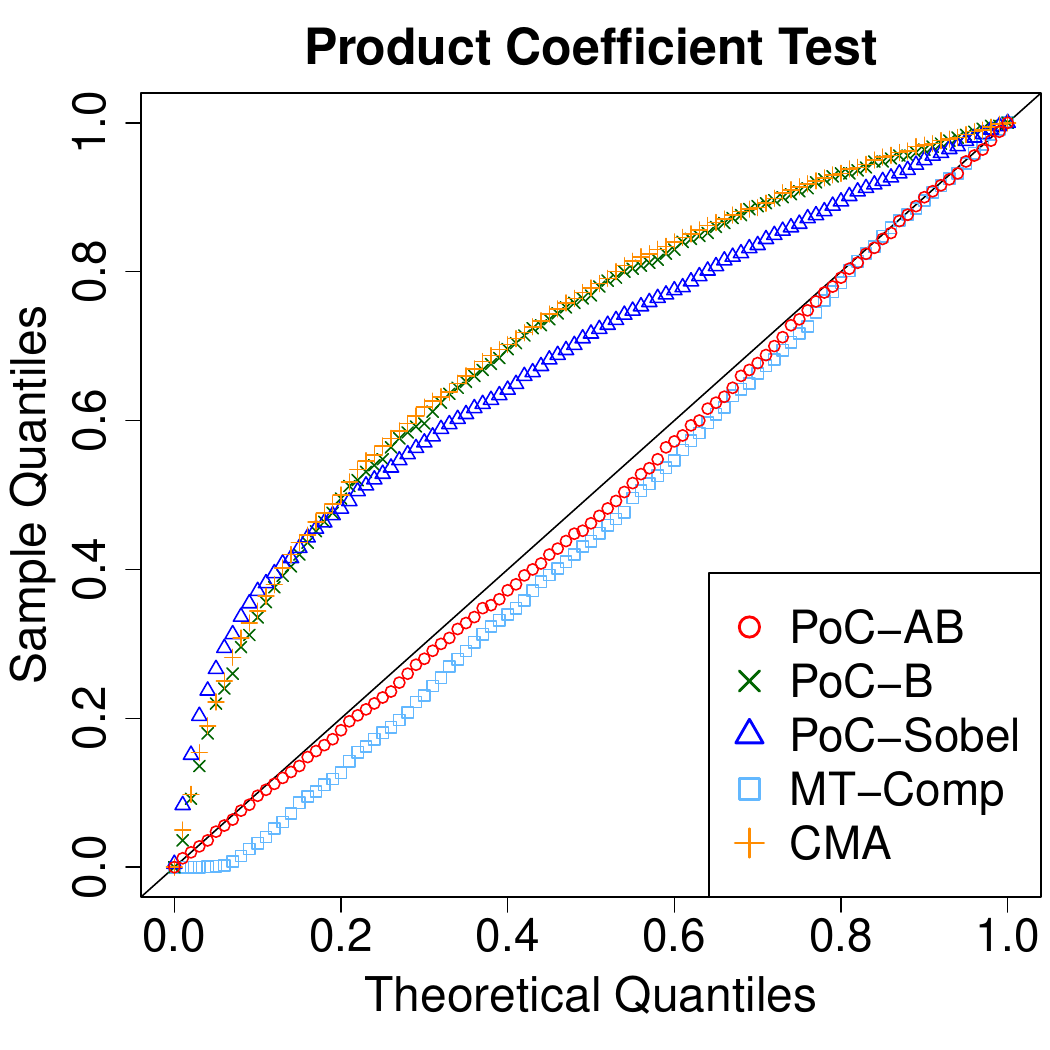}
     \end{subfigure} \\ 
       \begin{subfigure}[b]{0.29\textwidth}
             \includegraphics[width=\textwidth]{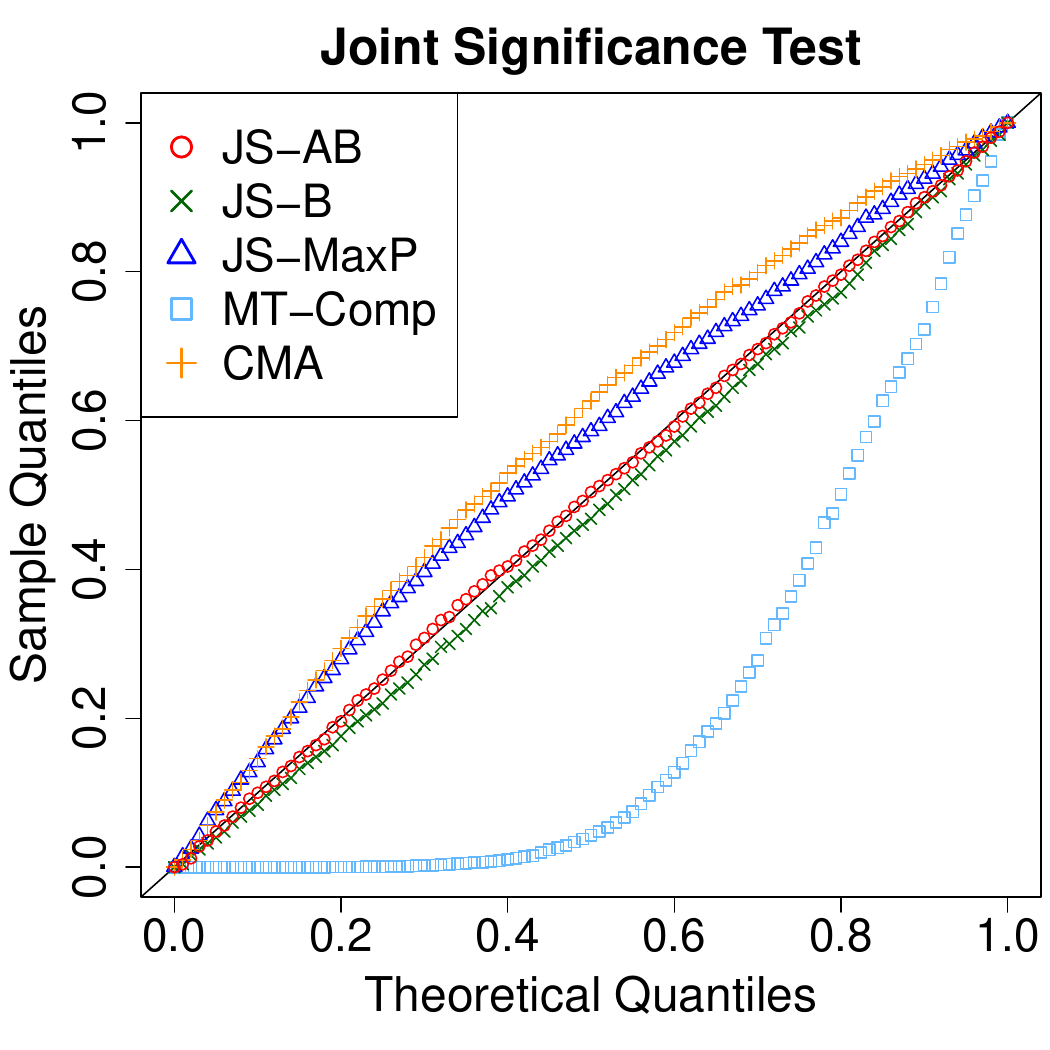}
       \end{subfigure} \  
       \begin{subfigure}[b]{0.29\textwidth}
             \includegraphics[width=\textwidth]{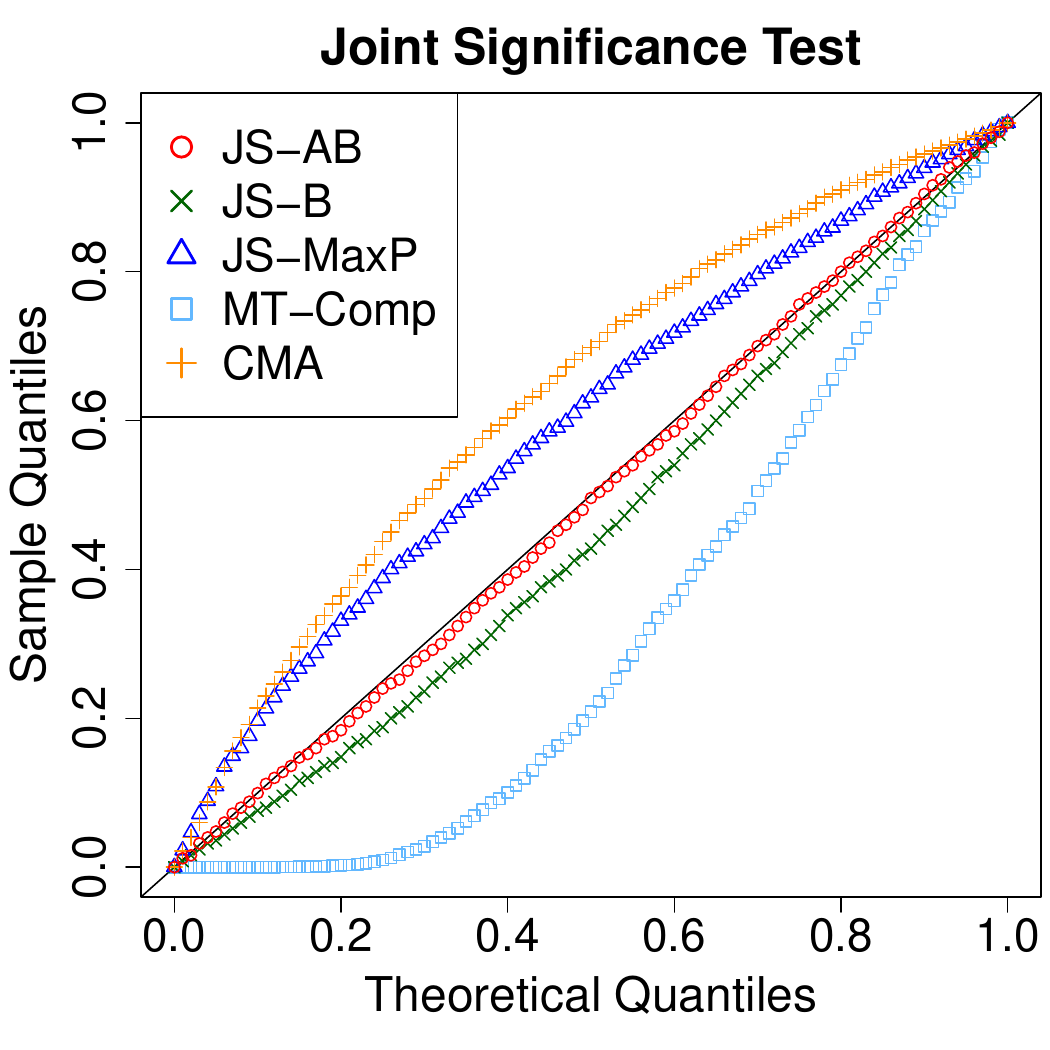}
       \end{subfigure}\
       \begin{subfigure}[b]{0.29\textwidth}
             \includegraphics[width=\textwidth]{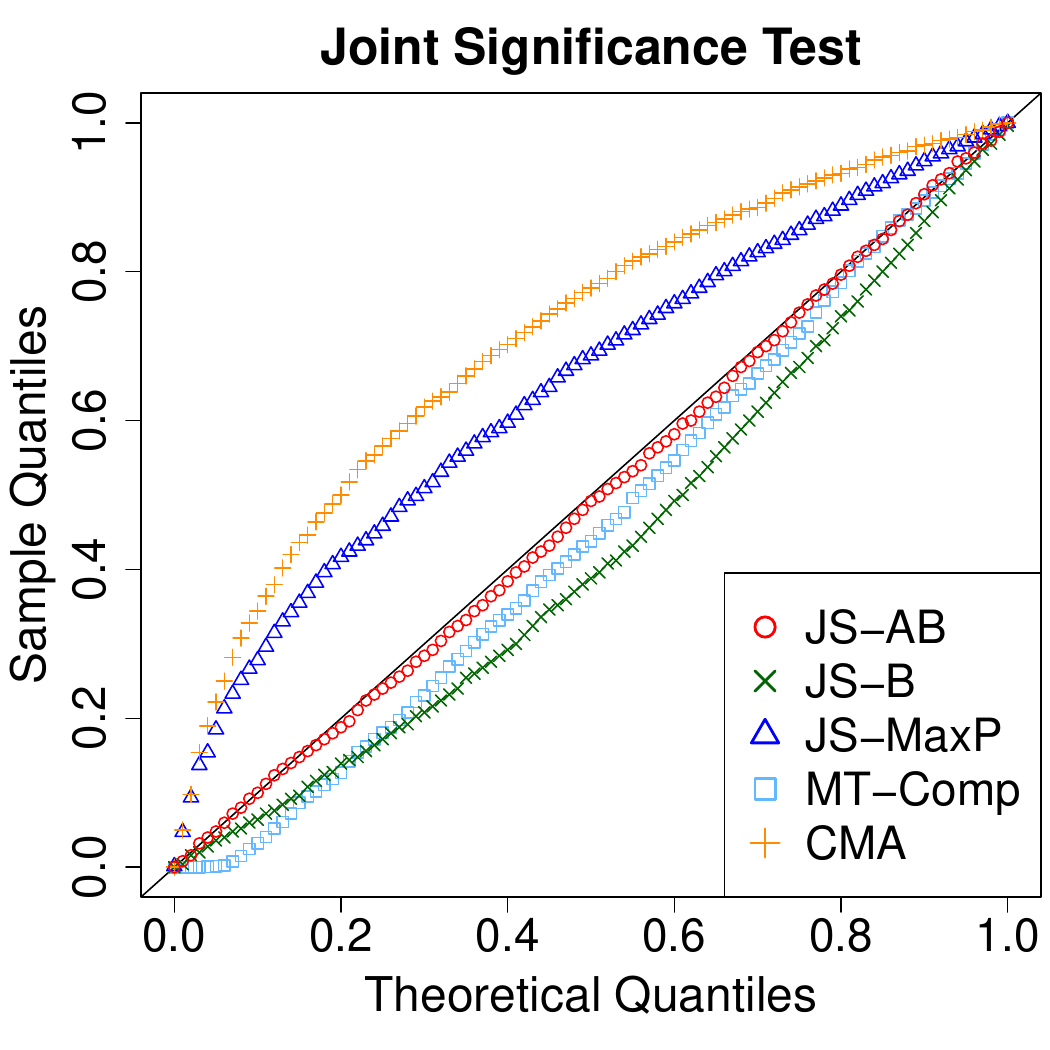}
       \end{subfigure} 
 \end{figure}

\subsection{Alternative Hypotheses: Statistical  Power}\label{sec:alterpower}

In this subsection, we evaluate the statistical  power  of the proposed AB tests under alternative hypotheses. 
Particularly, we simulate data under two settings: 
(I) fix $\alphaS=\betaM$ for the convenience of pictorial presentation, which   takes various values  beginning from zero; 
(II) fix the size of the mediation effect $\alphaS \betaM$,  and vary the ratio $\alphaS/ \betaM$. 
In the setting (I), we consider $n \in \{200, 500\}$, and then plot the empirical  rejection rates, based on 500 Monte Carlo replications, versus the signal size of $\alphaS$, which is equal to $ \betaM$ in the setting (I).
In the setting (II), we fix  $\alphaS \betaM=0.04$ when $n=200$, and $\alphaS \betaM=0.015$ when $n=500$.
Then we plot the empirical  rejection rates versus the ratio $\alphaS/\betaM$. 
The results in the two settings (I) and (II) are shown in Figures  \ref{fig:powerequal} and  \ref{fig:powerratio}, respectively.

Figures \ref{fig:powerequal} and  \ref{fig:powerratio} show that for the three PoC tests,  the PoC-AB has higher power than that of the classical nonparametric bootstrap, and both are more powerful than the Sobel's test. 
Similarly, for the JS tests, 
the JS-AB has higher power than that of the  classical bootstrap, and both have higher power than the MaxP test. In addition, the JS-B test   has slightly inflated type \RNum{1}  errors when $(\alphaS,  \betaM)=(0,0)$, which is consistent with the results in Figure \ref{fig:fixnulln200}.
Among the three classical methods 
(Sobel's test, the MaxP test, and the PoC-B), the MaxP test seems to achieve the best balance between the type \RNum{1} error and the statistical power, while Sobel's test has the lowest power.   
 These findings are consistent with those reported in the current literature \citep{mackinnon2002comparison,barfield2017testing}. 
\cite{huang2019genome}'s  MT-Comp test 
has shown seriously inflated type \RNum{1} errors 
in Figure \ref{fig:fixnulln200}, and therefore is not a fair competitor in our considered settings  despite its high power. Overall, it is clear that the proposed PoC-AB and JS-AB tests are superior over these existing methods, with the most robust control of type I error and highest power.

\begin{figure}[!htbp]
\centering
 \includegraphics[width=0.9\textwidth]{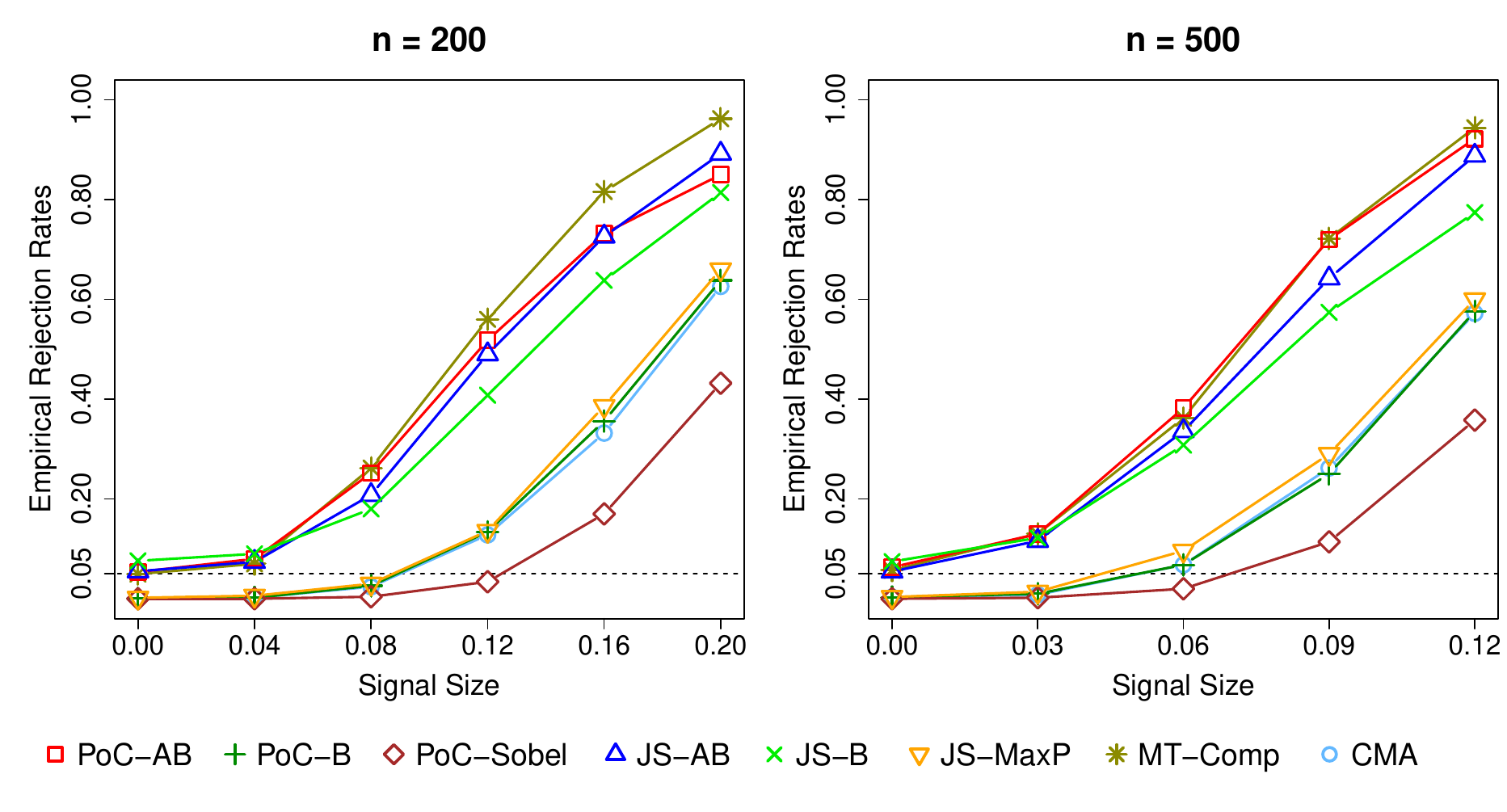}
 \caption{Empirical rejection rates (power) versus the signal strength of $\alphaS = \betaM$.} \label{fig:powerequal}
\end{figure}
\begin{figure}[!htbp] 
\centering
 \includegraphics[width=0.9\textwidth]{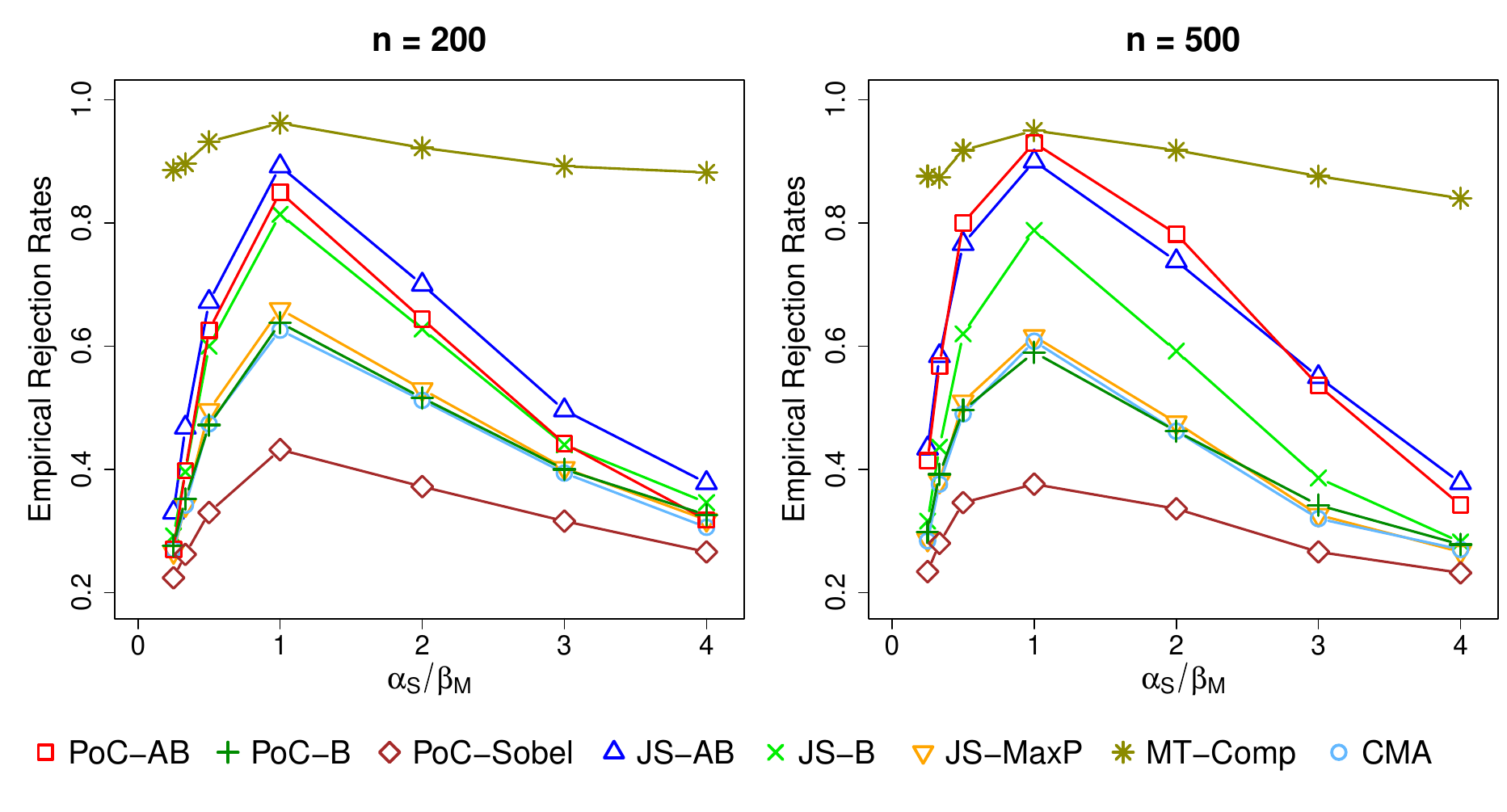}
 \caption{Empirical rejection rates (power) versus the ratio $\alphaS/\betaM$. } \label{fig:powerratio}
\end{figure}

\section{Extensions}\label{sec:extendmodel}

The adaptive bootstrap in Section \ref{sec:abt} offers a general strategy that can be extended in a wide range of scenarios beyond the model \eqref{eq:fullmod1}.   
We next examine three examples, including  testing the joint mediation effect of multivariate mediators in Section \ref{sec:jointestmulti},
testing the mediation effect in terms of odds ratio for a binary outcome in Section \ref{sec:asymbinarymedout}, 
and  testing the mediation effect in terms of risk difference when the outcome is continuous, and the mediator follows a generalized linear model in Section \ref{sec:modelbinmed}. 
In each scenario, we present details in the order of \textit{(1)} Model, \textit{(2)} Non-regularity issue, \textit{(3)} Asymptotic theory and adaptive bootstrap, and \textit{(4)} Numerical results. 

\vspace{-1.2em}

\subsection{Testing Joint Mediation Effect of  Multivariate Mediators}\label{sec:jointestmulti}

When the number of mediators is large,   
it can also be of interest to conduct  group-based mediation analyses for a set of mediators  
\citep{vanderweele2014mediation,daniel2015causal,huang2016hypothesis,sohn2019compositional,hao2022simultaneous}; 
also see a review  in \cite{blum2020challenges}. 
In this section, we show that the proposed AB method can be generalized to  test joint mediation effects. 
\medskip

\noindent 
\textit{(1) Model.} 
As an extension of \eqref{eq:fullmod1}, 
we consider the multivariate linear SEM \citep{vanderweele2014mediation,huang2016hypothesis,hao2022simultaneous},   
\begin{align}
\M_j =\alpha_{\S,j} \S +  \X^\mytrans \alpha_{\X,j} +\error_{\M,j},	\hspace{2em}\Y  = \sum_{j=1}^{\nM} \beta_{\M,j} \M_j + \X^\mytrans  \betaX + \DE \S + \eY, \label{eq:fullmodmult1} 
\end{align}
where $\X$ denotes a vector of confounders with the first element being 1 for the intercept, 
$\eY$ and $\boldsymbol{\epsilon}_M:=(\epsilon_{M,1},\ldots, \epsilon_{M,J})^{\top}$ are independent error terms with mean zero, $\mathrm{var}(\eY)=\sigmaeY^2$, and $\mathrm{cov}(\boldsymbol{\epsilon}_M)=\boldsymbol{\Sigma}_M$. 
Assume identification conditions similar to those in Section \ref{sec:review} (see Condition \ref{cond:multiidentifiability} in the Supplementary Material). The joint mediation effect through the group of mediators $\boldsymbol{M}$ is $\mathrm{E}\{\Y(s, \boldsymbol{\M}(s)) -\Y(s, \boldsymbol{\M}(s^*))\}=(s-s^*)\boldsymbol{\alpha}_{\S}^{\mytrans}\boldsymbol{\beta}_{\M}$  \citep{huang2016hypothesis}, where $\boldsymbol{\alpha}_{S}=(\alpha_{S,1},\ldots, \alpha_{S,J})^{\top}$ and  $\boldsymbol{\beta}_{M}=(\beta_{M,1},\ldots, \beta_{M,J})^{\top}$.

\medskip
\noindent 
\textit{(2) Non-Regularity Issue.}  
We are interested in $H_0:$  joint mediation effect  $=0$, which is equivalent to $H_0: \boldsymbol{\alpha}_{\S}^{\mytrans}\boldsymbol{\beta}_{\M}=0$. 
Similarly to Section \ref{sec:abt}, 
when $(\boldsymbol{\alpha}_{\S}, \boldsymbol{\beta}_{\M}) \neq \boldsymbol{0}$, i.e., there exists at least one coefficients $\alpha_{S,j}\neq 0$ or $\beta_{M,j}\neq 0$, 
we have 
${\partial  (\boldsymbol{\alpha}_{\S}^{\mytrans}\boldsymbol{\beta}_{\M})}/{\partial \alpha_{S,j}}=\beta_{M,j}\neq 0$ or ${\partial  (\boldsymbol{\alpha}_{\S}^{\mytrans}\boldsymbol{\beta}_{\M})}/{\partial \beta_{M,j}} =\alpha_{S,j}\neq 0$. 
However, when $(\boldsymbol{\alpha}_{\S}, \boldsymbol{\beta}_{\M}) = \boldsymbol{0}$, i.e.,
 $\alpha_{S,j}=\beta_{M,j}=0$ for all $j\in \{1,\ldots, J\}$, 
${\partial (\boldsymbol{\alpha}_{\S}^{\mytrans}\boldsymbol{\beta}_{\M})}/{\partial \alpha_{S,j}} = {\partial  (\boldsymbol{\alpha}_{\S}^{\mytrans}\boldsymbol{\beta}_{\M})}/{\partial \beta_{M,j}}=0$ for all $j\in\{1,\ldots, J\}$.   
We expect that  a non-regularity issue similar to that in Section \ref{sec:abt} would occur when $(\boldsymbol{\alpha}_{\S}, \boldsymbol{\beta}_{\M}) = \boldsymbol{0}$. 
This issue is also illustrated by numerical experiments in Section \ref{sec:numericresultsmulti} of the Supplementary Material.

\medskip
\noindent 
\textit{(3) Asymptotic Theory and Adaptive Bootstrap.}   
To better understand the non-regularity issue, we similarly consider a local linear SEM $\M_j =\alpha_{\S,j,n} \S +  \X^\mytrans \alpha_{\X,j} +\error_{\M,j},$ and $\Y  = \sum_{j=1}^{\nM} \beta_{\M,j,n} \M_j + \X^\mytrans  \betaX + \DE \S + \eY$, where  ${\alpha}_{\S, j, n}={\alpha}_{\S,j} +n^{-1/2}{b}_{\alpha,j}$ and ${\beta}_{\M, j, n}={\beta}_{\M,j} + n^{-1/2}{b}_{\beta,j}$.

\begin{theorem}[Asymptotic Property]\label{thm:limitmulti}
 Under Conditions \ref{cond:multiidentifiability} and  \ref{cond:multinversemoment} (the latter is a regularity condition on the design matrix similar to Condition \ref{cond:inversemoment}), and the local model, \vspace{-0.4em}
\begin{itemize}
	\item[(i)] 
 when $(\boldsymbol{\alpha}_{\S}, \boldsymbol{\beta}_{\M}) \neq \boldsymbol{0}$, 
$
\sqrt{n}\times (\hat{\boldsymbol{\alpha}}_{\S,n}^{\mytrans}\hat{\boldsymbol{\beta}}_{\M,n}-{\boldsymbol{\alpha}}_{\S,n}^{\mytrans}{\boldsymbol{\beta}}_{\M,n})\xrightarrow{d} {\boldsymbol{\alpha}}_{\S}^{\mytrans} \vec{Z}_{\M}+\boldsymbol{\beta}_{\M}^{\mytrans}\vec{Z}_{\S};$ \vspace{-0.4em}
\item[(ii)] when $(\boldsymbol{\alpha}_{\S}, \boldsymbol{\beta}_{\M}) = \boldsymbol{0}$, 
$
n\times (\hat{\boldsymbol{\alpha}}_{\S,n}^{\mytrans}\hat{\boldsymbol{\beta}}_{\M,n}-{\boldsymbol{\alpha}}_{\S,n}^{\mytrans}{\boldsymbol{\beta}}_{\M,n})\xrightarrow{d}  \boldsymbol{b}_{0,\alpha}^{\mytrans}\vec{Z}_{\M}+\boldsymbol{b}_{0,\beta}^{\mytrans}\vec{Z}_{\S} + \vec{Z}_{\S}^{\mytrans}\vec{Z}_{\M} ,$
\end{itemize}   
\noindent where $(\vec{Z}_{\S}, \vec{Z}_{\M})$ are defined to be multivariate counterparts of $({Z}_{\S},{Z}_{\M})$ in Theorem \ref{thm:prodlimit2}, and the detailed definitions are given  in Section \ref{sec:pflimitmulti} of the Supplementary Material.
\end{theorem}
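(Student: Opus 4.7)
The plan is to mirror the two-step strategy behind Theorem \ref{thm:prodlimit2} while tracking the full vectors of OLS coefficient estimators. First, I would establish the joint asymptotic normality
\[
\sqrt{n}\bigl(\hat{\boldsymbol{\alpha}}_{\S,n}-\boldsymbol{\alpha}_{\S,n},\;\hat{\boldsymbol{\beta}}_{\M,n}-\boldsymbol{\beta}_{\M,n}\bigr)^{\mytrans}\xrightarrow{d}(\vec{Z}_{\S},\vec{Z}_{\M})^{\mytrans}.
\]
For the mediator equations, a componentwise Frisch--Waugh--Lovell (FWL) decomposition yields $\hat{\alpha}_{\S,j,n}-\alpha_{\S,j,n} = n^{-1}\sum_i \epsilon_{M,j,i}\S_{\perp,i}/\mathrm{E}(\S_\perp^2)+o_p(n^{-1/2})$, so the multivariate CLT under Condition \ref{cond:multinversemoment} produces $\vec{Z}_{\S}$ with covariance determined by $\boldsymbol{\Sigma}_M$ and $\mathrm{E}(\S_\perp^2)$. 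For the outcome equation, partialling out $(\X,\S)$ and using $\boldsymbol{M}_{\perp'}$ gives $\hat{\boldsymbol{\beta}}_{\M,n}-\boldsymbol{\beta}_{\M,n}=\{\mathrm{E}(\boldsymbol{M}_{\perp'}\boldsymbol{M}_{\perp'}^{\mytrans})\}^{-1}n^{-1}\sum_i\epsilon_{Y,i}\boldsymbol{M}_{\perp',i}+o_p(n^{-1/2})$, so another CLT delivers $\vec{Z}_{\M}$; independence of $\boldsymbol{\epsilon}_M$ and $\eY$ makes the two blocks asymptotically independent.

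Next I would invoke the exact algebraic identity
\[
\hat{\boldsymbol{\alpha}}_{\S,n}^{\mytrans}\hat{\boldsymbol{\beta}}_{\M,n}-\boldsymbol{\alpha}_{\S,n}^{\mytrans}\boldsymbol{\beta}_{\M,n}
=\boldsymbol{\alpha}_{\S,n}^{\mytrans}\Delta_{\boldsymbol{\beta}}+\Delta_{\boldsymbol{\alpha}}^{\mytrans}\boldsymbol{\beta}_{\M,n}+\Delta_{\boldsymbol{\alpha}}^{\mytrans}\Delta_{\boldsymbol{\beta}},
\]
with $\Delta_{\boldsymbol{\alpha}}=\hat{\boldsymbol{\alpha}}_{\S,n}-\boldsymbol{\alpha}_{\S,n}$ and $\Delta_{\boldsymbol{\beta}}=\hat{\boldsymbol{\beta}}_{\M,n}-\boldsymbol{\beta}_{\M,n}$, and read off the limits by analyzing each term under the two regimes. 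For part (i), the cross term $\Delta_{\boldsymbol{\alpha}}^{\mytrans}\Delta_{\boldsymbol{\beta}}=O_p(n^{-1})$ is negligible relative to the two linear terms, which are $O_p(n^{-1/2})$. Multiplying the identity by $\sqrt{n}$ and using Slutsky together with $\boldsymbol{\alpha}_{\S,n}\to\boldsymbol{\alpha}_{\S}$ and $\boldsymbol{\beta}_{\M,n}\to\boldsymbol{\beta}_{\M}$ delivers the stated limit $\boldsymbol{\alpha}_{\S}^{\mytrans}\vec{Z}_{\M}+\boldsymbol{\beta}_{\M}^{\mytrans}\vec{Z}_{\S}$.

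For part (ii), substituting $\boldsymbol{\alpha}_{\S,n}=n^{-1/2}\boldsymbol{b}_{0,\alpha}$ and $\boldsymbol{\beta}_{\M,n}=n^{-1/2}\boldsymbol{b}_{0,\beta}$ and multiplying by $n$ gives
\[
n(\hat{\boldsymbol{\alpha}}_{\S,n}^{\mytrans}\hat{\boldsymbol{\beta}}_{\M,n}-\boldsymbol{\alpha}_{\S,n}^{\mytrans}\boldsymbol{\beta}_{\M,n})
=\boldsymbol{b}_{0,\alpha}^{\mytrans}\bigl(\sqrt{n}\Delta_{\boldsymbol{\beta}}\bigr)+\bigl(\sqrt{n}\Delta_{\boldsymbol{\alpha}}\bigr)^{\mytrans}\boldsymbol{b}_{0,\beta}+\bigl(\sqrt{n}\Delta_{\boldsymbol{\alpha}}\bigr)^{\mytrans}\bigl(\sqrt{n}\Delta_{\boldsymbol{\beta}}\bigr).
\]
By the first step and the continuous mapping theorem applied to the joint weak limit of $(\sqrt{n}\Delta_{\boldsymbol{\alpha}},\sqrt{n}\Delta_{\boldsymbol{\beta}})$, the right-hand side converges in distribution to $\boldsymbol{b}_{0,\alpha}^{\mytrans}\vec{Z}_{\M}+\boldsymbol{b}_{0,\beta}^{\mytrans}\vec{Z}_{\S}+\vec{Z}_{\S}^{\mytrans}\vec{Z}_{\M}$, the claimed limit.

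The main obstacle I foresee is the bookkeeping needed to obtain the joint weak limit of $(\sqrt{n}\Delta_{\boldsymbol{\alpha}},\sqrt{n}\Delta_{\boldsymbol{\beta}})$ with an explicit covariance structure when the mediators are correlated through $\boldsymbol{\Sigma}_M$ and when the outcome regression requires inverting $\mathrm{E}(\boldsymbol{M}_{\perp'}\boldsymbol{M}_{\perp'}^{\mytrans})$. Once independence of $(\eY,\boldsymbol{\epsilon}_M)$ is used to decouple the two blocks and Slutsky-type arguments handle the local drifts $\boldsymbol{\alpha}_{\S,n}-\boldsymbol{\alpha}_{\S}=o(1)$, $\boldsymbol{\beta}_{\M,n}-\boldsymbol{\beta}_{\M}=o(1)$, and the sample-to-population replacement inside the FWL projection matrices, the quadratic term $\vec{Z}_{\S}^{\mytrans}\vec{Z}_{\M}$ falls out cleanly from the continuous mapping theorem, exactly paralleling the univariate non-regularity already exposed in Theorem \ref{thm:prodlimit2}(ii).
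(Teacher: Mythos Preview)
Your proposal is correct and follows essentially the same route as the paper's proof: establish the joint weak limit $\sqrt{n}(\hat{\boldsymbol{\alpha}}_{\S,n}-\boldsymbol{\alpha}_{\S,n},\hat{\boldsymbol{\beta}}_{\M,n}-\boldsymbol{\beta}_{\M,n})\xrightarrow{d}(\vec{Z}_{\S},\vec{Z}_{\M})$ via the FWL representation (exactly as in the univariate Theorem~\ref{thm:prodlimit2}), then plug into the same bilinear identity $\hat{\boldsymbol{\alpha}}^{\mytrans}\hat{\boldsymbol{\beta}}-\boldsymbol{\alpha}^{\mytrans}\boldsymbol{\beta}=\boldsymbol{\alpha}^{\mytrans}\Delta_{\boldsymbol{\beta}}+\Delta_{\boldsymbol{\alpha}}^{\mytrans}\boldsymbol{\beta}+\Delta_{\boldsymbol{\alpha}}^{\mytrans}\Delta_{\boldsymbol{\beta}}$ and read off the two regimes by Slutsky and the continuous mapping theorem. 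The paper's supplementary proof is terser but structurally identical, so there is nothing to add.
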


To present the theory of bootstrap consistency, we define the  multivariate counterparts of 
${\mathbb{R}}_{n}^*(\balpha, \bbeta)$ in Section \ref{sec:abt} as 
$\vec{\mathbb{R}}_{n}^*(\boldsymbol{b}_{\alpha}, \boldsymbol{b}_{\beta})$.  
 The detailed forms are given in Section \ref{sec:pfmultbootjoint} of the Supplementary Material. 
Similarly to $U^*$ in Section \ref{sec:abt}, 
we define the AB statistic under the multivariate setting as
$$ 
	\vec{U}^*= (\hat{\boldsymbol{\alpha}}_{\S,n}^{*\mytrans} \hat{\boldsymbol{\beta}}_{\M, n}^*  - \hat{\boldsymbol{\alpha}}_{\S,n}^{\mytrans} \hat{\boldsymbol{\beta}}_{\M, n}) \times (1- 	\vec{\myindicator}_{\lambda_n}^* ) + n^{-1} \vec{\mathbb{R}}_{n}^*( \boldsymbol{b}_{\alpha}, \boldsymbol{b}_{\beta} ) \times 	\vec{\myindicator}_{\lambda_n}^*,  
$$
where  
$
	\vec{\myindicator}_{\lambda_n}^*=\myindicator\{\, \max\{|T_{\alpha,j,n}|,\, |T_{\alpha,j,n}^*|,\, |T_{\beta,j,n}|,\, |T_{\beta,j,n}^*|:\, 1\leqslant j\leqslant \nM \}\leqslant \lambda_n \, \},  
$
where $T_{\alpha,j,n}=\sqrt{n}\hat{\alpha}_{\S, j}/\hat{\sigma}_{\alphaS,j,n}$ and $T_{\beta, j, n}=\sqrt{n}\hat{\beta}_{\M,j,n} /\hat{\sigma}_{\betaM,j,n}$ denote the sample T-statistics of the two coefficients $\alpha_{S,j}$ and $\beta_{M,j}$, respectively, and 
$T_{\alpha,j, n}^*=\sqrt{n}\hat{\alpha}_{\S, j}^*/\hat{\sigma}_{\alphaS,j,n}^*$ and $T_{\beta, j, n}^*=\sqrt{n}\hat{\beta}_{\M,j,n}^* /\hat{\sigma}_{\betaM,j,n}^*$
denote the bootstrap counterparts of the two sample  T-statistics. 
We establish bootstrap consistency for the joint AB statistic $\vec{U}^*$ below.

\begin{theorem}[Adaptive Bootstrap Consistency] \label{thm:multbootjoint}
Under the conditions of 	Theorem \ref{thm:limitmulti}, when the tuning parameter $\lambda_n$ satisfies $\lambda_n=o(\sqrt{n})$ and $\lambda_n\to \infty$ as $n\to \infty,$  
$$c_n\vec{U}^* \overset{d^*}{\leadsto} c_n(\hat{\boldsymbol{\alpha}}_{\S,n}^{\mytrans} \hat{\boldsymbol{\beta}}_{\M, n} - {\boldsymbol{\alpha}}_{\S,n}^{\mytrans} {\boldsymbol{\beta}}_{\M, n}),$$
where $c_n$ is specified as in \eqref{eq:cndef}. 
\end{theorem}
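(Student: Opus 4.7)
The plan is to mirror, essentially line for line, the two-case argument used for Theorem~\ref{thm:bootstrapprodcomb}, with the single multivariate indicator $\vec{\myindicator}_{\lambda_n}^*$ playing exactly the role of the product $\myindicator_{\alphaS,\lambda_n}^*\myindicator_{\betaM,\lambda_n}^*$. The dichotomy in the theorem is between $(\boldsymbol{\alpha}_\S,\boldsymbol{\beta}_\M)\neq\boldsymbol{0}$ and $(\boldsymbol{\alpha}_\S,\boldsymbol{\beta}_\M)=\boldsymbol{0}$. Since $\vec{\myindicator}_{\lambda_n}^*$ is an AND across all $2J$ coordinate T-statistics, a single nonzero coordinate will knock it out, while the all-zero case will light it up; this is the structural reason the argument cleanly parallels the univariate one.

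In the regular branch, where at least one coordinate of $(\boldsymbol{\alpha}_\S,\boldsymbol{\beta}_\M)$ is nonzero, I would fix one such coordinate $j_0$, say with $\alpha_{S,j_0}\neq 0$. Under the local model $\hat{\alpha}_{S,j_0,n}$ is $\sqrt{n}$-consistent for $\alpha_{S,j_0}$ and the plug-in standard error stabilizes at its population value, so $|T_{\alpha,j_0,n}|$ diverges at rate $\sqrt{n}$. Combined with $\lambda_n=o(\sqrt{n})$, this forces $\Pr(|T_{\alpha,j_0,n}|>\lambda_n)\to 1$; the bootstrap CLT for $\sqrt{n}(\hat{\boldsymbol{\alpha}}_{\S,n}^*-\hat{\boldsymbol{\alpha}}_{\S,n})$ transfers the same conclusion to $|T_{\alpha,j_0,n}^*|$ in bootstrap probability. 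Hence $\vec{\myindicator}_{\lambda_n}^*=0$ with probability tending to one, so $\vec{U}^*$ reduces to the classical nonparametric bootstrap statistic $\hat{\boldsymbol{\alpha}}_{\S,n}^{*\mytrans}\hat{\boldsymbol{\beta}}_{\M,n}^*-\hat{\boldsymbol{\alpha}}_{\S,n}^{\mytrans}\hat{\boldsymbol{\beta}}_{\M,n}$. The functional delta method applied to the bilinear map $(\boldsymbol{a},\boldsymbol{b})\mapsto\boldsymbol{a}^{\mytrans}\boldsymbol{b}$, whose derivative $(\boldsymbol{h}_1,\boldsymbol{h}_2)\mapsto\boldsymbol{\alpha}_\S^{\mytrans}\boldsymbol{h}_2+\boldsymbol{\beta}_\M^{\mytrans}\boldsymbol{h}_1$ is nonzero at $(\boldsymbol{\alpha}_\S,\boldsymbol{\beta}_\M)$, together with the joint bootstrap CLT from Condition~\ref{cond:multinversemoment}, yields $\sqrt{n}\vec{U}^*\overset{d^*}{\leadsto}\boldsymbol{\alpha}_\S^{\mytrans}\vec{Z}_\M'+\boldsymbol{\beta}_\M^{\mytrans}\vec{Z}_\S'$, matching Theorem~\ref{thm:limitmulti}(i).

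In the singular branch, where $(\boldsymbol{\alpha}_\S,\boldsymbol{\beta}_\M)=\boldsymbol{0}$, the local coefficients are $O(n^{-1/2})$, so under the local model every $\sqrt{n}\hat{\alpha}_{S,j,n}$ and $\sqrt{n}\hat{\beta}_{M,j,n}$ is $O_P(1)$, and the same holds for their bootstrap counterparts in bootstrap probability. Since $\lambda_n\to\infty$, $\Pr(\vec{\myindicator}_{\lambda_n}^*=1)\to 1$, and thus $\vec{U}^*=n^{-1}\vec{\mathbb{R}}_n^*(\boldsymbol{b}_\alpha,\boldsymbol{b}_\beta)$ with probability tending to one. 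To complete the case I would show that $\vec{\mathbb{R}}_n^*(\boldsymbol{b}_\alpha,\boldsymbol{b}_\beta)$ converges in bootstrap sense to the limit $\boldsymbol{b}_{0,\alpha}^{\mytrans}\vec{Z}_\M+\boldsymbol{b}_{0,\beta}^{\mytrans}\vec{Z}_\S+\vec{Z}_\S^{\mytrans}\vec{Z}_\M$ appearing in Theorem~\ref{thm:limitmulti}(ii). This proceeds by first establishing the joint bootstrap CLT $(\vec{\mathbb{Z}}_{\S,n}^*,\vec{\mathbb{Z}}_{\M,n}^*)\overset{d^*}{\leadsto}(\vec{Z}_\S,\vec{Z}_\M)$ via the Donsker property of the underlying empirical process and bootstrap consistency of the sample residuals and studentizing matrices $\mathbb{V}_{\S,n}^*$, $\mathbb{V}_{\M,n}^*$, and then applying the continuous mapping theorem to the bilinear functional $(\boldsymbol{u},\boldsymbol{v})\mapsto\boldsymbol{b}_\beta^{\mytrans}\boldsymbol{u}+\boldsymbol{b}_\alpha^{\mytrans}\boldsymbol{v}+\boldsymbol{u}^{\mytrans}\boldsymbol{v}$.

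The main obstacle is the bootstrap handling of the quadratic piece $\vec{\mathbb{Z}}_{\S,n}^{*\mytrans}\vec{\mathbb{Z}}_{\M,n}^*$. In the univariate case this was a single product of two independent bootstrap sample means, but now it is an inner product of two $J$-vectors whose entries share a common bootstrap resampling mechanism, so I need to check that the bootstrap centering in $\vec{\mathbb{Z}}_{\S,n}^*$ and $\vec{\mathbb{Z}}_{\M,n}^*$ is the appropriate one (centering at the sample rather than population analogs of $\Sperp$ and $\Mperpp$) and that the plug-in residuals $\hat{\boldsymbol{\epsilon}}_{\M,n}$ and $\hat{\epsilon}_{Y,n}$ do not perturb the bootstrap limit, both of which follow from routine empirical process manipulations under Condition~\ref{cond:multinversemoment}. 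A secondary subtlety is verifying that the indicator event transitions correctly in bootstrap probability uniformly over $j$; this is handled by taking a union bound over the $2J$ coordinate T-statistics, exploiting joint tightness of the bootstrap score vector. Once these ingredients are assembled, combining the two branches gives the claimed $c_n\vec{U}^*\overset{d^*}{\leadsto}c_n(\hat{\boldsymbol{\alpha}}_{\S,n}^{\mytrans}\hat{\boldsymbol{\beta}}_{\M,n}-\boldsymbol{\alpha}_{\S,n}^{\mytrans}\boldsymbol{\beta}_{\M,n})$.
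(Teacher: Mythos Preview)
Your proposal is correct and follows essentially the same route as the paper: split into the cases $(\boldsymbol{\alpha}_\S,\boldsymbol{\beta}_\M)\neq\boldsymbol{0}$ and $(\boldsymbol{\alpha}_\S,\boldsymbol{\beta}_\M)=\boldsymbol{0}$, show $\vec{\myindicator}_{\lambda_n}^*$ converges in bootstrap probability to the correct indicator in each case, and then invoke the joint bootstrap CLT $\sqrt{n}(\hat{\boldsymbol{\alpha}}_{\S,n}^*-\hat{\boldsymbol{\alpha}}_{\S,n},\hat{\boldsymbol{\beta}}_{\M,n}^*-\hat{\boldsymbol{\beta}}_{\M,n})\overset{d^*}{\leadsto}(\vec{Z}_\S,\vec{Z}_\M)$ together with the bilinear decomposition (the paper writes it out algebraically rather than phrasing it as the delta method, but the content is identical). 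The paper's version is terser---it simply points back to the univariate argument in the proof of Theorem~\ref{thm:bootstrapprodcomb} at each step---so your additional remarks on the union bound over the $2J$ coordinate T-statistics and the plug-in residuals are just making explicit what the paper leaves implicit.
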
 

Based on Theorem \ref{thm:multbootjoint}, we can develop an AB test  similar to that in Section \ref{sec:abt}. 

\medskip
\noindent \textit{(4) Numerical Results.} 
To evaluate the performance of the joint AB test, 
we conduct numerical experiments, detailed in Section \ref{sec:numericresultsmulti} of the Supplementary Material.
 We compare the AB test with the classical bootstrap and two tests in \cite{huang2016hypothesis}: the Product Test based on Normal Product distribution (PT-NP) and  the  
 Product Test based on Normality (PT-N).  
We observe results similar to those in Section \ref{sec:sim}.  
Specifically, under $H_0: \boldsymbol{\alpha}_S^{\top}\boldsymbol{\beta}_M=0$, when $(\boldsymbol{\alpha}_S, \boldsymbol{\beta}_M)\neq \mathbf{0}$, both the proposed AB test and the compared methods yield uniformly distributed $p$-values.
However, when  $(\boldsymbol{\alpha}_S, \boldsymbol{\beta}_M)= \mathbf{0}$, the compared methods become overly conservative, whereas the AB test still produces uniformly distributed $p$-values. 
Under $H_A$, the AB test can achieve higher empirical power than the compared methods. 
Besides simulations, we also provide an exemplary data analysis in Section  \ref{sec:datagroup} of the Supplementary Material.


\subsection{Non-Linear Scenario I: Binary  Outcome and General Mediator} \label{sec:asymbinarymedout} 

\noindent \textit{(1) Model.} 
Suppose the outcome is binary, and consider the model
\begin{align} \label{eq:extendmodel1}
P(\Y=1 \mid \S, \M, \X ) =  &~\mathrm{logit}^{-1}( \betaM \M + \X^\mytrans  \betaX + \DE \S), \\
\mathrm{E}(\M \mid \S, \X )=&~ h^{-1}(\alphaS \S +  \X^\mytrans \alphaX ), \notag
\end{align}
where $h^{-1}(\cdot)$ is the inverse of a canonical link function in generalized linear models. 
Under Model \eqref{eq:extendmodel1}, since the outcome is binary, it is conventional to define the mediation effect as the odds ratio  \citep{vanderweele2010odds}.
Specifically, under the identification assumption given in Section \ref{sec:review},  the conditional natural indirect effect (mediation effect) can be identified as 
\begin{align*} 
\mathrm{OR}^{\mathrm{NIE}}_{s\mid s^*}(s,\boldsymbol{x})=\frac{P\left\{Y(s, M(s))=1 \mid  \X=\boldsymbol{x} \right\} /\left\{1-P\left(Y(s, M(s))=1 \mid  \X=\boldsymbol{x}\right)\right\}}{P\left\{Y(s, M(s^*))=1 \mid  \X=\boldsymbol{x}\right\} /\left\{1-P\left\{Y(s, M(s^*))=1 \mid \X =\boldsymbol{x} \right\}\right\}}, 
\end{align*}
where $M(s)$ denotes the potential value of the mediator under the exposure $\S=s$,  and $Y(s,m)$ denotes the potential outcome that would have been observed if $\S$ and $\M$ had been set to $s$ and $m$, respectively.
Under $H_0$ of no mediation effect,
\begin{align} \label{eq:ornull}
H_0: \mathrm{OR}^{\mathrm{NIE}}_{s\mid s^*}(s,\boldsymbol{x})=1 \ \Leftrightarrow \ &~\log \mathrm{OR}^{\mathrm{NIE}}_{s\mid s^*}(s,\boldsymbol{x}) =0\\
	 \ \Leftrightarrow \ &~P\left\{Y(s, M(s))=1 \mid  \X=\boldsymbol{x} \right\}-P\left\{Y(s, M(s^*))=1 \mid  \X =\boldsymbol{x}\right\}=0, \notag
\end{align}
where the second equivalence follows from the strict increasing monotonicity of the function $x/(1-x)$ when $0<x<1$.

\begin{remark} 
We consider natural indirect/mediation effects conditioning on covariates $\X=\boldsymbol{x}$ following \cite{vanderweele2010odds}.
Alternatively, \cite{imai2010general} proposed to examine the average NIE that marginalizes the distribution of $\X$. 
Examining the conditional NIE is mainly for technical convenience. 
The conditional NIE $=0$ for all $\boldsymbol{x}$ can give a sufficient condition for the average NIE $=0$. 
Conclusions of conditional NIE may be obtained for average NIE similarly. 
Please see Remark \ref{rm:extintegrationdef} in the Supplementary Material for more details. 
\end{remark}

\medskip
\noindent  \textit{(2) Non-Regularity Issue.} 
The null hypothesis of no mediation effect \eqref{eq:ornull} looks different from $H_0:\alphaS\betaM=0$ under the linear SEMs in Section \ref{sec:review}. Nevertheless, we can show that the non-regularity issue similar to that in Section \ref{sec:review} would still arise.  This is formally stated as Proposition \ref{prop:singularitylogiresponse} below.

\begin{proposition} \label{prop:singularitylogiresponse} 
Under the model \eqref{eq:extendmodel1},    Condition \ref{cond:phiintegration} (a general regularity condition on the link function $h^{-1}(\cdot)$ and the distribution of $M$), and identification conditions in Section \ref{sec:review}, \vspace{-0.7em}
\begin{enumerate}\setlength{\itemsep}{2pt}
	\item $H_0$ \eqref{eq:ornull} holds for $s\neq s^*$ if and only if $\alphaS=0$ or $\betaM=0$. 
\item For simplicity of notation, let $\mathrm{NIE}$ be a shorthand for $\log \mathrm{OR}^{\mathrm{NIE}}_{s\mid s^*}(s,\boldsymbol{x})$. We have\\[2pt]  
 \quad (i)  $\frac{\partial   \mathrm{NIE}}{\partial \alphaS} \Big|_{\betaM = 0}= \frac{\partial  \mathrm{NIE}}{\partial \betaM} \Big|_{\alphaS = 0}=0$, \quad (ii) $\frac{\partial  \mathrm{NIE}}{\partial \alphaS} \Big|_{\alphaS=0, \betaM\neq 0}\neq 0$, \quad (iii) $\frac{\partial  \mathrm{NIE}}{\partial \betaM} \Big|_{\alphaS\neq 0, \betaM = 0}\neq 0$.
\end{enumerate}
\end{proposition}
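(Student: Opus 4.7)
The plan is to reduce both claims to properties of the scalar function
\[
\pi(s, s^{*}) \;:=\; P\{Y(s, M(s^{*}))=1 \mid \X = \boldsymbol{x}\} \;=\; \int \mathrm{logit}^{-1}\!\bigl(\betaM m + \X^{\mytrans}\betaX + \DE s\bigr)\, f_{M\mid S=s^{*},\X=\boldsymbol{x}}(m)\, dm,
\]
where the integral representation follows from the identification assumptions in Section \ref{sec:review} (standard iterated-expectation argument on counterfactuals). Since $\mathrm{NIE} = \log \mathrm{OR}^{\mathrm{NIE}}_{s\mid s^{*}}(s,\boldsymbol{x}) = \mathrm{logit}\{\pi(s,s)\} - \mathrm{logit}\{\pi(s,s^{*})\}$, every claim reduces to how $\pi(s,s^{*})$ depends on $(\alphaS,\betaM)$ through the mediator density $f_{M\mid \mu(s^{*})}$, which under Model \eqref{eq:extendmodel1} is indexed only by the linear predictor $\mu(s^{*}) = \alphaS s^{*} + \X^{\mytrans}\alphaX$.

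For Part 1, I would handle sufficiency first: if $\alphaS=0$, then $\mu(s) = \mu(s^{*}) = \X^{\mytrans}\alphaX$, so $f_{M\mid \mu(s)} = f_{M\mid \mu(s^{*})}$ and $\pi(s,s)=\pi(s,s^{*})$; if $\betaM=0$, the integrand collapses to the constant $\mathrm{logit}^{-1}(\X^{\mytrans}\betaX + \DE s)$ and again $\pi(s,s)=\pi(s,s^{*})$; in either case $\mathrm{NIE}=0$. For necessity, I would argue that when $\betaM\neq 0$, the map $\mu \mapsto \phi(\mu;\betaM) := \int \mathrm{logit}^{-1}(\betaM m + c)\, f_{M\mid\mu}(m)\,dm$ is strictly monotone: this follows from differentiating in $\mu$ under the integral (justified by Condition \ref{cond:phiintegration}) and noting that for an exponential-family density with canonical mean parameter $\mu$, $\partial_\mu f_{M\mid\mu}(m)$ is a strictly monotone transformation of $m$, so the correlation $\mathrm{Cov}\{m,\, \mathrm{logit}^{-1}(\betaM m + c)\}$ is nonzero when $\betaM\neq 0$. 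Combined with $\alphaS\neq 0$ forcing $\mu(s)\neq\mu(s^{*})$ whenever $s\neq s^{*}$, this yields $\pi(s,s)\neq \pi(s,s^{*})$ and hence $\mathrm{NIE}\neq 0$.

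For Part 2, I would apply the chain rule
\[
\frac{\partial \mathrm{NIE}}{\partial \theta} = \frac{\partial_\theta \pi(s,s)}{\pi(s,s)\{1-\pi(s,s)\}} - \frac{\partial_\theta \pi(s,s^{*})}{\pi(s,s^{*})\{1-\pi(s,s^{*})\}}, \qquad \theta\in\{\alphaS,\betaM\}.
\]
Claim (i), $\partial_{\alphaS}\mathrm{NIE}\big|_{\betaM=0}=0$, follows because at $\betaM=0$ the integrand is independent of $m$, so differentiating under the integral in $\alphaS$ kills the dependence via $\int \partial_{\alphaS} f_{M\mid\mu}(m)\,dm = \partial_{\alphaS}\!\int f_{M\mid\mu}=0$; claim $\partial_{\betaM}\mathrm{NIE}\big|_{\alphaS=0}=0$ follows because at $\alphaS=0$ the density $f_{M\mid\mu(s^{*})}$ is independent of $s^{*}$, hence $\partial_{\betaM}\pi(s,s) = \partial_{\betaM}\pi(s,s^{*})$ and the two terms cancel identically. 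Claims (ii) and (iii) are exactly the strict-monotonicity statements proved above for Part 1 necessity, now applied infinitesimally: e.g. at $\alphaS=0,\betaM\neq 0$, $\partial_{\alphaS}\{\pi(s,s)-\pi(s,s^{*})\} = (s-s^{*})\,\partial_\mu \phi(\mu;\betaM)\big|_{\mu=\X^{\mytrans}\alphaX}\neq 0$, and the common denominator factor is nonzero.

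The main technical obstacle will be making the strict-monotonicity step rigorous under the general link $h^{-1}$ in \eqref{eq:extendmodel1}: exchanging differentiation and integration requires Condition \ref{cond:phiintegration} to provide an integrable dominating function, and concluding $\partial_\mu \phi(\mu;\betaM)\neq 0$ for $\betaM\neq 0$ requires identifying $\partial_\mu f_{M\mid\mu}$ as a non-degenerate, strictly monotone score in $m$ (immediate for exponential-family canonical links, but requires a short argument for a general $h^{-1}$). Everything else is straightforward bookkeeping via the chain rule applied to $\mathrm{logit}\circ\pi$.
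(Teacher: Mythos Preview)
Your overall architecture matches the paper's: both reduce everything to the pair $P_s=\pi(s,s)$, $P_{s^*}=\pi(s,s^*)$, use the same chain rule $\partial_\theta \mathrm{NIE} = \partial_\theta P_s/\{P_s(1-P_s)\}-\partial_\theta P_{s^*}/\{P_{s^*}(1-P_{s^*})\}$, and handle the sufficiency direction and Part 2(i) exactly as you describe.

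The genuine difference is in the necessity step of Part~1 and in Part~2(ii). You argue strict monotonicity of $\mu\mapsto\int g(\betaM m+c)f_{M\mid\mu}(m)\,dm$ by differentiating the density in $\mu$ and invoking an exponential-family score/covariance identity. The paper instead avoids density differentiation altogether: it uses the tail-probability identity $\mathrm{E}\{g(X)\}=\int_0^\infty P\{g(X)>t\}\,dt$ to rewrite $P_s-P_{s^*}$ as $\int_0^\infty\bigl[P_{\alphaS s+\mu_2}\{\betaM M>g^{-1}(t)-\mu_{1,s}\}-P_{\alphaS s^*+\mu_2}\{\betaM M>g^{-1}(t)-\mu_{1,s}\}\bigr]\,dt$, and then applies Condition~\ref{cond:phiintegration} directly, since that condition is precisely a monotonicity assumption on the CDF $\nu\mapsto P_\nu(M\le m)$. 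This is both shorter and exactly matched to the stated hypothesis; by contrast, your route needs density-level regularity and an exponential-family structure that go beyond what Condition~\ref{cond:phiintegration} actually provides (indeed, your remark that Condition~\ref{cond:phiintegration} supplies ``an integrable dominating function'' is not accurate --- it only gives CDF differentiability and sign-definiteness of $\partial_\nu P_\nu(M\le m)$). For Part~2(iii) the paper also takes a more direct path: at $\betaM=0$ one computes $\partial_{\betaM}P_s-\partial_{\betaM}P_{s^*}=g'(\mu_{1,s})\{h^{-1}(\alphaS s+\mu_2)-h^{-1}(\alphaS s^*+\mu_2)\}$, which is nonzero by the assumed strict monotonicity of $h^{-1}$, without any covariance argument.
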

It is interesting to see that even though the conditional  mediation effect $\mathrm{OR}^{\mathrm{NIE}}_{s\mid s^*}(s,\boldsymbol{x})$ does not take a product form, 
a non-regularity issue caused by zero gradient can still arise under $H_0$ in \eqref{eq:ornull}, which is similar to the PoC statistic in Section \ref{sec:abt}. 
Specifically, 
Proposition \ref{prop:singularitylogiresponse} implies that 
 when $\alphaS=\betaM=0$,  the first-order Delta method cannot be directly applied to the inference of $\mathrm{NIE}$, which is different from the scenarios when $\alphaS\neq 0$ or $\betaM\neq 0$. 
Therefore, we expect that the ordinary estimator of NIE can behave differently under different types of null hypotheses, and a non-regularity issue can occur. 
This phenomenon is indeed demonstrated by numerical experiments in Section \ref{sec:nonlineartwo} of the Supplementary Material.
 
 \medskip
\noindent \textit{(3) Asymptotic Theory and Adaptive Bootstrap.}   
For ease of presentation, we next derive  asymptotic theory under a special case of \eqref{eq:extendmodel1}, where the mediator is binary and follows a logistic regression model. 
We point out that the analysis in this section can  be readily extended to cases where the mediator $M$ follows a linear model or other canonical generalized linear models.  
Specifically, 
 let $\M$ and $\Y$ be Bernoulli random variables with mean values in \eqref{eq:extendmodel1}, and $h^{-1}(x)=\mathrm{logit}^{-1}(x)$. 
In this case, $\log  
\mathrm{OR}^{\text{NIE}}_{s\mid s^*}(s,\boldsymbol{x})=l(P_s)-l(P_{s^*}), 
$ 
where  $P_s:=P\left\{Y(s, M(s))=1 \mid  \X=\boldsymbol{x} \right\} $,  $P_{s^*}:=P\left\{Y(s, M(s^*))=1 \mid  \X=\boldsymbol{x} \right\},
$ 
and $l(x)=\log\frac{x}{1-x}$. 
Similarly to Section \ref{sec:abt}, 
 we are interested in understanding how the local limiting behaviors of $\alphaS$ and $\betaM$ coefficients change. To this end, we consider a general  local logistic model: 
\begin{align} \label{eq:extendmodel1local}
\mathrm{E}(\M \mid \S, \X )=g(\alphaSn \S +  \X^\mytrans \alphaX),\hspace{1.5em} 
\mathrm{E}(\Y \mid \S, \M, \X )  = g( \betaMn \M + \X^\mytrans  \betaX + \DE \S), 
\end{align}
where $\alphaSn=\alphaS+\balpha/\sqrt{n}$, $\betaMn=\betaM+\bbeta/\sqrt{n}$, and $g(x)=\mathrm{logit}^{-1}(x)=e^x/(1+e^x)$. 
Under the local model  \eqref{eq:extendmodel1local}, we have for $\iota \in \{s,s^*\}$, 
\begin{align}
	P_{\iota}= &~ g\big( \iota \times \alphaSn   +\boldsymbol{x}^{\mytrans}\alphaX\big)\times d_{\beta,n} + P_*, \label{eq:pspsstardef}
\end{align}
where $ d_{\beta,n} =g(\betaMn  + \boldsymbol{x}^{\mytrans} \boldsymbol{\beta}_{\X} + \tau_S s ) -g( \boldsymbol{x}^{\mytrans} \boldsymbol{\beta}_{\X} + \tau_S s )$ and $P_*=g(\boldsymbol{x}^{\mytrans} \boldsymbol{\beta}_{\X} + \tau_S s )$. (Please see   the proof of  Theorem  \ref{prop:asymbinaryoutcome} for the derivations.) 
For simplicity of notation, let NIE be a shorthand of $\log \mathrm{OR}^{\text{NIE}}_{s\mid s^*}(s, \boldsymbol{x})$, and  by \eqref{eq:ornull}, $H_0\Leftrightarrow$ NIE $=0$. 
Let $\widehat{\text{NIE}}=l(\hat{P}_s)-l(\hat{P}_{s^*})$ denote an estimator of NIE, 
where $\hat{P}_s$ and $\hat{P}_{s^*}$ are defined similarly to \eqref{eq:pspsstardef} with $({\alpha}_{S,n}, {\boldsymbol{\alpha}}_{\boldsymbol{X}}, {\beta}_{M,n}, {\boldsymbol{\beta}}_{\boldsymbol{X}}, {\tau}_S)$ replaced by their corresponding sample regression coefficient estimators $(\hat{\alpha}_{S}, \hat{\boldsymbol{\alpha}}_{\boldsymbol{X}}, \hat{\beta}_{M}, \hat{\boldsymbol{\beta}}_{\boldsymbol{X}}, \hat{\tau}_S)$.

\begin{theorem}[Asymptotic Property]  \label{prop:asymbinaryoutcome}
Assume $P_s$ and $ P_{s^*}\in (0,1)$ and Condition \ref{cond:designmatrixlogistic} in the Supplementary Material (a regularity condition on the design matrix similar to Condition \ref{cond:inversemoment}).
  Under the local model \eqref{eq:extendmodel1local} and $H_0:$ $\alphaS\betaM=0$, \vspace{-0.2em}
\begin{itemize}
	\item[(i)] when  $(\alphaS,\betaM)\neq \mathbf{0}$, $
	\sqrt{n}(\widehat{\mathrm{NIE}}-\mathrm{NIE})\xrightarrow{d}	 ( d_{\alpha} Z_{\beta} +   d_{\beta}  Z_{\alpha})\gamma_{0}; 
$ \vspace{-0.2em}
	\item[(ii)] when  $(\alphaS,\betaM)=\mathbf{0}$, 
$
	n(\widehat{\mathrm{NIE}}-\mathrm{NIE})\xrightarrow{d}	( d_{\balpha} Z_{\beta} +   d_{\bbeta}Z_{\alpha}+ Z_{\alpha}Z_{\beta})\gamma_{0},
$
\end{itemize} 
where $d_{\alpha}=g(\alphaS s + \boldsymbol{x}^{\mytrans}\alphaX)-g(\alphaSn s^* + \boldsymbol{x}^{\mytrans}\alphaX)$,  $ d_{\beta} =g(\betaM + \boldsymbol{x}^{\mytrans} \boldsymbol{\beta}_{\X} + \tau_S s ) -g(\boldsymbol{x}^{\mytrans} \boldsymbol{\beta}_{\X} + \tau_S s )$, 
$d_{\balpha}=g'(\boldsymbol{x}^{\mytrans}\alphaX)(s-s^*) \balpha$, $
d_{\bbeta}=g'(\boldsymbol{x}^{\mytrans}\betaX + \tau_S s)(s-s^*) \bbeta$, 
$(Z_{\alpha},Z_{\beta})$ represent bivariate mean-zero  normal distributions specified in Lemma \ref{lm:speratedalphabetalimit}, and $\gamma_0 = \{P_*(1-P_*)\}^{-1}$   
is a non-zero constant with $P_*$ given in \eqref{eq:pspsstardef}.  
\end{theorem}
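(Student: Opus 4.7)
The plan is to reduce $\widehat{\mathrm{NIE}}-\mathrm{NIE}$ to a smooth functional of the logistic regression estimators and then apply a Taylor expansion at whichever order is needed to see past the vanishing gradients identified in Proposition~\ref{prop:singularitylogiresponse}. First I would formally establish the representation $P_\iota=g(\iota\,\alphaSn+\boldsymbol{x}^\mytrans\alphaX)\,d_{\beta,n}+P_*$ for $\iota\in\{s,s^*\}$ by summing over $M(\iota)\in\{0,1\}$: identification assumption (iv) gives $P_\iota=\sum_{m=0}^{1}g(\betaMn m+\boldsymbol{x}^\mytrans\betaX+\tau_S s)\,\Pr\{M(\iota)=m\mid\X=\boldsymbol{x}\}$, and separating the $m=0$ term yields the claimed form. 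This exhibits $\mathrm{NIE}=l(P_s)-l(P_{s^*})$ and its estimator as a common smooth function $\Psi$ of the parameter vector $\theta=(\alphaS,\alphaX,\betaM,\betaX,\tau_S)$, evaluated at $\hat\theta$ versus the local $\theta_n$. Standard MLE asymptotics for the two logistic regressions in \eqref{eq:extendmodel1local} under Condition~\ref{cond:designmatrixlogistic} then give $\sqrt n(\hat\alphaS-\alphaSn)\xrightarrow{d} Z_\alpha$ and $\sqrt n(\hat\betaM-\betaMn)\xrightarrow{d} Z_\beta$, asymptotically independent because $\eY\perp\eM$, recovering $(Z_\alpha,Z_\beta)$ as in Lemma~\ref{lm:speratedalphabetalimit}; the nuisance estimators are also $\sqrt n$-consistent.

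Case (i) then follows from a first-order Delta expansion of $\Psi$. Because $\partial\Psi/\partial\alphaS$ carries an explicit factor $d_{\beta,n}$ and $\partial\Psi/\partial\betaM$ carries a factor $g(s\alphaSn+\boldsymbol{x}^\mytrans\alphaX)-g(s^*\alphaSn+\boldsymbol{x}^\mytrans\alphaX)$, exactly one of the first-order contributions is nonzero at $\theta_n$ whenever $(\alphaS,\betaM)\neq(0,0)$. Linearizing $l$ about $P_s,P_{s^*}$ and using $l'(P_s),l'(P_{s^*})\to l'(P_*)=\gamma_0$ under $H_0$ gives $\sqrt n(\widehat{\mathrm{NIE}}-\mathrm{NIE})\approx \gamma_0\bigl(d_\alpha\sqrt n(\hat\betaM-\betaMn)+d_\beta\sqrt n(\hat\alphaS-\alphaSn)\bigr)$, and the claimed limit $\gamma_0(d_\alpha Z_\beta+d_\beta Z_\alpha)$ follows since whichever of $d_\alpha,d_\beta$ vanishes in each subcase kills its summand; the nuisance contributions enter only through $\hat d_\alpha,\hat d_{\beta,n}$ and are absorbed into $Z_\alpha,Z_\beta$ as specified in Lemma~\ref{lm:speratedalphabetalimit}.

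The main obstacle is case (ii), where both first-order gradients vanish at $(0,0)$ and the expansion must be carried to second order. Here I would exploit that with $\alphaSn=\balpha/\sqrt n$ and $\betaMn=\bbeta/\sqrt n$ one has $d_\alpha=g'(\boldsymbol{x}^\mytrans\alphaX)(s-s^*)\alphaSn+O(n^{-1})=d_{\balpha}/\sqrt n+O(n^{-1})$ and similarly $d_{\beta,n}=d_{\bbeta}/\sqrt n+O(n^{-1})$, while the corresponding hatted quantities satisfy $\sqrt n\,\hat d_\alpha=d_{\balpha}+Z_\alpha+o_p(1)$ and $\sqrt n\,\hat d_{\beta,n}=d_{\bbeta}+Z_\beta+o_p(1)$ (absorbing the $g'$ constants into the limits as in Lemma~\ref{lm:speratedalphabetalimit}). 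Consequently $n\,\mathrm{NIE}\to\gamma_0 d_{\balpha}d_{\bbeta}$ and $n\,\widehat{\mathrm{NIE}}\to\gamma_0(d_{\balpha}+Z_\alpha)(d_{\bbeta}+Z_\beta)$, and subtracting yields the three-term limit $\gamma_0(d_{\balpha}Z_\beta+d_{\bbeta}Z_\alpha+Z_\alpha Z_\beta)$. The technical care is in showing that (a) the quadratic Taylor remainder of $l$ is $o_p(1/n)$, which holds because $(P_s-P_*)^2-(P_{s^*}-P_*)^2=d_\alpha d_{\beta,n}^2[g(s\alphaSn+\boldsymbol{x}^\mytrans\alphaX)+g(s^*\alphaSn+\boldsymbol{x}^\mytrans\alphaX)]=O_p(n^{-3/2})$ thanks to the shared factor $d_\alpha$, and (b) the nuisance contributions enter $\hat P_s,\hat P_{s^*}$ only multiplicatively through $\hat d_\alpha,\hat d_{\beta,n}$, so they do not create additional first-order terms and the $Z_\alpha Z_\beta$ cross-term is genuinely the leading contribution, mirroring the zero-gradient mechanism behind Theorem~\ref{thm:prodlimit2}(ii).
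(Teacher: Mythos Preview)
Your plan is correct and matches the paper's proof almost exactly: both linearize $l$ at $P_s=P_{s^*}$ (valid under $H_0$) to reduce $\widehat{\mathrm{NIE}}-\mathrm{NIE}$ to $\gamma_0(\hat d_\alpha\hat d_\beta-d_{\alpha,n}d_{\beta,n})\{1+o_p(1)\}$, then treat the three null subcases via the product decomposition together with the limits $\sqrt n(\hat d_\alpha-d_{\alpha,n})\xrightarrow{d}Z_\alpha$ and $\sqrt n(\hat d_\beta-d_{\beta,n})\xrightarrow{d}Z_\beta$ from Lemma~\ref{lm:speratedalphabetalimit}. One notational slip to fix: you write $\sqrt n(\hat\alphaS-\alphaSn)\xrightarrow{d}Z_\alpha$ and the case-(i) display in terms of the raw coefficients, but in the theorem statement and in Lemma~\ref{lm:speratedalphabetalimit} the variables $(Z_\alpha,Z_\beta)$ are defined as the limits of $\sqrt n(\hat d_\alpha-d_{\alpha,n})$ and $\sqrt n(\hat d_\beta-d_{\beta,n})$, i.e.\ $Z_\alpha=W_\alpha^\mytrans B_\alpha$ and $Z_\beta=W_\beta^\mytrans B_\beta$, which load on the full vectors $(\hat\alphaS,\hat{\boldsymbol\alpha}_{\X})$ and $(\hat\betaM,\hat{\boldsymbol\beta}_{\X},\hat\tau_S)$; your later remark that nuisance contributions are ``absorbed into $Z_\alpha,Z_\beta$'' is the right correction, so just state the first-order expansion directly in terms of $\hat d_\alpha,\hat d_\beta$ rather than $\hat\alphaS,\hat\betaM$.
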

We next study consistency of bootstrap estimators. 
Let $\widehat{\mathrm{NIE}}^*$ denote the classical nonparametric bootstrap estimator  of $\mathrm{NIE}$. 
Specifically, $\widehat{\mathrm{NIE}}^*=l( \hat{P}_s^*)-l(\hat{P}_{s^*}^*)$, 
where $ \hat{P}_s^*$, and $\hat{P}_{s^*}^*$ are defined similarly to \eqref{eq:pspsstardef} with $({\alpha}_{S,n}, {\boldsymbol{\alpha}}_{\boldsymbol{X}}, {\beta}_{M,n}, {\boldsymbol{\beta}}_{\boldsymbol{X}}, {\tau}_S)$ replaced by  their classical nonparametric bootstrap estimators 
$
	(\hat{\alpha}_S^*, \hat{\boldsymbol{\alpha}}_{\boldsymbol{X}}^*, \hat{\beta}_M^*, \hat{\boldsymbol{\beta}}_{\boldsymbol{X}}^*, \hat{\tau}_S^*).
$ 
Motivated by Theorem \ref{prop:asymbinaryoutcome}, we define the AB statistic 
$$U_{e,1}^*= (\widehat{\mathrm{NIE}}^* -\widehat{\mathrm{NIE}} ) \times (1- \myindicator_{\alphaS, \lambda_n}^* \myindicator_{\betaM, \lambda_n}^*  )+n^{-1} \big( d_{\balpha}  \mathbb{Z}_{\beta}^* +   d_{\bbeta} \mathbb{Z}_{\alpha}^*+ \mathbb{Z}_{\alpha}\mathbb{Z}_{\beta}^*\big)\hat{\gamma}_{0}^* \times \myindicator_{\alphaS, \lambda_n}^*  \myindicator_{\betaM, \lambda_n}^*,$$ 
where $\myindicator_{\alphaS, \lambda_{n}}^*$ and $\myindicator_{\betaM, \lambda_{n}}^*$ are defined similarly to \eqref{eq:indicatoralphabetasep}. 
The following theorem proves  consistency of the AB statistic $U_{e,1}^*$, based on which we can develop an AB test  similar to that in Section \ref{sec:abt}. 
\begin{theorem}[Adaptive Bootstrap Consistency] \label{thm:combinebootbinaryout}
Under the conditions of Theorem \ref{prop:asymbinaryoutcome}, when the tuning parameter $\lambda_n$ satisfies $\lambda_n=o(\sqrt{n})$ and $\lambda_n\to \infty$ as $n\to \infty,$  
$c_n U_{e,1}^* \overset{d^*}{\leadsto}c_n( \widehat{\mathrm{NIE}}- \mathrm{NIE}),$ 
 where $c_n$ is specified as in \eqref{eq:cndef}. 
\end{theorem}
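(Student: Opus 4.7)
The plan is to follow the template of Theorem \ref{thm:bootstrapprodcomb}, transferring the PoC argument to the nonlinear NIE functional, and to split according to whether $(\alphaS,\betaM)\neq (0,0)$ (regular case) or $(\alphaS,\betaM)=(0,0)$ (singular case), matching the dichotomy in Theorem \ref{prop:asymbinaryoutcome}.

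First I would establish the asymptotic behavior of the selector $\myindicator_{\alphaS, \lambda_n}^* \myindicator_{\betaM, \lambda_n}^*$ under the local model \eqref{eq:extendmodel1local}. When $\alphaS\neq 0$, the associated $T$-statistic $T_{\alpha,n}$ diverges at rate $\sqrt n$ while $\lambda_n=o(\sqrt n)$, and the classical bootstrap CLT for the logistic-regression MLEs forces the same divergence for $T_{\alpha,n}^*$; symmetrically for $\betaM\neq 0$. Hence the indicator product vanishes in (bootstrap) probability in the regular case. Conversely, when $(\alphaS,\betaM)=(0,0)$, both $T$-statistics are $O_P(1)$ and their bootstrap counterparts are $O_{P^*}(1)$, so with $\lambda_n\to\infty$ the indicator product converges to $1$ in probability. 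Consequently $U_{e,1}^*$ reduces asymptotically to its first summand in the regular case and to its second summand in the singular case.

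In the regular case I would invoke the functional delta method for the bootstrap applied to the smooth map $(\alphaS, \betaM, \alphaX, \betaX, \tau_S) \mapsto l(P_s) - l(P_{s^*})$, using bootstrap consistency of the joint logistic MLEs. Proposition \ref{prop:singularitylogiresponse}(2)(ii)--(iii) guarantees a nonzero first-order Hadamard derivative in $(\alphaS,\betaM)$, so essentially the same calculation as in the proof of Theorem \ref{prop:asymbinaryoutcome}(i) yields $\sqrt n(\widehat{\mathrm{NIE}}^* - \widehat{\mathrm{NIE}}) \overset{d^*}{\leadsto} (d_\alpha Z_\beta + d_\beta Z_\alpha)\gamma_0$, matching $\sqrt n(\widehat{\mathrm{NIE}} - \mathrm{NIE})$. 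In the singular case, multiplying $U_{e,1}^*$ by $n$ isolates $(d_{\balpha}\mathbb{Z}_\beta^* + d_{\bbeta}\mathbb{Z}_\alpha^* + \mathbb{Z}_\alpha\mathbb{Z}_\beta^*)\hat\gamma_0^*$. The bootstrap CLT for the empirical-process-based quantities gives $(\mathbb{Z}_\alpha^*, \mathbb{Z}_\beta^*) \overset{d^*}{\leadsto} (Z_\alpha, Z_\beta)$ conditional on the data, while $(\mathbb{Z}_\alpha, \mathbb{Z}_\beta)\xrightarrow{d}(Z_\alpha, Z_\beta)$ unconditionally; the plug-in $\hat\gamma_0^*$ is consistent for $\gamma_0$ by continuous mapping since $\gamma_0$ depends only on the regular parameters $(\betaX, \tau_S)$, whose bootstrap estimators are $\sqrt n$-consistent. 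Combining these via Slutsky and continuous mapping produces $n\,U_{e,1}^* \overset{d^*}{\leadsto} (d_{\balpha} Z_\beta + d_{\bbeta} Z_\alpha + Z_\alpha Z_\beta)\gamma_0$, matching Theorem \ref{prop:asymbinaryoutcome}(ii).

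The main obstacle is the mixed cross term $\mathbb{Z}_\alpha\mathbb{Z}_\beta^*$, which combines a data-side quantity with a bootstrap-side one. To handle it I would exploit that $\mathbb{Z}_\beta^*$ is conditionally independent of $\mathbb{Z}_\alpha$ given the data, together with the asymptotic independence of $Z_\alpha$ and $Z_\beta$ under $(\alphaS,\betaM)=(0,0)$ (which follows from the independence of $\epsilon_M$ and $\epsilon_Y$ in Condition \ref{cond:inversemoment} extended to the logistic setting). This lets one conclude that the pair $(\mathbb{Z}_\alpha, \mathbb{Z}_\beta^*)$ converges jointly, in the bootstrap sense, to $(Z_\alpha, \tilde Z_\beta)$ with $\tilde Z_\beta$ an independent copy of $Z_\beta$, and then $Z_\alpha \tilde Z_\beta \overset{d}{=} Z_\alpha Z_\beta$. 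A secondary technical step is verifying the Hadamard differentiability and uniform tightness required for the functional delta-method bootstrap in the regular case; the assumption $P_s, P_{s^*} \in (0,1)$ is essential here as it ensures the logit $l(\cdot)$ and its gradient are bounded and smooth on a neighborhood of the true parameter, so that the delta expansion of $l(\hat P_s) - l(\hat P_{s^*})$ is uniformly valid along bootstrap paths.
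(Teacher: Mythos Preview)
Your overall strategy matches the paper's: transport the argument for Theorem \ref{thm:bootstrapprodcomb}, establish the indicator convergence \eqref{eq:indicatorconvg} from asymptotic normality of the logistic MLEs, handle the regular case by a delta-method bootstrap, and in the singular case verify that the local statistic reproduces the limit in Theorem \ref{prop:asymbinaryoutcome}(ii). The one substantive issue is the ``mixed cross term'' you single out as the main obstacle. The display of $U_{e,1}^*$ in the main text writes $\mathbb{Z}_\alpha \mathbb{Z}_\beta^*$, but the paper's own proof (Section \ref{sec:pfbootbinaryboth}) and the linear template $\mathbb{R}_n^*$ in Section \ref{sec:abt} both use $\mathbb{Z}_\alpha^* \mathbb{Z}_\beta^*$ with \emph{both} factors starred; the un-starred $\mathbb{Z}_\alpha$ is a typo. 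With the corrected product, the singular case follows from $(\mathbb{Z}_\alpha^*, \mathbb{Z}_\beta^*)\overset{d^*}{\leadsto}(Z_\alpha, Z_\beta)$ and continuous mapping, and no independence argument is needed.

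More to the point, your proposed repair for the typo'd version does not actually work. Bootstrap consistency $\overset{d^*}{\leadsto}$ is a statement about the \emph{conditional} law given the data, cf.\ \eqref{eq:bootconsistdef}. Since $\mathbb{Z}_\alpha$ is data-measurable, conditionally it is a fixed number $z$, and the conditional limit of the singular piece becomes $\bigl((d_{\balpha}+z)Z_\beta + d_{\bbeta}Z_\alpha\bigr)\gamma_0$, whose CDF depends on $z$. As $z$ does not converge in probability to a constant, this conditional CDF cannot converge in probability to the fixed target law in Theorem \ref{prop:asymbinaryoutcome}(ii). (Take $\balpha=\bbeta=0$: the target is a normal product $Z_\alpha Z_\beta$, while the conditional bootstrap limit is $zZ_\beta$, a centered normal with data-dependent scale $|z|$.) Your appeal to $Z_\alpha\tilde Z_\beta \overset{d}{=} Z_\alpha Z_\beta$ is an \emph{unconditional} distributional identity and does not supply the required conditional convergence; this is precisely the failure mechanism of the naive paired bootstrap identified in Remark \ref{rm:classboot00}.
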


\medskip

\noindent \textit{(4) Numerical Results.} 
We conduct simulation studies to compare the AB and the classical non-parametric bootstrap under the model \eqref{eq:extendmodel1}.
The detailed results are provided in Section \ref{sec:nonlineartwo} of the Supplementary Material. 
Our findings align closely with those presented 
in Section \ref{sec:sim}. 
Specifically, under $H_0:{\alpha}_S{\beta}_M=0$, when $({\alpha}_S, {\beta}_M)\neq \mathbf{0}$, 
both the proposed AB test and the classical non-parametric bootstrap  yield uniformly distributed $p$-values.
However, when  $({\alpha}_S,{\beta}_M)= \mathbf{0}$, the classical bootstrap becomes overly conservative, whereas the AB test still yields uniformly distributed $p$-values. 
Under $H_A$, the AB test can achieve higher empirical power than the classical bootstrap.

\subsection{Non-Linear Scenario II: Linear Outcome and General Mediator} \label{sec:modelbinmed} 

\noindent \textit{(1) Model.} 
Suppose the outcome follows a  linear model, and  consider 
\begin{align} \label{eq:extendmodel2}
\mathrm{E}(\M \mid \S, \X )= h^{-1}(\alphaS \S +  \X^\mytrans \alphaX),  \hspace{1.5em} \mathrm{E}(\Y \mid \S, \M, \X )  =  \betaM \M + \X^\mytrans  \betaX + \DE \S,
\end{align}
where $h^{-1}(\cdot)$ can be the inverse of a canonical link function. 
Similarly to the non-linear Scenario I, we examine the conditional natural indirect effect/mediation effect defined as the risk difference: 
\begin{align}
\mathrm{NIE}_{s\mid s^*}(s,\boldsymbol{x}) := &~\mathrm{E}\big\{\Y(s, \M(s))- \Y(s, \M(s^*)) \mid \X =\boldsymbol{x} \big\} \notag\\
=&~\betaM\left\{ h^{-1}\big(\alphaS s+ \boldsymbol{x}^{\mytrans}\alphaX  \big) -h^{-1}\big(\alphaS s^*+ \boldsymbol{x}^{\mytrans}\alphaX \big) \right\}. \label{eq:h0binarymediator} 
\end{align}

\noindent \textit{(2) Non-Regularity Issue.}   
We are interested in testing  $H_0: \mathrm{NIE}_{s\mid s^*}(s,\boldsymbol{x})=0$, which  
looks different from $H_0:\alphaS\betaM=0$ in Section \ref{sec:review}. Nevertheless, we can show that  the non-regularity issue similar to that in Section \ref{sec:review} would arise.  This is formally stated as Proposition \ref{prop:singularitylogitm} below. 
\begin{proposition} \label{prop:singularitylogitm}
Under the model \eqref{eq:extendmodel2}, assume $h^{-1}(\cdot)$ is strictly monotone, and the identification conditions in Section \ref{sec:review} hold. 
Let $\mathrm{NIE}$ be a shorthand for $\mathrm{NIE}_{s\mid s^*}(s,\boldsymbol{x})$ in \eqref{eq:h0binarymediator}. 
Then \vspace{-1.2em}
\begin{enumerate}
	\item $H_0:$ $\mathrm{NIE}=\eqref{eq:h0binarymediator}=0$ holds if and only if  $\alphaS=0$ or $\betaM=0$. 
	\item 
 \quad (i)  $\frac{\partial   \mathrm{NIE}}{\partial \alphaS} \Big|_{\betaM = 0}= \frac{\partial  \mathrm{NIE}}{\partial \betaM} \Big|_{\alphaS = 0}=0$. \ (ii) $\frac{\partial  \mathrm{NIE}}{\partial \alphaS} \Big|_{\alphaS=0, \betaM\neq 0}\neq 0$. \ (iii) $\frac{\partial  \mathrm{NIE}}{\partial \betaM} \Big|_{\alphaS\neq 0, \betaM = 0}\neq 0$. 
\end{enumerate}	
\end{proposition}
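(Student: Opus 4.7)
The plan is to exploit the product structure of
$\mathrm{NIE}=\betaM\cdot f(\alphaS)$, where I set $f(\alphaS):=h^{-1}(\alphaS s+\boldsymbol{x}^{\mytrans}\alphaX)-h^{-1}(\alphaS s^{*}+\boldsymbol{x}^{\mytrans}\alphaX)$. Because $h^{-1}$ is assumed strictly monotone (hence injective, and differentiable with nowhere-vanishing derivative for the canonical inverse links that are relevant here), both parts of the proposition reduce to elementary properties of $f$. I would first state this factorization up front, noting the implicit assumption $s\neq s^{*}$ (otherwise NIE is identically zero and the statement $H_0\Leftrightarrow\alphaS=0$ or $\betaM=0$ is vacuous).

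For \textbf{Part 1}, the ``$\Leftarrow$'' direction is immediate: $\betaM=0$ kills the product, and $\alphaS=0$ gives $f(0)=h^{-1}(\boldsymbol{x}^{\mytrans}\alphaX)-h^{-1}(\boldsymbol{x}^{\mytrans}\alphaX)=0$. For ``$\Rightarrow$'', suppose $\mathrm{NIE}=0$ with $\betaM\neq 0$; then $f(\alphaS)=0$ forces $h^{-1}(\alphaS s+\boldsymbol{x}^{\mytrans}\alphaX)=h^{-1}(\alphaS s^{*}+\boldsymbol{x}^{\mytrans}\alphaX)$, and strict monotonicity of $h^{-1}$ yields $\alphaS s=\alphaS s^{*}$, i.e.\ $\alphaS=0$ since $s\neq s^{*}$.

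For \textbf{Part 2}, I would just differentiate the factorization: $\partial\mathrm{NIE}/\partial\alphaS=\betaM f'(\alphaS)$ and $\partial\mathrm{NIE}/\partial\betaM=f(\alphaS)$. Claim (i) on the vanishing gradients then follows trivially because the first partial carries a $\betaM$ factor (zero when $\betaM=0$) and the second equals $f(\alphaS)$ evaluated at $\alphaS=0$, which is $0$ as noted above. Claim (iii) follows from the same monotonicity argument as in Part 1: if $\alphaS\neq 0$ and $\betaM=0$, then $\alphaS s\neq\alphaS s^{*}$, so strict monotonicity of $h^{-1}$ gives $f(\alphaS)\neq 0$. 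For claim (ii), I compute
\begin{equation*}
f'(\alphaS)=s\,(h^{-1})'(\alphaS s+\boldsymbol{x}^{\mytrans}\alphaX)-s^{*}\,(h^{-1})'(\alphaS s^{*}+\boldsymbol{x}^{\mytrans}\alphaX),
\end{equation*}
so that $f'(0)=(s-s^{*})\,(h^{-1})'(\boldsymbol{x}^{\mytrans}\alphaX)$, and $\partial\mathrm{NIE}/\partial\alphaS|_{\alphaS=0,\betaM\neq 0}=\betaM(s-s^{*})(h^{-1})'(\boldsymbol{x}^{\mytrans}\alphaX)\neq 0$.

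There is no real technical obstacle; the only subtlety worth flagging in the write-up is what ``strictly monotone'' is being used for in each step. In Parts 1 and (iii) it supplies injectivity of $h^{-1}$, whereas in claim (ii) it must be combined with smoothness of $h^{-1}$ to guarantee $(h^{-1})'(\boldsymbol{x}^{\mytrans}\alphaX)\neq 0$. For the canonical link family implicitly targeted (logit, identity, log, probit, etc.), this is automatic, but I would state it explicitly so the argument is self-contained, and also reiterate the standing assumption $s\neq s^{*}$ that makes the NIE nontrivial.
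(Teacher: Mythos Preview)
Your proposal is correct and follows essentially the same route as the paper: the paper also reads off Part~1 directly from the product form \eqref{eq:h0binarymediator} using strict monotonicity of $h^{-1}$, and for Part~2 it computes the two partial derivatives explicitly (writing $g$ for $h^{-1}$), obtaining exactly your expressions $\partial\mathrm{NIE}/\partial\alphaS=\betaM f'(\alphaS)$ and $\partial\mathrm{NIE}/\partial\betaM=f(\alphaS)$, from which (i)--(iii) follow by the same monotonicity and $s\neq s^{*}$ considerations you note. Your explicit flagging of the implicit assumptions $s\neq s^{*}$ and $(h^{-1})'\neq 0$ is a useful clarification that the paper leaves tacit.
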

Similarly to Proposition \ref{prop:singularitylogiresponse}, Proposition \ref{prop:singularitylogitm} implies that
a non-regularity issue caused by  zero gradient would arise  under $H_0$.   
Specifically, the ordinary estimator of NIE can behave differently  when $\alphaS=\betaM=0$, and when one of $\alphaS$ and $\betaM \neq 0$. 
This is similar to the PoC  statistic in Section \ref{sec:abt} and the odds ratio in Section \ref{sec:asymbinarymedout}.

\medskip
\noindent \textit{(3) Asymptotic Theory and Adaptive Bootstrap.}   
For ease of presentation, we next derive  asymptotic theory under a specific instance of \eqref{eq:extendmodel2}. 
Specifically,  the mediator $M$ is a Bernoulli random variable with its conditional mean given in \eqref{eq:extendmodel1} and $h^{-1}(x)=\mathrm{logit}^{-1}(x)=e^x/(1+e^x)$, 
and the outcome $Y$ follows the linear model in \eqref{eq:fullmod1}.  
The analysis in this section can be readily extended when the mediator $M$ follows other canonical generalized linear models.  
As we are interested in how the local limiting behavior of $\alphaS$ and $\betaM$ coefficients change, 
we consider the following general local  model 
\begin{align} \label{eq:extendmodel2local}
\mathrm{E}(\M \mid \S, \X )=\mathrm{logit}^{-1}(\alphaSn \S +  \X^\mytrans \alphaX),\hspace{2em} Y  = 	\betaMn \M + \X^\mytrans  \betaX + \DE \S +\eY. 
\end{align}
where $\alphaSn=\alphaS+\balpha/\sqrt{n}$, and $\betaMn=\betaM+\bbeta/\sqrt{n}$.

\begin{theorem}[Asymptotic Property]    \label{prop:asymbinarymediator}
Assume Condition \ref{cond:designmatrixbinarymediator} in the Supplementary Material (a regularity condition on the design matrix similar to Condition \ref{cond:inversemoment}).  
Under model \eqref{eq:extendmodel2local}, \vspace{-1.0em}
\begin{itemize}
\item[(i)] when   $(\alphaS, \betaM)\neq \mathbf{0}$, $\sqrt{n}(\widehat{\mathrm{NIE}}-\mathrm{NIE}) \xrightarrow{d} d_{\alpha} Z_{\beta}+ \betaM Z_{\alpha} $; \vspace{-0.3em} 
\item[(ii)] when $(\alphaS, \betaM)=\mathbf{0}$, $n(\widehat{\mathrm{NIE}}-\mathrm{NIE})\xrightarrow{d}  d_{\balpha}  Z_{\beta} + \bbeta Z_{\alpha} +Z_{\alpha}Z_{\beta}$, 
\end{itemize}
where $d_{\alphaS}=g(\alphaS s + \boldsymbol{x}^{\mytrans}\alphaX)-g(\alphaSn s^* + \boldsymbol{x}^{\mytrans}\alphaX)$, 
$d_{\balpha}=g'(\boldsymbol{x}^{\mytrans}\alphaX)(s-s^*) \balpha$, 
$Z_{\alpha}$ represents a normal distribution specified in Lemma \ref{lm:speratedalphabetalimit}, 
and $Z_{\beta}$ is redefined to be a mean-zero normal distribution with a covariance same as  the random vector $V_M^{-1}\epsilon_YM_{\perp'}$, where $V_M$ and $M_{\perp'}$ are defined in Theorem \ref{thm:prodlimit2}.
\end{theorem}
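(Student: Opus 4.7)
The plan is to reduce the asymptotic analysis of $\widehat{\mathrm{NIE}}-\mathrm{NIE}$ to the joint limits of the logistic mediator regression and the linear outcome regression via a Taylor expansion of the functional $(\alphaS,\alphaX,\betaM)\mapsto \betaM D(\alphaS,\X^{\mytrans}\alphaX)$, where $D(a,v):=g(as+v)-g(as^{*}+v)$ with $g=\mathrm{logit}^{-1}$. The essential algebraic observation, already reflected in Proposition \ref{prop:singularitylogitm}, is that $D(0,v)\equiv 0$, so at $\alphaS=0$ the only first-order information carried by $D$ is $\partial_a D(0,v)=g'(v)(s-s^{*})$ while the $\alphaX$-derivative vanishes identically. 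This cancellation forces case (ii) into the quadratic rate-$n$ regime while case (i) stays at the standard $\sqrt{n}$ rate. Following the convention of Theorem \ref{thm:prodlimit2}, I interpret ``$\mathrm{NIE}$'' in the statement as the local quantity $\mathrm{NIE}_n:=\betaMn D(\alphaSn,\X^{\mytrans}\alphaX)$.

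First I would invoke Lemma \ref{lm:speratedalphabetalimit} together with Condition \ref{cond:designmatrixbinarymediator} to record the joint limits of the regression estimators: $\sqrt{n}(\hatalphaSn-\alphaSn,\,\hat{\boldsymbol{\alpha}}_{\X}-\alphaX)\xrightarrow{d}(Z_{\alphaS},Z_{\alphaX})$ from the logistic MLE on the mediator equation, and $\sqrt{n}(\hatbetaMn-\betaMn)\xrightarrow{d} Z_\beta$ (jointly with the companion $\betaX,\DE$ coordinates) from OLS on the outcome equation. The two blocks are asymptotically independent because $\eM$ and $\eY$ are independent, so the two score processes are orthogonal. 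The working decomposition is
\begin{equation*}
\widehat{\mathrm{NIE}}-\mathrm{NIE}_n = (\hatbetaMn-\betaMn)\,D(\hatalphaSn,\X^{\mytrans}\hat{\boldsymbol{\alpha}}_{\X}) + \betaMn\bigl[D(\hatalphaSn,\X^{\mytrans}\hat{\boldsymbol{\alpha}}_{\X})-D(\alphaSn,\X^{\mytrans}\alphaX)\bigr],
\end{equation*}
in which each summand depends on only one of the asymptotically independent blocks.

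For case (i), $(\alphaS,\betaM)\neq\mathbf{0}$, a standard first-order Taylor expansion of $D$ around $(\alphaSn,\X^{\mytrans}\alphaX)$ yields $\sqrt{n}\bigl[D(\hatalphaSn,\X^{\mytrans}\hat{\boldsymbol{\alpha}}_{\X})-D(\alphaSn,\X^{\mytrans}\alphaX)\bigr]\xrightarrow{d} Z_\alpha$ with $Z_\alpha:=\partial_a D(\alphaS,\X^{\mytrans}\alphaX)Z_{\alphaS}+\partial_v D(\alphaS,\X^{\mytrans}\alphaX)\,\X^{\mytrans}Z_{\alphaX}$, matching the $Z_\alpha$ of Lemma \ref{lm:speratedalphabetalimit}. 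Continuous mapping gives $D(\hatalphaSn,\X^{\mytrans}\hat{\boldsymbol{\alpha}}_{\X})\xrightarrow{P}d_\alpha$ and $\betaMn\to\betaM$, so Slutsky together with the asymptotic independence from the previous step yields $\sqrt{n}(\widehat{\mathrm{NIE}}-\mathrm{NIE}_n)\xrightarrow{d}d_\alpha Z_\beta+\betaM Z_\alpha$.

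For case (ii), $\alphaS=\betaM=0$, both factors in each summand are $O_P(n^{-1/2})$, so each summand is $O_P(n^{-1})$. Using $D(0,v)\equiv 0$ and expanding only in $\hatalphaSn$, I get $D(\hatalphaSn,\X^{\mytrans}\hat{\boldsymbol{\alpha}}_{\X}) = g'(\X^{\mytrans}\alphaX)(s-s^{*})\hatalphaSn + o_P(n^{-1/2})$, because the $\hat{\boldsymbol{\alpha}}_{\X}$-error enters only through a cross term of order $n^{-1}$. Writing $\sqrt{n}\hatalphaSn=\balpha+\sqrt{n}(\hatalphaSn-\alphaSn)+o_P(1)$ and similarly for $\hatbetaMn$, I obtain $\sqrt{n}D(\hatalphaSn,\X^{\mytrans}\hat{\boldsymbol{\alpha}}_{\X})\xrightarrow{d}d_{\balpha}+Z_\alpha$ and $\sqrt{n}\hatbetaMn\xrightarrow{d}\bbeta+Z_\beta$ jointly, while the deterministic local quantity satisfies $n\,\mathrm{NIE}_n\to\bbeta d_{\balpha}$. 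Multiplying and subtracting,
\begin{equation*}
n(\widehat{\mathrm{NIE}}-\mathrm{NIE}_n)\xrightarrow{d}(\bbeta+Z_\beta)(d_{\balpha}+Z_\alpha)-\bbeta d_{\balpha} = d_{\balpha}Z_\beta+\bbeta Z_\alpha+Z_\alpha Z_\beta.
\end{equation*}
The main obstacle is this case (ii) argument: one must certify that the $\hat{\boldsymbol{\alpha}}_{\X}$-error, of order $n^{-1/2}$ in isolation, does not contaminate the leading $n^{-1}$ term, and this rests on the identity $D(0,v)\equiv 0$ combined with uniform control of the second derivatives of $g$ in a shrinking neighbourhood of $\X^{\mytrans}\alphaX$, supplied by the smoothness of $\mathrm{logit}^{-1}$ and the design regularity in Condition \ref{cond:designmatrixbinarymediator}.
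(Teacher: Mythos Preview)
Your proposal is correct and follows essentially the same route as the paper's proof: write $\widehat{\mathrm{NIE}}-\mathrm{NIE}$ as the product difference $\hatbetaMn\hat d_{\alpha}-\betaMn d_{\alpha,n}$, invoke Lemma \ref{lm:speratedalphabetalimit} for $\sqrt{n}(\hat d_{\alpha}-d_{\alpha,n})\xrightarrow{d}Z_\alpha$ and the linear-OLS CLT for $\sqrt{n}(\hatbetaMn-\betaMn)\xrightarrow{d}Z_\beta$, then apply the product expansion in the two regimes. The paper is terser---it splits case (i) into the two null sub-cases $\alphaS\neq 0,\betaM=0$ and $\alphaS=0,\betaM\neq 0$ and cites Lemma \ref{lm:speratedalphabetalimit} without unpacking it---whereas you treat case (i) uniformly and make explicit the key cancellation $D(0,v)\equiv 0$ (equivalently, that $W_\alpha$ in Lemma \ref{lm:speratedalphabetalimit} has zero $\alphaX$-block when $\alphaS=0$) that keeps the $\hat{\boldsymbol{\alpha}}_{\X}$-error out of the leading term in case (ii). One minor point: the asymptotic-independence remark is unnecessary for either case---joint convergence of $(\sqrt{n}(\hat d_\alpha-d_{\alpha,n}),\sqrt{n}(\hatbetaMn-\betaMn))$ plus continuous mapping is all that is used---and your $\X$ should be the fixed evaluation point $\boldsymbol{x}$ throughout.
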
 

We next establish bootstrap consistency theory. 
Similarly to Section \ref{sec:asymbinarymedout}, 
let $\widehat{\mathrm{NIE}}^*$ denote the nonparametric bootstrap estimator  of $\mathrm{NIE}$. 
In particular, we redefine $\widehat{\mathrm{NIE}}^*=\hat{\beta}_M^*\{g(\hat{\alpha}_S^*s + \boldsymbol{x}^{\top}\hat{\boldsymbol{\alpha}}_{\boldsymbol{X}}^* ) -g(\hat{\alpha}_S^*s^* + \boldsymbol{x}^{\top}\hat{\boldsymbol{\alpha}}_{\boldsymbol{X}}^* )\}$, 
where 
 $(\hat{\alpha}_S^*, \hat{\boldsymbol{\alpha}}_{\boldsymbol{X}}^*, \hat{\beta}_M^*)$ denotes the classical nonparametric bootstrap estimators of   $({\alpha}_S, {\boldsymbol{\alpha}}_{\boldsymbol{X}}, {\beta}_M)$. 
Motivated by Theorem \ref{prop:asymbinarymediator},  we define the AB statistic   
$$
U_{e,2}^*=(\widehat{\mathrm{NIE}}^* -\widehat{\mathrm{NIE}} )  (1- \myindicator_{\alphaS, \lambda_n}^* \myindicator_{\betaM, \lambda_n}^*  ) +n^{-1}(d_{\balpha}  \mathbb{Z}_{\beta}^* + \bbeta  \mathbb{Z}_{\alpha}^* +\mathbb{Z}_{\alpha}^*\mathbb{Z}_{\beta}^*)  \myindicator_{\alphaS, \lambda_n}^*  \myindicator_{\betaM, \lambda_n}^*, 
$$
where $\myindicator_{\alphaS, \lambda_{n}}^*$ and $\myindicator_{\betaM, \lambda_{n}}^*$ are defined similarly to \eqref{eq:indicatoralphabetasep}. 
The following theorem establishes consistency of the AB statistic $U_{e,2}^*$.
\begin{theorem}[Adaptive Bootstrap Consistency] \label{thm:combinebootbinarymediator}
Under conditions of Theorem \ref{prop:asymbinarymediator}, when the tuning parameter $\lambda_n$ satisfies $\lambda_n=o(\sqrt{n})$ and $\lambda_n\to \infty$ as $n\to \infty,$  
$c_nU_{e,2}^*\overset{d^*}{\leadsto} c_{n}( \widehat{\mathrm{NIE}}- \mathrm{NIE}),$   
where $c_n$ is specified as in \eqref{eq:cndef}. 
\end{theorem}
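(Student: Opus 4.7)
The plan is to mirror the argument already used for Theorem \ref{thm:bootstrapprodcomb}, splitting the analysis into the two regimes that the scaling $c_n$ distinguishes. The key structural fact, built into the construction of $U_{e,2}^*$, is that the adaptive threshold $\lambda_n$ with $\lambda_n\to\infty$ and $\lambda_n=o(\sqrt{n})$ selects between the classical and the local-limit pieces at exactly the correct rate in each regime. I would therefore condition on which of $\myindicator_{\alphaS,\lambda_n}^*\myindicator_{\betaM,\lambda_n}^*\in\{0,1\}$ occurs and show that the contributions from the complementary event are asymptotically negligible.

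First I would treat the case $(\alphaS,\betaM)\neq(0,0)$, where $c_n=\sqrt{n}$. Here at least one of the $T$-statistics satisfies $|T_{\alpha,n}|\to\infty$ or $|T_{\beta,n}|\to\infty$ at the $\sqrt{n}$ rate, so with $\lambda_n=o(\sqrt{n})$ the data-side indicator and (by the classical bootstrap CLT applied to $\hat{\alpha}_{S,n}^*$, $\hat{\beta}_{M,n}^*$) the bootstrap-side indicator both vanish in probability. Consequently $\myindicator_{\alphaS,\lambda_n}^*\myindicator_{\betaM,\lambda_n}^*\to 0$ with outer probability tending to one, so $U_{e,2}^*$ reduces to $\widehat{\mathrm{NIE}}^*-\widehat{\mathrm{NIE}}$. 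Since the map $(\alpha_S,\boldsymbol{\alpha}_{\boldsymbol{X}},\beta_M)\mapsto\mathrm{NIE}$ has a non-vanishing first-order gradient at such $(\alphaS,\betaM)$ by Proposition \ref{prop:singularitylogitm}, the ordinary Delta-method bootstrap is consistent and delivers $\sqrt{n}(\widehat{\mathrm{NIE}}^*-\widehat{\mathrm{NIE}})\overset{d^*}{\leadsto}d_{\alpha}Z_{\beta}+\betaM Z_{\alpha}$, matching Theorem \ref{prop:asymbinarymediator}(i).

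Next I would treat the case $(\alphaS,\betaM)=(0,0)$, where $c_n=n$. Here $T_{\alpha,n}$ and $T_{\beta,n}$ are $O_P(1)$ under the local model \eqref{eq:extendmodel2local}, and so are their bootstrap analogs $T_{\alpha,n}^*,T_{\beta,n}^*$ by the classical bootstrap CLT applied to the logistic score for the mediator model and the OLS score for the outcome model under Condition \ref{cond:designmatrixbinarymediator}. Because $\lambda_n\to\infty$, the product indicator $\myindicator_{\alphaS,\lambda_n}^*\myindicator_{\betaM,\lambda_n}^*\to 1$ in probability, so $nU_{e,2}^*=d_{\balpha}\mathbb{Z}_{\beta}^*+\bbeta\mathbb{Z}_{\alpha}^*+\mathbb{Z}_{\alpha}^*\mathbb{Z}_{\beta}^*$ with probability tending to one. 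It then remains to establish a joint bootstrap CLT $(\mathbb{Z}_{\alpha}^*,\mathbb{Z}_{\beta}^*)\overset{d^*}{\leadsto}(Z_{\alpha},Z_{\beta})$; this follows from the nonparametric bootstrap CLT for i.i.d. empirical processes (Chapter 23 of \cite{van2000asymptotic}) applied to the stacked influence-function class obtained by linearizing the logistic score for $\hat\alpha_S$ and the OLS score for $\hat\beta_M$ about the true parameters. A continuous-mapping argument on the bilinear functional $(z_1,z_2)\mapsto d_{\balpha}z_2+\bbeta z_1+z_1z_2$ then yields precisely the limit in Theorem \ref{prop:asymbinarymediator}(ii), as required.

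The main obstacle is the joint-linearization step in the $(\alphaS,\betaM)=(0,0)$ case, because $\hat{\alpha}_S$ arises from a non-closed-form GLM score equation while $\hat{\beta}_M$ is an OLS coefficient, so one has to justify a single joint expansion of $(\hat{\alpha}_S-\alphaSn,\hat{\beta}_M-\betaMn)$ whose bootstrap analog is asymptotically equivalent to $(\mathbb{Z}_{\alpha}^*/\sqrt{n},\mathbb{Z}_{\beta}^*/\sqrt{n})$. This requires a uniform Taylor expansion of the logistic score at the MLE together with bootstrap consistency of the observed information $\mathbb{P}_n^*[g'(\hat\alpha_S^*S+\boldsymbol{X}^{\mytrans}\hat{\boldsymbol{\alpha}}_{\boldsymbol{X}}^*)\tilde{\boldsymbol{X}}\tilde{\boldsymbol{X}}^{\mytrans}]$; both are standard consequences of $M$-estimator bootstrap theory given Condition \ref{cond:designmatrixbinarymediator}. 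Once these are in hand, reducing $\widehat{\mathrm{NIE}}^*-\widehat{\mathrm{NIE}}$ to a smooth function of the two linearizations and applying the bootstrap continuous-mapping theorem proceeds exactly as in the proof of Theorem \ref{thm:bootstrapprodcomb}.
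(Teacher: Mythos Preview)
Your proposal is correct and follows essentially the same route as the paper: the paper's own proof simply states that the result ``can be similarly obtained following the arguments in Sections \ref{sec:pfbootstrapprodcomb} and \ref{sec:pfbootbinaryboth}'' after defining $\mathbb{Z}_{\alpha}^*$ via the linearized logistic score and $\mathbb{Z}_{\beta}^*$ via the OLS residual process, which is exactly the two-regime split plus indicator convergence plus joint bootstrap CLT that you outline. Your identification of the main technical point---stacking the logistic $M$-estimator expansion with the OLS expansion and invoking $M$-estimator bootstrap theory under Condition \ref{cond:designmatrixbinarymediator}---is precisely what the referenced sections supply.
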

 
\noindent \textit{(4) Numerical Results.} 
We conduct simulation studies to compare the AB and the classical non-parametric bootstrap under the model \eqref{eq:extendmodel2}.
The detailed results are provided in Section \ref{sec:nonlineartwo} of the Supplementary Material. 
The obtained results are very similar to those in Section \ref{sec:sim} and  Section \ref{sec:asymbinarymedout} Part 4),
and therefore, we refrain from repeating the details here.

\section{Data Analysis}\label{sec:datanalysis}

We illustrate an application of our proposed method to the analysis of data from a cohort study ``Early Life Exposures in Mexico to ENvironmental Toxicants" (ELEMENT) \citep{perng2019}. 
One of the central interests in this scientific study concerns the mediation effects of metabolites, in particular, the family of lipids,
on the association between environmental exposure and children growth and development. In the literature of environmental health sciences, exposure to endocrine-disrupting chemicals (EDCs) such as phthalates have been found to be detrimental to children's health outcomes. Such findings of  direct associations need to be further assessed for possible mediation effects through metabolites, because environmental toxicants such as phthalates can alter metabolic profiles at the molecular level.

Our illustration focuses on the outcome of body mass index (BMI) and exposure to one phthalate, MEOHP (a chemical in food production and storage).  BMI is a widely used biomarker in pediatric research to measure childhood obesity. 
The dataset contains 382   adolescents aged 10-18 years old living in Mexico   City.  
Our mediation analysis involves a set of 149 lipids that are hypothesized to have potential mediation effects on children's growth and development.  Our goal is to identify the mediation pathways of exposure to MEOHP $\rightarrow$ lipids $\rightarrow$ BMI. 
Two key potential confounders, gender and age,  are included throughout the analyses. 
{
It is worth noting that adjusting for gender and age may not be sufficient for proper confounding adjustments. To conduct a more plausible causal analysis and interpretation, a further investigation is deemed necessary to rigorously assess the underlying causal assumptions such as a sensitivity analysis for the sequential ignorability assumption.} \label{pageref:exemplary}
In our analyses, 
we compare the results of six tests:    
JS-AB, JS-MaxP, PoC-AB, PoC-B, PoC-Sobel, and CMA, which have been compared in our simulation studies  in Section \ref{sec:sim}. 
In particular, 
all the bootstrap methods (including JS-AB, PoC-AB, PoC-B, CMA)   
 are conducted based on $10^4$ bootstrap resamples. 
Here we no longer include the JS-B test 
and the MT-Comp method, as they are known to have  inflated type \RNum{1} errors according to our simulation studies in Section \ref{sec:sim}.

As the sample size is limited compared to the large number of mediators, we first apply
a screening analysis to identify a subset of lipids as potential candidates. 
We then jointly model the chosen lipids in the second step of our analysis.  
{To mitigate the potential issues arising from double dipping the data, we adopt a random data splitting approach by dividing the dataset into two distinct parts, each dedicated to one of the two respective analytic tasks.}
In the first screening step, we examine the effect along the path MEOHP $\to$ lipid $\to$ BMI for one lipid at a time, and the corresponding  $p$-values are obtained with the six tests, respectively. 
For each test, we select 
a proportion ($q\%$) of lipids with the smallest $p$-values. 
The second step 
examines the path MEOHP $\to$ selected lipids $\to$ BMI, with the selected lipids being modeled jointly. 
To test the mediation effect through a target lipid $M$ within the selected set,
we adjust for non-target mediators within the outcome model, following the discussions on Page  \pageref{page:discussionmultimed}; please see
more details in  Section \ref{sec:datastep2int} of the Supplementary Material. 
\label{page:outcomedata}
Subsequently, we select lipids based on their $p$-values obtained in the second step, after adjusting for multiple comparisons with controlled false discovery rate (FDR)  \citep{benjamini1995controlling}.  
In our analysis, we explore a range of $q$ values $\{5, 10, 15, 20, 25\}$ and observe very similar results, indicating the robustness of our approach to the choice of the screening threshold in the first step. 
We next present the results obtained with $q = 10$ (i.e., 15 selected lipids based on their $p$-values), while results for other $q$ values are detailed in Section \ref{sec:changescreen}  of the Supplementary Material.

As an illustrative example, 
we first present the results from a single random split in Table \ref{tb:jointdetectres}. 
Table \ref{tb:jointdetectres} provides the corresponding $p$-values for the lipids selected by at least one test in the second step of the analysis.
In this instance, the non-AB tests fail to detect any lipids. In contrast, the PoC-AB test identifies lauric acid (L.A) and FA.7.0-OH\_1 (FA.7) while controlling the FDR at 0.10, and the JS-AB test selects both L.A and FA.7 when the FDR is controlled at 0.05 and 0.10, respectively. 
{To gauge the variability of results across random splits, we repeat the data-splitting analysis 400 times.}
As shown in Figure \ref{fig:countvalmain}, L.A and  FA.7 are the two most frequently selected mediators in our analysis.
Furthermore, the  AB tests exhibit a notably higher chance of selecting L.A compared to the non-AB tests. 
This aligns with our observations from simulations in Section \ref{sec:sim},  suggesting that the AB tests can attain higher power than their non-AB counterparts.  
Lauric acid is a saturated fatty acid and is found in many vegetable fats and in coconut and palm kernel oils \citep{dayrit2015properties}. 
The results suggest  
that the exposure to MEOHP may influence the  process of breaking down fat tissue in the human body, leading to obesity and other adverse health outcomes.


\begin{table}[!htbp]
\centering
\caption{ Lipids selected in the second step.}\label{tb:jointdetectres}
\begin{threeparttable}
\setlength{\tabcolsep}{10pt}
\begin{tabular}{c|cccccc}\hline
Lipids & JS-AB &  JS-MaxP &  PoC-AB &   PoC-B & PoC-Sobel &  CMA  \\ \hline
 L.A & 0.0017 ($\times *$) & 0.0399 & 0.0043 ($*$) & 0.0406 & 0.1254 & 0.0426 \\ 
FA.7 & 0.0008 ($\times *$)  & 0.0146 & 0.0090 $(*)$ & 0.0236 & 0.0937 & 0.0208 \\ \hline 
\end{tabular}
 \begin{tablenotes} 
        \item[] \hspace{-8pt} \small{(Abbreviations: LAURIC.ACID (L.A); FA.7.0-OH\_1 (FA.7). $p$-values with $(\times)$ and $(*)$ indicate that the lipid specified by the row is selected by the method specified by the column under 0.05 and 0.10 FDR levels, respectively.)}\\
        \end{tablenotes}
    \end{threeparttable}
\end{table}

Since the first screening step considers one mediator at a time,  we also conduct sensitivity analyses to evaluate the effects of the unadjusted mediators  similarly to \cite{liu2020large}. 
{We use the procedure  proposed by \cite{imai2010identification}, which  utilized the idea that  the error term in the M-S model and that in the Y-M model are likely to be correlated if the sequential ignorability assumption is violated and vice versa.} \label{page:adjust} 
The detailed results are provided in Section \ref{sec:sensanalysis} in the Supplementary Material. 
As a brief summary, the sensitivity analysis suggests that our  first  screening analysis could be robust to unadjusted  mediators.

\begin{figure}[!htbp]
 \captionsetup[subfigure]{labelformat=empty}
    \centering
      \begin{subfigure}[b]{0.32\textwidth}
      \caption{JS-AB}
      \includegraphics[width=0.95\textwidth]{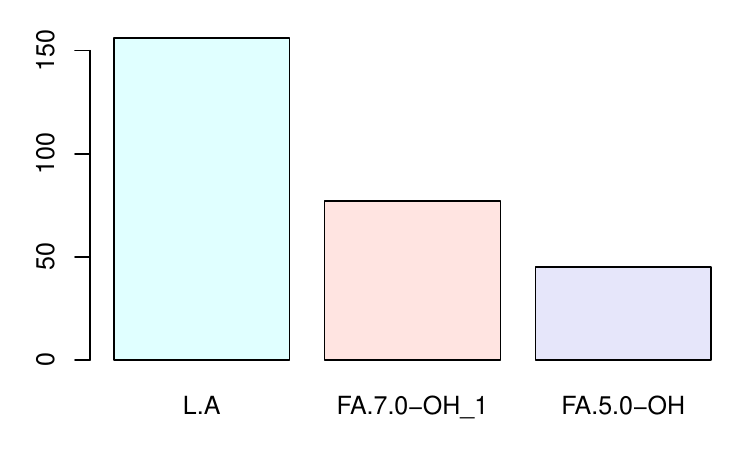}
   \end{subfigure}    
      \begin{subfigure}[b]{0.32\textwidth}
      \caption{JS-MaxP}
      \includegraphics[width=0.95\textwidth]{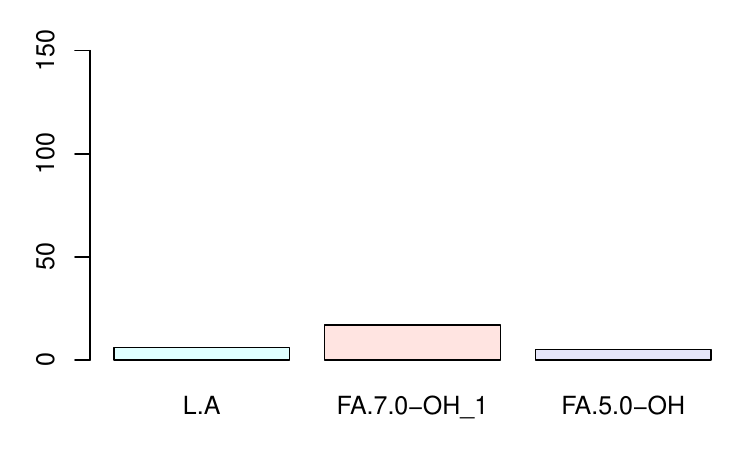}
   \end{subfigure}    
       \begin{subfigure}[b]{0.32\textwidth}
        \caption{CMA}
      \includegraphics[width=0.95\textwidth]{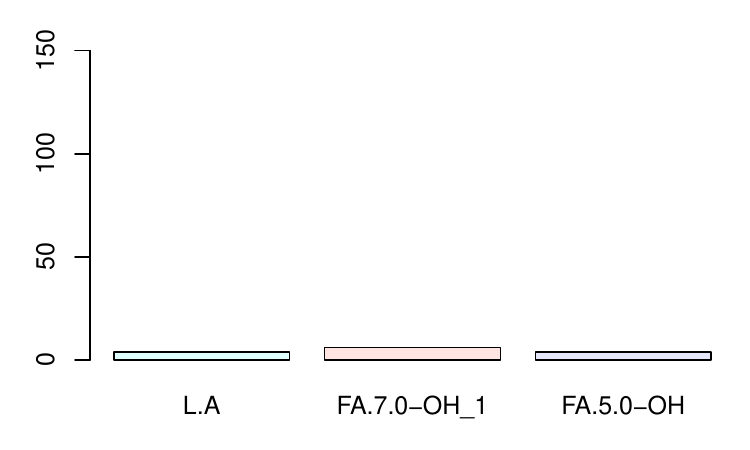}
   \end{subfigure}   
   \begin{subfigure}[b]{0.32\textwidth}
   \caption{PoC-AB}
      \includegraphics[width=0.95\textwidth]{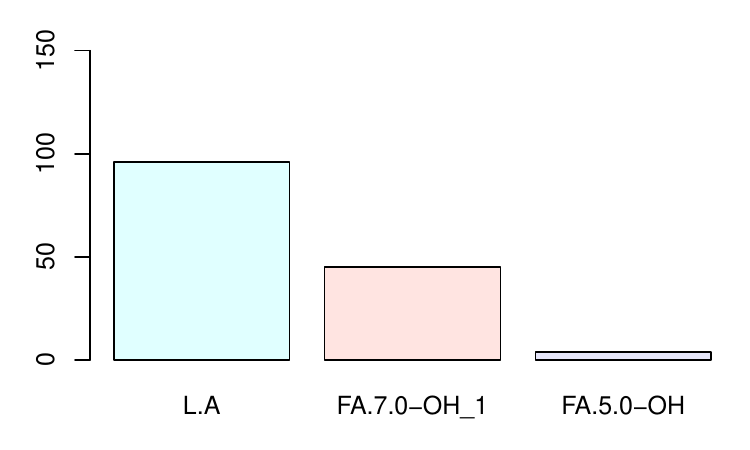}
   \end{subfigure}    
      \begin{subfigure}[b]{0.32\textwidth}
      \caption{PoC-B}
      \includegraphics[width=0.95\textwidth]{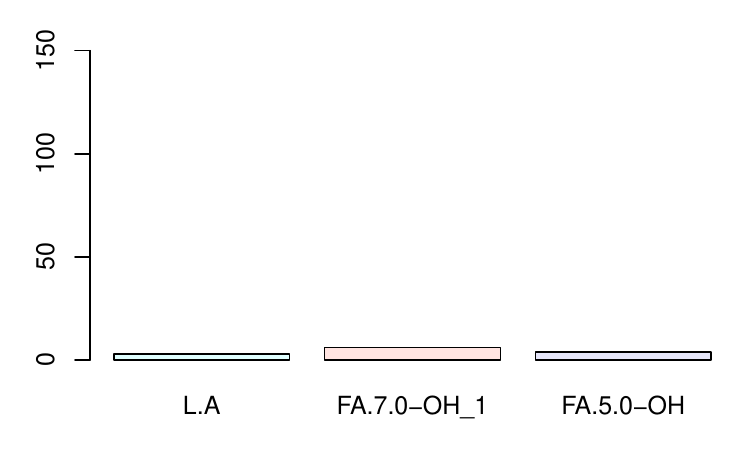}
   \end{subfigure}    
      \begin{subfigure}[b]{0.32\textwidth}
      \caption{PoC-Sobel}
      \includegraphics[width=0.95\textwidth]{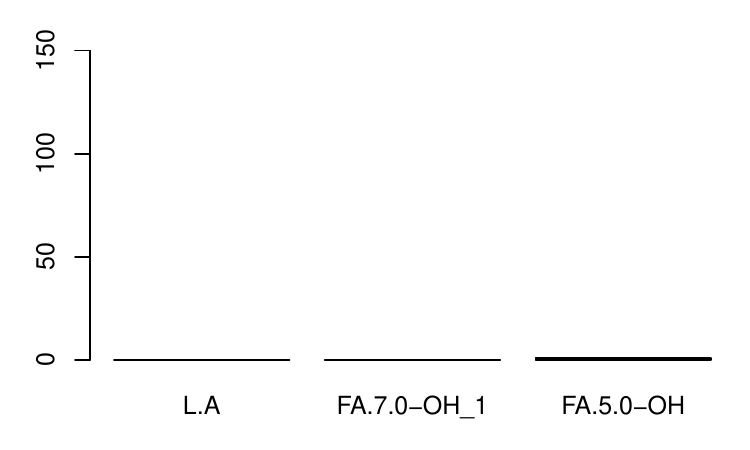}
   \end{subfigure}    
 \caption{Times of mediators being selected in Step 2 by the six tests with  FDR$=0.10$ over 400 random splits of the data. } \label{fig:countvalmain}
\end{figure}

\vspace{-1em}
\section{Discussion}\label{sec:discussion}
This paper proposes a new adaptive framework  for testing composite null hypotheses in mediation pathway analysis. 
The method incorporates a consistent pre-test threshold into the bootstrap procedure, which helps  circumvent the non-regularity issue arising from the composite null hypotheses.
If at least one of the two coefficients is significant, the procedure would reduce to the classical nonparametric bootstrap; otherwise, it approximates the local asymptotic behavior of the statistics.   
 Our proposed strategy accommodates different types of null hypotheses under various models.     
Particularly, we have established similar results for both the individual and joint mediation effects under classical linear structural equation models,
and we have generalized the conclusions under generalized linear models. 
Through comprehensive simulation studies, 
we have demonstrated that the adaptive tests can properly and robustly control the type \RNum{1} error under different types of null hypotheses and improve the statistical power.

The proposed methodology offers an exemplary analytic toolbox that can be broadly extended to handle other problems of similar types involving composite null hypotheses. 
There are several interesting future research directions that are worth exploration.  
First, the non-regularity issue can similarly arise in other scenarios, such as  survival analysis  \citep{vanderweele2011causal,huang2016hypothesis},  different data types \citep{sohn2019compositional}, partially linear models  \citep{hines2021robust}\label{page:hinesref}, and models with exposure-mediator interactions; see more discussions in Section \ref{sec:partiallinear} of the Supplementary Material. 
These complicated models 
require special care in 
the causal interpretation of mediation effects and in the implementation of the bootstrap procedure, warranting further investigation. 
Second, 
when the dimension of mediators and covariates becomes high, 
it is of interest to extend the adaptive bootstrap under high-dimensional mediation models for both individual and joint mediation effects   
\citep{zhou2020estimation}. 
Similarly to our discussions on adjusting multivariate mediators at the end of Section \ref{sec:abt},
we might apply the adaptive bootstrap after properly adjusting high-dimensional covariates.  
In the data analysis, we have applied the marginal screening to reduce the dimension of mediators, which might potentially overlook the complicated causal dependence among mediators.    
When mediators have potential  causal dependence,     
\cite{shi2022testing} proposed to 
first estimate a directed acyclic graph of mediators and develop a testing procedure that can control the type \RNum{1} error to be less than or equal to the nominal level. 
It would be of interest to extend our proposed AB under such settings to mitigate potential conservatism. \label{pageref:shili} 
Third, the proposed AB strategy can also be utilized to examine the replicability across independent studies \citep{bogomolov2018assessing}, which is fundamental to scientific discovery.
Specifically, let $\beta_i, i=1,\ldots, K$, denote the true signals from $K$ independent studies, respectively.  
Testing whether the signals in these $K$ studies are all significant  corresponds to   $H_0: \prod_{i=1}^K \beta_i=0$ versus $H_A: \prod_{i=1}^K \beta_i\neq 0$. 
Moreover, for two studies with true signals $\beta_1$ and $\beta_2$, 
to investigate whether the effects of both studies are  significant in the same direction,
one can formulate
 the hypothesis testing problem as $H_0: \beta_1\beta_2 \leq 0$ versus $H_A: \beta_1\beta_2>0$. 
For these testing problems,  the null hypotheses are composite.
To  properly control the type \RNum{1} error,
the adaptive strategy proposed in this paper may serve as a valuable building block, while additional effort is needed to analyze those different cases carefully. 
Last, in our data analysis, all measurements are obtained cross-sectionally at one given clinical visit within a time window of approximately three months. 
To further study potential long-term influences of toxicant exposures, it may be of interest to 
investigate how the mediation effects might vary over time. 
Such time-varying mediation effects may be naturally analyzed in the scenario of longitudinal studies that collect time-varying measurements. 
This is a very challenging research field with only minimal investigation in the current literature \citep{bind2016causal}. 
Extending the proposed AB method to analyze time-varying mediation effects would be a compelling future direction. 

\vspace{-0.2em}
\section*{Acknowledgments}
We are grateful to the joint editors, Dr. Daniela Witten and Dr. Aurore Delaigle, an associate editor, and three anonymous referees for their helpful comments and suggestions. 
This work is partially supported by  NSF DMS-1811734, DMS-2113564,  SES-1846747, SES-2150601,  NIH R01ES024732, 
R01ES033656, and Wisconsin Alumni Research Foundation. 

\textit{Conflict of interest:} None declared.

\section*{Supplementary Material}
We provide more results in the online Supplementary Material, including all the proofs, hyperparameter tuning, additional results on simulations and data analysis, and details of extensions (Joint Significance test, multiple-mediator settings, and non-linear models). 
For reproducibility of our results, we provide the R package and code on the GitHub repository:  \cite{abtestgit}.

\section*{Data Availability} 
 Due to privacy restrictions, we are unable to directly share the raw data publicly but they may be obtained offline according to a formal data request procedure outlined in the University of Michigan Data Use Agreement protocol. To satisfy the need of reproducibility, instead, we have introduced a pseudo-dataset with  added noise  on the GitHub repository: \cite{abtestgit}.

\small   
\bibliographystyle{rss} 
\bibliography{example}

\newpage

\newcommand{\PP}{\mathbb{P}_{n}}
\newcommand{\PPboot}{\mathbb{P}_{n}^*}
\newcommand{\GG}{\mathbb{G}_{n}}
\newcommand{\GGboot}{\mathbb{G}_{n}^*}

\appendix

\ 
\vspace{3em}

\begin{center}
\Large \textbf{Supplementary Material for\\ ``Adaptive Bootstrap Tests for Composite Null Hypotheses in the Mediation Pathway Analysis''}	
\end{center}

\vspace{2em} 

In this Supplementary Material, 
Section \ref{sec:defnotationconv} provides rigorous definitions of notation used in the main text and proofs.   
Section \ref{sec:jst} presents details of the Adaptive Bootstrap for the Joint Significance Test, mentioned on Page \pageref{page:adjst} of the main text.   
Section \ref{sec:allpfs} provides proofs of  Theorems \ref{thm:prodlimit2}--\ref{thm:bootstrapprodcomb} and Theorems 
\ref{thm:maxplimit1}--\ref{thm:bootstrapmaxp}. 
  Section \ref{sec:extendmultiplemediators} provides detailed theoretical and numerical results on multiple mediators  supplementary to Figure \ref{fig:multiplemed} and Section \ref{sec:extendmodel} in the main text.
  Section \ref{sec:discussiononglm}  provides detailed theoretical and numerical results on non-linear models supplementary to Sections \ref{sec:asymbinarymedout} and \ref{sec:modelbinmed}. 
Section \ref{sec:implement} discuss implementation details of a data-driven procedure for choosing the tuning parameter.   
Section \ref{sec:addnumres} provides additional numerical experiments and data analysis results. 
Section \ref{sec:partiallinear} provides a clarification of the partially linear model mentioned in Section \ref{sec:discussion} of the main text.

\section{{Definitions and Notation}}\label{sec:defnotationconv}

\paragraph{Notation on Convergence}
For two sequences of real numbers $\{a_n\}$ and $\{b_n\}$,
we write $a_n=o(b_n)$ if $\lim_{n\to \infty} a_n/b_n=0$. 
We say $a_n\to \infty$ as $n\to \infty$ if the value of $a_n$ can be arbitrarily large by taking $n$ to be sufficiently large. 
Given a sequence of random variables $\{X_n\}$ and a random variable $X$,
we let $X_n \xrightarrow{\mathrm{P}} X$ represent  the convergence in probability, i.e., 
 for any $\epsilon>0$, $\lim_{n\to \infty} \mathrm{P}(|X_n-X|>\epsilon)=0$.  
We let $X_n \xrightarrow{a.s.} X$ 
 represent the almost sure convergence, i.e., $ \mathrm{P}(\, \lim_{n\to \infty} X_n = X )=1$. 
We let $X_n \xrightarrow{d} X$ 
 represent the convergence in distribution, i.e., 
 $\lim_{n\to \infty} F_n(x) = F(x)$ for every $x$ at which $F(x)$ is continuous,
 where $F_n(x)$ and $F(x)$ represent the cumulative distribution functions of $X_n$ and $X$, respectively.

\paragraph{Bootstrap Consistency}
We next introduce 
the definition of bootstrap consistency relative to the Kolmogorov-Smirnov distance; also see,  Section 23.2 of \cite{van2000asymptotic}. 
Let $\hat{\theta}_n$ be an estimator of some parameter $\theta$ attached to the distribution $P_n$ of the observations.
Let $\mathbb{P}_n$ be an estimate of the underlying distribution $P_n$ of the observations. 
Let $\hat{\theta}_n^*$  denote the bootstrap estimator for  $\hat{\theta}_n$ and are obtained according to $\mathbb{P}_n$ in the same way $\hat{\theta}_n$ computed from the true observations with distribution $P_n$. 
We write 
$\sqrt{n}(\hat{\theta}_n^*-\hat{\theta}_n) \overset{d^*}{\leadsto} \sqrt{n}(\hat{\theta}_n - \theta)$ if 
\begin{align}\label{eq:bootconsistdef}
    \sup_{x\in \mathbb{R}} \left| P\left( \sqrt{n}(\hat{\theta}_n - \theta) \leqslant x \mid P_n \right) - P\left( \sqrt{n}(\hat{\theta}_n^*-\hat{\theta}_n)\leqslant x  \mid \mathbb{P}_n\right)  \right| \xrightarrow{\mathrm{P}} 0. 
\end{align}
In the following proofs, 
when the target $\sqrt{n}(\hat{\theta}_n - \theta)$ converges in distribution to a continuous distribution function $F$,
\eqref{eq:bootconsistdef} is also equivalent to that for every $x\in \mathbb{R}$, 
\begin{align*}
\text{if}\ \    P\left( \sqrt{n}(\hat{\theta}_n - \theta) \leqslant x \mid P_n \right) \to F(x),\quad\  \text{then}\quad P\left( \sqrt{n}(\hat{\theta}_n^*-\hat{\theta}_n)\leqslant x \mid \mathbb{P}_n \right) \xrightarrow{\mathrm{P}} F(x). 
\end{align*}
This type of consistency implies the asymptotic consistency of confidence intervals. 
Moreover, we let $\hat{\theta}_n^* \overset{\mathrm{P}^*}{\leadsto} \theta$ denote the convergence in conditional probability, i.e., for any $\epsilon>0$,
\begin{align*}
  P(|\hat{\theta}_n^* - {\theta}|>\epsilon \mid \mathbb{P}_n ) \xrightarrow{a.s.} 0. 
\end{align*}

\newpage
\section{Adaptive Bootstrap  for the Joint Significance Test} \label{sec:jst} 

In addition to the PoC test, another popular class of methods is the joint significance test \citep{mackinnon2002comparison},
which is useful for combining a series of tests for a set of links in a causal chain \citep{judd1981estimating,baron1986moderator} such as a directed acyclic graph. 
Specifically, the JS test, also known as the MaxP test, rejects $H_0: \alphaS \betaM = 0$ if   $p_{\text{MaxP}}\equiv \max\{p_{\alphaS}, p_{\betaM}\} < \omega$,
where $\omega$ is a prespecified significance level,
and $p_{\alphaS}$  and $p_{\betaM}$ denote the $p$-values for $H_0: \alphaS=0$ (the link $\S \rightarrow \M$)  and $H_0: \betaM=0$ (the link $\M \rightarrow \Y$), respectively. 
Despite the ease of operation,  the MaxP test has also been found to be overly conservative under  $H_{0,3}$    \citep{barfield2017testing}. 
To see this analytically, 
note that when $p_{\alphaS}$ and $p_{\betaM}$ are asymptotically independent,
 under $H_{0,3}$, $\Pr(p_{\text{MaxP}}<\omega) \to  \Pr(p_{\alphaS}<\omega) \Pr(p_{\betaM}<\omega) \to \omega^2 <\omega$,
which suggests that  the MaxP test is always conservative under $H_{0,3}$ even if the sample size goes to infinity.

In this subsection, we focus on the adaptive bootstrap for the JS test.
As  discussed in Section \ref{sec:review}, the popular MaxP test that rejects $H_0: \alphaS \betaM =0$ using the rule $p_{\mathrm{MaxP}}<\omega$  
can be conservative. 
To correctly evaluate the distribution of  $p_{\mathrm{MaxP}}$ under the composite null, 
we next develop the corresponding adaptive bootstrap test procedure for the JS test.

For ease of deriving bootstrap consistency, 
instead of directly examining  $p_{\mathrm{MaxP}}$,
we investigate the corresponding statistic based on the t-statistics, which usually have asymptotic normality. 
Specifically, we introduce the statistic $\sqrt{n}\hatmaxpn = H(T_{\alpha, n}, T_{\beta, n})$, 
where $T_{\alpha,n}=\sqrt{n}\hatalphaSn/\hatsigmaalphan $ and $ T_{\beta, n}=\sqrt{n}\hatbetaMn /\hatsigmabetan$ are the standardized statistics of the two coefficients, respectively, and 
\begin{align*}
H(t_1,t_2)=(t_1, t_2) h(t_1, t_2)\   \text{ with }\ 	h(t_1, t_2)=\biggr( \myindicator\biggr\{\underset{k=1,2}{\operatorname{min}}\,|t_k|=|t_1|\biggr\}, ~\myindicator\biggr\{\underset{k=1,2}{\operatorname{min}}\, |t_k|=|t_2| \biggr\} \biggr)^{\mytrans}. 
\end{align*} 
With this construction, the value of $\sqrt{n}\hatmaxpn$ equals the one in $\{T_{\alpha, n}, T_{\beta, n}\}$ that has a smaller absolute value, and $|\sqrt{n}\hatmaxpn| = \min\{|T_{\alpha, n}|, |T_{\beta, n}| \}$.
When  $T_{\alpha, n}$ and $T_{\beta, n}$ are asymptotically normal, 
 $\sqrt{n}\hatmaxpn$ equals the t-statistic 
 whose asymptotic $p$-value equals $p_{\mathrm{MaxP}}$.  
This equivalence motivates us to further derive 
a valid and non-conservative $p$-value  
for $\sqrt{n}\hatmaxpn$ in the JS test. 
We correspondingly define the centering parameter for $\hatmaxpn$ as $\maxpnull=H(\alphaS/\sigmaalpha, \betaM/ \sigmabeta )$. 
Particularly, $\maxpnull$ follows the same form as $\hatmaxpn$ but replacing $(\hatalphaSn, \hatbetaMn, \hatsigmaalphan, \hatsigmabetan )$ in $\hatmaxpn$ with their population versions $(\alphaS, \betaM, \sigmaalpha, \sigmabeta )$, 
and $\maxpnull=0$ under the composite null  \eqref{eq:compositenull}.  
{Simulation studies in Section \ref{sec:sim}  show that directly applying the classical nonparametric bootstrap to $\sqrt{n}(\hatmaxpn-\theta_0)$  fails to provide proper type \RNum{1} error control.
We next analytically unveil the non-standard limiting behavior of $\sqrt{n}(\hatmaxpn-\theta_0)$  before introducing our adaptive bootstrap test.}

\bigskip
\noindent {\textit{Non-Regularity of the JS Test.}}\quad   
The non-regular limiting behavior of $\hatmaxpn$ is caused by 
the nonuniform convergence of the indicator vector $h(T_{\alpha, n}, T_{\beta, n})$ under different types of nulls.
Under $H_{0,1}$ or $H_{0,2}$, 
$h(T_{\alpha, n}, T_{\beta, n})$ converges to $h(\alphaS/\sigmaalpha, \betaM/\sigmabeta )$ consistently, and $\hatmaxpn$ is asymptotically normal. 
However, under $H_{0,3}$, $h(T_{\alpha, n}, T_{\beta, n})$ does not converge to $h(\alphaS/\sigmaalpha, \betaM/\sigmabeta )$,
and  $\hatmaxpn$ does not have a normal limit. 
More specifically, we characterize the limiting distribution of $\sqrt{n}(\hatmaxpn-\maxpnull)$ considering a  special case of \eqref{eq:fullmod1}: 
$\M = \alphaS \S +\eM,$ and $\Y = \betaM \M +  \eY$,  and assuming $\sigmaalpha=\sigmabeta=1$.
Under mild conditions, 
by the strong law of large numbers, 
$(\hatalphaSn, \hatbetaMn, \hatsigmaalpha,  \hatsigmabeta) \xrightarrow{a.s.} (\alphaS, \betaM, 1, 1)$.  
Then  we can write
\begin{align}
\sqrt{n}(\hatmaxpn-\maxpnull)=&~\sqrt{n}(\alphaS, \betaM)\times \big\{ h(\hatalphaSn , \hatbetaMn ) -h(\alphaS, \betaM ) \big\}	 \label{eq:jstsimple} \\
&~+\sqrt{n}\big( \hatalphaSn - \alphaS, \hatbetaMn - \betaM \big)\times h(\hatalphaSn , \hatbetaMn) + o_{P_n}(1),\notag
\end{align}
where $o_{P_n}(1)$ represents a small order term converging to 0 in probability. 
{Under $H_{0,1}$ or $H_{0,2}$, (or more generally, when $|\alphaS|\neq |\betaM|$), 
$h(t_1,t_2)$ is continuous at $(t_1,t_2)=(\alphaS, \betaM)$ by the construction of $h(t_1,t_2)$; 
the continuous mapping theorem then implies that $h(\hatalphaSn , \hatbetaMn) \xrightarrow{a.s.} h(\alphaS, \betaM ).$ 
Under $H_{0,3}$, we have  $h(\hatalphaSn , \hatbetaMn)=h(\sqrt{n}\hatalphaSn, \sqrt{n}\hatbetaMn )$,
with $\sqrt{n}(\hatalphaSn, \hatbetaMn ) \xrightarrow{d} (Z_{\S,0}, \, Z_{\M,0})$ by \eqref{eq:simplecltsep}. 
With these results, by \eqref{eq:jstsimple} and Slutsky's lemma, 
we have $\sqrt{n}(\hatmaxpn - \maxpnull ) \xrightarrow{d} U_{2,0}$, where}
\begin{align}\label{eq:u20def}
U_{2,0}= \begin{cases}
\displaystyle (Z_{\S,0},Z_{\M,0}) \times h(\alphaS, \betaM), &  \ \text{if }\, |\alphaS|\neq |\betaM|; \\[10pt]
\displaystyle (Z_{\S,0},Z_{\M,0}) \times h (Z_{\S,0},Z_{\M,0}), &  \ \text{if }\, (\alphaS, \betaM) = (0,0).
\end{cases}
\end{align} 
We can see that  the distribution of $\sqrt{n}(\hatmaxpn - \maxpnull)$ does not converge uniformly with respect to $(\alphaS, \betaM),$ and the nonuniformity occurs at the neighborhood of $(\alphaS, \betaM)=(0, 0).$   
This discontinuity phenomenon leads to the failure of classical nonparametric bootstrap, which was also  demonstrated by the simulation studies  in Section \ref{sec:sim}.

\bigskip
\noindent \textit{Adaptive Bootstrap of the JS Test.}\quad 
Similar to Section \ref{sec:abt}, 
to develop a consistent bootstrap procedure, we need to understand the limiting behavior of $\sqrt{n}(\hatmaxpn - \maxpnull )$ 
in a local neighborhood of $(\alphaS, \betaM)=(0,0)$. 
To achieve this, 
again we consider  the local linear SEM  \eqref{eq:fullmodlocal2}, 
where we recall that  $\alphaSn = \alphaS + n^{-1/2}\balpha$ and $\betaMn = \betaM + n^{-1/2}\bbeta.$
Then we correspondingly define the centering parameter under the local linear SEM as
$\maxpnulln = H( \alphaSn/\sigmaalpha, \betaMn/\sigmabeta )$. 
Note that  
$\maxpnulln$ takes the same form as $\maxpnull$, except that $\alphaS$ and $\betaM$ are replaced by $\alphaSn$ and $\betaMn$, respectively.
We present the limiting distribution of $\sqrt{n}(\hat{\theta}_n -\maxpnulln)$ under Model \eqref{eq:fullmodlocal2} in  the following theorem.

\begin{theorem} \label{thm:maxplimit1}
Assume Condition \ref{cond:inversemoment} holds  and   $|\alphaS/\sigmaalpha| \neq |\betaM/ \sigmabeta|$ when $(\alphaS, \betaM) \neq (0,0)$. 
Then, under the local linear SEM \eqref{eq:fullmodlocal2}, $\sqrt{n}(\hat{\theta}_n -\maxpnulln)	\xrightarrow{d} U_{2}$, as $n \rightarrow \infty$ with 
\begin{eqnarray*}
U_{2} =   \begin{cases}
\displaystyle \biggr(\frac{Z_{\S}}{\sigma_{\alphaS}}, \frac{Z_{\M}}{ \sigma_{\betaM}} \biggr)\times h\biggr(\frac{\alphaS}{\sigmaalpha }, \frac{\betaM}{\sigmabeta} \biggr), &\ \text{if }\, (\alphaS, \betaM) \neq (0,0); \\[12pt]
\displaystyle H (K_{b,\S},K_{b,\M}) -H\biggr(\frac{\balpha}{ \sigmaalpha }, \frac{\bbeta}{\sigmabeta} \biggr),
 & \ \text{if }\, (\alphaS, \betaM) = (0,0), \\[8pt]
\end{cases}	
\end{eqnarray*}
where $(Z_{\S},  Z_{\M})$ are defined the same as in Theorem \ref{thm:prodlimit2}, and  
\begin{align*}
	K_{b,\S}= \frac{\balpha+Z_{\S}}{\sigma_{\alphaS}}, \hspace{3em} K_{b,\M}= \frac{\bbeta+Z_{\M}}{\sigma_{\betaM}}.
\end{align*}
\end{theorem}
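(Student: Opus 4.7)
My plan is to analyze $\sqrt{n}(\hatmaxpn-\maxpnulln)$ through a single decomposition that exploits the positive homogeneity $H(ct_1,ct_2) = c\,H(t_1,t_2)$ for $c>0$. This gives
\begin{align*}
\sqrt{n}(\hatmaxpn-\maxpnulln) = H(T_{\alpha,n},T_{\beta,n}) - H\bigl(\sqrt{n}\alphaSn/\sigmaalpha,\ \sqrt{n}\betaMn/\sigmabeta\bigr),
\end{align*}
so the task reduces to analyzing the selector $H$ at two random vectors with the same asymptotic center. The key inputs are the joint central limit theorem for the OLS regression coefficients in \eqref{eq:fullmodlocal2}, inherited from the argument underlying Theorem \ref{thm:prodlimit2} (the local perturbation $(\balpha,\bbeta)$ only shifts the centering and leaves the limiting covariance unchanged), together with standard consistency of the variance estimators $\hatsigmaalphan\xrightarrow{\mathrm{P}}\sigmaalpha$ and $\hatsigmabetan\xrightarrow{\mathrm{P}}\sigmabeta$.

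For Case (i), the assumption $|\alphaS/\sigmaalpha|\neq|\betaM/\sigmabeta|$ ensures that $h$ is constant on an open neighborhood of $(\alphaS/\sigmaalpha,\betaM/\sigmabeta)$. By consistency of $\hatalphaSn/\hatsigmaalphan$ and $\hatbetaMn/\hatsigmabetan$, both $h(T_{\alpha,n},T_{\beta,n})$ and $h(\sqrt{n}\alphaSn/\sigmaalpha,\sqrt{n}\betaMn/\sigmabeta)$ eventually equal $h(\alphaS/\sigmaalpha,\betaM/\sigmabeta)$ with probability tending to one, so the difference collapses to
\begin{align*}
\bigl(T_{\alpha,n}-\sqrt{n}\alphaSn/\sigmaalpha,\ T_{\beta,n}-\sqrt{n}\betaMn/\sigmabeta\bigr)\,h(\alphaS/\sigmaalpha,\betaM/\sigmabeta)+o_P(1).
\end{align*}
A short linearization of $1/\hatsigmaalphan$ around $1/\sigmaalpha$, combined with the observation that under the null the component selected by $h$ corresponds to the zero coefficient (killing the variance-estimation contribution after multiplication), further reduces this to $(Z_S/\sigmaalpha,\,Z_M/\sigmabeta)\,h(\alphaS/\sigmaalpha,\betaM/\sigmabeta)+o_P(1)$, and Slutsky's lemma delivers the claim.

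For Case (ii), $\alphaSn=\balpha/\sqrt{n}$ and $\betaMn=\bbeta/\sqrt{n}$, so $\sqrt{n}\maxpnulln = H(\balpha/\sigmaalpha,\bbeta/\sigmabeta)$ is deterministic. Splitting $T_{\alpha,n} = \sqrt{n}(\hatalphaSn-\alphaSn)/\hatsigmaalphan + \sqrt{n}\alphaSn/\hatsigmaalphan$ and applying the joint CLT together with Slutsky's lemma gives $(T_{\alpha,n},T_{\beta,n})^{\mytrans}\xrightarrow{d}(K_{b,S},K_{b,M})^{\mytrans}$; since the limit is jointly Gaussian and hence absolutely continuous, the discontinuity set $\{|k_1|=|k_2|\}$ of $H$ carries zero probability, so the continuous mapping theorem yields $H(T_{\alpha,n},T_{\beta,n})\xrightarrow{d} H(K_{b,S},K_{b,M})$, and subtracting the deterministic centering completes the proof. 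The principal technical obstacle in both cases is the discontinuity of the selector $H$ along the diagonal $|t_1|=|t_2|$: in Case (i) the separation condition keeps the arguments safely away from it, whereas in Case (ii) the discontinuity must be faced directly and is defused only through the absolute continuity of the limiting shifted Gaussian law. A secondary but subtler point, in Case (i), is verifying that Studentization error becomes negligible \emph{after} multiplication by $h$, which is precisely where the composite-null structure $\alphaS\betaM=0$ is used.
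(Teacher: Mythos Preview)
Your proposal is correct and follows essentially the same two-case argument as the paper: the same add-and-subtract decomposition with $h$ frozen at its limit in Case~(i), and the same direct CLT for $(T_{\alpha,n},T_{\beta,n})$ plus continuous mapping in Case~(ii). You are actually more explicit than the paper in Case~(i) about why the Studentization error vanishes (the component selected by $h$ has zero coefficient under the composite null); conversely, in Case~(ii) the paper justifies the measure-zero discontinuity set by invoking $|\mathrm{corr}(Z_S,Z_M)|<1$, which is what your appeal to ``absolute continuity'' tacitly requires, since joint Gaussianity alone does not rule out degeneracy.
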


The assumption $|\alphaS/\sigmaalpha| \neq |\betaM/ \sigmabeta|$ when $(\alphaS, \betaM) \neq (0,0)$ is satisfied under the composite null \eqref{eq:compositenull},
and is made mainly for the technical simplicity in the proof. 
When $(\alphaS, \betaM)=(0,0)$, 
 $H (K_{b,\S},K_{b,\M}) - H({\balpha}/{ \sigmaalpha }, {\bbeta}/{\sigmabeta} )$ in Theorem \ref{thm:maxplimit1} can be equivalently written as
 \begin{align*}
  \biggr(\frac{Z_{\S}}{\sigma_{\alphaS}}, \frac{Z_{\M}}{\sigma_{\betaM}} \biggr) h(K_{b,\S}, K_{b,\M} )+ \biggr(\frac{\balpha}{ \sigmaalpha }, \frac{\bbeta}{\sigmabeta} \biggr) \biggr\{h(K_{b,\S}, K_{b,\M} ) -h\biggr(\frac{\balpha}{ \sigmaalpha }, \frac{\bbeta}{\sigmabeta} \biggr)\biggr\}. 
 \end{align*}
Comparing the above expression to the form of $U_2$ when $(\alphaS, \betaM )\neq (0,0)$, 
we can see $U_2$ is discontinuous with respect to the parameters $(\alphaS, \betaM)$.
On the other hand, at the $\sqrt{n}$-neighborhood of  $(\alphaS, \betaM)=(0,0)$, Theorem \ref{thm:maxplimit1} further shows that the limiting distribution of  $\sqrt{n}(\hatmaxpn - \maxpnulln)$ is  continuous as a function of $(\balpha, \bbeta )^{\mytrans}\in \realnumber^2$ into the space of distribution functions. 
Therefore, the non-regularity at $(\alphaS, \betaM)=(0,0)$ can be  explained by the dependence 
of the limiting distribution on the (nonidentifiable) local parameters $(\balpha, \bbeta)$.
Similarly to Section \ref{sec:abt},  
we expect that developing  a bootstrap test using the local asymptotic results in Theorem \ref{thm:maxplimit1} will improve the finite-sample accuracy,    
whereas the classical bootstrap methods that do not take into account the local asymptotic behaviors will fail to provide consistent estimates of the distribution of $\sqrt{n}(\hatmaxpn - \maxpnulln)$.

Similarly to the PoC test in Section \ref{sec:abt}, 
to overcome the local discontinuity issue,  we isolate the possibility that $(\alphaS, \betaM )\neq (0,0)$ by comparing the standardized statistics $|T_{\alpha,n}|$ and $|T_{\beta,n}|$ to certain threshold $\lambda_n$. Specifically, we decompose 
\begin{align}
\sqrt{n}(\hatmaxpn - \maxpnulln )= \sqrt{n}(\hatmaxpn - \maxpnulln)\times(1-\myindicator_{\alphaS, \lambda_n}\myindicator_{\betaM, \lambda_n} )+\sqrt{n}(\hatmaxpn - \maxpnulln)\times \myindicator_{\alphaS, \lambda_n}\myindicator_{\betaM, \lambda_n}, \label{eq:decompform}
\end{align}
where $\myindicator_{\alphaS, \lambda_n}$ and $\myindicator_{\betaM, \lambda_n}$ are defined same as  those in \eqref{eq:proddecomp1}. 
When $(\alphaS, \betaM) \neq (0,0)$ 
and $|\alphaS/\sigmaalpha| \neq |\betaM/ \sigmabeta|$, 
the classical nonparametric bootstrap estimator $\sqrt{n}(\hatmaxpn^*- \hatmaxpn)$, where  $\sqrt{n}\hatmaxpn^*=H(T_{\alpha, n}^*, T_{\beta, n}^*)$, is consistent for the first term $\sqrt{n}(\hatmaxpn-\maxpnulln)$ in \eqref{eq:decompform}. 
For the second term in \eqref{eq:decompform},  
we have $(\alphaS, \betaM) = (0,0)$  and can write 
 $\sqrt{n}(\hatmaxpn - \maxpnulln)=\mathbb{R}_{2,n}(\balpha, \bbeta )$,   
where $ \mathbb{R}_{2,n}(\balpha, \bbeta )= H(\mathbb{K}_{b,\S}, \mathbb{K}_{b,\M})-H({\balpha}/{ \sigmaalpha }, {\bbeta}/{\sigmabeta}),$  
\begin{align*}
\mathbb{K}_{b,\S}=\frac{ \balpha+\mathbb{Z}_{\S,n}}{\hatsigmaalphan }, \hspace{2em} \mathbb{K}_{b,\M}=\frac{\bbeta+\mathbb{Z}_{\M,n}}{\hatsigmabetan },	
\end{align*}
and $(\mathbb{Z}_{\S,n}, \mathbb{Z}_{\M,n})$ are defined same as those in Section \ref{sec:abt}. 
It follows that all parts of $\mathbb{R}_{2,n}(\balpha, \bbeta )$ can be viewed as smooth functions of $\mathbb{P}_n$. 
Similarly to Section \ref{sec:abt}, it is reasonable to expect that a consistent bootstrap can
be constructed using  the nonparametric bootstrap version of $\mathbb{R}_{2,n}(\balpha, \bbeta )$. 
Specifically, we define
$\mathbb{R}_{2,n}^*(\balpha, \bbeta)=H(\mathbb{K}_{b,\S}^*, \mathbb{K}_{b,\M}^*)-H({\balpha}/{\hatsigmaalphan}, {\bbeta}/{\hatsigmabetan} ),$
where 
\begin{align*}
	\mathbb{K}_{b,\S}^*=\frac{ \balpha+\mathbb{Z}_{\S,n}^*}{ \hatsigmaalphan^*},  	\hspace{2em}\mathbb{K}_{b,\M}^*=\frac{ \bbeta+\mathbb{Z}_{\M,n}^*}{ \hatsigmabetan^*}. \notag 
\end{align*}
We are ready to develop our adaptive bootstrap test based on \eqref{eq:decompform}.
Similarly to Section \ref{sec:abt}, 
we replace the indicators $\myindicator_{\alphaS, \lambda_{n}}$ and $\myindicator_{\betaM, \lambda_{n}}$  in \eqref{eq:decompform} with $\myindicator_{\alphaS, \lambda_{n}}^*$ and $\myindicator_{\betaM, \lambda_{n}}^*$ in \eqref{eq:indicatoralphabetasep}, respectively.
Then we 
 define our proposed  adaptive bootstrap test statistic as 
\begin{align*}
U_2^*= \sqrt{n}(\hatmaxpn^* - \hatmaxpn) \times (1- \myindicator_{\alphaS, \lambda_n}^* \myindicator_{\betaM, \lambda_n}^* ) + \mathbb{R}_{2,n}^*(\balpha, \bbeta) \times \myindicator_{\alphaS, \lambda_n}^* \myindicator_{\betaM, \lambda_n}^*. 
\end{align*} 
Theorem \ref{thm:bootstrapmaxp} below proves that $\sqrt{n}(\hatmaxpn - \maxpnulln )$ can be consistently bootstrapped using the $U_2^*$ above.

\begin{theorem}\label{thm:bootstrapmaxp}
Assume that the conditions in Theorem \ref{thm:maxplimit1} are satisfied, 
and that the tuning parameter $\lambda_n$ satisfies $\lambda_n=o(\sqrt{n})$ and $\lambda_n\to \infty$ as $n\to \infty.$
Then under the local {linear SEM} \eqref{eq:fullmod1}, 
conditionally on the data, 
the adaptive test statistic $U_2^*\overset{d^*}{\leadsto} \sqrt{n}(\hatmaxpn - \maxpnulln )$.
\end{theorem}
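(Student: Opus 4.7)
The plan is to mirror the dichotomy in Theorem \ref{thm:maxplimit1} and show that the adaptive indicators $\myindicator_{\alphaS, \lambda_n}^* \myindicator_{\betaM, \lambda_n}^*$ asymptotically select the correct piece of the decomposition \eqref{eq:decompform}. Concretely, when $(\alphaS, \betaM) \neq (0,0)$ the product of indicators vanishes with conditional probability tending to one, so $U_2^*$ becomes asymptotically equivalent to the classical nonparametric bootstrap $\sqrt{n}(\hatmaxpn^* - \hatmaxpn)$; when $(\alphaS, \betaM) = (0,0)$ the product equals one with conditional probability tending to one, so $U_2^*$ reduces to $\mathbb{R}_{2,n}^*(\balpha, \bbeta)$. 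In each case the surviving term is then matched to the corresponding part of $U_2$ from Theorem \ref{thm:maxplimit1}.

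For the case $(\alphaS, \betaM) \neq (0,0)$, the composite null forces at least one coordinate to be nonzero; without loss of generality suppose $\betaM \neq 0$. Then $T_{\beta,n}$ has magnitude of order $\sqrt{n}$ by the law of large numbers, and $\lambda_n = o(\sqrt{n})$ gives $P(|T_{\beta,n}| \leq \lambda_n) \to 0$. The bootstrap analogue follows from $\hatbetaMn^* - \hatbetaMn = O_{P}(n^{-1/2})$ (conditionally on the data), yielding $P(|T_{\beta,n}^*| \leq \lambda_n \mid \mathbb{P}_n) \xrightarrow{P} 0$. Hence $\myindicator_{\alphaS,\lambda_n}^*\myindicator_{\betaM,\lambda_n}^*$ vanishes in conditional probability, and $U_2^*$ equals $\sqrt{n}(\hatmaxpn^* - \hatmaxpn)$ up to a term negligible in conditional probability. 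The assumption $|\alphaS/\sigmaalpha| \neq |\betaM/\sigmabeta|$ ensures $H$ is continuous at $(\alphaS/\sigmaalpha, \betaM/\sigmabeta)$, so classical bootstrap consistency for $(\sqrt{n}(\hatalphaSn^* - \hatalphaSn), \sqrt{n}(\hatbetaMn^* - \hatbetaMn), \hatsigmaalphan^*, \hatsigmabetan^*)$ combined with the continuous mapping theorem for the bootstrap (Section 23 of \cite{van2000asymptotic}) delivers $\sqrt{n}(\hatmaxpn^* - \hatmaxpn) \overset{d^*}{\leadsto} U_2$, completing this case.

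For the case $(\alphaS, \betaM) = (0,0)$, the statistics $T_{\alpha,n}, T_{\beta,n}$ converge weakly to the Gaussian limits $(\balpha + Z_{\S})/\sigmaalpha$, $(\bbeta + Z_{\M})/\sigmabeta$, and their bootstrap counterparts are conditionally tight in probability; together with $\lambda_n \to \infty$ this yields $\myindicator_{\alphaS,\lambda_n}^*\myindicator_{\betaM,\lambda_n}^* \to 1$ in conditional probability, so $U_2^*$ equals $\mathbb{R}_{2,n}^*(\balpha,\bbeta)$ up to a negligible term. Standard empirical-process bootstrap theory under Condition \ref{cond:inversemoment} gives $(\mathbb{Z}_{\S,n}^*, \mathbb{Z}_{\M,n}^*) \overset{d^*}{\leadsto} (Z_{\S}, Z_{\M})$ and conditional consistency of $(\hatsigmaalphan^*, \hatsigmabetan^*)$ for $(\sigmaalpha,\sigmabeta)$, from which a continuous mapping argument would immediately yield $\mathbb{R}_{2,n}^*(\balpha,\bbeta) \overset{d^*}{\leadsto} H(K_{b,\S}, K_{b,\M}) - H(\balpha/\sigmaalpha, \bbeta/\sigmabeta)$, matching Theorem \ref{thm:maxplimit1}(ii). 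The main obstacle is that $H(t_1, t_2) = (t_1, t_2)\,h(t_1, t_2)$ is \emph{discontinuous} on the diagonal $\{|t_1| = |t_2|\}$, so the continuous mapping theorem does not apply automatically. This is resolved by noting that the limit vector $(K_{b,\S}, K_{b,\M})$ is a non-degenerate bivariate Gaussian whose distribution places zero mass on $\{|t_1|=|t_2|\}$, so the extended continuous mapping theorem applies; an analogous conditional argument, using the bootstrap continuous mapping together with the same null-set property, handles $H(\mathbb{K}_{b,\S}^*, \mathbb{K}_{b,\M}^*)$. Combining the two cases gives $U_2^* \overset{d^*}{\leadsto} \sqrt{n}(\hatmaxpn - \maxpnulln)$.
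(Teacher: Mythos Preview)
Your proposal is correct and follows essentially the same approach as the paper's proof: both split into the two cases via the indicator convergence \eqref{eq:indicatorconvg}, handle $(\alphaS,\betaM)\neq(0,0)$ by classical bootstrap plus continuity of $h$ at $(\alphaS/\sigmaalpha,\betaM/\sigmabeta)$, and handle $(\alphaS,\betaM)=(0,0)$ by showing $(\mathbb{K}_{b,\S}^*,\mathbb{K}_{b,\M}^*)\overset{d^*}{\leadsto}(K_{b,\S},K_{b,\M})$ and invoking the continuous mapping theorem with the observation that the Gaussian limit places zero mass on the discontinuity set $\{|t_1|=|t_2|\}$. Your explicit treatment of the discontinuity of $H$ via the extended continuous mapping theorem is exactly the step the paper handles by referring back to the argument in Section~\ref{sec:pfmaxplimit1}.
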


Theorem \ref{thm:bootstrapmaxp}  establishes the bootstrap consistency of $U_2^*$ for $\sqrt{n}(\hatmaxpn - \maxpnulln )$,
which is different from Theorem \ref{thm:bootstrapprodcomb} that derives  $nU_1^*$ and $\sqrt{n}U_1^*$ separately for $n(\hatalphaS \hatbetaM - \alphaSn \betaMn) $ and $\sqrt{n}(\hatalphaS \hatbetaM - \alphaSn \betaMn) $ when $(\alphaS,\betaM)=(0,0)$ and $(\alphaS,\betaM)\neq (0,0)$.    
This difference is attributed to the distinct   non-regular limiting behaviors in  the PoC test and the JS test. 
Despite this difference, based on Theorem \ref{thm:bootstrapmaxp},
we can develop an adaptive bootstrap test procedure 
for the JS test 
similarly to  that in Section \ref{sec:abt}. 
In particular, 
under the composite null \eqref{eq:compositenull}, 
we consider  $(\balpha, \bbeta)= (0,0)$ in $U_2^*$ to estimate  the distribution of $\sqrt{n}(\hatmaxpn^*-\hatmaxpn)$. 
For a given nominal level $\omega$, we redefine $q_{\omega/2}$ and $q_{1-\omega/2}$ as the lower and upper $\omega/2$ quantiles, respectively, of $U_{2}^*$. 
If $\sqrt{n}\hatmaxpn$ falls outside  the interval $(q_{\omega/2}, q_{1-\omega/2} )$, then we reject the composite null, and conclude that the mediation effect is statistically significant.  
In addition, 
we also choose the   value of the tuning parameter $\lambda_n$ such that $\lambda_n=o(\sqrt{n})$ and $\lambda_n\to \infty$ as $n\to \infty$ following  the discussions in Section \ref{sec:abt}.   
Moreover, similarly to Remark \ref{rm:compare1}, 
we  emphasize that the 
 analysis of the JS test statistic  is distinct from the  existing literature as we consider different problem settings and testing a composite null hypothesis, which necessitates new theoretical and methodological developments.

\section{Proofs of Theorems \ref{thm:prodlimit2}--\ref{thm:bootstrapprodcomb} and Theorems 
\ref{thm:maxplimit1}--\ref{thm:bootstrapmaxp}}\label{sec:allpfs}

In this section, we develop preliminary results for the follow-up proofs in Section \ref{sec:preliminary}.
We provide proofs of Theorems \ref{thm:prodlimit2}, \ref{thm:bootstrapprodcomb}, \ref{thm:maxplimit1}, and \ref{thm:bootstrapmaxp}
in Sections \ref{sec:pfthm1}--\ref{sec:pfbootstrapmaxp}, respectively. 

\subsection{Preliminary for the Proofs} \label{sec:preliminary} 
In the following proofs, we use the variables defined as: 
\begin{align} 
&\S_{\perp} = \S - \X^{\mytrans}Q_{1,\S}, \quad \quad \quad  \M_{\perp} = \M - \X^{\mytrans} Q_{1,\M},  \label{eq:newdefvariables} \\
&\M_{\perp'} = \M - \tilde{\X}^{\mytrans} Q_{2,\M}, \quad \quad \Y_{\perp'} = \Y - \tilde{\X}^{\mytrans}Q_{2,\Y}, \notag 
\end{align} 
where $\tilde{\X}=(\X^{\mytrans},\S)^{\mytrans}$, 
\begin{align}
	&Q_{1,\M} = \{P_n(\X \X^{\mytrans})\}^{-1} P_n(\X \M), \quad \quad Q_{1,\S}= \{P_n(\X \X^{\mytrans})\}^{-1} P_n(\X \S), \label{eq:popqmomentdef} \\
	&  Q_{2,\M} = \{P_n( \tilde{\X}  \tilde{\X}^{\mytrans})\}^{-1} P_n( \tilde{\X} \M), \quad \quad Q_{2,\Y}= \{P_n( \tilde{\X}  \tilde{\X}^{\mytrans})\}^{-1} P_n( \tilde{\X} \Y).  \notag
\end{align}
We mention that $\Sperp$ and $\Mperpp$ are defined same as those in Condition \ref{cond:inversemoment},
where both $P_n(\cdot)$ used in \eqref{eq:popqmomentdef} and  $\mathrm{E}(\cdot)$ used in Condition \ref{cond:inversemoment} denote the expectation with respect to the distribution of $(\S,\M,\X,\Y)$. 
Based on \eqref{eq:newdefvariables}, for each index $i \in \{1,\ldots, n\}$, we define $\Sperpi =\S_i-\X_i^{\mytrans}Q_{1,\S}$, 
and also define $(\Mperpi, \Mperppi, \Yperppi)$ similarly.
Moreover, for the definitions in \eqref{eq:newdefvariables} and \eqref{eq:popqmomentdef},  
by replacing $P_n(\cdot)$ with $\mathbb{P}_n(\cdot)$,
we similarly define $(\hat{Q}_{1,\M}, \hat{Q}_{1,\S}, \hat{Q}_{2,\M}, \hat{Q}_{2,S})$, $(\hatSperp, \hatMperp, \hatMperpp, \hatYperpp)$, and $\{(\hatSperpi, \hatMperpi, \hatMperppi, \hatYperppi):  i =1,\ldots, n\}$. 
In addition, 
by replacing $P_n(\cdot)$ with $\mathbb{P}_n^*(\cdot)$, we similarly define $(Q_{1,\M}^*, Q_{1,\S}^*, Q_{2,\M}^*, Q_{2,\S}^*)$, $(\Sperpboot, \Mperpboot, \Mperppboot, \Yperppboot)$, and 
$\{(\Sperpbooti, \Mperpbooti, \Mperppbooti, \Yperppbooti): i = 1,\ldots, n\}$.

\bigskip

To understand the motivation of defining the variables above, we point out two facts that under Condition \ref{cond:inversemoment}, 
\begin{itemize}
	\item[(i)] Model \eqref{eq:fullmodlocal2} induces
\begin{align}
	\M_{\perp} = \alphaSn \S_{\perp} + \eM \quad \text{ and } \quad  \Y_{\perp'} = \betaMn \M_{\perp'} + \eY, \label{eq:transmodel}
\end{align}
where the error terms $\eM$ and $\eY$ are the same variables as the error terms in  \eqref{eq:fullmodlocal2}; 
	\item[(ii)] the ordinary least squares regression estimates of $\alphaSn$ and $\betaMn$ can be written as 
\begin{align}\label{eq:olsfwlform}
	\hatalphaSn = &~  \frac{ \sum_{i=1}^n \hatSperpi \hatMperpi }{ \sum_{i=1}^n \hatSperpi^2}=\frac{\mathbb{P}_n(\hatSperp \hatMperp) }{\mathbb{P}_n(\hatSperp^2)}	  \\
\hatbetaMn =  &~ \frac{ \sum_{i=1}^n  \hatMperppi \hatYperppi }{\sum_{i=1}^n  \hatMperppi^2 } = \frac{\mathbb{P}_n( \hatMperpp \hatYperpp)}{\mathbb{P}_n(\hatMperpp^2)}, \notag
\end{align}	
respectively.	
\end{itemize}
We mention that \eqref{eq:olsfwlform} directly follows from the Frisch–Waugh–Lovell theorem \citep{frisch1933partial}.  
But for self-consistency, we provide \eqref{eq:transmodel}--\eqref{eq:olsfwlform} and other conclusions induced by  \eqref{eq:transmodel}--\eqref{eq:olsfwlform} in the following Lemma \ref{lm:regresstrans},
which is proved  in Section \ref{sec:pfregresstrans} and used in the proofs of theorems below.

\begin{lemma}[Frisch–Waugh–Lovell theorem]\label{lm:regresstrans}
Under Condition \ref{cond:inversemoment}, 
\begin{enumerate}
\item[(1)] Model \eqref{eq:fullmodlocal2} induces Model \eqref{eq:transmodel}. 
\item[(2)] The ordinary least squares estimators of $\alphaSn$ and $\betaMn$ can be written as in \eqref{eq:olsfwlform}. 
\item[(3)]  The residuals of the ordinary least squares regressions of Model \eqref{eq:fullmodlocal2} can be obtained by $\hatenMi = \hat{\M}_{\perp,i}-\hatalphaSn \hat{\S}_{\perp,i}$ and $\hatenYi = \hat{\Y}_{\perp',i}-\hatbetaMn \hat{\M}_{\perp',i}$ for $i=1,\ldots, n$. 
\item[(4)] $\mathbb{P}_n(\hatenM \hat{\S}_{\perp} ) = n^{-1}\sum_{i=1}^n \hatenMi  \hat{\S}_{\perp,i} = 0 $ and  $\mathbb{P}_n(\hatenY \hat{\M}_{\perp'} ) = n^{-1}\sum_{i=1}^n \hatenYi \hat{\M}_{\perp',i} = 0$. 
\item[(5)]  The standard errors of 
$\hatalphaSn$ and $\hatbetaMn$ are  $\hatsigmaalphan/\sqrt{n}$ and  $\hatsigmabetan/\sqrt{n}$, respectively, where $\hatsigmaalphan^2=\mathbb{P}_n(\hatenM^2)/\mathbb{P}_n(\hat{S}_{\perp}^2)$ and  $\hatsigmabetan^2 = \mathbb{P}_n( \hatenY^2)/\mathbb{P}_n(\hat{\M}_{\perp'}^2)$. 
\end{enumerate}
\end{lemma}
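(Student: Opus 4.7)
The plan is to view Lemma \ref{lm:regresstrans} as a bookkeeping version of the Frisch--Waugh--Lovell theorem specialized to the two structural equations in \eqref{eq:fullmodlocal2}, so the strategy is to push population/OLS moment conditions through the definitions of $(Q_{1,\S}, Q_{1,\M}, Q_{2,\M}, Q_{2,\Y})$ and their empirical counterparts.

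For part (1), I would start from Condition \ref{cond:inversemoment}(C1.1), which gives $\mathrm{E}(\X \eM)=\mathbf{0}$ and $\mathrm{E}(\tilde{\X}\eY)=\mathbf{0}$. Substituting $\M=\alphaSn\S+\X^{\mytrans}\alphaX+\eM$ into $P_n(\X\M)$ and multiplying by $\{P_n(\X\X^{\mytrans})\}^{-1}$ yields $Q_{1,\M}=\alphaSn Q_{1,\S}+\alphaX$. Plugging this identity into the definition $\M_{\perp}=\M-\X^{\mytrans}Q_{1,\M}$ and regrouping terms gives $\M_{\perp}=\alphaSn\Sperp+\eM$. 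For the second equation I rewrite $\Y=\betaMn\M+\tilde{\X}^{\mytrans}\gamma+\eY$ with $\gamma=(\betaX^{\mytrans},\DE)^{\mytrans}$; the same substitution yields $Q_{2,\Y}=\betaMn Q_{2,\M}+\gamma$ and then $\Yperpp=\betaMn\Mperpp+\eY$.

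For parts (2)--(4), I would use the OLS normal equations for the regression of $\M$ on $(\X,\S)$: they read $\mathbb{P}_n\{\X(\M-\hatalphaSn\S-\X^{\mytrans}\hatalphaXn)\}=\mathbf{0}$ and $\mathbb{P}_n\{\S(\M-\hatalphaSn\S-\X^{\mytrans}\hatalphaXn)\}=0$. Solving the first system for $\hatalphaXn$ gives
\begin{equation*}
\hatalphaXn=\hat{Q}_{1,\M}-\hatalphaSn\hat{Q}_{1,\S},
\end{equation*}
which immediately establishes (3) upon inserting the expression into the definition of the residual $\hatenMi$. Substituting this $\hatalphaXn$ back into the second normal equation and simplifying turns it into $\mathbb{P}_n(\hatSperp\hatMperp)-\hatalphaSn\mathbb{P}_n(\hatSperp^2)=0$, from which the formula \eqref{eq:olsfwlform} for $\hatalphaSn$ falls out. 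Part (4) then follows at once from the normal equations $\mathbb{P}_n(\hatenM\X)=\mathbf{0}$ and $\mathbb{P}_n(\hatenM\S)=0$ combined with the definition of $\hatSperp$; analogously for $\hatenY$ and $\hatMperpp$ using the outcome regression on $(\X,\S,\M)$.

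For part (5), I would invert the standard OLS sandwich formula using the block structure. After the first step the regressor orthogonal to $\X$ is $\hatSperp$, and on the transformed model $\hatMperp=\hatalphaSn\hatSperp+\hatenM$ the usual one-variable OLS variance is $\mathbb{P}_n(\hatenM^2)/\{n\,\mathbb{P}_n(\hatSperp^2)\}$, giving $\hatsigmaalphan^2=\mathbb{P}_n(\hatenM^2)/\mathbb{P}_n(\hatSperp^2)$; the $\betaM$ formula is symmetric. The only nontrivial point to verify carefully is that the residuals obtained via the projected regression in \eqref{eq:olsfwlform} coincide with the residuals of the full OLS regression on $(\X,\S)$, which is exactly what part (3) provides. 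The main obstacle, then, is purely notational: keeping straight the distinction between $P_n$-level identities (used in part 1) and $\mathbb{P}_n$-level identities (used in parts 2--5), and making sure that Condition \ref{cond:inversemoment}(C1.2) is invoked to guarantee invertibility of $\mathbb{P}_n(\X\X^{\mytrans})$ and $\mathbb{P}_n(\tilde{\X}\tilde{\X}^{\mytrans})$ on an event of probability tending to one so that all sample inverses exist.
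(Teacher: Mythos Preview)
Your proposal is correct and covers all five parts, but the route you take for Parts (2)--(3) and (5) differs from the paper's in an instructive way. For Part (1) and Part (4) the two arguments are essentially identical: substitute the structural equation into the moment defining $Q_{1,\M}$ (respectively, use the full-model normal equations $\mathbb{P}_n(\hatenM\X)=\mathbf{0}$, $\mathbb{P}_n(\hatenM\S)=0$).

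For Parts (2)--(3) you work directly with the scalar normal equations, solving the $\X$-block first to get $\hatalphaXn=\hat{Q}_{1,\M}-\hatalphaSn\hat{Q}_{1,\S}$, which immediately yields the residual identity in (3) and, after substituting into the $\S$-equation, the FWL formula \eqref{eq:olsfwlform}. The paper instead stacks the data into matrices $\mathcal{S}_n,\mathcal{M}_n,\mathcal{X}_n$, computes the $2\times 2$ block inverse of the Gram matrix explicitly, and reads off $\hatalphaSn=(\mathcal{S}_n^{\mytrans}\mathbb{O}_{\X}\mathcal{S}_n)^{-1}\mathcal{S}_n^{\mytrans}\mathbb{O}_{\X}\mathcal{M}_n$; for (3) it then has to verify a projection identity via the Woodbury formula. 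Your approach is shorter and more elementary here. What the paper's block-inversion argument buys is Part (5): having the explicit inverse \eqref{eq:inversematrix} in hand, the $(1,1)$ entry $(\mathcal{S}_n^{\mytrans}\mathbb{O}_{\X}\mathcal{S}_n)^{-1}$ is read off directly as $\{n\,\mathbb{P}_n(\hatSperp^2)\}^{-1}$, making the standard-error formula immediate. Your sketch for (5) is also correct, but note that appealing to ``the usual one-variable OLS variance on the transformed model'' implicitly uses that the relevant diagonal entry of the full inverse Gram matrix equals $(\mathcal{S}_n^{\mytrans}\mathbb{O}_{\X}\mathcal{S}_n)^{-1}$; this is exactly the block-inversion fact the paper establishes, so you should either state it or derive it as the paper does.
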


\medskip

In addition, for the defined Q-moments, e.g., \eqref{eq:popqmomentdef}, Lemma \ref{lm:consistencyqmoments} below proves their consistency properties   and is used in the following proofs. 
\begin{lemma}\label{lm:consistencyqmoments}
Under Condition \ref{cond:inversemoment}, 
\begin{itemize}
	\item[(1)] $
(\hat{Q}_{1,\M}, \hat{Q}_{1,\S}, \hat{Q}_{2,\M}, \hat{Q}_{2,S}) \xrightarrow{a.s.} (Q_{1,\M}, Q_{1,\S}, Q_{2,\M}, Q_{2,\S}). 	
$
\item[(2)] 
$(Q_{1,\M}^*, Q_{1,\S}^*, Q_{2,\M}^*, Q_{2,\S}^*)  \overset{\mathrm{P}^*}{\leadsto} (Q_{1,\M}, Q_{1,\S}, Q_{2,\M}, Q_{2,\S})$. 
\end{itemize}
\end{lemma}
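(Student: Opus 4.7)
The plan is to reduce everything to sample-mean consistency (for the data and for the bootstrap resample) of a handful of elementary functions, and then propagate through a continuous map (matrix inversion followed by matrix–vector multiplication).

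First I would handle part (1). Each of the four Q-moments is a continuous function of the pair $(\mathbb{P}_n(AA^{\mytrans}), \mathbb{P}_n(AB))$ where $A \in \{\X, \tilde{\X}\}$ and $B \in \{\M, \S, \Y\}$. The strong law of large numbers applied coordinate-wise gives $\mathbb{P}_n(AA^{\mytrans}) \xrightarrow{a.s.} P_n(AA^{\mytrans})$ and $\mathbb{P}_n(AB) \xrightarrow{a.s.} P_n(AB)$, provided the relevant second moments exist. For $\X\X^{\mytrans}$ and $\tilde{\X}\tilde{\X}^{\mytrans}$ this follows from the bounded-eigenvalue assumption in Condition \ref{cond:inversemoment}(C1.2); for the cross terms $\X\M$, $\X\S$, $\tilde{\X}\M$, $\tilde{\X}\Y$ it follows from (C1.2) combined with the definitions $\M = \alpha_{\S,n}\S + \X^{\mytrans}\alpha_{\X} + \epsilon_M$ and $\Y = \beta_{M,n}\M + \X^{\mytrans}\beta_{\X} + \tau_S \S + \epsilon_Y$ and the finite second-moment assumption on the errors implicit in (C1.3). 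Since (C1.2) also guarantees that $P_n(\X\X^{\mytrans})$ and $P_n(\tilde{\X}\tilde{\X}^{\mytrans})$ are positive definite with bounded eigenvalues, the map $A \mapsto A^{-1}$ is continuous at these limits, so the continuous mapping theorem yields $\{\mathbb{P}_n(\X\X^{\mytrans})\}^{-1} \xrightarrow{a.s.} \{P_n(\X\X^{\mytrans})\}^{-1}$ and similarly for $\tilde{\X}\tilde{\X}^{\mytrans}$. A final application of the continuous mapping theorem (matrix–vector multiplication) combines these to give part (1).

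For part (2), the strategy is the same but one layer deeper: I need sample-mean consistency conditional on the data. Conditional on $\{(\S_i,\X_i,\M_i,\Y_i):i=1,\ldots,n\}$, the bootstrap draws are i.i.d. from $\mathbb{P}_n$, so for any fixed measurable function $f$, $\mathrm{E}^*[\mathbb{P}_n^*(f)]=\mathbb{P}_n(f)$ and $\mathrm{Var}^*[\mathbb{P}_n^*(f)] = n^{-1}\{\mathbb{P}_n(f^2)-\mathbb{P}_n(f)^2\}$. By part (1) applied to $f$ and $f^2$, $\mathbb{P}_n(f^2)-\mathbb{P}_n(f)^2$ is almost surely bounded, so the conditional variance is $O_{\mathrm{P}}(n^{-1})$. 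Chebyshev's inequality applied conditionally then gives $\mathbb{P}_n^*(f) - \mathbb{P}_n(f) \to 0$ in conditional probability, and combined with part (1) this yields $\mathbb{P}_n^*(f) \overset{\mathrm{P}^*}{\leadsto} P_n(f)$ for each of the relevant $f \in \{\X\X^{\mytrans}, \X\M, \X\S, \tilde{\X}\tilde{\X}^{\mytrans}, \tilde{\X}\M, \tilde{\X}\Y\}$.

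The remaining step mirrors part (1): apply the continuous mapping theorem (stated for convergence in conditional probability) to the maps $A \mapsto A^{-1}$ and $(A,b)\mapsto Ab$, using that $P_n(\X\X^{\mytrans})$ and $P_n(\tilde{\X}\tilde{\X}^{\mytrans})$ lie in the open set of positive definite matrices where inversion is continuous. I do not foresee a serious obstacle here; the only mild care point is the bookkeeping between a.s. convergence (unconditional, from the data) and convergence in conditional probability (from the bootstrap draws), and verifying that the relevant second moments are in fact finite under Condition \ref{cond:inversemoment}. Both checks are routine.
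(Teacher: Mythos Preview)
Your approach is essentially the same as the paper's: strong law of large numbers plus the continuous mapping theorem for part (1), and bootstrap consistency of sample means plus the continuous mapping theorem for part (2); the only difference is that the paper cites Bickel--Freedman (1981, Theorem 2.2) for the bootstrap step while you give a direct Chebyshev argument. One small imprecision: your claim that $\mathbb{P}_n(f^2)$ is almost surely bounded would require fourth moments, which Condition~\ref{cond:inversemoment} does not assume---but you only need $n^{-1}\mathbb{P}_n(f^2)\to 0$ a.s., and this already follows from $\mathrm{E}|f|<\infty$ (via $n^{-1}\max_i |f_i|\to 0$ a.s.), so the argument goes through.
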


\begin{proof}
(i) The conclusion follows from the strong law of large numbers, continuous mapping theorem, and the assumption that $P_n(\X \X^{\mytrans})$ and $P_n(\tilde{\X} \tilde{\X}^{\mytrans})$ are  invertible.  	
(ii) This follows from the bootstrap consistency $\mathbb{P}_n^*\{\X (\X^{\mytrans},\S, \M, \Y)\}
\overset{\mathrm{P}^*}{\leadsto} 
{P}_n\{\X (\X^{\mytrans},\S, \M, \Y)\}$ 
\cite[see, e.g.,][Theorem 2.2]{bickel1981some}. 
\end{proof}

\subsection{Proof of Theorem \ref{thm:prodlimit2}} \label{sec:pfthm1}

To derive the limit of $\hatalphaSn \hatbetaMn - \alphaSn \betaMn$, we use the decomposition
\begin{align}
&~\hatalphaSn \hatbetaMn - \alphaSn \betaMn \label{eq:expanproduct1}\\
=&~ (\hatalphaSn - \alphaSn) (\hatbetaMn-\betaMn)+ \alphaSn (\hatbetaMn-\betaMn) + (\hatalphaSn-\alphaSn)\betaMn \notag
\end{align}
and the limits 
\begin{align}\label{eq:limitcoef}
	\sqrt{n}(\hatalphaSn - \alphaSn,~ \hatbetaMn - \betaMn )\xrightarrow{d} ( Z_{\S}, Z_{\M}),   
\end{align} 
\noindent which will be proved later. Based on \eqref{eq:expanproduct1} and \eqref{eq:limitcoef}, we next discuss two cases separately.

\medskip

\noindent \textit{Case 1:} 
When $(\alphaS, \betaM) \neq (0,0)$, as $n\to \infty$, we have $\alphaSn \to \alphaS$,  $\hatbetaMn \xrightarrow{P_n} \betaM$, and  $\sqrt{n}(\hatalphaSn - \alphaSn) (\hatbetaMn-\betaMn)=o_{P_n}(1)$.
Therefore, by \eqref{eq:expanproduct1}--\eqref{eq:limitcoef} and Slutsky's lemma, we know that $\sqrt{n}(\hatalphaSn \hatbetaMn - \alphaSn \betaMn) \xrightarrow{d} \alphaS Z_M + \betaM Z_S.$

\medskip
\noindent \textit{Case 2:} When $(\alphaS, \betaM) = (0,0)$,  we have $\alphaSn = n^{-1/2} \balpha $ and $\betaMn = n^{-1/2} \bbeta.$ Then by \eqref{eq:expanproduct1}, 
\begin{align*}
&~ n \times(\hatalphaSn \hatbetaMn - \alphaSn \betaMn)  \notag \\
=&~n(\hatalphaSn - \alphaSn) (\hatbetaMn-\betaMn) + \balpha	\sqrt{n}(\hatbetaMn-\betaMn) + \bbeta \sqrt{n}(\hatalphaSn - \alphaSn). 
\end{align*} 
By \eqref{eq:limitcoef},  $n(\hatalphaSn \hatbetaMn - \alphaSn \betaMn) \xrightarrow{d} Z_MZ_S+\balpha Z_M+ \bbeta Z_S$.

\medskip

To finish the proof of Theorem  \ref{thm:prodlimit2}, it remains to prove \eqref{eq:limitcoef}. In particular,
by \eqref{eq:transmodel} and \eqref{eq:olsfwlform}, 
we can write
\begin{align}
\sqrt{n}(\hatalphaSn - \alphaSn) =&	\frac{\sqrt{n}\mathbb{P}_n(\hatSperp\Mperp)}{\mathbb{P}_n(\hatSperp^2)}-\sqrt{n} \alphaSn+\frac{\sqrt{n}\mathbb{P}_n\{(\hatSperp(\hatMperp-\Mperp)\}}{\mathbb{P}_n(\hatSperp^2)}\label{eq:coeffalphaexpan} \\
=&~\frac{\sqrt{n} \mathbb{P}_n(\hatSperp \eM )  }{\mathbb{P}_n(\hatSperp^2)} + \frac{\sqrt{n}\alphaSn \PP \{\hatSperp(\Sperp-\hatSperp)\} }{\mathbb{P}_n(\hat{\S}_g^2)} + \frac{\sqrt{n} \{\hatSperp(\hatMperp-\Mperp)\} }{\mathbb{P}_n(\hat{\S}_g^2)}\notag \\
:=&~\mathbb{B}_{\S,n}/\mathbb{V}_{\S,n}:=\mathbb{Z}_{\S,n},  \notag
\end{align}
where in the last equation, we use 
\begin{align*}
\mathbb{P}_n\{\hatSperp(\Sperp-\hatSperp)\}=&~ \mathbb{P}_n\{(\S-\hat{Q}_{1,\S}^{\mytrans}\X)\X^{\mytrans}\}( \hat{Q}_{1,\S} - Q_{1,\S} )\notag \\
 =&~\{\PP(\S \X )-\hat{Q}_{1,\S}^{\mytrans}\PP(\X\X^{\mytrans})\} ( \hat{Q}_{1,\S} - Q_{1,\S} ) =0,	
\end{align*}
$\mathbb{P}_n\{(\hatSperp (\hatMperp -\Mperp)\}=\mathbb{P}_n\{(\S-\hat{Q}_{1,\S}^{\mytrans}\X)\X^{\mytrans}\}( Q_{1,\M}-\hat{Q}_{1,\M})=0$,
and $\mathrm{E}(\hatSperp \eM)=0$. 
Note that
$
\mathbb{B}_{\S,n} = \GG\{\eM (\hatSperp - \Sperp )\} + \GG( \eM \Sperp )=\GG(\eM \X^{\mytrans})( Q_{1,\S}-\hat{Q}_{1,\S} )+ \GG( \eM \Sperp ). 	
$
By Lemma \ref{lm:consistencyqmoments}, the central limit theorem, and Slutsky's lemma,  we know $\mathbb{B}_{\S,n} 	\xrightarrow{d} \eM \Sperp$. 
In addition, for $\VVSn$, we have
\begin{align*}
\VVSn-\PP(\Sperp^2)=&~\PP\{(\hatSperp-\Sperp)(\hatSperp+\Sperp)\} = \PP\{(\hatSperp-\Sperp)(\Sperp-\hatSperp)\}\notag \\
  =&~-({Q}_{1,S}-\hat{Q}_{1,S})^{\mytrans}\PP(\X\X^{\mytrans})({Q}_{1,S}-\hat{Q}_{1,S}),
\end{align*}
where we use $\PP\{(\hatSperp-\Sperp) \hatSperp \}=0$. 
By $\PP(\Sperp^2)\xrightarrow{a.s.} \mathrm{E}(\Sperp^2)$, 
 Lemma \ref{lm:consistencyqmoments}, and Slutsky's lemma, 
we obtain $\VVSn \xrightarrow{P_n} V_{\S}$. 
By \eqref{eq:coeffalphaexpan} and Slutsky's lemma, we prove $\sqrt{n}(\hatalphaSn-\alphaSn) \xrightarrow{d} \eM \Sperp/V_{\S} = Z_{\S}$. 
Following similar analysis, we also have 
$\sqrt{n}(\hatbetaMn-\betaMn)=\mathbb{B}_{\M,n} /\VVMn \xrightarrow{d} \eM \Mperpp /V_{\M} =Z_{\M}$.



\medskip

\subsection{Proof of Theorem \ref{thm:bootstrapprodcomb}} \label{sec:pfbootstrapprodcomb}



To prove Theorem \ref{thm:bootstrapprodcomb}, 
following the discussions in Section \ref{sec:pfthm1}, we first prove
\begin{align}\label{eq:zzbootlimit}
\sqrt{n}\big(\hatalphaSn^*- \hatalphaSn, \hatbetaMn^*-\hatbetaMn \big)
\overset{d^*}{\leadsto} 
(Z_{\S}, Z_{\M}). 
\end{align} 
Similarly to \eqref{eq:coeffalphaexpan}, we can write 
 \begin{align*}
\sqrt{n}(\hatalphaSn^*- \hatalphaSn) =&\frac{\sqrt{n}\PPboot(\Sperpboot \hatenM ) }{ \PPboot\{(\Sperpboot)^2)\} } + \frac{\sqrt{n}\hatalphaSn \PPboot\{ \Sperpboot( \hatSperp - \Sperpboot )\}}{ \PPboot\{(\Sperpboot)^2)\} }+\frac{\sqrt{n}\PPboot\{\Sperpboot (\Mperpboot -\hatMperp ) \}}{ \PPboot\{(\Sperpboot)^2)\} }\notag \\
 := &\ZZSbootn/\VVSbootn :=\mathbb{Z}_{\S,n}^*,
 \end{align*}
where in the last equation, we use $\PPboot\{ \Sperpboot( \hatSperp - \Sperpboot )\}=\PPboot[\{\S-(Q_{1,\S}^*)^{\mytrans} \X\}\X^{\mytrans}](Q_{1,\S}^* - \hat{Q}_{1,\S}) =0$,
 $\PPboot\{\Sperpboot (\Mperpboot -\hatMperp ) \}= \PPboot[\{\S-(Q_{1,\S}^*)^{\mytrans} \X\}\X^{\mytrans}](\hat{Q}_{1,\M}-Q_{1,\M}^*) =0$,
 and $\PP(\hatenM \Sperpboot)=\PP(\hatenM\S)-\PP(\hatenM\X^{\mytrans})Q_{1,\S}^* =0$. 
Similarly, we have $\sqrt{n}(\hatbetaMn^* - \hatbetaMn)=\ZZMbootn/\VVMbootn :=  \mathbb{Z}_{M,n}^*$. 
By Slutsky's lemma, to prove \eqref{eq:zzbootlimit},  it suffices to  prove
 $\VVSbootn 
 \overset{\mathrm{P}^*}{\leadsto} 
 V_{\S}$
 and $\mathbb{B}_{\S,n}^*  
 \overset{\mathrm{d}^*}{\leadsto} 
 Z_{\S}V_{\S}$ below.  

\medskip

 For $\mathbb{B}_{\S,n}^*$, we note that 
\begin{align}\label{eq:zboots}
\mathbb{B}_{\S,n}^*= \GGboot(\hatenM \Sperpboot)= \GGboot\{\hatenM (\Sperpboot - \Sperp)\}+ \GGboot\{(\hatenM - \eM) \Sperp\} + \GGboot(\eM \Sperp). 
\end{align}
We next analyze the three summed terms in \eqref{eq:zboots} separately. 
First, since $\Sperpboot -\Sperp =(Q_{1,\S}-{Q}_{1,\S}^*)^{\mytrans}\X$, $\Mperpboot -\Mperp =(Q_{1,\M}- Q_{1,\M}^*)^{\mytrans}\X$, and $\hatenM=\hatMperp-\hatalphaSn \hatSperp$,
the first term in \eqref{eq:zboots} can be written as
\begin{align}
&~  \mathbb{G}_n^*\{  (\hatMperp -\hatalphaSn \hatSperp) \X^{\mytrans} \}(Q_{1,\S}-{Q}_{1,\S}^*) \label{eq:zsfirsterm} \\
=&~ \mathbb{G}_n^* [\{ ( \hatMperp -\Mperp) +\Mperp - \hatalphaSn(\hatSperp - \Sperp +\Sperp) \}\X^{\mytrans}] (Q_{1,\S}-{Q}_{1,\S}^*) \notag\\
=&~\Big[(Q_{1,\M}-\hat{Q}_{1,\M})^{\mytrans}\mathbb{G}_n^*(\X\X^{\mytrans})+ \mathbb{G}_n^*(\Mperp \X^{\mytrans})\notag \\
&~\ - \hatalphaSn (Q_{1,\S}-\hat{Q}_{1,\S})^{\mytrans}\mathbb{G}_n^*(\X\X^{\mytrans}) -\hatalphaSn \mathbb{G}_n^*(\Sperp\X^{\mytrans})\Big] (Q_{1,\S}-{Q}_{1,\S}^*). \notag
\end{align}
By Lemma \ref{lm:consistencyqmoments} and the bootstrap consistency, $\eqref{eq:zsfirsterm} \overset{\mathrm{P}^*}{\leadsto} 0$.  
Second, 
as $\hatenM=\hatMperp-\hatalphaSn \hatSperp$ and $\eM=\Mperp-\alphaSn\Sperp$ by Lemma \ref{lm:regresstrans}, 
we write the second term in \eqref{eq:zboots} as
\begin{align}
&~\mathbb{G}_n^*\{(\hatMperp-\hatalphaSn\hatSperp - \Mperp+ \alphaSn \Sperp) \Sperp\} \label{eq:zssecondterm} \\
=&~\mathbb{G}_n^*\{(\hatMperp - \Mperp) \Sperp\}-\hatalphaSn\mathbb{G}_n^*\{(\hatSperp - \Sperp)\Sperp\} -(\hatalphaSn - \alphaSn) \mathbb{G}_n^*(\Sperp^2) \notag \\
=&~(Q_{1,\M}-\hat{Q}_{1,\M})^{\mytrans}\mathbb{G}_n^*(\X\Sperp)-\hatalphaSn(Q_{1,\S}-\hat{Q}_{1,\S})^{\mytrans}\mathbb{G}_n^*(\X\Sperp) -(\hatalphaSn - \alphaSn) \mathbb{G}_n^*(\Sperp^2). \notag 
\end{align}
Similarly to \eqref{eq:zsfirsterm}, by  Lemma \ref{lm:consistencyqmoments},  $\hatalphaSn \xrightarrow{a.s.} \alphaSn$, and the bootstrap consistency, 
we know that $\eqref{eq:zssecondterm}\overset{\mathrm{P}^*}{\leadsto}  0$.
Third,  by the bootstrap consistency, $\mathbb{G}_n^*(\eM \Sperp) \overset{d^*}{\leadsto} 
Z_{\S}V_{\S}$.  
In summary, by \eqref{eq:zboots} and Slutsky's lemma, we prove $\mathbb{B}^*_{\S,n}\overset{d^*}{\leadsto} 
Z_{\S}V_{\S}$.

\medskip

 For $\VVSbootn$, we have
\begin{align}
\VVSbootn-\PPboot(\Sperp^2)=&~	\PPboot\{ (\Sperpboot - \Sperp ) (\Sperpboot + \Sperp )\} = 	\PPboot\{ (\Sperpboot - \Sperp ) \Sperp \} \label{eq:denominatorboot1}\\
=&~ (Q_{1,\S}-Q_{1,\S}^*)^{\mytrans}\PPboot(\X\Sperp), \notag
\end{align}
where we use $\PPboot\{(\Sperpboot-\Sperp)\Sperpboot\}=(Q_{1,\S}-Q_{1,\S}^*)^{\mytrans}\PPboot\{\X(\S-\X^{\mytrans}Q_{1,\S}^* )\}=0$.
By Lemma \ref{lm:consistencyqmoments},
$\eqref{eq:denominatorboot1} \overset{\mathrm{P}^*}{\leadsto}  0$.
Therefore, by Slutsky's  lemma and the bootstrap consistency, 
we know  $\VVSbootn \overset{\mathrm{P}^*}{\leadsto} V_{\S}$.

\medskip

In summary, by the above arguments and the decomposition
\begin{align}
&~\hatalphaSn^* \hatbetaMn^* - \hatalphaSn \hatbetaMn \notag\\
=&~ (\hatalphaSn^* - \hatalphaSn) (\hatbetaMn^*-\hatbetaMn)+ \alphaSn^* (\hatbetaMn^*-\hatbetaMn) + (\hatalphaSn^*-\hatalphaSn)\hatbetaMn, \notag
\end{align}
we obtain that 
\begin{enumerate}
	\item[(i)] when $(\alphaS, \betaM) \neq (0,0)$,  $\sqrt{n}(\hatalphaSn^* \hatbetaMn^*  - \hatalphaSn \hatbetaMn ) \overset{d^*}{\leadsto}$  $\sqrt{n}( \hatalphaSn \hatbetaMn - \alphaSn \betaMn)$;
	\item[(ii)]  when $(\alphaS, \betaM) = (0,0)$, 	$\mathbb{R}_{1,n}^*( \balpha, \bbeta )\overset{d^*}{\leadsto}$   $n( \hatalphaSn \hatbetaMn - \alphaSn \betaMn)$.
\end{enumerate}
To finish the proof of Theorem \ref{thm:bootstrapprodcomb}, 
it remains to prove 
\begin{align}
&1 - \myindicator_{\alphaS, \lambda_n}^* \overset{\mathrm{P}^*}{\leadsto}  \myindicator\{\alphaS \neq 0\},& \hspace{-5em}   \myindicator_{\alphaS, \lambda_n}^* \overset{\mathrm{P}^*}{\leadsto} \myindicator\{\alphaS = 0\},& \label{eq:indicatorconvg} \\
&1 - \myindicator_{\betaM, \lambda_n}^* \overset{\mathrm{P}^*}{\leadsto} \myindicator\{\betaM \neq 0\},& \hspace{-5em}  \myindicator_{\betaM, \lambda_n}^* \overset{\mathrm{P}^*}{\leadsto}  \myindicator\{\betaM = 0\}. 	& \notag
\end{align}
To prove \eqref{eq:indicatorconvg},  we use the following Lemma \ref{lm:sigmaconsis}, which is proved in Section \ref{sec:pfsigmaconsis}. 
\begin{lemma}\label{lm:sigmaconsis}
Under Condition \ref{cond:inversemoment}, 
$(\hatsigmaalphan, \hatsigmabetan) \xrightarrow{a.s.} (\sigmaalpha, \sigmabeta)$ 
and  $(\hatsigmaalphan^*, \hatsigmabetan^*) \overset{\mathrm{P}^*}{\leadsto} 
 (\sigmaalpha, \sigmabeta )$,
where $\sigmaalpha^2=\mathrm{E}(\eM^2)/\mathrm{E}(\Sperp^2)$ and $\sigmabeta^2=\mathrm{E}(\eY^2)/\mathrm{E}(\M_g^2)$. 
\end{lemma}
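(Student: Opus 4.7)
The plan is to prove each of the four convergences by reducing them to convergences of the numerators and denominators in the variance formulas from Lemma \ref{lm:regresstrans}(5), then applying the continuous mapping theorem and Slutsky's lemma. Specifically, writing $\hatsigmaalphan^2 = \mathbb{P}_n(\hatenM^2)/\mathbb{P}_n(\hatSperp^2)$ and $\hatsigmabetan^2 = \mathbb{P}_n(\hatenY^2)/\mathbb{P}_n(\hatMperpp^2)$, and given that $\sigmaalpha^2 = \mathrm{E}(\eM^2)/\mathrm{E}(\Sperp^2)$ is a positive constant under Condition \ref{cond:inversemoment}(C1.2)--(C1.3), it suffices to prove $\mathbb{P}_n(\hatSperp^2) \xrightarrow{a.s.} \mathrm{E}(\Sperp^2)$ and $\mathbb{P}_n(\hatenM^2)\xrightarrow{a.s.} \mathrm{E}(\eM^2)$, together with their bootstrap counterparts. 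The second pair of statements in the lemma (involving $\hatsigmabetan$ and $\M_{\perp'}$) will follow by the same steps applied to the outcome equation.

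First I would handle $\mathbb{P}_n(\hatSperp^2)$. Using the identity $\hatSperp = \Sperp + \X^{\mytrans}(Q_{1,\S} - \hat{Q}_{1,\S})$, I would expand
\begin{align*}
\mathbb{P}_n(\hatSperp^2) = \mathbb{P}_n(\Sperp^2) + 2(Q_{1,\S} - \hat{Q}_{1,\S})^{\mytrans}\mathbb{P}_n(\X \Sperp) + (Q_{1,\S} - \hat{Q}_{1,\S})^{\mytrans}\mathbb{P}_n(\X\X^{\mytrans})(Q_{1,\S} - \hat{Q}_{1,\S}).
\end{align*}
The SLLN gives $\mathbb{P}_n(\Sperp^2) \xrightarrow{a.s.} \mathrm{E}(\Sperp^2)$, $\mathbb{P}_n(\X \Sperp)\xrightarrow{a.s.} \mathrm{E}(\X\Sperp)$, and $\mathbb{P}_n(\X\X^{\mytrans})\xrightarrow{a.s.} \mathrm{E}(\X\X^{\mytrans})$ (finite by Condition \ref{cond:inversemoment}(C1.2)--(C1.3)), and Lemma \ref{lm:consistencyqmoments}(1) gives $\hat{Q}_{1,\S}\xrightarrow{a.s.} Q_{1,\S}$. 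Combining via continuous mapping yields the desired limit. For $\mathbb{P}_n(\hatenM^2)$, I would use the explicit formula $\hatenMi = \hatMperpi - \hatalphaSn \hatSperpi$ from Lemma \ref{lm:regresstrans}(3) and the relation $\eMi = \Mperpi - \alphaSn \Sperpi$ from Model \eqref{eq:transmodel} to write
\begin{align*}
\hatenMi - \eMi = \X_i^{\mytrans}(Q_{1,\M} - \hat{Q}_{1,\M}) - (\hatalphaSn - \alphaSn)\Sperpi - \hatalphaSn \X_i^{\mytrans}(Q_{1,\S} - \hat{Q}_{1,\S}).
\end{align*}
Squaring, taking empirical averages, and using $\hat{Q}_{1,\M}\xrightarrow{a.s.} Q_{1,\M}$, $\hat{Q}_{1,\S}\xrightarrow{a.s.} Q_{1,\S}$, $\hatalphaSn\xrightarrow{a.s.} \alphaSn$ (from \eqref{eq:limitcoef} together with the SLLN applied to the ratio in \eqref{eq:olsfwlform}), and finiteness of the relevant second moments, all cross and quadratic remainder terms vanish almost surely, leaving $\mathbb{P}_n(\hatenM^2)\xrightarrow{a.s.} \mathrm{E}(\eM^2)$.

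For the bootstrap statements, I would mirror the above computations with $\mathbb{P}_n$ replaced by $\mathbb{P}_n^*$ and $\hat{Q}$ replaced by $Q^*$, invoking Lemma \ref{lm:consistencyqmoments}(2) ($Q_{1,\S}^*, Q_{1,\M}^* \overset{\mathrm{P}^*}{\leadsto} Q_{1,\S}, Q_{1,\M}$) and the basic nonparametric bootstrap consistency $\mathbb{P}_n^*(f)\overset{\mathrm{P}^*}{\leadsto} P_n(f)$ for $f$ with finite second moments. One point requiring a bit of care is that in $\hatsigmaalphan^{*2}$, the natural definition uses the bootstrap residuals $\hatenM^* = \Mperpboot - \hatalphaSn^* \Sperpboot$ and $\Sperpboot$, so I would decompose $\hatenM_i^* - \eMi$ into three terms involving $Q_{1,\M}^* - Q_{1,\M}$, $Q_{1,\S}^* - Q_{1,\S}$, and $\hatalphaSn^* - \alphaSn$, each controlled by bootstrap-consistency arguments parallel to those already used in the proof of Theorem \ref{thm:bootstrapprodcomb} (see \eqref{eq:zsfirsterm}--\eqref{eq:zssecondterm}). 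Then Slutsky's lemma in the conditional-on-data sense finishes the argument.

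The main obstacle is purely bookkeeping: there are many cross terms in the bootstrap expansion of $\mathbb{P}_n^*(\hatenM^{*2})$, and one must verify carefully that each product of an $o_{\mathrm{P}^*}(1)$ factor (coming from $Q^* - Q$ or $\hatalphaSn^* - \hatalphaSn$) with a bootstrap empirical average converges to zero in outer probability. No genuinely new ideas are needed beyond the bootstrap consistency results for moments already invoked in the proofs of Theorems \ref{thm:prodlimit2} and \ref{thm:bootstrapprodcomb}, but keeping track of all the terms and the mode of convergence requires discipline.
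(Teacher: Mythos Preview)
Your proposal is correct and follows the same overall architecture as the paper's proof: reduce to separate convergences of the numerator $\mathbb{P}_n(\hatenM^2)$ and denominator $\mathbb{P}_n(\hatSperp^2)$ via Lemma \ref{lm:regresstrans}(5), then invoke Slutsky. The one noteworthy difference is in the algebraic decomposition. You expand $\hatenM^2$ directly by squaring the three-term expression for $\hatenMi-\eMi$, which generates many cross and quadratic remainder terms to track individually. The paper instead writes $\mathbb{P}_n(\hatenM^2)-\mathrm{E}(\eM^2)=\mathbb{P}_n\{(\hatenM-\eM)(\hatenM+\eM)\}+\mathbb{P}_n(\eM^2)-\mathrm{E}(\eM^2)$ and then exploits the OLS orthogonality relations $\mathbb{P}_n(\S\hatenM)=0$, $\mathbb{P}_n(\X\hatenM)=\mathbf{0}$ (and their bootstrap analogues $\mathbb{P}_n^*(\S\hatenM^*)=0$, $\mathbb{P}_n^*(\X\hatenM^*)=\mathbf{0}$) to collapse the $(a-b)(a+b)$ term into just a few summands, each already in the form (consistent coefficient difference) $\times$ (empirical average converging to zero). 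The same $(a-b)(a+b)$ trick, combined with $\mathbb{P}_n\{(\hatSperp-\Sperp)\hatSperp\}=0$, is used for the denominator. This buys cleaner bookkeeping and sidesteps exactly the ``main obstacle'' you flag at the end; your direct expansion is equally valid but heavier.
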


\smallskip

Note that
\begin{align*}
	P_C(|T_{\alpha,n}^*| > \lambda_n )=&~ P_C \big\{ | ( \hatalphaSn^* - \hatalphaSn ) + (\hatalphaSn - \alphaSn ) + \alphaSn | > n^{-1/2}\lambda_n \hatsigmaalphan^*\big\} \notag \\
 	\leq &~P_C \big(\, |\alphaSn |+  |\hatalphaSn^* - \hatalphaSn| + |\hatalphaSn - \alphaSn |\,  > n^{-1/2}\lambda_n \hatsigmaalphan^* \big). 
\end{align*}
When $\alphaS = 0$, 
 by   the limits in \eqref{eq:limitcoef} and \eqref{eq:zzbootlimit},  $\alphaSn=n^{-1/2}\balpha$, Lemma \ref{lm:sigmaconsis}, and  $\lambda_n \to \infty$,
we have $T_{\alpha,n}^*/\lambda_n = \sqrt{n} \hatalphaSn^*/(\hatsigmaalphan^* \lambda_n) \overset{\mathrm{P}^*}{\leadsto}  0$.
Similarly, we have
 \begin{align} 
 	P_C(|T_{\alpha,n}^*| \leq \lambda_n ) 
 	\leq &~P_C \big(\, |\alphaSn | \leq n^{-1/2}\lambda_n \hatsigmaalphan^* +  |\hatalphaSn^* - \hatalphaSn| + |\hatalphaSn - \alphaSn |\, \big). \label{eq:pctstran}
 \end{align}  
When $\alphaS \neq 0$, \eqref{eq:pctstran} $\xrightarrow{P_n} 0$ 
by $|\alphaS| > 0$, $n^{-1/2}\lambda_n = o(1)$, \eqref{eq:limitcoef}, and \eqref{eq:zzbootlimit}. 
Let $\mathrm{E}_C$ denote the expectation conditioning on the data, and then 
\begin{eqnarray*}
&&\mathrm{E}_C|\,\myindicator\{ |T_{\alpha, n}^*| > \lambda_n \} - \myindicator\{\alphaS \neq 0\}  \, | \notag \\
&\leq & P_C\big(\, |T_{\alpha, n}^*| > \lambda_n,\alphaS = 0 \, \big) + P_C \big(\, |T_{\alpha, n}^*| \leq \lambda_n,\alphaS \neq 0\, \big) \notag \\
&=& P_C\big(\, |T_{\alpha, n}^*| > \lambda_n | \alphaS = 0  \,\big) \times \myindicator\{\alphaS =0 \}+ P_C\big( |T_{\alpha, n}^*| \leq \lambda_n | \alphaS \neq 0 \big) \times \myindicator\{\alphaS \neq 0\}
\end{eqnarray*} 
which $\xrightarrow{P_n} 0$. 
This implies $\myindicator\{ |T_{\alpha, n}^*| > \lambda_n \} \overset{\mathrm{P}^*}{\leadsto} 
 \myindicator\{ \alphaS \neq 0\}$ and $\myindicator\{ |T_{\alpha, n}^*| \leq \lambda_n \} \overset{\mathrm{P}^*}{\leadsto} \myindicator\{ \alphaS = 0\}$.
As $\myindicator\{|T_{\alpha,n}|\leq \lambda \} \xrightarrow{\mathrm{P}} \myindicator\{\alphaS = 0 \}$,
by Slutsky's lemma, 
we know that the first and the second limits in \eqref{eq:indicatorconvg} hold. 
Following similar analysis, we also obtain the third and the fourth limits in  \eqref{eq:indicatorconvg}.

\begin{remark}[Calculation of classical  non-parametric bootstrap]\label{rm:classboot00}
By calculations in Sections \ref{sec:preliminary} and \ref{sec:pfthm1}, 
we have when $\alphaS=\betaM=0$, 
\begin{align*}
&~ n(\hatalphaSn^* \hatbetaMn^* - \hatalphaSn \hatbetaMn)\notag\\
 =&~ \sqrt{n} \hatalphaSn (\hatbetaMn^* - \hatbetaMn ) + \sqrt{n} \hatbetaMn (\hatalphaSn^* -\hatalphaSn) +  (\hatalphaSn^* -\hatalphaSn) (\hatbetaMn^* - \hatbetaMn ) \notag\\
 =&~ \sqrt{n} \hatalphaSn \mathbb{Z}_{\M,n}^* + \sqrt{n} \hatbetaMn \mathbb{Z}_{\S,n} +  \mathbb{Z}_{\S,n}^* \mathbb{Z}_{\M,n}^* \notag\\
=&~(\balpha+ \mathbb{Z}_{\S,n})\mathbb{Z}_{\M,n}^*+ (\bbeta+\mathbb{Z}_{\M,n})\mathbb{Z}_{\S,n}^*+\mathbb{Z}_{\S,n}^* \mathbb{Z}_{\M,n}^* \notag\\
=&~\mathbb{R}_{n}^*(\balpha, \bbeta)+  \mathbb{Z}_{\S,n}\mathbb{Z}_{\M,n}^* + \mathbb{Z}_{\M,n}\mathbb{Z}_{\S,n}^*, 	
\end{align*}
and
\begin{align*}
&~	n(\hatalphaSn \hatbetaMn - \alphaSn \betaMn) \notag\\
 =&~ \sqrt{n} \alphaSn (\hatbetaMn - \betaMn ) + \sqrt{n} \betaMn (\hatalphaSn -\alphaSn) +  (\hatalphaSn -\hatalphaSn) (\hatbetaMn - \hatbetaMn ) \notag\\
=&~	\balpha \mathbb{Z}_{\M,n} + \bbeta \mathbb{Z}_{\S,n} +  \mathbb{Z}_{\S,n} \mathbb{Z}_{\M,n}. \notag
\end{align*}	
\end{remark}

\subsection{Proof of Theorem \ref{thm:maxplimit1}}\label{sec:pfmaxplimit1}

\noindent \textit{Case 1:} When $(\alphaS, \betaM) \neq (0,0)$, since $|\alphaS/\sigmaalpha| \neq |\betaM/ \sigmabeta|$ is assumed, and 
$h(t_1,t_2)$ is continuous at $(t_1,t_2)$ if $\arg \min t_k^2$ is unique,
we know that 
the function $h(t_1, t_2)$ 
is continuous at $( \alphaS/\sigmaalpha, \betaM/\sigmabeta)$.
Then by 
$n^{-1/2}(T_{\alpha,n},  T_{\beta,n})= (\hatalphaSn/\hatsigmaalphan,   \hatbetaMn/\hatsigmabetan )\xrightarrow{a.s.} ( \alphaS/\sigmaalpha, \betaM/\sigmabeta)$  
and the continuous mapping theorem, 
we have 
\begin{align}\label{eq:indicalimith}
h(T_{\alpha, n}, T_{\beta, n})=h\biggr(\frac{\hatalphaSn}{\hatsigmaalphan}, \frac{\hatbetaMn}{\hatsigmabetan} \biggr) \xrightarrow{a.s.} h\biggr(\frac{\alphaSn}{\sigmaalpha}, \frac{\betaMn}{\sigmabeta} \biggr).	
\end{align}
By the definitions of $\hatmaxpn$ and $\maxpnulln$, we write 
\begin{align}
\sqrt{n}(\hatmaxpn - \maxpnulln)=&~\sqrt{n}\biggr(\frac{\hatalphaSn}{\hatsigmaalphan } -\frac{ \alphaSn }{\sigmaalpha }, \frac{\hatbetaMn}{\hatsigmabetan }-\frac{ \betaMn}{\sigmabeta } \biggr)\times h\biggr(\frac{\hatalphaSn}{\hatsigmaalphan}, \frac{\hatbetaMn}{\hatsigmabetan} \biggr)\notag \\ 
&~ + \sqrt{n}\biggr(\frac{\alphaSn}{\sigmaalpha}, \frac{\betaMn}{\sigmabeta} \biggr)\times \biggr\{h\biggr(\frac{\hatalphaSn}{\hatsigmaalphan}, \frac{\hatbetaMn}{\hatsigmabetan} \biggr) -h\biggr(\frac{\alphaSn}{\sigmaalpha}, \frac{\betaMn}{\sigmabeta} \biggr) \biggr\}. \notag
\end{align}
Combining  \eqref{eq:limitcoef}, \eqref{eq:indicalimith}, Lemma \ref{lm:sigmaconsis}, Slutsky's lemma, and the continuous mapping theorem, we obtain 
\begin{align*}
	\sqrt{n}(\hatmaxpn - \maxpnulln ) \xrightarrow{d} \biggr(\frac{Z_{\S}}{\sigma_{\alphaS}}, \frac{Z_{\M}}{\sigma_{\betaM}} \biggr)\times h\biggr(\frac{\alphaS}{\sigmaalpha }, \frac{\betaM}{\sigmabeta} \biggr).
\end{align*}



\medskip
 
\noindent \textit{Case 2:} When $(\alphaS, \betaM) = (0,0)$, 
we have $\sqrt{n}(\alphaSn, \betaMn)=(\balpha, \bbeta)$, and then  $\sqrt{n}\maxpnulln=H(\balpha/\sigmaalpha, \bbeta/\sigmabeta )$. 
Moreover, by  \eqref{eq:limitcoef} and Lemma \ref{lm:sigmaconsis}, we obtain
\begin{align*}
(T_{\alpha,n},\, T_{\beta,n})=&~\sqrt{n}\biggr(\frac{\hatalphaSn -\alphaSn}{\hatsigmaalphan },\, \frac{\hatbetaMn - \betaMn }{\hatsigmabetan } \biggr)+\biggr(\frac{\balpha}{\hatsigmaalphan},\, \frac{ \bbeta}{\hatsigmabetan} \biggr)\notag \\
\xrightarrow{d} &~	(K_{b,\S}, K_{b,\M}). 
\end{align*}
Since $(Z_{\S}, Z_{\M})$ is a normal random vector and 
 $|\mathrm{corr}(Z_{\S}, Z_{\M})| <1,$ we have  $|K_{b,\S}|\neq |K_{b,\M}|$ a.s..  
 As $h(t_1,t_2)$ is continuous at $(t_1,t_2)$ if $\arg \min t_k^2$ is unique,
we can apply the continuous mapping theorem, 
and obtain 
$\sqrt{n}(\hatmaxpn - \maxpnulln)	\xrightarrow{d} H(K_{b,\S}, K_{b,\M})-H(\balpha/\sigmaalpha, \bbeta/\sigmabeta )$.

\bigskip

\subsection{Proof of Theorem \ref{thm:bootstrapmaxp}} \label{sec:pfbootstrapmaxp}


To prove Theorem \ref{thm:bootstrapmaxp}, by Theorem \ref{thm:maxplimit1} and \eqref{eq:indicatorconvg}, it suffices to 
prove 
\begin{itemize}
	\item[(i)]  when $(\alphaS, \betaM) \neq (0,0)$ and $|\alphaS/ \sigmaalpha |\neq|\betaM/ \sigmabeta|$, 
	\begin{align*}
		\sqrt{n}(\hatmaxpn^* - \hatmaxpn)
  \overset{d^*}{\leadsto} 
  \biggr(\frac{Z_{\S}}{\sigma_{\alphaS}}, \frac{Z_{\M}}{\sigma_{\betaM}} \biggr)\times h\biggr(\frac{\alphaS}{\sigmaalpha }, \frac{\betaM}{\sigmabeta} \biggr); 
	\end{align*}
	\item[(ii)] when  $(\alphaS, \betaM)=(0,0)$, $\mathbb{V}_{2,n}^*(\balpha, \bbeta) 
 \overset{d^*}{\leadsto} 
 H (K_{b,\S},K_{b,\M}) -H({\balpha}/{ \sigmaalpha }, {\bbeta}/{\sigmabeta})$.
\end{itemize}
%
\noindent For (i), we note that 
\begin{align*}
	\sqrt{n}(\hatmaxpn^* - \hatmaxpn )=&~\sqrt{n}\biggr(\frac{\hatalphaSn^* }{\hatsigmaalphan^* } -\frac{\hatalphaSn}{\hatsigmaalphan}, \frac{\hatbetaMn^* }{\hatsigmabetan^* }-\frac{\hatbetaMn}{\hatsigmabetan}\biggr)\times h\biggr(\frac{\hatalphaSn^* }{\hatsigmaalphan^* }, \frac{\hatbetaMn^* }{\hatsigmabetan^* } \biggr)	&~ \notag \\
	&~\sqrt{n}\biggr(\frac{\hatalphaSn }{\hatsigmaalphan }, \frac{\hatbetaMn }{\hatsigmabetan } \biggr)\times \biggr\{h\biggr(\frac{\hatalphaSn^*}{\hatsigmaalphan^*}, \frac{\hatbetaMn^*}{\hatsigmabetan^*}\biggr)  -h\biggr(\frac{\hatalphaSn}{\hatsigmaalphan}, \frac{\hatbetaMn}{\hatsigmabetan} \biggr) \biggr\}.
\end{align*}
Similarly to Section \ref{sec:pfmaxplimit1}, by \eqref{eq:limitcoef}, \eqref{eq:zzbootlimit}, Lemma \ref{lm:sigmaconsis}, Slutsky's lemma,  and  the continuous mapping theorem, we have
$h({\hatalphaSn^*}/{\hatsigmaalphan^*}, {\hatbetaMn^*}/{\hatsigmabetan^*}) 
\overset{\mathrm{P}^*}{\leadsto} 
h({\alphaSn}/{\sigmaalpha}, \betaMn/\sigmabeta)$,
and then 
(i) is obtained.
In addition, for (ii), 
by  \eqref{eq:zzbootlimit} and Lemma \ref{lm:sigmaconsis},
we have $(\mathbb{K}_{b,\S}^*, \mathbb{K}_{b,\M}^*) 
\overset{d^*}{\leadsto} 
(K_{b,\S}, K_{b,\M})$, 
and then  
(ii) follows by the continuous mapping theorem similarly to  Section \ref{sec:pfmaxplimit1}.

\subsection{Proofs of Assisted Lemmas}\label{sec:pfalllemma}

\subsubsection{Proof of Lemma \ref{lm:regresstrans}}\label{sec:pfregresstrans}


 \textit{Part (1).}  
Multiplying both sides of $\M = \alphaSn \S +  \X^\mytrans \alphaX +\eM$ by $\{P_n(\X^{\mytrans}\X)\}^{-1}\X$ and taking expectation, yields
$Q_{1,\M} = Q_{1,\S} \alphaSn + \alphaX $,
 where we use $\mathrm{E}(\X \eM)=\mathbf{0}$. It follows that
 \begin{align*}
 \M-\X^{\mytrans}Q_{1,\M}= &~ \alphaSn \S +  \X^\mytrans \alphaX +\eM - \X^{\mytrans}(Q_{1,\S} \alphaSn + \alphaX )\\
 = &~ (\S-\X^{\mytrans}Q_{1,\S})\alphaSn + \eM,
 \end{align*}
 that is, $\Mperp=\Sperp\alphaSn+\eM$. The second model in \eqref{eq:transmodel} can be obtained similarly. 

\medskip
\noindent \textit{Part (2).}  
For $n$ independent and identically distributed observations $\{(\S_i, \M_i, \X_i): i=1,\ldots, n\}$,  
we write $\mathcal{S}_n=(\S_1,\ldots, \S_n)^{\mytrans}$, $\mathcal{M}_n=(\M_1,\ldots, \M_n)^{\mytrans}$, and $\mathcal{X}_n=(\X_1,\ldots, \X_n)^{\mytrans}$. It follows that  the ordinary least square estimator of $(\alphaS, \alphaX^{\mytrans})^{\mytrans}$ is
\begin{align*}
\begin{bmatrix}
		\hatalphaSn \\[3pt]
		\hatalphaXn
	\end{bmatrix}=\begin{bmatrix}
		\mathcal{S}_n^{\mytrans}\mathcal{S}_n &\mathcal{S}_n^{\mytrans} \mathcal{X}_n \\[3pt]
	\mathcal{X}_n^{\mytrans}\mathcal{S}_n & \mathcal{X}_n^{\mytrans}\mathcal{X}_n
	\end{bmatrix}^{-1}\begin{bmatrix}
		\mathcal{S}_n^{\mytrans} \mathcal{M}_n\\[3pt]
		\mathcal{X}_n^{\mytrans} \mathcal{M}_n
	\end{bmatrix}.
\end{align*}
By the blockwise matrix inversion, we obtain
\begin{align}
&~\begin{bmatrix}
		\mathcal{S}_n^{\mytrans}\mathcal{S}_n &\mathcal{S}_n^{\mytrans} \mathcal{X}_n \\[3pt]
	\mathcal{X}_n^{\mytrans}\mathcal{S}_n & \mathcal{X}_n^{\mytrans}\mathcal{X}_n
	\end{bmatrix}^{-1} \notag \\[2pt]
=&~ 
	\begin{bmatrix}
		(\mathcal{S}_n^{\mytrans}\mathbb{O}_{\X}\mathcal{S}_n)^{-1} & -(\mathcal{S}_n^{\mytrans}\mathbb{O}_{\X}\mathcal{S}_n)^{-1}\mathcal{S}_n^{\mytrans}\mathcal{X}_n(\mathcal{X}_n^{\mytrans}\mathcal{X}_n)^{-1} \\[3pt]
		-(\mathcal{X}_n^{\mytrans}\mathbb{O}_{\S}\mathcal{X}_n)^{-1}\mathcal{X}_n^{\mytrans}\mathcal{S}_n(\mathcal{S}_n^{\mytrans}\mathcal{S}_n)^{-1}& (\mathcal{X}_n^{\mytrans}\mathbb{O}_{\S} \mathcal{X}_n)^{-1}
	\end{bmatrix}, \label{eq:inversematrix}
\end{align} where 
$\mathbb{O}_{\X}=\myident_n - \mathcal{X}_n(\mathcal{X}_n^{\mytrans}\mathcal{X}_n )^{-1}\mathcal{X}_n^{\mytrans}$, 
$\mathbb{O}_{\S}=\myident_n- \mathcal{S}_n (\mathcal{S}_n^{\mytrans} \mathcal{S}_n)^{-1}\mathcal{S}_n^{\mytrans}$,
and $\myident_n$ denotes an $n\times n$ identity matrix. 
It follows that 
\begin{align}\label{eq:alphahatsxjoint}
\begin{bmatrix}
		\hatalphaSn \\[3pt]
		\hatalphaXn
	\end{bmatrix} = 
\begin{bmatrix}
(\mathcal{S}_n^{\mytrans}\mathbb{O}_{\X}\mathcal{S}_n)^{-1}\mathcal{S}_n^{\mytrans} \mathbb{O}_{\X} \mathcal{M}_n\\[3pt]
(\mathcal{X}_n^{\mytrans}\mathbb{O}_{\S} \mathcal{X}_n)^{-1} \mathcal{X}_n^{\mytrans}\mathbb{O}_{\S} \mathcal{M}_n
\end{bmatrix}. 
\end{align}
Since $\mathbb{O}_{\X} = \mathbb{O}_{\X}\mathbb{O}_{\X}$ and $\mathbb{O}_{\X}=\mathbb{O}_{\X}^{\mytrans}$, 
$\hatalphaSn=\{(\mathbb{O}_{\X}\mathcal{S}_n)^{\mytrans}\mathbb{O}_{\X}\mathcal{S}_n\}^{-1}(\mathbb{O}_{\X}\mathcal{S}_n)^{\mytrans} \mathbb{O}_{\X} \mathcal{M}_n$.
Then we obtain $\hatalphaSn = \mathbb{P}_n(\hatSperp\hatMperp)/\mathbb{P}_n(\hatSperp^2) $ by noting that $\mathbb{O}_{\X}\mathcal{S}_n = ( \hat{\S}_{\perp,1},\ldots,  \hat{\S}_{\perp,n})$ and $\mathbb{O}_{\X}\mathcal{M}_n = (\hat{\M}_{\perp,1},\ldots, \hat{\M}_{\perp,n})$. 
Following similar analysis, we also have $ \hatbetaMn = \mathbb{P}_n( \hatMperpp \hatYperpp)/\mathbb{P}_n(\hatMperpp^2)$. 

\medskip 
\noindent \textit{Part (3).} 
Let $\mathcal{E}_{\M,n}=(\hat{\epsilon}_{\M,1},\ldots, \hat{\epsilon}_{\M,n})^{\mytrans} $ denote the vector of $n$ residuals from the ordinary least squares regression,
and by the definitions, we can write $\mathcal{E}_{\M,n}=\mathcal{M}_n-\mathcal{S}_n\hatalphaSn - \mathcal{X}_n\hatalphaXn$. 
Since  $\mathbb{O}_{\X}\mathcal{S}_n = ( \hat{\S}_{\perp,1},\ldots,  \hat{\S}_{\perp,n})$ and $\mathbb{O}_{\X}\mathcal{M}_n = (\hat{\M}_{\perp,1},\ldots, \hat{\M}_{\perp,n})$, 
proving $\hatenMi = \hatMperpi-\hatalphaSn \hatSperpi$ for $i=1,\ldots, n$ can be written as $\mathcal{E}_{\M,n}=\mathbb{O}_{\X}\mathcal{M}_n - \mathbb{O}_{\X}\mathcal{S}_n\hatalphaSn$, which  is equivalent to showing $\mathcal{M}_n-\mathcal{S}_n\hatalphaSn - \mathcal{X}_n\hatalphaXn = \mathbb{O}_{\X}\mathcal{M}_n - \mathbb{O}_{\X}\mathcal{S}_n\hatalphaSn$
By \eqref{eq:alphahatsxjoint}, it suffices to prove 
\begin{align}
\mathbb{N}_{\X}-\mathbb{N}_{\X}\mathcal{S}_n(\mathcal{S}_n^{\mytrans}\mathbb{O}_{\X}\mathcal{S}_n)^{-1}\mathcal{S}_n^{\mytrans}(\myident_n - \mathbb{N}_{\X}) = \mathcal{X}_n(\mathcal{X}_n^{\mytrans}\mathbb{O}_{\S} \mathcal{X}_n)^{-1} \mathcal{X}_n^{\mytrans}(\myident_n - \mathbb{N}_{\S}), \label{eq:pfproj0}
\end{align} where we define $\mathbb{N}_{\X}=\myident_n-\mathbb{O}_{\X}$ and $ \mathbb{N}_{\S}=\myident_n - \mathbb{O}_{\S}$. 

By the symmetricity of the matrix in \eqref{eq:inversematrix}, 
\begin{align}
(\mathcal{X}_n^{\mytrans}\mathcal{X}_n)^{-1}\mathcal{X}_n^{\mytrans}	\mathcal{S}_n(\mathcal{S}_n^{\mytrans}\mathbb{O}_{\X}\mathcal{S}_n)^{-1} = (\mathcal{X}_n^{\mytrans}\mathbb{O}_{\S}\mathcal{X}_n)^{-1}\mathcal{X}_n^{\mytrans}\mathcal{S}_n(\mathcal{S}_n^{\mytrans}\mathcal{S}_n)^{-1}. \label{eq:pfproj1}
\end{align}
Multiplying the left and right hand sides of \eqref{eq:pfproj1} by $\mathcal{X}_n$ and $\mathcal{S}_n^{\mytrans}$, respectively, yields
\begin{align}
\mathbb{N}_{\X}\mathcal{S}_n(\mathcal{S}_n^{\mytrans}\mathbb{O}_{\X}\mathcal{S}_n)^{-1}\mathcal{S}_n^{\mytrans} = \mathcal{X}_n(\mathcal{X}_n^{\mytrans}\mathbb{O}_{\S}\mathcal{X}_n)^{-1} \mathcal{X}_n^{\mytrans}\mathbb{N}_{\S}. \label{eq:pfproj2}
\end{align}
In addition, by the Woodbury identity, 
\begin{align*}
	&~ ( \mathcal{X}_n^{\mytrans}\mathbb{O}_{\S} \mathcal{X}_n)^{-1} \notag \\
=&~ ( \mathcal{X}_n^{\mytrans} \mathcal{X}_n -  \mathcal{X}_n^{\mytrans}\mathcal{S}_n(\mathcal{S}_n^{\mytrans}\mathcal{S}_n)^{-1}\mathcal{S}_n^{\mytrans} \mathcal{X}_n)^{-1} \notag \\
=&~ ( \mathcal{X}_n^{\mytrans} \mathcal{X}_n)^{-1}-( \mathcal{X}_n^{\mytrans} \mathcal{X}_n)^{-1}  \mathcal{X}_n^{\mytrans}\mathcal{S}_n \{-\mathcal{S}_n^{\mytrans}\mathcal{S}_n+ \mathcal{S}_n^{\mytrans} \mathcal{X}_n( \mathcal{X}_n^{\mytrans} \mathcal{X}_n)^{-1}  \mathcal{X}_n^{\mytrans}\mathcal{S}_n \}^{-1}\mathcal{S}_n^{\mytrans} \mathcal{X}_n( \mathcal{X}_n^{\mytrans} \mathcal{X}_n)^{-1}.\notag
\end{align*}
Therefore, 
\begin{align}
\mathcal{X}_n(\mathcal{X}_n^{\mytrans}\mathbb{O}_{\S}\mathcal{X}_n)^{-1} \mathcal{X}_n^{\mytrans} = \mathbb{N}_{\X}+\mathbb{N}_{\X}\mathcal{S}_n(\mathcal{S}_n^{\mytrans}\mathbb{O}_{\X}\mathcal{S}_n)^{-1}\mathcal{S}_n^{\mytrans}\mathbb{N}_{\X}.\label{eq:pfproj3}
\end{align}
Combining \eqref{eq:pfproj2} and \eqref{eq:pfproj3}, \eqref{eq:pfproj0} is proved. 
Therefore, we obtain  
 $\hatenMi = \hatMperpi-\hatalphaSn \hatSperpi$ for $i=1,\ldots, n$.
 Similarly, we also have $\hatenYi = \hatYperppi-\hatbetaMn \hatMperppi$ for $i=1,\ldots, n$.

\medskip
\noindent \textit{Part (4).} By the property of the ordinary least squares regression, we know $ \mathcal{E}_{\M,n}^{\mytrans}\mathcal{S}_n=0$ and $ \mathcal{E}_{\M,n}^{\mytrans}\mathcal{X}_n=\mathbf{0}$. Therefore, 
\begin{align*}
	n\mathbb{P}_n(\hatenM \hatSperp ) =  \mathcal{E}_{\M,n}^{\mytrans}\mathbb{O}_{\X} \mathcal{S}_n =\mathcal{E}_{\M,n}^{\mytrans} \mathcal{S}_n - \mathcal{E}_{\M,n}^{\mytrans} \mathcal{X}_n(\mathcal{X}_n^{\mytrans}\mathcal{X}_n)^{-1}\mathcal{X}_n^{\mytrans}\mathcal{S}_n  = 0. 
\end{align*}
Following similar analysis, we also have $\mathbb{P}_n(\hatenY \hat{\M}_g ) = 0$. 

\medskip
\noindent \textit{Part (5).} 
By the property of the ordinary least squares regressions, we know the square of the standard error of $\hatalphaSn$, that is, $\hatsigmaalphan^2/n$, is the entry in the first row and the first column of
\begin{align*}
	\mathbb{P}_n(\hatenM^2)\times \begin{bmatrix}
		\mathcal{S}_n^{\mytrans}\mathcal{S}_n &\mathcal{S}_n^{\mytrans} \mathcal{X}_n \\[3pt]
	\mathcal{X}_n^{\mytrans}\mathcal{S}_n & \mathcal{X}_n^{\mytrans}\mathcal{X}_n
	\end{bmatrix}^{-1}.
\end{align*}
By \eqref{eq:inversematrix} and $\mathbb{O}_{\X}=\mathbb{O}_{\X}^{\mytrans}\mathbb{O}_{\X}$, $\hatsigmaalphan^2 = n\mathbb{P}_n(\hatenM^2)(\mathcal{S}_n^{\mytrans}\mathbb{O}_{\X}^{\mytrans}\mathbb{O}_{\X}\mathcal{S}_n)^{-1}=\mathbb{P}_n(\hatenM^2)/\mathbb{P}_n(\hatSperp^2)$, where we use $\mathbb{O}_{\X}\mathcal{S}_n = ( \hat{\S}_{\perp,1},\ldots,  \hat{\S}_{\perp,n})$. 
Similarly, we also have $\hatsigmabetan^2 = \mathbb{P}_n( \hatenY^2)/\mathbb{P}_n(\hatMperp^2)$. 

\subsubsection{Proof of Lemma \ref{lm:sigmaconsis}} \label{sec:pfsigmaconsis}
\noindent \textit{Part (1)} ~ In the first part, we prove $(\hatsigmaalphan, \hatsigmabetan) \xrightarrow{a.s.} (\sigmaalpha, \sigmabeta)$. 
By  Lemma \ref{lm:regresstrans},  $\hatsigmaalphan^2 =\mathbb{P}_n(\hatenM^2)/\mathbb{P}_n(\hatSperp^2)$. 
To prove $\hatsigmaalphan^2\xrightarrow{a.s.}\sigmaalpha^2$, 
by Slutsky's lemma, 
it suffices to prove $\mathbb{P}_n(\hatenM^2)\xrightarrow{a.s.} \mathrm{E}(\eM^2)$ and $\mathbb{P}_n(\hatSperp^2) \xrightarrow{a.s.} \mathrm{E}(\Sperp^2)$ separately. 
In particular, 
\begin{align*}
&~\mathbb{P}_n(\hatenM^2) - \mathrm{E}(\eM^2)\notag \\
=&~	\mathbb{P}_n \{( \hatenM - \eM)(\hatenM + \eM )\} + \mathbb{P}_n(\eM^2)-\mathrm{E}(\eM^2) \notag \\
=&~ \mathbb{P}_n \{(\alphaSn\S -\hatalphaSn\S + \alphaX^{\mytrans}\X - \hatalphaXn^{\mytrans}\X)(\hatenM + \eM )\} +\mathbb{P}_n(\eM^2)-\mathrm{E}(\eM^2)   \notag \\
=&~( \alphaSn - \hatalphaSn )\mathbb{P}_n(\S\eM) + (\alphaX - \hatalphaXn )^{\mytrans}\mathbb{P}_n(\X\eM) +  \mathbb{P}_n(\eM^2)-\mathrm{E}(\eM^2), 
\end{align*}
where in the last equation, we use $\mathbb{P}_n( \S \hatenM)=0$ and $\mathbb{P}_n( \X \hatenM)=\mathbf{0}$ by the property of the ordinary least squares regressions. 
By the strong law of large numbers, we have $\mathbb{P}_n(\S\eM)\xrightarrow{a.s.} \mathrm{E}(\S\eM) =0$,  $\mathbb{P}_n(\X\eM) \xrightarrow{a.s.} 0$,  $\mathbb{P}_n(\eM^2)-\mathrm{E}(\eM^2)\xrightarrow{a.s.} \mathbf{0}$, 
 $\hatalphaSn - \alphaSn\xrightarrow{a.s.}0$, and $\hatalphaXn - \alphaX \xrightarrow{a.s.} \mathbf{0}$.
 Therefore, $\mathbb{P}_n(\hatenM^2) - \mathrm{E}(\eM^2)\xrightarrow{a.s.} 0$ is proved.
In addition, 
\begin{align*}
 \mathbb{P}_n(\hatSperp^2) - \mathrm{E}(\Sperp^2)=&~	\mathbb{P}_n\{(\hatSperp-\Sperp)(\hatSperp+\Sperp)\}+\mathbb{P}_n(\Sperp^2)- \mathrm{E}(\Sperp^2) \notag \\
=&~	\mathbb{P}_n\{(\hatSperp-\Sperp)(\Sperp-\hatSperp)\}+\mathbb{P}_n(\Sperp^2)- \mathrm{E}(\Sperp^2) \notag \\
=&~		(Q_{1,\S}-\hat{Q}_{1,\S})^{\intercal}\mathbb{P}(\X\X^{\intercal})(\hat{Q}_{1,\S}-Q_{1,\S})+\mathbb{P}_n(\Sperp^2)- \mathrm{E}(\Sperp^2), 
\end{align*}
where in the second equation, we use $\PP\{(\hatSperp-\Sperp)\hatSperp\}=(Q_{1,\S}-\hat{Q}_{1,\S})^{\mytrans}\mathbb{P}_n\{\X(\S-\X^{\mytrans}\hat{Q}_{1,\S})\}=0$.
Then  $\mathbb{P}_n(\hatSperp^2) \xrightarrow{a.s.} \mathrm{E}(\Sperp^2)$ holds by Lemma \ref{lm:consistencyqmoments} and $\mathbb{P}_n(\Sperp^2)\xrightarrow{a.s.} \mathrm{E}(\Sperp^2)$.



\medskip
\noindent \textit{Part (2)}~  
In the second part, we prove  $(\hatsigmaalphan^*, \hatsigmabetan^*) 
\overset{\mathrm{P}^*}{\leadsto} 
(\sigmaalpha, \sigmabeta )$.
Particularly, we focus on $\hatsigmaalphan^*$ below, and $\hatsigmabetan^*$ can be analyzed similarly. 
Let $\hatenM^*$ denote the residuals obtained form the the  nonparametric bootstrap, i.e., the paired bootstrap regression,
and then following (5) in Lemma \ref{lm:regresstrans},
we write $(\hatsigmaalphan^*)^2 = {\mathbb{P}_n^*\{ (\hatenM^*)^2\}}/{ \mathbb{P}_n^* \{ (\Sperp^*)^2\}}$,
which is obtained by replacing $\hatenM$ and $\mathbb{P}_n(\cdot)$ in the formula of $\hatsigmaalphan^2$ with their nonparametric bootstrap versions $\hatenM^*$ and $\mathbb{P}_n^*(\cdot)$, respectively.

By the analysis of \eqref{eq:denominatorboot1}, 
we know $ \mathbb{P}_n^* \{ (\Sperp^*)^2\}  \overset{\mathrm{P}^*}{\leadsto} \mathrm{E}(\Sperp^2)$.
To prove $\hatsigmaalphan^* 
\overset{\mathrm{P}^*}{\leadsto} 
\sigmaalpha$,
by Slutsky's lemma, it remains to show $ \mathbb{P}_n^*\{ (\hatenM^*)^2\}\overset{\mathrm{P}^*}{\leadsto} \mathrm{E}(\eM^2)$,
Particularly, 
\begin{align}
\mathbb{P}_n^*\{ (\hatenM^*)^2\} - \mathrm{E}(\eM^2)=  \mathbb{P}_n^*\{ (\hatenM^* - \eM ) ( \hatenM^*+ \eM)\}+  \mathbb{P}_n^*(\eM^2)-\mathrm{E}(\eM^2).	 \label{eq:bootresq}
\end{align}
By the definitions in Section \ref{sec:preliminary} and \eqref{eq:transmodel}, 
we have $\eM = \M - \X^{\mytrans}{Q}_{1,\M} - \alphaSn (\S - \X^{\mytrans}{Q}_{1, \S})$,
and then 
 the first summed term in \eqref{eq:bootresq} satisfies
\begin{align}
&~  \mathbb{P}_n^*\{ (\hatenM^* - \eM ) ( \hatenM^*+ \eM)\} \label{eq:bootresq1} \\
=&~ \mathbb{P}_n^*[\{(Q_{1,\M}-Q_{1,\M}^*)^{\mytrans}\X + (\alphaSn - \hatalphaSn^*)\S - (\alphaSn Q_{1,\S} - \hatalphaSn^*Q_{1,\S}^*)^{\mytrans}\X\} ( \hatenM^*+ \eM)] \notag \\
=&~(Q_{1,\M}-Q_{1,\M}^*-\alphaSn Q_{1,\S}+\hatalphaSn^*Q_{1,\S}^*)^{\mytrans}\mathbb{P}_n^*(\X\eM)+(\alphaSn - \hatalphaSn^*)\mathbb{P}_n^*(\S\eM),\notag 
\end{align}
where in the last equation, we use $\mathbb{P}_n^*(\X \hatenM^*)=\mathbf{0}$ and $\mathbb{P}_n^*(\S\hatenM^*)=0$
by the property of the ordinary least squares regressions under the nonparametric bootstrap. 
Since  $\hatalphaSn^*-\alphaSn\overset{\mathrm{P}^*}{\leadsto}  0$,
and by Lemma \ref{lm:consistencyqmoments}, 
we know 
$\eqref{eq:bootresq1}\overset{\mathrm{P}^*}{\leadsto}  0$.
In addition, by the bootstrap consistency, $\mathbb{P}_n^*(\eM^2)-\mathrm{E}(\eM^2)\overset{\mathrm{P}^*}{\leadsto}  0$.
In summary, we obtain $\mathbb{P}_n^*\{ (\hatenM^*)^2\} - \mathrm{E}(\eM^2)\overset{\mathrm{P}^*}{\leadsto}  0$.

\subsubsection{Proof of Lemma \ref{lm:perpconsisall}}\label{sec:pfperpconsisall}

\noindent \textit{Part (1)}~
By the formulae in \eqref{eq:olsfwlform}, 
we can write the ordinary least squares estimates of $\alphaSn$ and $\betaMn$ under the nonparametric bootstrap as
\begin{align*}
\hatalphaSn^*=\frac{\mathbb{P}_n^*(\Sperpboot  \Mperpboot)}{\mathbb{P}_n^*\{(\Sperpboot )^2\}}\hspace{2em} \text{and} \hspace{2em} \hatbetaMn^* = \frac{ \mathbb{P}_n^*(\Mperppboot \Yperppboot) }{\mathbb{P}_n^*\{(\Mperppboot)^2\}}, 		
\end{align*} 
which are obtained by replacing 
$(\hatSperp, \hatMperp, \hatMperpp, \hatYperpp)$ and  $\mathbb{P}_n$ in \eqref{eq:olsfwlform} with their  
nonparametric bootstrap versions 
 $(\Sperpboot, \Mperpboot, \Mperppboot, \Yperppboot)$ and $\mathbb{P}_n^*$, respectively.  
Then by the formulae of $\hatalphaSn^*$ and $\hatalphaSgn^*$, we have
\begin{align}
	&~ \sqrt{n}(\hatalphaSn^* -  \hatalphaSgn^* ) \times \{\mathbb{P}_n^*(\Sperp^2)\mathbb{P}_n^* (\hatSperp^2)\} \notag \\
=&~ \sqrt{n}\big\{  \mathbb{P}_n^*(\Sperpboot \Mperpboot ) - \mathbb{P}_n^*(\hatSperp \hatMperp ) \big\}\mathbb{P}_n^* (\hatSperp^2)+  \sqrt{n} \mathbb{P}_n^*(\hatSperp \hatMperp )\big[ \mathbb{P}_n^* (\hatSperp^2) - \mathbb{P}_n^*\{(\Sperpboot)^2\}  \big].\label{eq:diffexp1}	
\end{align}
For the first summed term in \eqref{eq:diffexp1}, we have
\begin{align}
&~\mathbb{P}_n^*(\Sperpboot \Mperpboot)-\mathbb{P}_n^*(\hatSperp \hatMperp)\label{eq:coefdiffsampleboot2} \\
=&~\mathbb{P}_n^*\{\Sperpboot(\Mperpboot - \hatMperp) \}+ \PPboot\{(\Sperpboot- \hatSperp)\Mperpboot \}+\PPboot\{(\Sperpboot- \hatSperp)(\hatMperp-\Mperpboot)\} \notag \\
= &~ -(Q_{1,\S}^*-\hat{Q}_{1,\S})^{\mytrans}\mathbb{P}_n^*(\X \mathbf{\X}^{\mytrans})(Q_{1,\M}^*-\hat{Q}_{1,\M}), \notag 
\end{align}
where in the last equation, we use 
$\mathbb{P}_n^*\{\Sperpboot(\Mperpboot - \hatMperp) \}=(\hat{Q}_{1,\M}-Q_{1,\M}^*)^{\mytrans}\PPboot\{\X(\S-\X^{\mytrans} Q_{1,\S}^*)\}=0$
 and 
$\PPboot\{(\Sperpboot- \hatSperp)\Mperpboot \}=(\hat{Q}_{1,\S}-Q_{1,\S}^*)^{\mytrans}\PPboot\{\X(\M-\X^{\mytrans} Q_{1,\M}^*)\}=0$. 
Therefore, by Lemma \ref{lm:consistencyqmoments}, the bootstrap consistency, and Slutsky's lemma, 
we obtain $\sqrt{n}\{\mathbb{P}_n^* (\Sperpboot \Mperpboot) - \mathbb{P}_n^* (\hatSperp \hatMperp)\}\overset{\mathrm{P}^*}{\leadsto}  0$.
Following similar analysis, we have
\begin{align}\label{eq:vsmomentasymconsis}
\mathbb{P}_n^*\{(\Sperpboot)^2\} - \mathbb{P}_n^*(\hatSperp^2)=-(Q_{1,\S}^* - \hat{Q}_{1,\S})^{\mytrans}\PPboot(\X\X^{\mytrans})(Q_{1,\S}^* - \hat{Q}_{1,\S}),
\end{align}
and then 
$\sqrt{n}[\mathbb{P}_n^*\{(\Sperpboot)^2\} - \mathbb{P}_n^*(\hatSperp^2)]\overset{\mathrm{P}^*}{\leadsto}  0$.
 Therefore, by \eqref{eq:diffexp1} and Slutsky's lemma, 
we prove $\sqrt{n}(\hatalphaSn^* -  \hatalphaSgn^* )\overset{\mathrm{P}^*}{\leadsto} 0$.
Similarly, we can also prove $\sqrt{n}(\hatbetaMn^* -  \hatbetaMgn^* )\overset{\mathrm{P}^*}{\leadsto} 0$. 
 The details are   similar and thus skipped.

\bigskip
\noindent \textit{Part (2)}~
Similarly to \eqref{eq:zsfirsterm}, 
we have
\begin{align*}
\ZZSperpbootn-\ZZSbootn =&~\GGboot \{ \hatenM (\hatSperp - \Sperpboot ) \} \notag \\ 
=&~\Big[(Q_{1,\M}-\hat{Q}_{1,\M})^{\mytrans}\mathbb{G}_n^*(\X\X^{\mytrans})+ \mathbb{G}_n^*(\Mperp \X^{\mytrans})\notag \\
&~\ - \hatalphaSn (Q_{1,\S}-\hat{Q}_{1,\S})^{\mytrans}\mathbb{G}_n^*(\X\X^{\mytrans}) -\hatalphaSn \mathbb{G}_n^*(\Sperp\X^{\mytrans})\Big] ({Q}_{1,\S}^*-\hat{Q}_{1,\S}), \notag
\end{align*}
and then by Lemma \ref{lm:consistencyqmoments}, $\ZZSperpbootn-\ZZSbootn\overset{\mathrm{P}^*}{\leadsto}  0$.
Following similar analysis, we can also prove  $\ZZMperpbootn-\ZZMbootn\overset{\mathrm{P}^*}{\leadsto}  0$.


\bigskip
\noindent \textit{Part (3)}~
Note that $\VVSperpbootn=\PPboot(\hatSperp^2)$ and $\VVSbootn=\PPboot\{(\Sperpboot)^2\}$. 
Thus by the analysis of \eqref{eq:vsmomentasymconsis}, 
we have $\VVSperpbootn-\VVSbootn \overset{\mathrm{P}^*}{\leadsto} 0$.
Following similar analysis, we can also prove  $ \VVMperpbootn - \VVMbootn\overset{\mathrm{P}^*}{\leadsto}  0$.


\bigskip
\noindent \textit{Part (4)}~
We focus on discussing $(\hatsigmaalphan^*, \hatsigmaalphaperpboot)$ below and $(\hatsigmabetan^*, \hatsigmabetaperpboot)$ can be analyzed similarly. 
Note that we can write 
$(\hatsigmaalphaperpboot)^2=\mathbb{P}_n^*\{ (\hatenMperpboot)^2\}/\mathbb{P}_n^*(\hatSperp^2)$
and   $(\hatsigmaalphan^*)^2 = {\mathbb{P}_n^*\{ (\hatenM^*)^2\}}/{ \mathbb{P}_n^* \{ (\Sperp^*)^2\}   }$,
which is obtained  by replacing $\hatenM$ and $\mathbb{P}_n(\cdot)$ in the formula of $\hatsigmaalphan^2$ in Part (5) of Lemma \ref{lm:regresstrans} with their nonparametric bootstrap versions $\hatenM^*$ and $\mathbb{P}_n^*(\cdot)$, respectively, 
 where $\hatenM^*$ denotes the residuals from the ordinary least squares regressions under the nonparametric bootstrap. 
Since we know $\mathbb{P}_n^*(\hatSperp^2)-  \mathbb{P}_n^* \{ (\Sperp^*)^2\}\overset{\mathrm{P}^*}{\leadsto}  0$ 
by \eqref{eq:vsmomentasymconsis},
it suffices to prove $\mathbb{P}_n^*\{ (\hatenMperpboot)^2\} - \mathbb{P}_n^*\{ (\hatenM^*)^2\}\overset{\mathrm{P}^*}{\leadsto}  0$. 

In particular, similarly to \eqref{eq:bootresq1}, 
we can write
\begin{align*}
&\mathbb{P}_n^*\{ (\hatenMperpboot)^2\} - \mathbb{P}_n^*\{ (\hatenM^*)^2\} \notag \\
=&\mathbb{P}_n^*\{  (\hatMperp - \hatalphaSgn^* \hatSperp - \Mperpboot + \hatalphaSn^* \Sperp^* )(\hatenMperpboot + \hatenM^*) \}\notag \\
=& \mathbb{P}_n^*[\{(Q_{1,\M}^*-\hat{Q}_{1,\M})^{\mytrans}\X +(\hatalphaSn^*-\hatalphaSgn^*)\S- ( \hatalphaSn^*Q_{1,\S}^*-\hatalphaSgn^* \hat{Q}_{1,\S} )^{\mytrans}\X\} (\hatenMperpboot+ \hatenM^*)] \notag \\ 
=& (Q_{1,\M}^*-\hat{Q}_{1,\M}+\hatalphaSgn^* \hat{Q}_{1,\S} - \hatalphaSn^*Q_{1,\S}^*)^{\mytrans}\mathbb{P}_n^*(\X\hatenMperpboot) +(\hatalphaSn^* - \hatalphaSgn^*)\mathbb{P}_n^*(\S\hatenMperpboot), 
\end{align*}
where in the last equation, we use $\mathbb{P}_n^*(\X \hatenM^*)=\mathbf{0}$ and $\mathbb{P}_n^*(\S\hatenM^*)=0$ by the property of the ordinary least squares regression. 
Following similar analysis, we have
\begin{align*}
\mathbb{P}_n^*(\X\hatenMperpboot) = &~\mathbb{P}_n^*(\X\X^{\mytrans}) (Q_{1,\M}^*-\hat{Q}_{1,\M}+\hatalphaSgn^* \hat{Q}_{1,\S} - \hatalphaSn^*Q_{1,\S}^*),\notag \\
\mathbb{P}_n^*(\S\hatenMperpboot) =&~ \mathbb{P}_n^*(\S^2) (Q_{1,\M}^*-\hat{Q}_{1,\M}+\hatalphaSgn^* \hat{Q}_{1,\S} - \hatalphaSn^*Q_{1,\S}^*).
\end{align*}
Then by Lemma \ref{lm:consistencyqmoments},  Part (1) of Lemma \ref{lm:perpconsisall}, and the bootstrap consistency, we have $\mathbb{P}_n^*\{ (\hatenMperpboot)^2\} - \mathbb{P}_n^*\{ (\hatenM^*)^2\}\overset{\mathrm{P}^*}{\leadsto}  0$.

\newpage

\section{Extensions under  Multiple Mediators}\label{sec:extendmultiplemediators}

In this section,
we introduce two types of individual indirect effects under multiple-mediator settings in Section  \ref{sec:differentypesofeffects}, 
and then we present supplementary results on testing joint mediation effect of multiple mediators in Section \ref{sec:jointmedmulti}. 


\subsection{Two Types of Indirect Effects under Multiple Mediators: Supplementary Material of Figure \ref{fig:multiplemed}}\label{sec:differentypesofeffects}
We discuss the two scenarios when multiple mediators are  causally uncorrelated or causally correlated in Sections \ref{sec:parallelmediator} and \ref{sec:intervenindirect}, respectively.

\subsubsection{Causally Uncorrelated   Mediators}\label{sec:parallelmediator}
As an example, we next focus on a target mediator, \textit{say} $M_1$, and then denote the non-target mediators by  $\boldsymbol{M}_{(-1)}$. 
Let $M_1(s)$ denote the potential value of the target mediator $M_1$ under the exposure $\S=s$, 
let  $\boldsymbol{M}_{(-1)}(s)$ denote the potential value of non-target mediators $\boldsymbol{M}_{(-1)}$ under the exposure $\S=s$, 
and let $\Y(s, m_1, \boldsymbol{m}_{(-1)})$ denote the potential outcome that would have been observed if $\S$, $\M_1$, and $\boldsymbol{M}_{(-1)}$ had been set to $s$, $m_1$, and $\boldsymbol{m}_{(-1)}$,  respectively. 
Consider the individual indirect effect mediated by $M_1$ defined in 
\cite{imai2013identification}: 
\begin{align}\label{eq:indirecteffect1}
	\mathrm{E}\big\{ \Y(s, \M_1(s), \boldsymbol{M}_{(-1)}(s)) - \Y(s, \M_1(s^*), \boldsymbol{M}_{(-1)}(s)) \big\}.
\end{align}
The effect \eqref{eq:indirecteffect1} 
can be nonparametrically identified  given the following condition on sequential ignorability with multiple causally independent  mediators; see, e.g., \cite{jerolon2020causal}. 
\begin{condition}\label{cond:identificatoinmultiplemed} \quad 
Let $\boldsymbol{X}$ denote all the observed pretreatment covariates (variables unaffected by the treatment). 
For $s, s^*, s'$, and $\boldsymbol{x}$ in the support set, 
\begin{enumerate}
	\item[(i)] $\{\Y(s, m_1, \boldsymbol{m}_{(-1)}),\  M_1(s^*),\ \boldsymbol{M}_{(-1)}(s')\}\perp S\mid \{\boldsymbol{X}=\boldsymbol{x}\}$, 
	\item[(ii)] $\Y(s^*, m_1, \boldsymbol{m}_{(-1)})\, \perp \,  \{ M_1(s),\ \boldsymbol{M}_{(-1)}(s)\}\mid  \{ S=s,\boldsymbol{X}=\boldsymbol{x} \}$,
	\item[(iii)] $\Y(s, m_1, \boldsymbol{m}_{(-1)})\, \perp \,   \{ M_1(s^*),\ \boldsymbol{M}_{(-1)}(s)\}\mid  \{S=s,\boldsymbol{X}=\boldsymbol{x}\}$, 
\end{enumerate}
where $P(S=s \mid \boldsymbol{X}=\boldsymbol{x}) >0$, 
and 
the conditional density (mass)  function 
of $\boldsymbol{M}=\boldsymbol{m}$ (when $\boldsymbol{M}$ is discrete) 
$f(\boldsymbol{M}=\boldsymbol{m} \mid S=s,\boldsymbol{X}=\boldsymbol{x}) >0$. 
\end{condition}

Condition \ref{cond:identificatoinmultiplemed}-(i) 
suggests that there are no unobserved  pretreatment confounders between the treatment and the outcome and between the treatment and the individual mediators, after conditioning on all observed covariates. 
Condition \ref{cond:identificatoinmultiplemed}-(ii) and (iii)  imply that:   
(a) 
there are no 
unobserved pretreatment variables between the mediators taken jointly and the outcome, 
and (b) the mediators and the outcome are confounded by an observed or unobserved posttreatment variable. 
We point out that 
Condition \ref{cond:identificatoinmultiplemed} allow that the mediators are \textit{uncausally correlated}, e.g.,
there exist unobserved pretreatment confounder $\boldsymbol{U}$ affecting the mediators jointly. 
We give specific examples below. 
\begin{example}\label{eg:1}
Assume the multivariate linear model where for $j=1,\ldots, J$, 
\begin{align}
\M_j =\alpha_{\S,j} \S +  \X^\mytrans \alpha_{\X,j} +\error_{\M,j},	\hspace{2em}\Y  = \sum_{j=1}^{\nM} \beta_{\M,j} \M_j + \X^\mytrans  \betaX + \DE \S + \eY, \label{eq:fullmodmultsupp}
\end{align}
Assume (i) $\epsilon_{M,1}\ldots, \epsilon_{M,J},  \epsilon_Y,$ and $ S$ are mutually independent conditioning on $ \X$; (ii) $\mathrm{E}(\boldsymbol{\error}_{\M}| \X, \S )=\boldsymbol{0}$  and $\mathrm{E}(\eY| \X, \S, \boldsymbol{\M} )=0$. 
\end{example} 

\begin{example}\label{eg:2}
Under the multivariate linear model \eqref{eq:fullmodmultsupp}, 
there exists unobserved confounders $\boldsymbol{U}$ such that: 
(i) $\epsilon_{M,1}\ldots, $ and $\epsilon_{M,J}$ are mutually  independent conditioning on 	$\{\X, \boldsymbol{U}\}$;  
(ii) 
$(\epsilon_{M,1}\ldots, \epsilon_{M,J}), $ $\epsilon_Y$, and $S$ are independent conditioning on $ \X$; 
(iii) $\mathrm{E}(\boldsymbol{\error}_{\M}| \X, \S )=\boldsymbol{0}$  and $\mathrm{E}(\eY| \X, \S, \boldsymbol{\M} )=0$.  
\end{example}

\begin{lemma}
Under Examples \ref{eg:1} or  \ref{eg:2}, 
$\mathrm{E}\big\{ \Y(s, \M_1(s)) - \Y(s, \M_1(s^*))  \big\}=\alpha_{S,1}\beta_{M,1}(s-s^*)$. 
\end{lemma}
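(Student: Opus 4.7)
The plan is to combine the identification of the individual indirect effect in \eqref{eq:indirecteffect1} under Condition \ref{cond:identificatoinmultiplemed} with the additive structure of the linear SEM \eqref{eq:fullmodmultsupp}, so that the indirect effect collapses to the familiar product form. As a preliminary step, I would first note that Examples \ref{eg:1} and \ref{eg:2} both imply Condition \ref{cond:identificatoinmultiplemed} conditional on $\X$ (in Example \ref{eg:2} one conditions implicitly on $\boldsymbol{U}$, but since $\boldsymbol{U}$ enters only through the $\boldsymbol{\epsilon}_M$ vector and is independent of $(S,\epsilon_Y)$ given $\X$, the sequential ignorability required for the $M_1$-specific effect still follows). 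Thus the identification result from \citet{imai2013identification} / \citet{jerolon2020causal} is available.

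Next I would make the structural form of the potential outcomes explicit. Under the Stable Unit Treatment Value Assumption together with the linear SEM, the potential outcome admits the representation
\begin{equation*}
Y(s, m_1, \boldsymbol{m}_{(-1)}) \;=\; \beta_{M,1} m_1 \,+\, \boldsymbol{m}_{(-1)}^{\mytrans}\boldsymbol{\beta}_{(-1)} \,+\, \X^{\mytrans}\boldsymbol{\beta}_{\X} \,+\, \tau_S s \,+\, \epsilon_Y,
\end{equation*}
where $\boldsymbol{\beta}_{(-1)}=(\beta_{M,2},\ldots,\beta_{M,J})^{\mytrans}$, and the mediator $M_1$ admits
\begin{equation*}
M_1(s) \;=\; \alpha_{S,1} s \,+\, \X^{\mytrans}\boldsymbol{\alpha}_{\X,1} \,+\, \epsilon_{M,1}.
\end{equation*}
Substituting these into the indirect effect \eqref{eq:indirecteffect1}, the terms involving $\boldsymbol{M}_{(-1)}(s)$, $\X$, $\tau_S s$, and $\epsilon_Y$ cancel in the difference, leaving $\beta_{M,1}\{M_1(s)-M_1(s^*)\}=\beta_{M,1}\alpha_{S,1}(s-s^*)$, which is a constant and hence equals its expectation.

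The two examples are then handled in a unified way. In Example \ref{eg:1}, the mutual conditional independence of $(\epsilon_{M,1},\ldots,\epsilon_{M,J},\epsilon_Y,S)$ given $\X$ together with the conditional mean-zero assumption directly gives Condition \ref{cond:identificatoinmultiplemed} and justifies the above substitution. In Example \ref{eg:2}, the mediators may be \emph{non-causally} correlated through the unobserved pretreatment $\boldsymbol{U}$; however, since $\boldsymbol{U}$ does not appear on any causal arrow from $S$ to $M_1$ nor from $M_j$ to $Y$ with $j\neq 1$, the equality $M_1(s)-M_1(s^*)=\alpha_{S,1}(s-s^*)$ still holds pathwise, and the required sequential ignorability for the $M_1$-specific effect is preserved.

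The step I expect to require the most care is verifying that Condition \ref{cond:identificatoinmultiplemed} actually holds in Example \ref{eg:2}, because the presence of the latent $\boldsymbol{U}$ inducing correlation among the $\epsilon_{M,j}$ could plausibly look like unmeasured confounding of the mediator-outcome relation. The key observation to make precise is that $\boldsymbol{U}$ affects the joint distribution of $(M_1,\boldsymbol{M}_{(-1)})$ only through the error vector $\boldsymbol{\epsilon}_M$, which is assumed conditionally independent of $(S,\epsilon_Y)$ given $\X$; therefore $\boldsymbol{U}$ is a non-causal common cause among the mediators rather than a mediator-outcome confounder, and parts (ii)--(iii) of Condition \ref{cond:identificatoinmultiplemed} continue to hold. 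Once this is established, the rest of the argument is a line of algebra.
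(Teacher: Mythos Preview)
Your proposal is correct but follows a different route from the paper. The paper's proof proceeds via a mediation-formula argument: it conditions on the event $\mathcal{E}_{\boldsymbol{x},\boldsymbol{u},\boldsymbol{m}_{(-1)}}=\{\X=\boldsymbol{x},\boldsymbol{U}=\boldsymbol{u},\boldsymbol{M}_{(-1)}=\boldsymbol{m}_{(-1)}\}$, replaces the counterfactual expectations by observational ones using sequential ignorability (conditioning on $\boldsymbol{U}$ is what makes Example~\ref{eg:2} go through), plugs in the linear outcome model, and then integrates out. Your argument instead writes the potential outcomes $Y(s,m_1,\boldsymbol{m}_{(-1)})$ and $M_1(s)$ directly from the structural equations and cancels terms pathwise, so that the contrast equals the deterministic constant $\alpha_{S,1}\beta_{M,1}(s-s^*)$ before any expectation is taken. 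This is more elementary and in fact does not require invoking Condition~\ref{cond:identificatoinmultiplemed} at all: the lemma concerns the value of the causal estimand under the posited structural model, not its identifiability from observables, and your pathwise computation already handles Example~\ref{eg:2} because the latent $\boldsymbol{U}$ enters only through $\boldsymbol{\epsilon}_M$, which drops out of $M_1(s)-M_1(s^*)$. The paper's route buys an explicit link to the observational identification formula (useful if one later wants to estimate the effect), while yours buys brevity; both are valid.
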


\begin{proof}
Let $\mathcal{E}_{\boldsymbol{x},\boldsymbol{u},\boldsymbol{m_{(-1)}}}=\{\X=\boldsymbol{x}, \boldsymbol{U}=\boldsymbol{u},\boldsymbol{M}_{(-1)}=\boldsymbol{m}_{(-1)}\}$
and define the other events similarly. We have
\begin{align*}
&~\mathrm{E}\big\{ \Y(s, \M_1(s)) - \Y(s, \M_1(s^*))  \big\}\notag\\ 
=&~\int \mathrm{E}\big\{ \Y(s, \M_1(s)) - \Y(s, \M_1(s^*)) \mid \mathcal{E}_{\boldsymbol{x},\boldsymbol{u},\boldsymbol{m_{(-1)}}}\big\}\mathrm{d}F(\mathcal{E}_{\boldsymbol{x},\boldsymbol{u},\boldsymbol{m_{(-1)}}})\notag\\ 
=&~\int\int \mathrm{E}\big\{ \Y\mid \S=s, M_1=m_1, \mathcal{E}_{\boldsymbol{x},\boldsymbol{u},\boldsymbol{m_{(-1)}}}\big\} \  \notag\\
&~\times\Big\{\mathrm{d}F(\M_1=m_1\mid \mathcal{E}_{\boldsymbol{x},\boldsymbol{u},\boldsymbol{m_{(-1)}},s})-\mathrm{d}F(M_1=m_1\mid \mathcal{E}_{\boldsymbol{x},\boldsymbol{u},\boldsymbol{m_{(-1)}},s^*}) \Big\}\mathrm{d}F(\mathcal{E}_{\boldsymbol{x},\boldsymbol{u},\boldsymbol{m_{(-1)}}})\notag\\
=&~\int \int\big\{ \beta_S s + \beta_{M_1}m_1+\boldsymbol{\beta}_X^{\top}\boldsymbol{x}+\boldsymbol{\beta}_{M_{(-1)}}^{\top}\boldsymbol{m}_{(-1)}+\mathrm{E}(\epsilon_Y\mid \mathcal{E}_{\boldsymbol{x},\boldsymbol{u},\boldsymbol{m},s})\big\}\  \notag\\
&~ \times\Big\{\mathrm{d}F(M_1=m_1\mid \mathcal{E}_{\boldsymbol{x},\boldsymbol{u},\boldsymbol{m_{(-1)}},s})-\mathrm{d}F(M_1=m_1\mid \mathcal{E}_{\boldsymbol{x},\boldsymbol{u},\boldsymbol{m_{(-1)}},s^*}) \Big\}\mathrm{d}F(\mathcal{E}_{\boldsymbol{x},\boldsymbol{u},\boldsymbol{m_{(-1)}}})\notag\\
=&~\beta_{M_1}\int \big\{ \mathrm{E}(M_1\mid \mathcal{E}_{\boldsymbol{x},\boldsymbol{u},\boldsymbol{m}_{(-1)},s})-\mathrm{E}(M_1\mid \mathcal{E}_{\boldsymbol{x},\boldsymbol{u},\boldsymbol{m}_{(-1)},s^*}) \big\}\mathrm{d}F(\mathcal{E}_{\boldsymbol{x},\boldsymbol{u},\boldsymbol{m_{(-1)}}})\notag\\
=&~\alpha_{S,1}\beta_{M_1}(s-s^*).
\end{align*}
\end{proof}

\normalsize

\subsubsection{Causally Correlated Mediators: Interventional Indirect Effect}\label{sec:intervenindirect}
We briefly introduce the definition of interventional indirect effect through one mediator $M_1$ when there are $J$ mediators;  more details can be found in \cite{vanderweele2014effect}, 
\cite{vansteelandt2017interventional}, and 
\cite{loh2021disentangling}. 

Let $\boldsymbol{M}=(M_1,M_2,\ldots, M_J)$ and 
$\boldsymbol{M}_{(-1)}=(M_2,\ldots, M_J)$.
Similarly, we let $\boldsymbol{m}=(m_1,\ldots, m_J) $, and $\boldsymbol{m}_{(-1)}=(m_2,\ldots, m_J)$.
Let $M_j(s)$ denote the potential value of the mediator $M_j$ if $S$ had been set to $s$. 
Let $Y(s,\boldsymbol{m})$ denote the potential value of the outcome $Y$ if $S$ and $\boldsymbol{M}$ had been assigned to $s$ and $\boldsymbol{m}$, respectively. 
Equivalently, we also write  $Y(s,m_1,\boldsymbol{m}_{(-1)})$ to separate $m_1$ from the other mediators.

In the following, we took the definition under multiple mediators in the supplementary material of \cite{loh2021disentangling}. 
Assume the identification assumptions in Condition \ref{cond:multiidentifiability}.
The interventional indirect effect of treatment on the outcome via a given mediator $M_1$ is defined as
\begin{align*}
\mathrm{IE}_j=\mathrm{E} \Biggr[ \sum_{\boldsymbol{m}} \mathrm{E}\big(Y(s,\boldsymbol{m})\mid \boldsymbol{x}\big)\big\{P\big(M_1(s)=m_1\mid \boldsymbol{x}\big)-P\big(M_1(s^*)=m_1 \mid \boldsymbol{x}\big)\big\}\\
\prod_{k=1}^{j-1} P\big({M}_{k}(s)={m}_{k}\mid \boldsymbol{x}\big) \times \prod_{l=j+1}^{J} P\big({M}_{l}(s^*)={m}_{l}\mid \boldsymbol{x}\big)\Biggr].
\end{align*} 
The
interventional indirect effect of treatment on outcome via the mediator $M_j$ is
interpreted as the combined effect along all (underlying) causal pathways from $S$ to $M_j$ (possibly intersecting any other mediators that are causes of $M_j$), then lead directly
from $M_j$ to Y .

To identify the individual interventional  indirect effect,
\cite{loh2021disentangling} list the following unconfounded assumptions,  
including:  
(i) the effect of exposure $\S$ on outcome $\Y$ is unconfounded conditional on $\X$, 
(ii) the effect of mediators $\boldsymbol{M}$ on outcome $\Y$ is unconfounded conditional on $\{\S, \X\}$,
and 
(iii) the  effect of treatment $\S$ on both mediators is unconfounded conditional on $\X$. 


When we further assume the linear and additive mean model below:
\begin{align}\label{eq:linearaddmeaniie}
    \mathrm{E}(Y \mid S, \boldsymbol{M}, \boldsymbol{X})= &~\sum_{j=1}^J \beta_{M, j} M_j+\boldsymbol{X}^{\top} \boldsymbol{\beta}_{\boldsymbol{X}}+\tau_S S; \quad \\  
   \mathrm{E}(M_j\mid S, \boldsymbol{X}) = &~\alpha_{S,j}S+ \boldsymbol{X}^{\top}\boldsymbol{\alpha}_{\boldsymbol{X},j}, \notag
\end{align}
for $j=1,\ldots, J$. 
Then the interventional indirect effect for the $j$-th mediator has been obtained as 
$\mathrm{IE}_j=\beta_{M,j} \alpha_{S,j}(s-s^*)$. 
Therefore, the interventional indirect effects can be estimated by 
regression coefficients; see \cite{loh2021disentangling}. 

\subsection{Joint Mediation Effect: Supplementary Material of Section \ref{sec:jointestmulti}}\label{sec:jointmedmulti}
\label{sec:simulationmultiple}
In the following, 
we present regularity conditions in Section \ref{sec:jointmedcond},
we prove   Theorems \ref{thm:limitmulti}--\ref{thm:multbootjoint} in Sections \ref{sec:pflimitmulti}--\ref{sec:pfmultbootjoint}, respectively,
and then we provide numerical results of testing joint mediation effect in Section \ref{sec:numericresultsmulti}.


\subsubsection{Conditions}\label{sec:jointmedcond}

Consider the potential outcome framework.
Let $\boldsymbol{M}(s)$ denote the potential value of all the target mediators  $\boldsymbol{M}$ under the exposure $\S=s$, and let $\Y(s, \boldsymbol{m})$ denote the potential outcome that would have been observed if $\S$ and $\boldsymbol{M}$ had been set to $s$ and $\boldsymbol{m}$, respectively.
\begin{condition}[Identification]\label{cond:multiidentifiability}\quad 
\begin{enumerate}
	\item $\Y(s, \boldsymbol{m}) \perp \S \mid \boldsymbol{X}$, i.e., no unmeasured  confounding for the relationship of the exposure $S$ and the outcome $Y$.  
	\item $\Y(s, \boldsymbol{m}) \perp \boldsymbol{M} \mid \{\boldsymbol{X}, S\}$, i.e., no unmeasured confounding for the relationship of the mediators $\boldsymbol{M}$ and the outcome $Y$, conditional on the exposure $\S$.
	\item $ \boldsymbol{M}(s) \perp \S \mid \boldsymbol{X}$, i.e., no unmeasured confounding for the relationship of the exposure $\S$ and mediators $\boldsymbol{\M}$. 
	\item $Y(s, \boldsymbol{m}) \perp \boldsymbol{\M}\left(s^{*}\right) \mid \boldsymbol{X}$.
i.e., no unmeasured confounder for the mediators-outcome $\boldsymbol{\M}$-$\Y$ relationship that is affected by the exposure $\S$, 
\end{enumerate}
where $P( S=s \mid \boldsymbol{X}=\boldsymbol{x}) >0$,  and 
the conditional density (mass)  function 
of $\boldsymbol{M}=\boldsymbol{m}$ (when $\boldsymbol{M}$ is discrete)  
$f(\boldsymbol{M}=\boldsymbol{m} \mid S=s,\boldsymbol{X}=\boldsymbol{x}) >0$. 
\end{condition}

\begin{lemma}\label{lm:jointident}
Under Condition 	\ref{cond:multiidentifiability}, 
the joint mediation effect $\mathrm{E}\{\Y(s, \boldsymbol{\M}(s)) -\Y(s, \boldsymbol{\M}(s^*))\}$  is identifiable. 
If we further assume the multivariate linear structural equation model \eqref{eq:fullmodmult1}, the joint mediation effect equals $(s-s^*)\boldsymbol{\alpha}_{\S}^{\mytrans}\boldsymbol{\beta}_{\M}$.  
\end{lemma}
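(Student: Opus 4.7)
The plan is to handle the two assertions of Lemma~\ref{lm:jointident} in sequence. For identifiability, I would adapt the standard mediation formula argument of \cite{imai2010identification} to the multivariate setting. The starting point is conditioning on $\boldsymbol{X}=\boldsymbol{x}$ and writing
\begin{align*}
\mathrm{E}\{Y(s,\boldsymbol{M}(s^*)) \mid \boldsymbol{X}=\boldsymbol{x}\}
= \int \mathrm{E}\{Y(s,\boldsymbol{m}) \mid \boldsymbol{M}(s^*)=\boldsymbol{m}, \boldsymbol{X}=\boldsymbol{x}\}\, dF_{\boldsymbol{M}(s^*)\mid \boldsymbol{X}}(\boldsymbol{m}\mid \boldsymbol{x}).
\end{align*}
Assumption~4 of Condition~\ref{cond:multiidentifiability} (cross-world independence) lets me drop $\boldsymbol{M}(s^*)$ from the conditioning inside the inner expectation, assumption~1 then lets me insert $S=s$, and assumption~2 lets me replace $Y(s,\boldsymbol{m})$ by the observable $Y$ given $\{S=s, \boldsymbol{M}=\boldsymbol{m}, \boldsymbol{X}=\boldsymbol{x}\}$. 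Assumption~3 lets me replace $dF_{\boldsymbol{M}(s^*)\mid \boldsymbol{X}}$ by $dF_{\boldsymbol{M}\mid S=s^*, \boldsymbol{X}}$. The positivity caveats in Condition~\ref{cond:multiidentifiability} ensure that all conditional distributions appearing are well defined. An analogous reduction applies to $\mathrm{E}\{Y(s,\boldsymbol{M}(s))\mid\boldsymbol{X}=\boldsymbol{x}\}$, and marginalizing over $\boldsymbol{X}$ gives the identifying (mediation) formula in terms of quantities from the observed-data distribution of $(S,\boldsymbol{M},\boldsymbol{X},Y)$.

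For the linear SEM specialization, I would plug the identifying formula from above into Model~\eqref{eq:fullmodmult1}. Under that model, $\mathrm{E}(Y\mid S=s,\boldsymbol{M}=\boldsymbol{m},\boldsymbol{X}=\boldsymbol{x}) = \sum_{j=1}^{J}\beta_{M,j}m_j + \boldsymbol{x}^{\top}\boldsymbol{\beta}_{\boldsymbol{X}} + \tau_S s$, which is linear in $\boldsymbol{m}$, so the integral against $dF_{\boldsymbol{M}\mid S=s^*,\boldsymbol{X}=\boldsymbol{x}}(\boldsymbol{m}\mid\boldsymbol{x})$ collapses to an evaluation at $\mathrm{E}(\boldsymbol{M}\mid S=s^*,\boldsymbol{X}=\boldsymbol{x})$. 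Using the mediator equations in~\eqref{eq:fullmodmult1}, this conditional mean equals $\alpha_{S,j}s^* + \boldsymbol{x}^\top\boldsymbol{\alpha}_{\boldsymbol{X},j}$ in the $j$th coordinate, and similarly with $s$ in place of $s^*$ for the other term. Subtracting the two resulting expressions cancels every component not involving $s-s^*$, yielding $\sum_{j=1}^{J}\beta_{M,j}\alpha_{S,j}(s-s^*) = (s-s^*)\boldsymbol{\alpha}_{\S}^{\top}\boldsymbol{\beta}_{\M}$, and the outer expectation over $\boldsymbol{X}$ is trivial since the remaining expression is constant in $\boldsymbol{x}$.

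I do not anticipate a major obstacle: the identifiability step is essentially a multivariate rewriting of the scalar argument already present in the literature, and the linear-SEM reduction is algebraic once linearity of the outcome model in $\boldsymbol{M}$ is used. The only subtle point worth flagging explicitly is that the cross-world independence assumption (part~4 of Condition~\ref{cond:multiidentifiability}) is what makes the step from a counterfactual expectation to an observable nested expectation work; I would state this clearly so the reader sees exactly which assumption does what.
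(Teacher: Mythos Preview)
Your proposal is correct and is exactly the standard mediation-formula argument the paper has in mind; in fact, the paper omits the proof entirely, simply declaring the lemma ``straightforward'' and referring the reader to \cite{huang2016hypothesis}. Your sketch spells out precisely that standard derivation, so there is nothing to add or correct.
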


Lemma \ref{lm:jointident} is straightforward, and thus the proof is omitted. More details can be found in \cite{huang2016hypothesis}. 

\begin{condition}\label{cond:multinversemoment} 
(C2.1) $\mathrm{E}(\boldsymbol{\error}_{\M}| \X, \S )=\boldsymbol{0}$  and $\mathrm{E}(\eY| \X, \S, \boldsymbol{\M} )=0$. 
(C2.2) $\mathrm{E}(\boldsymbol{D}\boldsymbol{D}^{\mytrans})$ 
is a positive definite matrix with bounded eigenvalues, 
where $\boldsymbol{D} = (\X^{\mytrans}, \boldsymbol{\M}^{\mytrans}, \S)^{\mytrans}$.
(C2.3) The second moments of $(\boldsymbol{\error}_{\M},\eY, \Sperp, \boldsymbol{\M}_{\perp'}, \eM\Sperp, \eY\boldsymbol{\M}_{\perp'})$ are finite,
where $\Sperp = \S - Q_{1,\S}^{\mytrans}\X$  with  $Q_{1,\S}= \{\mathrm{E}(\X \X^{\mytrans})\}^{-1}\times \mathrm{E}(\X \S)$, and $\boldsymbol{\M}_{\perp'} =\boldsymbol{\M} - Q_{2,\boldsymbol{\M}}^{\mytrans}\tilde{\X}$   with  $\tilde{\X}=(\X^{\mytrans},\S)^{\mytrans}$ and $Q_{2,\boldsymbol{\M}} = \{\mathrm{E}(\tilde{\X}^{\mytrans}\tilde{\X})\}^{-1}\times \mathrm{E}(\tilde{\X}\boldsymbol{\M}^{\mytrans})$.   
\end{condition}


\subsubsection{Proof of Theorem \ref{thm:limitmulti}}\label{sec:pflimitmulti} 
By the property of OLS of the linear SEM and following  the proof in Section \ref{sec:pfthm1}, we have
$
	\sqrt{n}(\hat{\alpha}_{S,j}-\alpha_{S,j})=\{{\mathbb{P}_n(\hat{S}_{\perp}^2)}\}^{-1}{\sqrt{n}\mathbb{P}_n(\hat{S}_{\perp}\epsilon_{M,j} )} \xrightarrow{d} \vec{Z}_{S,j}, 
$
where $\vec{Z}_{S,j}$ is a mean-zero normal variable with covariance same as $\epsilon_{M,j}S_{\perp}/\vec{V}_{S}$ and $\vec{V}_{S}=\mathrm{E}(S_{\perp}^2)$.
It follows that $\sqrt{n}(\hat{\boldsymbol{\alpha}}_S-\boldsymbol{\alpha}_{S,n}) \xrightarrow{d} \vec{Z}_{S}$, where $\vec{Z}_S=(\vec{Z}_{S,1},\ldots, \vec{V}_{Z,J})^{\top}$.  
Moreover, 
$
	\sqrt{n}(\hat{\boldsymbol{\beta}}_{M}-\boldsymbol{\beta}_{M,n})= \{\mathbb{P}_n(\hat{\boldsymbol{M}}_{\perp}\hat{\boldsymbol{M}}_{\perp}^{\top})\}^{-1}\sqrt{n}\mathbb{P}_n(\hat{\boldsymbol{M}}_{\perp} \epsilon_{Y}) \xrightarrow{d} \vec{Z}_{M},  
$ 
where $\vec{Z}_{M}$ is a normal vector with mean-zero and covariance same as $\vec{V}_{M}^{-1}\boldsymbol{M}_{\perp'}\eY$, 
$\vec{V}_{M} = \mathrm{E} (\boldsymbol{\M}_{\perp'}\boldsymbol{\M}_{\perp'}^{\mytrans})$, $\boldsymbol{\M}_{\perp'}$ is defined in Condition \ref{cond:multinversemoment}, $\hat{\boldsymbol{\M}}_{\perp'}$ represents sample version of $\boldsymbol{\M}_{\perp'}$ similarly to that in Section \ref{sec:preliminary}.   
Then Part (i) of Theorem  \ref{thm:limitmulti} follows by 
$
	\hat{\boldsymbol{\alpha}}_{\S,n}^{\mytrans}\hat{\boldsymbol{\beta}}_{\M,n}-{\boldsymbol{\alpha}}_{\S,n}^{\mytrans}{\boldsymbol{\beta}}_{\M,n}={\boldsymbol{\alpha}}_{\S,n}^{\mytrans}(\hat{\boldsymbol{\beta}}_{\M,n}-{\boldsymbol{\beta}}_{\M,n})+{\boldsymbol{\beta}}_{\M,n}^{\mytrans}(\hat{\boldsymbol{\alpha}}_{\S,n}-{\boldsymbol{\alpha}}_{\S,n})+(\hat{\boldsymbol{\alpha}}_{\S,n}-{\boldsymbol{\alpha}}_{\S,n})^{\top}(\hat{\boldsymbol{\beta}}_{\M,n}-{\boldsymbol{\beta}}_{\M,n}).
$ 
  Part (ii) of Theorem  \ref{thm:limitmulti} follows by
$n(\hat{\boldsymbol{\alpha}}_{\S,n}^{\mytrans}\hat{\boldsymbol{\beta}}_{\M,n}-{\boldsymbol{\alpha}}_{\S,n}^{\mytrans}{\boldsymbol{\beta}}_{\M,n})={\boldsymbol{b}}_{\alpha,n}^{\mytrans}(\hat{\boldsymbol{\beta}}_{\M,n}-{\boldsymbol{\beta}}_{M,n})+{\boldsymbol{b}}_{\beta,n}^{\mytrans}(\hat{\boldsymbol{\alpha}}_{\S,n}-{\boldsymbol{\alpha}}_{\S,n})+(\hat{\boldsymbol{\alpha}}_{\S,n}-{\boldsymbol{\alpha}}_{\S,n})^{\top}(\hat{\boldsymbol{\beta}}_{\M,n}-{\boldsymbol{\beta}}_{\M,n}).
$


\subsubsection{Proof of Theorem \ref{thm:multbootjoint}}\label{sec:pfmultbootjoint}

\textit{Notation}. We first define some notation, similarly to those in Theorem \ref{thm:bootstrapprodcomb}.   
In particular, we let 
\begin{align*}
\vec{\mathbb{R}}_{1,n}(\boldsymbol{b}_{\alpha}, \boldsymbol{b}_{\beta}) = &~\vec{\mathbb{Z}}_{\S,n}^{\mytrans}\vec{\mathbb{Z}}_{{\M},n} +	\boldsymbol{b}_{\alpha}^{\mytrans}\vec{\mathbb{Z}}_{{\M},n} + \boldsymbol{b}_{\beta}^{\mytrans}\vec{\mathbb{Z}}_{\S,n},\notag\\	
\vec{\mathbb{R}}_{1,n}^*(\boldsymbol{b}_{\alpha}, \boldsymbol{b}_{\beta}) = &~\vec{\mathbb{Z}}_{\S,n}^{*\mytrans}\vec{\mathbb{Z}}_{\M,n}^* +\boldsymbol{b}_{\alpha}^{\mytrans}\vec{\mathbb{Z}}_{\M,n}^* + \boldsymbol{b}_{\beta}^{\mytrans}\vec{\mathbb{Z}}_{\S,n}^*,
\end{align*}
 where 
$(\vec{\mathbb{Z}}_{\S,n}, \vec{\mathbb{Z}}_{\M,n})$ and $(\vec{\mathbb{Z}}_{\S,n}^*, \vec{\mathbb{Z}}_{\M,n}^*)$ 
are multivariate counterparts of 
$({\mathbb{Z}}_{\S,n}, {\mathbb{Z}}_{\M,n})$ and $({\mathbb{Z}}_{\S,n}^*, {\mathbb{Z}}_{\M,n}^*)$ in Section \ref{sec:abt}, respectively.
Specifically, we define  
$\vec{\mathbb{Z}}_{\S,n}= (\vec{\mathbb{V}}_{{\S}, n})^{-1}\mathbb{G}_n(\boldsymbol{\error}_{\M} \hatSperp)$, and $\vec{\mathbb{Z}}_{{\M},n}=(\vec{\mathbb{V}}_{{\M}, n})^{-1}\mathbb{G}_n(\eY \hat{\boldsymbol{\M}}_{\perp'}),$ where  $\vec{\mathbb{V}}_{\S, n} = \mathbb{P}_n(\hatSperp^2),$ $\vec{\mathbb{V}}_{{\M}, n} = \mathbb{P}_n(\hat{\boldsymbol{\M}}_{\perp'}\hat{\boldsymbol{\M}}_{\perp'}^{\mytrans})$,  
$\hatSperp = \S-\hat{Q}_{1,\S}^{\mytrans}\X$,  $\hat{\boldsymbol{\M}}_{\perp'} = \boldsymbol{\M} - \hat{Q}_{2,\boldsymbol{\M}}^{\mytrans}\tilde{\X},$
$\hat{Q}_{1,\S}= \{\mathbb{P}_n(\X \X^{\mytrans})\}^{-1}\times \mathbb{P}_n(\X \S)$, 
and 
$\hat{Q}_{2,\boldsymbol{\M}} = \{\mathbb{P}_n( \tilde{\X}  \tilde{\X}^{\mytrans})\}^{-1} \times\mathbb{P}_n( \tilde{\X} \boldsymbol{\M}^{\mytrans})$.
Moreover, we can similarly define the bootstrap counterparts $\vec{\mathbb{Z}}_{\S,n}^*= (\vec{\mathbb{V}}_{{\S}, n}^*)^{-1}\mathbb{G}_n^*(\hat{\boldsymbol{\error}}_{\M} \Sperp^*)$, $\vec{\mathbb{Z}}_{{\M},n}^*=(\vec{\mathbb{V}}_{{\M}, n}^*)^{-1}\mathbb{G}_n^*(\hatenY \boldsymbol{\M}_{\perp'}^*),$ $\vec{\mathbb{V}}_{\S, n}^* = \mathbb{P}_n^*\{(\Sperp^*)^2\},$ and $\vec{\mathbb{V}}_{{\M}, n}^* = \mathbb{P}_n^*(\boldsymbol{\M}_{\perp'}^*\boldsymbol{\M}_{\perp'}^{*\mytrans})$.

\bigskip

\noindent \textit{Proof.} By the property of OLS estimator of multivariate linear model, and following the proof in Section \ref  {sec:pfbootstrapprodcomb}, 
we have 
$
	\sqrt{n}(\hat{\boldsymbol{\alpha}}_{S,n}^*-\hat{\boldsymbol{\alpha}}_{S,n})= \vec{\mathbb{Z}}_{S,n}^* \xrightarrow{d}\vec{Z}_{S}, 
$
and $	\sqrt{n}(\hat{\boldsymbol{\beta}}_{M,n}^*-\hat{\boldsymbol{\beta}}_{M,n})= \vec{\mathbb{Z}}_{M,n}^*  \xrightarrow{d} \vec{Z}_{M}$.  
Then we obtain
\begin{itemize}
    \item[(i)] when $(\boldsymbol{\alpha}_S, \boldsymbol{\beta}_M) \neq \mathbf{0}$,  
    $\sqrt{n}(\hat{\boldsymbol{\alpha}}_{\S,n}^{*\mytrans} \hat{\boldsymbol{\beta}}_{\M, n}^*  - \hat{\boldsymbol{\alpha}}_{\S,n}^{\mytrans} \hat{\boldsymbol{\beta}}_{\M, n})\overset{d^*}{\leadsto} 
$ 
$\sqrt{n}(\hat{\boldsymbol{\alpha}}_{\S,n}^{\mytrans} \hat{\boldsymbol{\beta}}_{\M, n} - {\boldsymbol{\alpha}}_{\S,n}^{\mytrans} {\boldsymbol{\beta}}_{\M, n})$ by 
$	\hat{\boldsymbol{\alpha}}_{\S,n}^{*\mytrans}\hat{\boldsymbol{\beta}}_{\M,n}^*-\hat{\boldsymbol{\alpha}}_{\S,n}^{\mytrans}\hat{\boldsymbol{\beta}}_{\M,n}=\hat{\boldsymbol{\alpha}}_{\S,n}^{\mytrans}(\hat{\boldsymbol{\beta}}_{\M,n}^*-\hat{\boldsymbol{\beta}}_{\M,n})+{\boldsymbol{\beta}}_{\M,n}^{\mytrans}(\hat{\boldsymbol{\alpha}}_{\S,n}^*-\hat{\boldsymbol{\alpha}}_{\S,n})+(\hat{\boldsymbol{\alpha}}_{\S,n}^*-\hat{\boldsymbol{\alpha}}_{\S,n})^{\top}(\hat{\boldsymbol{\beta}}_{\M,n}^*-\hat{\boldsymbol{\beta}}_{\M,n}) 
$; 
 \item[(ii)]  when $(\boldsymbol{\alpha}_S^{\top}, \boldsymbol{\beta}_M^{\top}) = \mathbf{0}$,
 $\vec{\mathbb{R}}_{n}^*( \boldsymbol{b}_{\alpha}, \boldsymbol{b}_{\beta} )\overset{d^*}{\leadsto} 
n(\hat{\boldsymbol{\alpha}}_{\S,n}^{\mytrans} \hat{\boldsymbol{\beta}}_{\M, n} - {\boldsymbol{\alpha}}_{\S,n}^{\mytrans} {\boldsymbol{\beta}}_{\M, n})$ as $n(\hat{\boldsymbol{\alpha}}_{\S,n}^{\mytrans}\hat{\boldsymbol{\beta}}_{\M,n}-{\boldsymbol{\alpha}}_{\S,n}^{\mytrans}{\boldsymbol{\beta}}_{\M,n})=\vec{\mathbb{R}}_{1,n}(\boldsymbol{b}_{\alpha}, \boldsymbol{b}_{\beta})$. 
\end{itemize}
Similarly to the proof in Section \ref{sec:pfbootstrapprodcomb}, 
by the property of OLS estimator of coefficients in the linear model, we can obtain 
\begin{align*}
	\vec{\myindicator}_{\lambda_n}^* \overset{\mathrm{P}^*}{\leadsto}  \myindicator\{\boldsymbol{\alpha}_S =\mathbf{0}  \text{ and } \boldsymbol{\beta}_M =\mathbf{0} \}, \hspace{1em} \text{ and } \hspace{1em}  1-\vec{\myindicator}_{\lambda_n}^* \overset{\mathrm{P}^*}{\leadsto}  \myindicator\{\boldsymbol{\alpha}_S\neq \mathbf{0} \text{ or } \boldsymbol{\beta}_M \neq \mathbf{0} \}.
\end{align*}
Theorem \ref{thm:multbootjoint} follows similarly to the arguments in Section \ref{sec:pfbootstrapprodcomb}. 


\clearpage
\subsubsection{Numerical Results of the Multivariate  Joint Test} \label{sec:numericresultsmulti}
We extend the simulation model in Section  \ref{sec:sim} to settings with multiple mediators. 
Specifically, we consider the following linear structural equation model, 
\begin{eqnarray}
	\M_j &=&  \alpha_{\S,j} \S + \alpha_{I,j}+ \alpha_{X,1,j} X_1 + \alpha_{X,2,j} X_2  +\epsilon_{M,j}, \quad \text{ for } j=1,\ldots, J,  \label{eq:simmodelmult}  \\ 
		\Y & = & \sum_{j=1}^{\nM}\beta_{\M,j} \M_j +  \beta_I + \beta_{X,1} X_1 + \beta_{X,2} X_2  + \DE \S + \eY.  \notag
\end{eqnarray}
In the model \eqref{eq:simmodelmult}, the exposure variable $\S$ is simulated from a Bernoulli distribution with the success probability equal to 0.5; the covariate $X_1$ is continuous and simulated from a standard normal distribution $\mathcal{N}(0,0.5^2)$; the covariate $X_2$ is discrete and  simulated from  a Bernoulli distribution with the success probability equal to 0.5; the error terms $\epsilon_{M,j}$ and $\eY$ are simulated independently from $\mathcal{N}(0,\sigma_{\eM}^2)$ and $ \mathcal{N}(0,\sigma_{\eY}^2)$, respectively.
We set the  parameters  $(\alpha_{I,j}, \alpha_{X,1,j}, \alpha_{X,2,j}) = (1, 1, 1)$ for $j=1,\ldots,\nM$, $(\beta_I, \beta_{X,1}, \beta_{X,2}) = (1, 1, 1)$, $\DE=1$, and $\sigma_{\eY}=\sigma_{\eM}=0.5$. 
We present the simulation results when $n=200,$ and $\nM=20$, and we use a fixed tuning parameter $\lambda=2$ across all scenarios. 
For each simulated data,  the adaptive bootstrap is conducted 500 times. Under each fixed null hypothesis, we simulate data over 1000 Monte Carlo replications to estimate the distribution of $p$-values. 
Two existing approaches to testing this group-level mediation effect include: 
Product Test based on Normal Product distribution (PT-NP) \citep{huang2016hypothesis}
and Product Test based on Normality (PT-N) \citep{huang2016hypothesis}. 

\paragraph{Part 1: Type \RNum{1} error under $H_0$.}\quad 
We consider different types of null hypotheses given in Table \ref{tb:nulltypes}.  
\begin{table}[!htbp]
\caption{Different types of null hypotheses for multivariate mediators}\label{tb:nulltypes}
\centering
\begin{tabular}{c|c|c}
Case & $\boldsymbol{\alpha}_{\S}$ & $\boldsymbol{\beta}_{\M}$  \\ \hline
1 & $\boldsymbol{0}_{\nM}$ & $\boldsymbol{0}_{\nM}$ \\[5pt]
2 & $\boldsymbol{1}_{\nM}$ & $\boldsymbol{0}_{\nM}$ \\[5pt]
3 & $\boldsymbol{0}_{\nM}$ & $\boldsymbol{1}_{\nM}$ \\[5pt]
4 & $(\boldsymbol{1}_{\nM/2},\, \boldsymbol{0}_{\nM/2})$ & $(\boldsymbol{0}_{\nM/2},\, \boldsymbol{1}_{\nM/2})$ \\[5pt]
5 & $(\boldsymbol{0}_{\nM/2},\, \boldsymbol{1}_{\nM/2})$ &  $(\boldsymbol{1}_{\nM/2},\, \boldsymbol{0}_{\nM/2})$ \\[5pt]
6 & $\boldsymbol{1}_{\nM}$ & $(\boldsymbol{1}_{\nM/2},\, \boldsymbol{-1}_{\nM/2})$ \\[5pt]
7 & $(\boldsymbol{1}_{\nM/2},\, \boldsymbol{-1}_{\nM/2})$ & $\boldsymbol{1}_{\nM}$ \\[5pt]
\end{tabular}
\end{table}

 \begin{figure}[!htbp]
 \captionsetup[subfigure]{labelformat=empty} 
     \centering
       \caption{Q-Q plots of $p$-values under different types of null hypotheses with $n = 200$ and $\nM=20$.} \label{fig:multifixnulln200}
\hspace{10em}
   \begin{subfigure}[b]{0.29\textwidth}
  \caption{\footnotesize{Case 1}}
     \includegraphics[width=\textwidth]{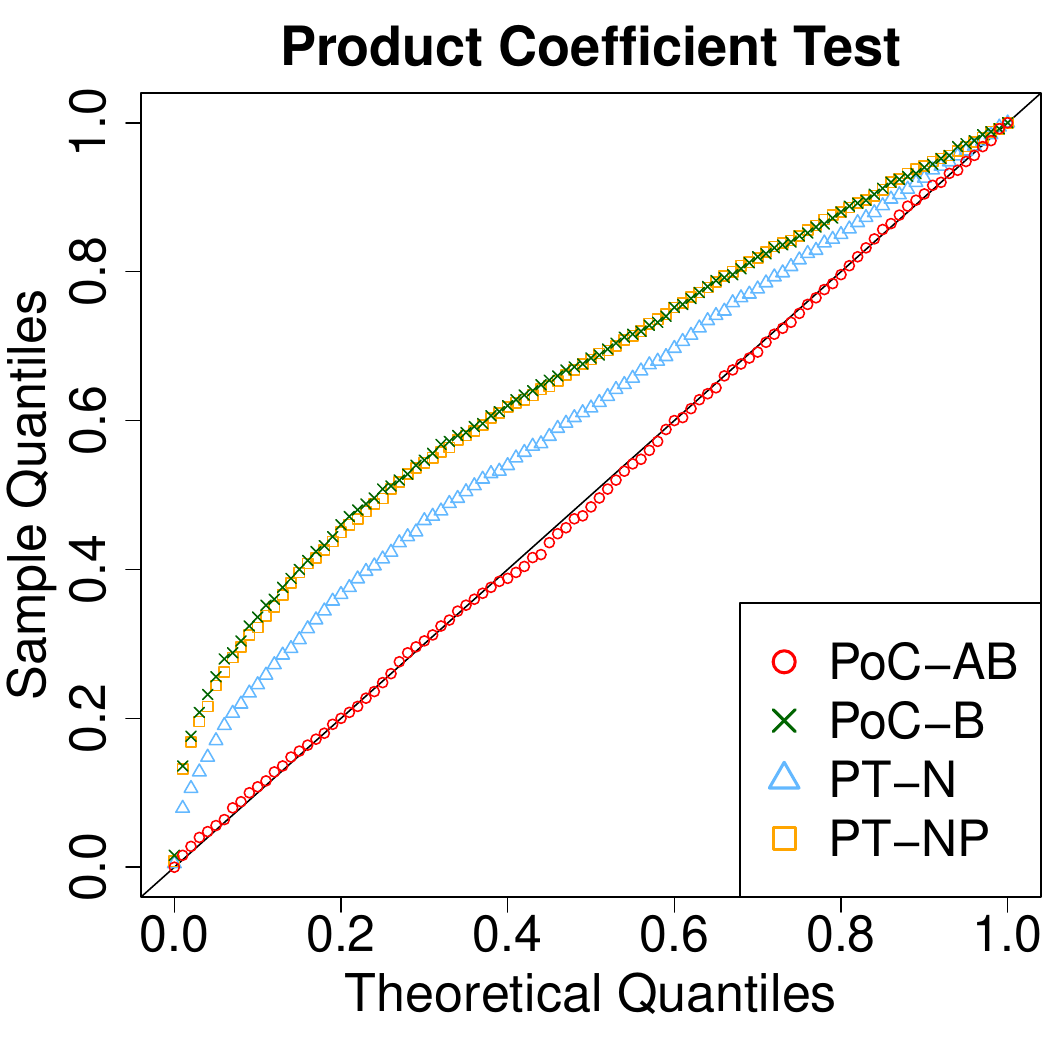}
   \end{subfigure} \ 
\hspace{10em}

  \begin{subfigure}[b]{0.29\textwidth}
      \caption{\footnotesize{Case 2}}
  \includegraphics[width=\textwidth]{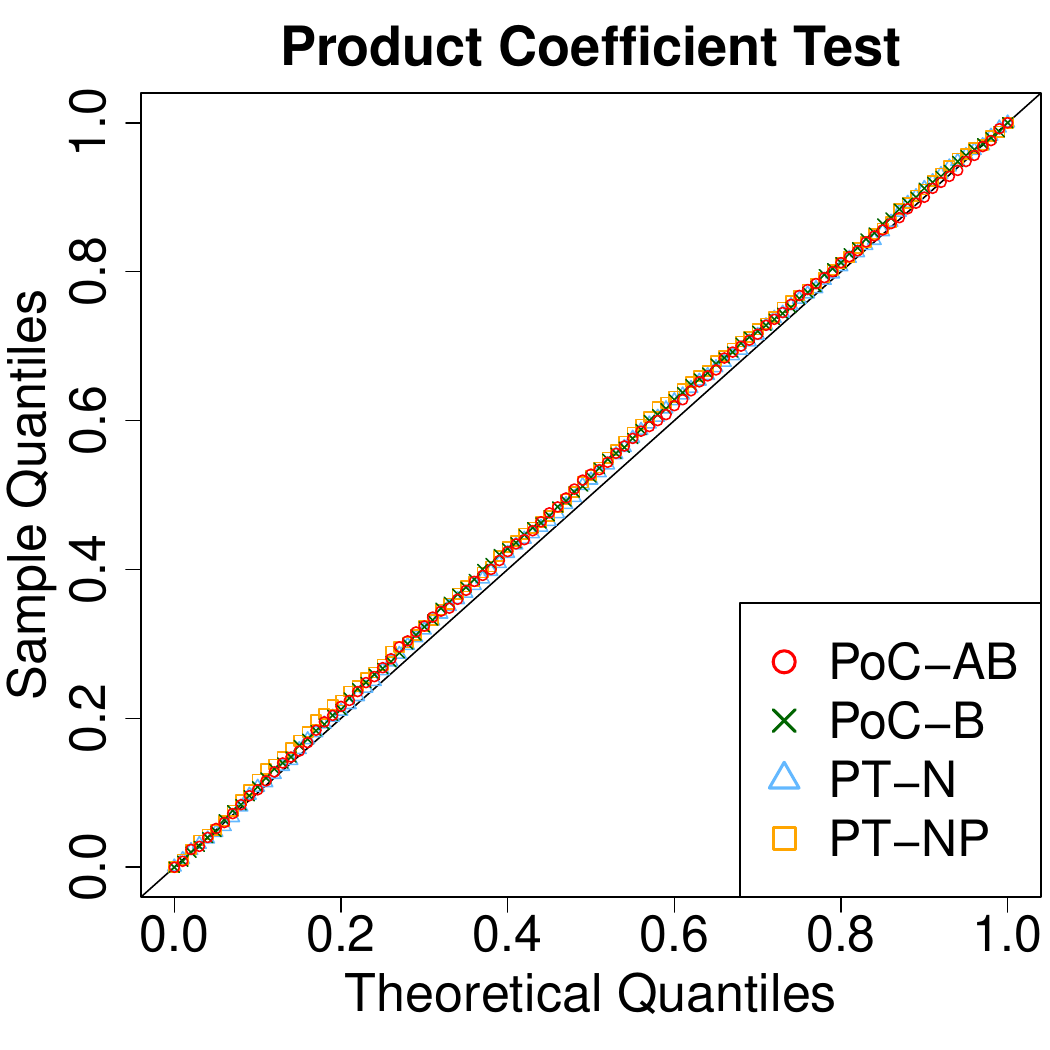}
     \end{subfigure} \  
       \begin{subfigure}[b]{0.29\textwidth}
   \caption{\footnotesize{Case 3}}
     \includegraphics[width=\textwidth]{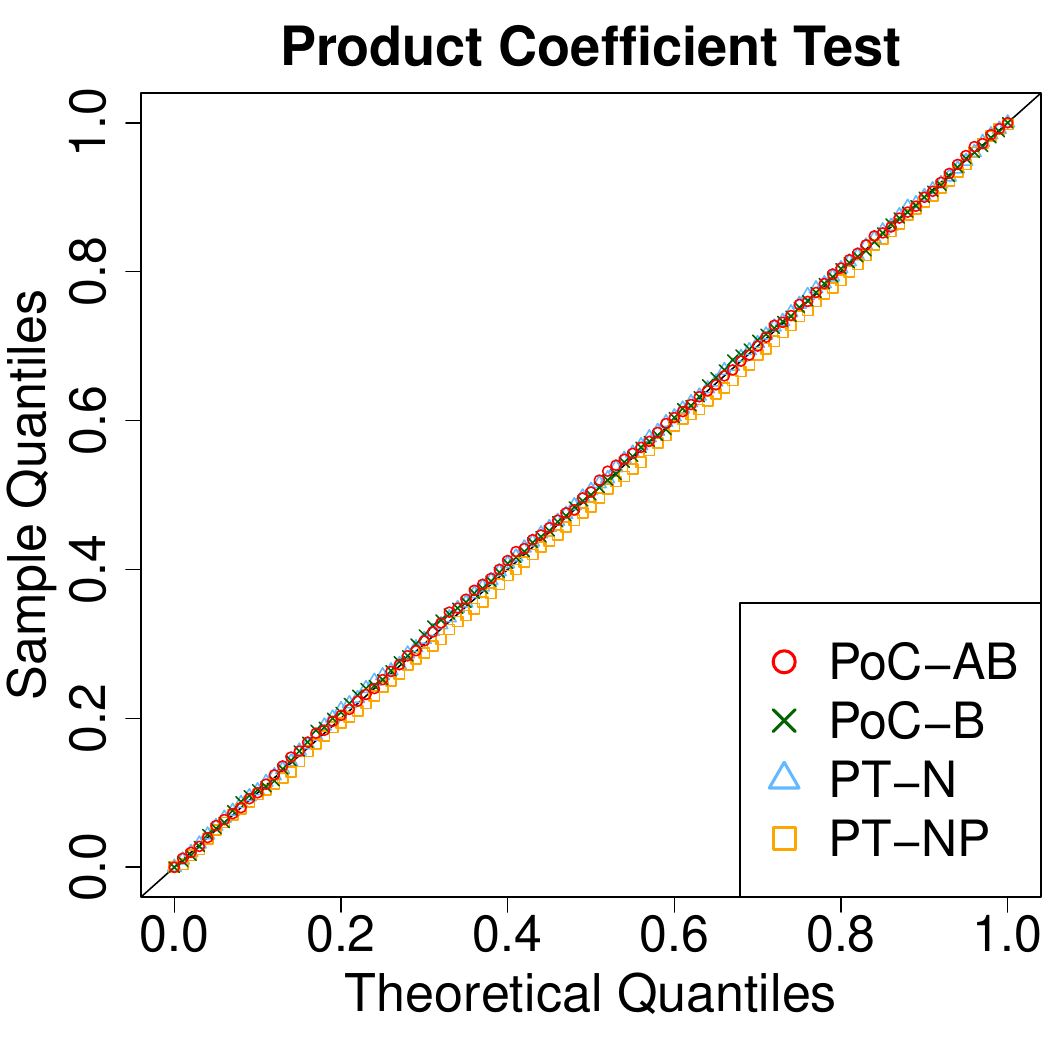}     \end{subfigure} \ 
  \begin{subfigure}[b]{0.29\textwidth}
  \caption{\footnotesize{Case 4}}
     \includegraphics[width=\textwidth]{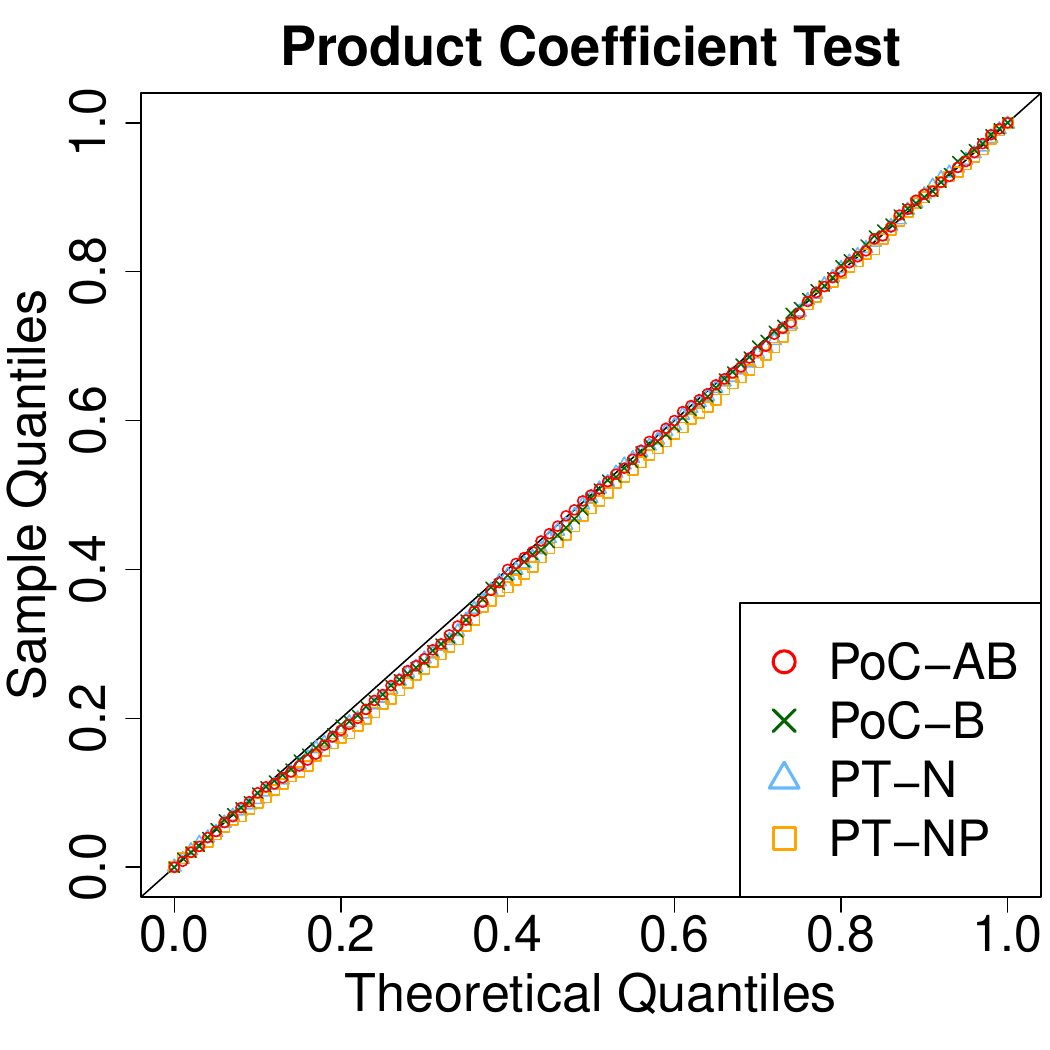}
   \end{subfigure} 
 \vspace{0.7em}

         \begin{subfigure}[b]{0.29\textwidth}
  \caption{\footnotesize{Case 5}}
     \includegraphics[width=\textwidth]{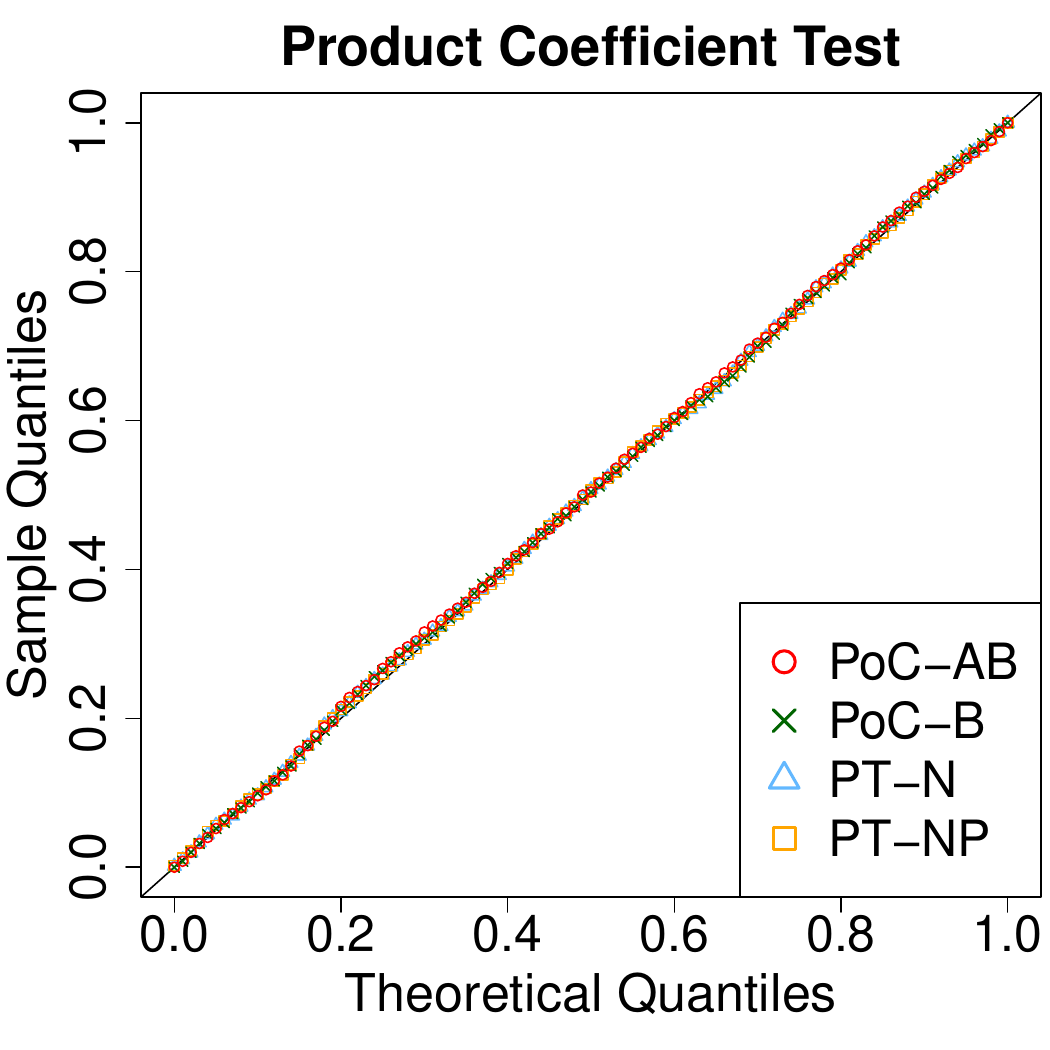}
   \end{subfigure} 
         \begin{subfigure}[b]{0.29\textwidth}
  \caption{\footnotesize{Case 6}}
     \includegraphics[width=\textwidth]{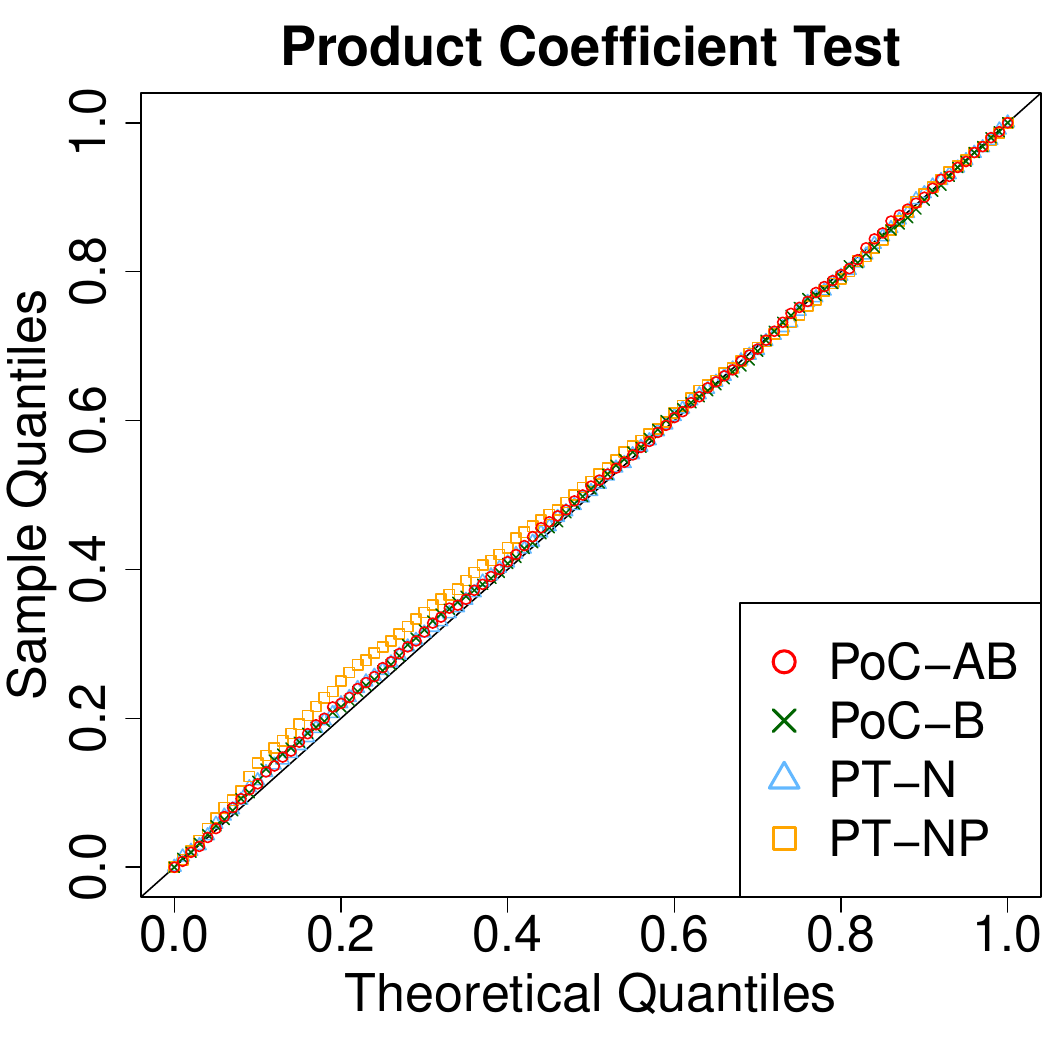}
   \end{subfigure} 
        \begin{subfigure}[b]{0.29\textwidth}
  \caption{\footnotesize{Case 7}}
     \includegraphics[width=\textwidth]{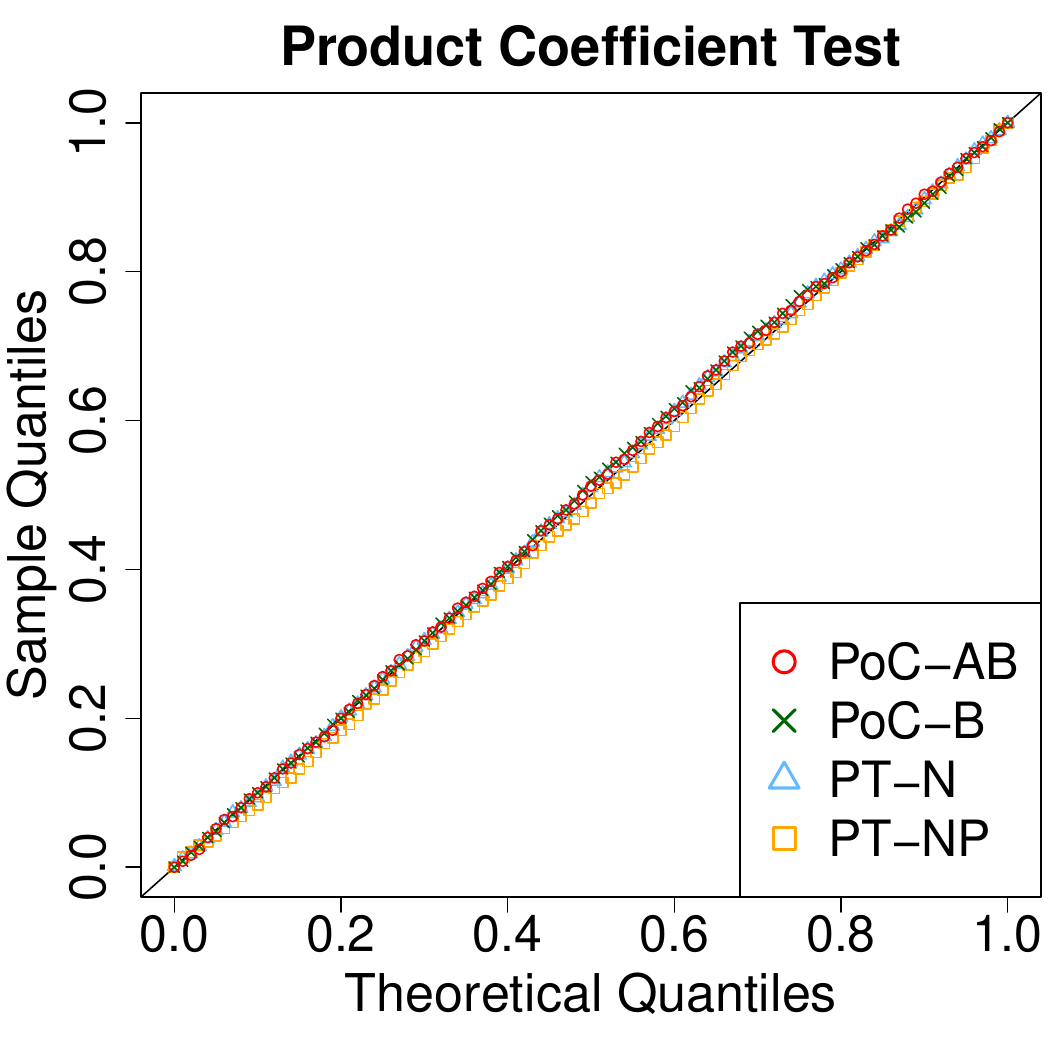}
   \end{subfigure}    
\end{figure}

\newpage
\paragraph{Part 2: Statistical power under $H_A$.}\quad 
Under the alternative hypotheses, we consider $\boldsymbol{\alpha}_{\S}=a\times \mathbf{1}_{\nM}$ and $\boldsymbol{\beta}_{\M}=b\times \mathbf{1}_{\nM}$.
 We fix the size of the mediation effect $\boldsymbol{\alpha}_{\S}^{\mytrans}\boldsymbol{\beta}_{\M}$ and vary the ratio $a/b$.
Figure \ref{fig:powermulti}  presents the empirical rejection rates (power) versus the ratio $a/b$ for $n\in \{200, 500\}$, respectively. 
When $n=200$, we fix $\boldsymbol{\alpha}_{\S}^{\mytrans}\boldsymbol{\beta}_{\M}=0.1$; when $n=500$, we fix $\boldsymbol{\alpha}_{\S}^{\mytrans}\boldsymbol{\beta}_{\M}=0.04$. 
\begin{figure}[!htbp] 
\centering
\caption{\large Empirical rejection rates (power) versus $a/b$} \label{fig:powermulti}
\includegraphics[width=1\textwidth]{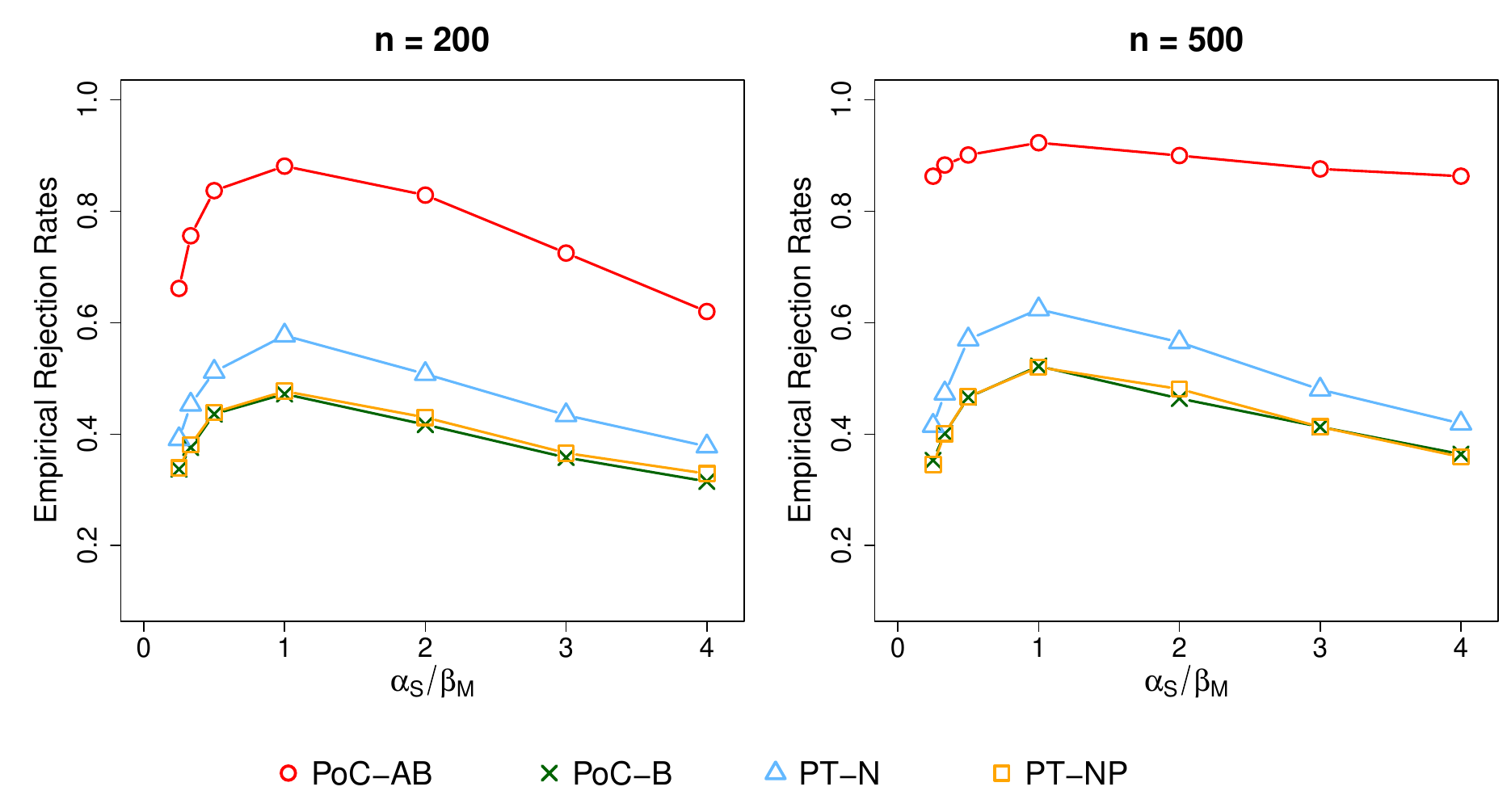} 
\end{figure}

\clearpage 
\section{Extensions Beyond Linear Models} \label{sec:discussiononglm}
This section provides supplementary details to Sections \ref{sec:asymbinarymedout} and \ref{sec:modelbinmed}  of the main text.

\subsection{Supplementary Theoretical Details}  \label{sec:singissue}

\begin{condition} \label{cond:phiintegration}
The link function $h^{-1}(\cdot)$ in \eqref{eq:extendmodel1} is strictly monotone. 
Moreover, let  $P_{\nu}( M \leqslant m)$  denote the cumulative distribution function of $M\mid \tau_S S+ \X^{\mytrans}\alphaX = \nu$. 
Assume that given any $m$ in the support of distribution,  $P_{\nu}( M \leqslant m)$ is continuously differentiable with respect to $\nu$, and $\frac{\partial P_{\nu}( M \leqslant m)}{\partial \nu}$ is always positive or always negative when  $P_{\nu}( M \leqslant m)$ is not a constant with respect to $\nu$. 
\end{condition}
Condition \ref{cond:phiintegration} can be satisfied under various distributions. 
(i) Bernoulli distribution (logistic regression): 
 $h^{-1}(\nu)=g(\nu)$. 
(ii) Normal distribution (linear regression) with fixed variance: $h^{-1}(\nu)=\nu$.
(iii) Poisson distribution: $h^{-1}(\nu)=\exp(\nu)$. 
In the following, Proposition \ref{prop:singularitylogiresponse} Part 1 shows that the null hypothesis \eqref{eq:ornull} is composite similarly to that under the linear SEMs, and the Part 2  further  specifies the singularity issue under the composite null hypothesis.

\begin{condition}\label{cond:designmatrixlogistic}
Let $D_{\alpha}=(S, \X^{\mytrans})^{\mytrans}$,   
 $g_{\alpha}=g(\S \alphaS + \X^{\mytrans}\alphaX)$,
 $D_{\beta}=(M, S, \X^{\mytrans})^{\mytrans}$, 
 and 
$g_{\beta}=g(M\betaM +\S \tau_S + \X^{\mytrans}\betaX)$.  Assume $ \mathrm{E}\{g_{\alpha}(1-g_{\alpha})D_{\alpha}D_{\alpha}^{\mytrans} \}  
$ and $\mathrm{E}\{g_{\beta}(1-g_{\beta})D_{\beta}D_{\beta}^{\mytrans}\}$ are positive definite matrices with bounded eigenvalues. 
\end{condition}
Condition \ref{cond:designmatrixlogistic} is a general regularity condition on the design matrix, which is similar to Condition \ref{cond:inversemoment} under linear SEM in the main text.

\begin{remark}\label{rm:extintegrationdef}
Sections \ref{sec:asymbinarymedout} and \ref{sec:modelbinmed}  consider natural indirect effects/mediation effects conditioning on covariates $\X$ following \cite{vanderweele2010odds}.
On the other hand, \cite{imai2010general} proposed to examine the average NIE that marginalizes the distribution of the covariates $\X$. 
Examining the conditional NIE is mainly for technical convenience. 
The conditional NIE $=0$ can give a sufficient condition for the average NIE $=0$. 
Some results of conditional NIE could be  established for average NIE similarly.    
For instance, under Scenario II, if $h^{-1}(\cdot)$ is strictly monotone, conclusions in Proposition \ref{prop:singularitylogitm} also hold for the average $\mathrm{NIE}_{s\mid s^*}(s):=\int\mathrm{NIE}_{s\mid s^*}(s,\boldsymbol{x})\mathrm{d}P_{\X}(\boldsymbol{x})$, as the sign of $\mathrm{NIE}_{s\mid s^*}(s,\boldsymbol{x})$ for all $\boldsymbol{x}$ are the same.    
Similarly, under Scenario I, results similar to Proposition \ref{prop:singularitylogiresponse} can also be established for the average $\mathrm{OR}_{s\mid s^*}(s):=\int\mathrm{OR}_{s\mid s^*}(s,\boldsymbol{x})\mathrm{d}P_{\X}(\boldsymbol{x})$ if $\mathrm{OR}_{s\mid s^*}(s,\boldsymbol{x})-1$ is non-negative/non-positive for all $\boldsymbol{x}$.  
As an example, this could hold when $\M$ in  \eqref{eq:extendmodel1} is binary and follows a logistic regression model, which is a case studied in Section \ref{sec:asymbinarymedout} in detail.  
\end{remark}

\begin{condition}\label{cond:designmatrixbinarymediator}
Let $D_{\alpha}=(S, \X^{\mytrans})^{\mytrans}$ and   
 $g_{\alpha}=g(\S \alphaS + \X^{\mytrans}\alphaX)$. Assume $ \mathrm{E}\{g_{\alpha}(1-g_{\alpha})D_{\alpha}D_{\alpha}^{\mytrans} \}  
$ is a positive definite matrix with bounded eigenvalues.
Assume conditions on the model of the outcome $Y$ in  Condition \ref{cond:inversemoment} in the main text . 
\end{condition}
Condition \ref{cond:designmatrixbinarymediator} is similar to Condition  \ref{cond:designmatrixlogistic} and  Condition  \ref{cond:inversemoment}.

\subsection{Simulations under Non-Linear Models}\label{sec:nonlineartwo}

\paragraph{Non-linear Scenario I.}
\noindent 
For $i=1,\ldots, n$, we generate binary mediators $M_i$ and  outcomes $\Y_i$ follow Bernoulli distributions with conditional means  
$\mathrm{E}(\M_i \mid \S_i, \X_i )=g(\alphaS \S_i + \alpha_I + \alpha_X X_i )$, and $\mathrm{E}(\Y_i \mid \S_i, \M_i, \X_i )  = g(\betaM \M_i+ \beta_I + \beta_X X_i +\tau_S S_i)$,  respectively, where $g(x)=\mathrm{logit}^{-1}(x)=e^x/(1+e^x)$.  
We take $S_i$ and $X_i \sim \mathrm{Bernoulli}(0.5)$, independently,    
$\alpha_I=\beta_I=-1$, and $\alpha_X=\beta_X=\tau_S=1$. 
Following the definition in Section \ref{sec:asymbinarymedout}, 
we examine the NIE when $s=0$, $s^*=1$, $X=0$, that is, $
\log  \mathrm{OR}^{\mathrm{NIE}}_{s\mid s^*}(s,\boldsymbol{x})=\log  \mathrm{OR}^{\mathrm{NIE}}_{0\mid 1}(0,0) = 	l(P_0) -l(P_1),
$ 
where 
$
P_0 = 	 d_{\beta,n} \times g\big( \alpha_I \big) + g(\beta_I )$, and $P_1= d_{\beta} \times g(\alphaS + \alpha_I) + g(\beta_I)$. We set $\lambda_{\alpha}=1.9\sqrt{n}/\log n$ and $\lambda_{\beta}=1.9\sqrt{n}/\log n$.   

\paragraph{Non-linear Scenario II}
\noindent For $i=1,\ldots, n$, we generate $\M_i$ as i.i.d. Bernoulli random variables with the conditional mean $\mathrm{E}(\M_i \mid \S_i, X_i )=g(\alphaS \S_i + \alpha_I + \alpha_X X_i )$,  
and $\Y_i =\betaM \M_i  + \beta_I + \beta_X X_i + \tau_S S_i + \epsilon_i$. 
Similarly to the  scenario I above,
we take  $\S_i$ and $X_i \sim$   Bernoulli(0.5) and $\epsilon_i\sim N(0,0.5^2)$   i.i.d.,
In this case, we test the conditional natural indirect effect  in \eqref{eq:h0binarymediator} when $x=0$, $s=1$, and $s^*=0$, i.e., 
$
\text{NIE}_{1|0}(0) = \betaM\{g(\alphaS + \alpha_I) - g( \alpha_I) \}. 	
$ 
 $\lambda_{\alpha}=1.9\sqrt{n}/\log n$ and $\lambda_{\beta}=3.3\sqrt{n}/\log n$. 

\paragraph{Results.}
\noindent \textit{(1) Under $H_0:$ Type-I error control.} We estimate $p$-values under three different types of  null hypothesis over 1000 repetitions.
 We present QQ plots of  $p$-values obtained under Scenarios I and II in Figures \ref{fig:logisticoutcomen500} and \ref{fig:logisticn500}, respectively.  
Similarly to the linear cases, 
we observe that both the classical bootstrap (B) and the adaptive bootstrap (AB) give uniformly distributed $p$-values under $H_{0,1}$ and $H_{0,2}$, 
whereas under $H_{0,3}$, 
the classical bootstrap would become overly conservative,
and the adaptive bootstrap can still give uniformly distributed $p$-values. 
Specifically, we fix $\alphaS\betaM=0.5^2$ in Scenario I and $\alphaS\betaM=0.25^2$ in Scenario II. 
We observe that the adaptive bootstrap can improve the power of the classical bootstrap. 

\noindent \textit{(2) Under $H_A:$ Power.}
Under the alternative hypotheses,
we fix the product $\alphaS\betaM$,
and vary the ratio $\alphaS/\betaM$.
We present the empirical power versus the ratio $\alphaS/\betaM$ in Figure    \ref{fig:logisticpowern500}. 
 

 \begin{figure}[!htbp]
 \captionsetup[subfigure]{labelformat=empty} 
     \centering
       \caption{Scenario 1: Q-Q plots of $p$-values with $n = 500$.} \label{fig:logisticoutcomen500}
   \begin{subfigure}[b]{0.3\textwidth}
  \caption{{$H_{0,1}: (\alphaS,\betaM)=(0,0)$}}
     \includegraphics[width=\textwidth]{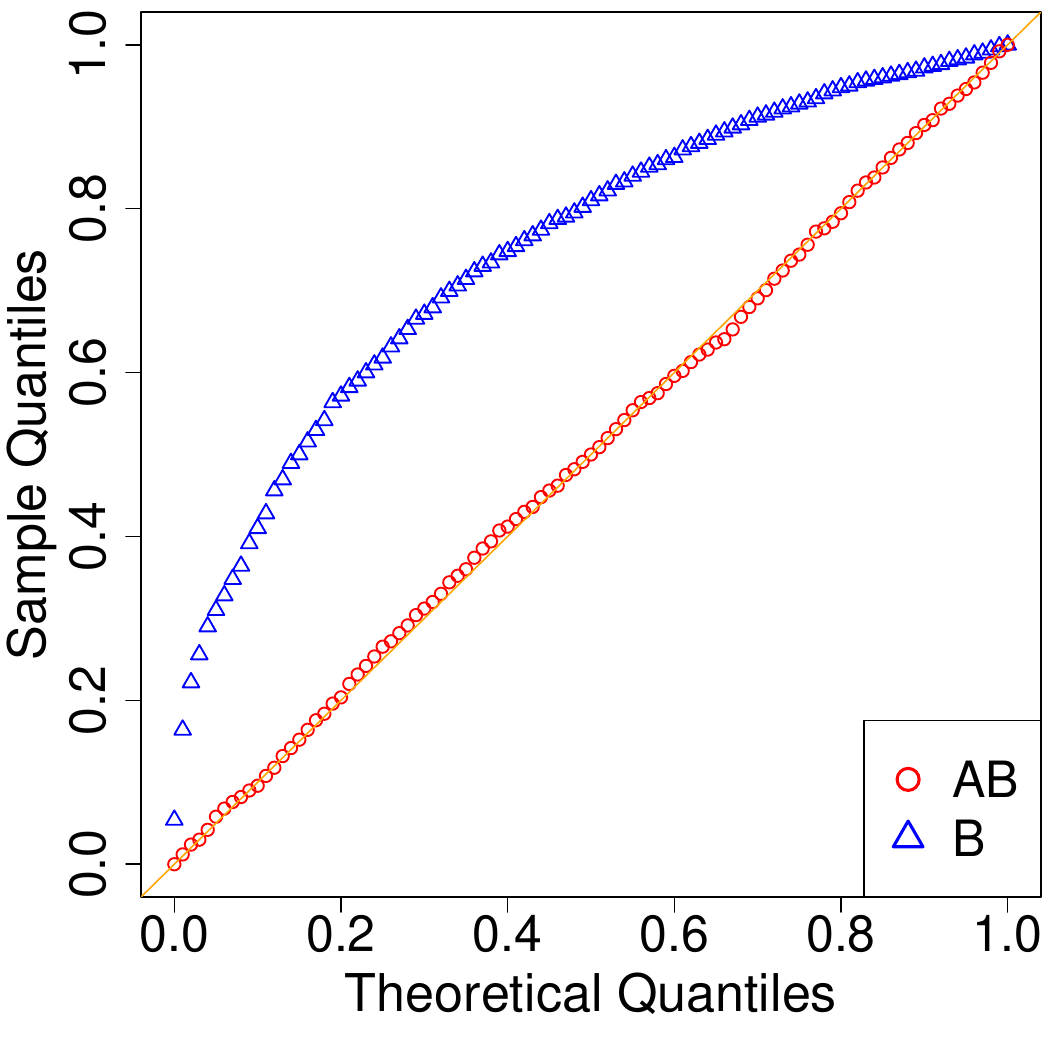}
   \end{subfigure} \ \ \ 
  \begin{subfigure}[b]{0.3\textwidth}
  \caption{$H_{0,2}: (\alphaS,\betaM)=(2,0)$}
     \includegraphics[width=\textwidth]{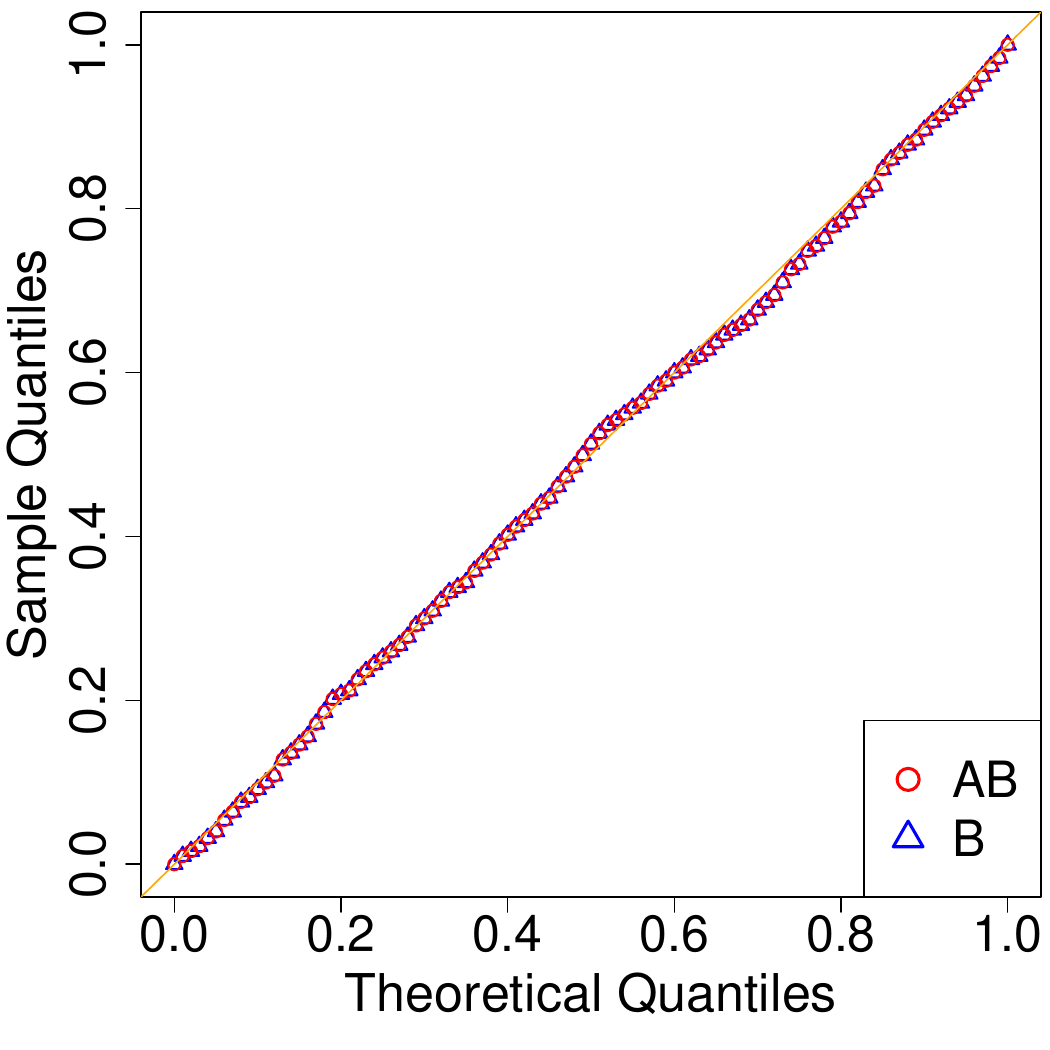}
   \end{subfigure}\ \ \ 
  \begin{subfigure}[b]{0.3\textwidth}
  \caption{$H_{0,3}: (\alphaS,\betaM)=(0,2)$}
     \includegraphics[width=\textwidth]{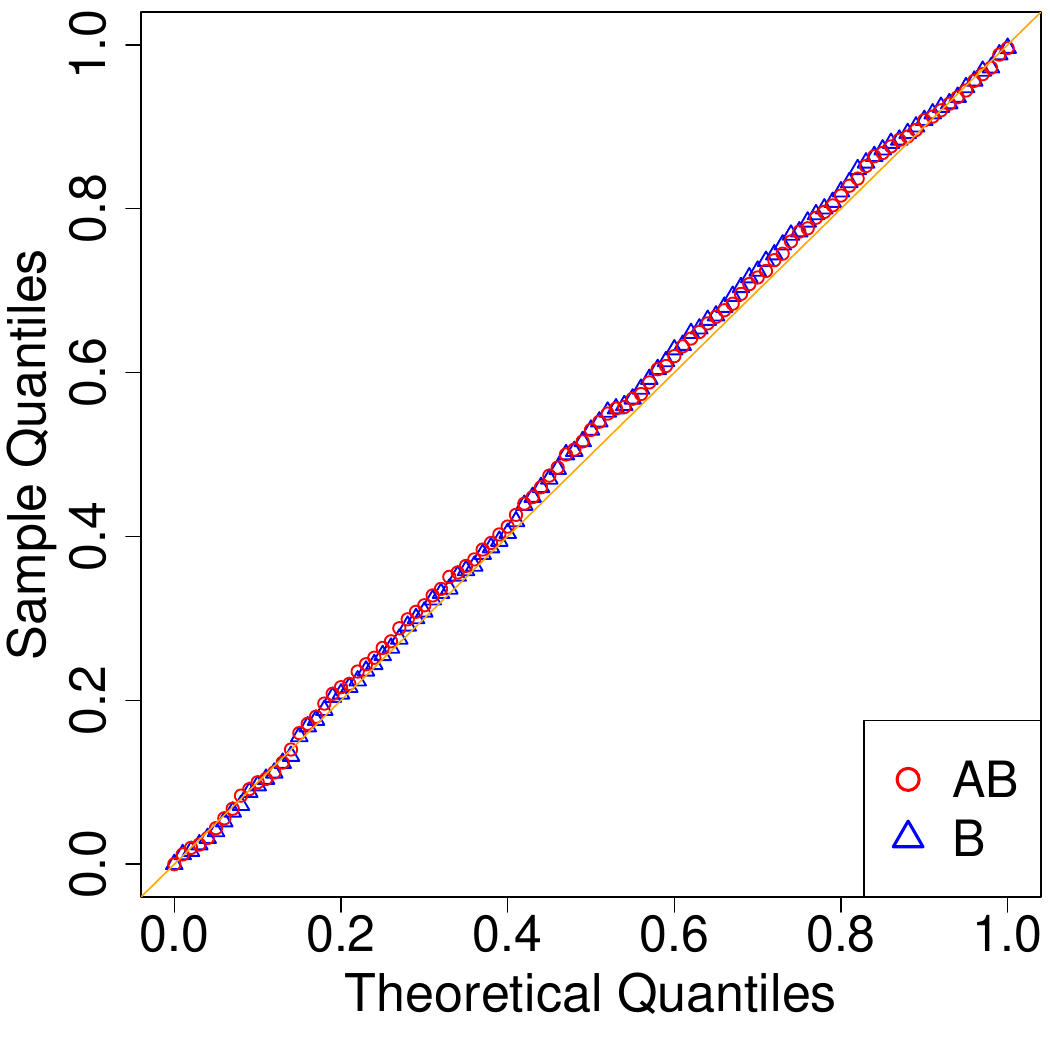}
   \end{subfigure}
\end{figure}

 \begin{figure}[!htbp]
 \captionsetup[subfigure]{labelformat=empty} 
     \centering
       \caption{Scenario 2: Q-Q plots of $p$-values with $n = 500$.} \label{fig:logisticn500}
   \begin{subfigure}[b]{0.3\textwidth}
  \caption{{$H_{0,1}: (\alphaS,\betaM)=(0,0)$}}
     \includegraphics[width=\textwidth]{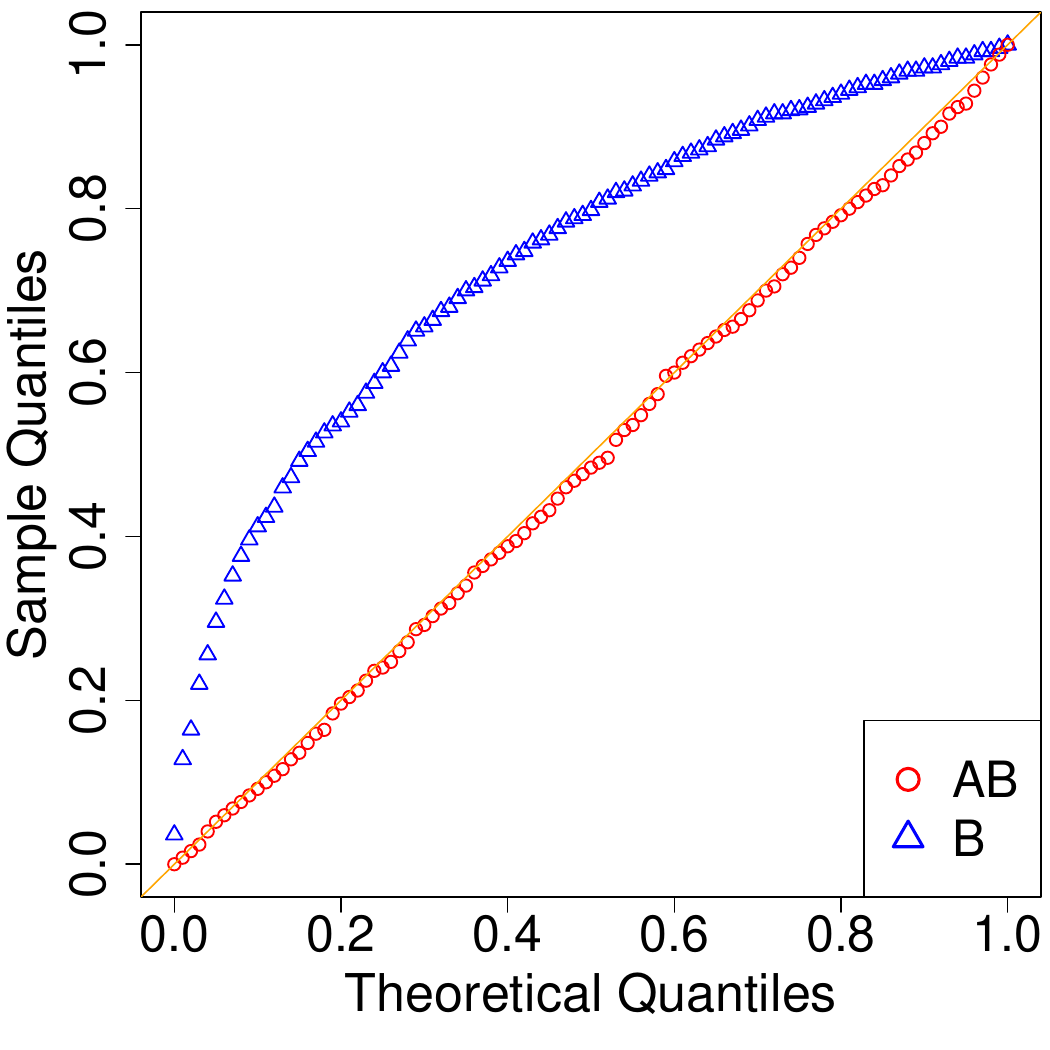}
   \end{subfigure} \ \ \ 
  \begin{subfigure}[b]{0.3\textwidth}
  \caption{$H_{0,2}: (\alphaS,\betaM)=(2,0)$}
     \includegraphics[width=\textwidth]{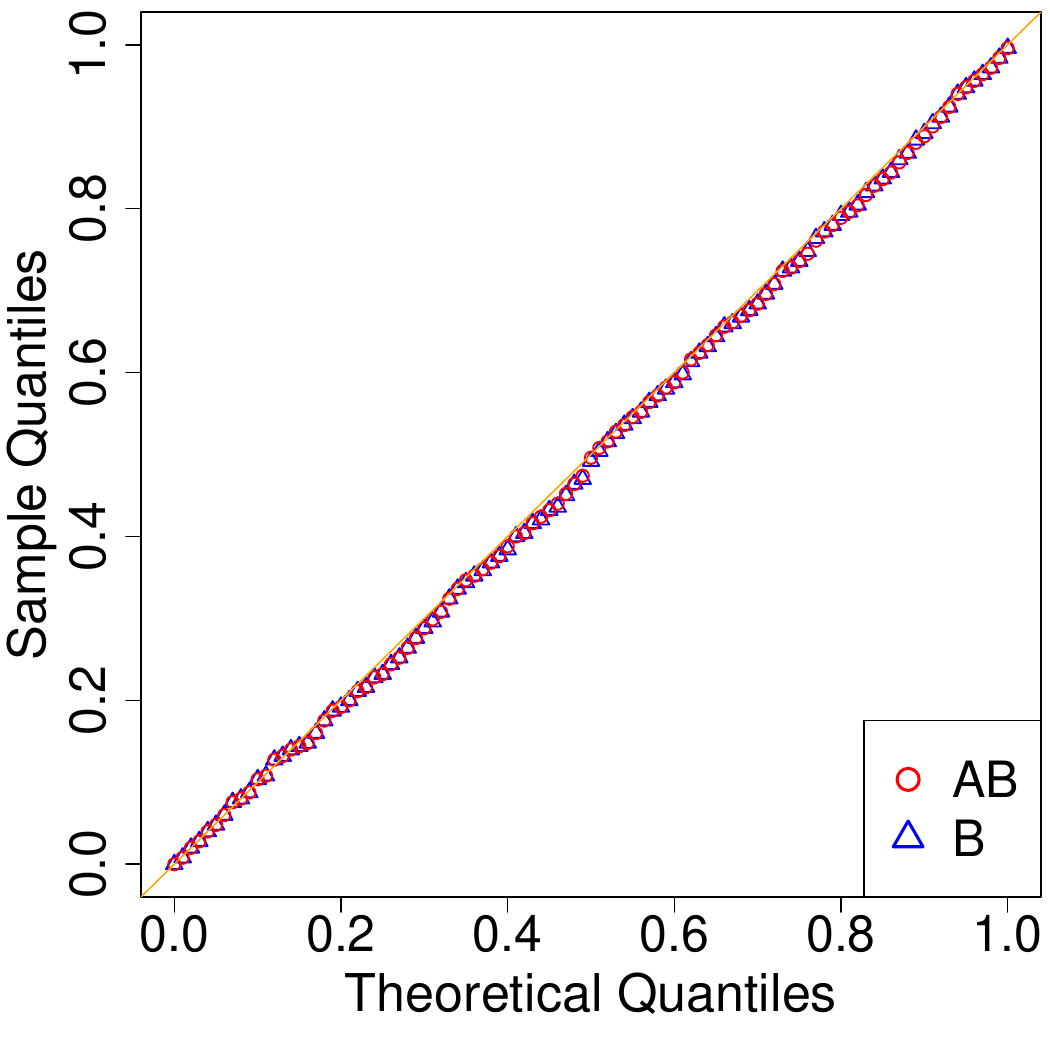}
   \end{subfigure}\ \ \ 
  \begin{subfigure}[b]{0.3\textwidth}
  \caption{$H_{0,3}: (\alphaS,\betaM)=(0,2)$}
     \includegraphics[width=\textwidth]{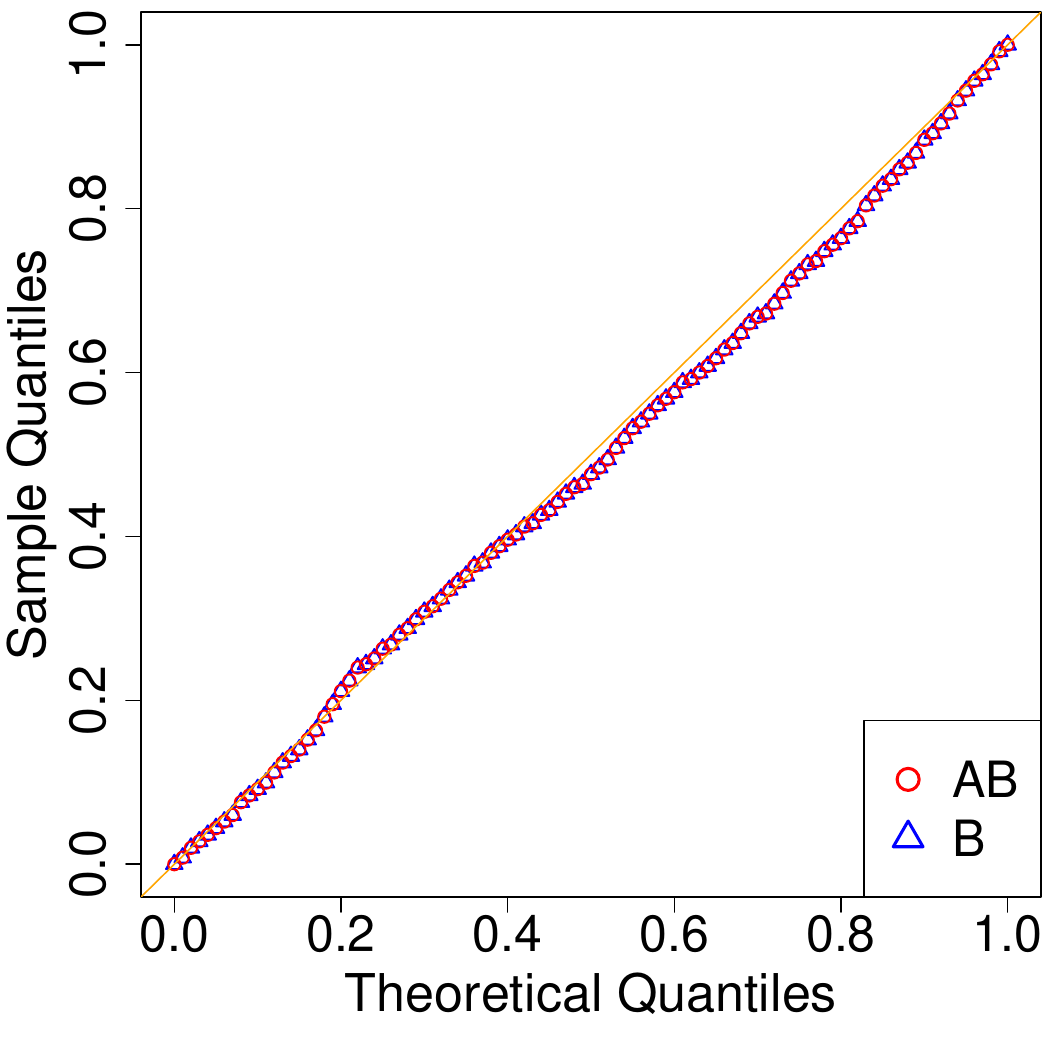}
   \end{subfigure}
\end{figure}

\begin{figure}[!htbp]
 \captionsetup[subfigure]{labelformat=empty} 
     \centering
       \caption{Empirical power versus the ratio $\alphaS/\betaM$.} \label{fig:logisticpowern500}
   \begin{subfigure}[b]{0.48\textwidth} 
     \includegraphics[width=\textwidth]{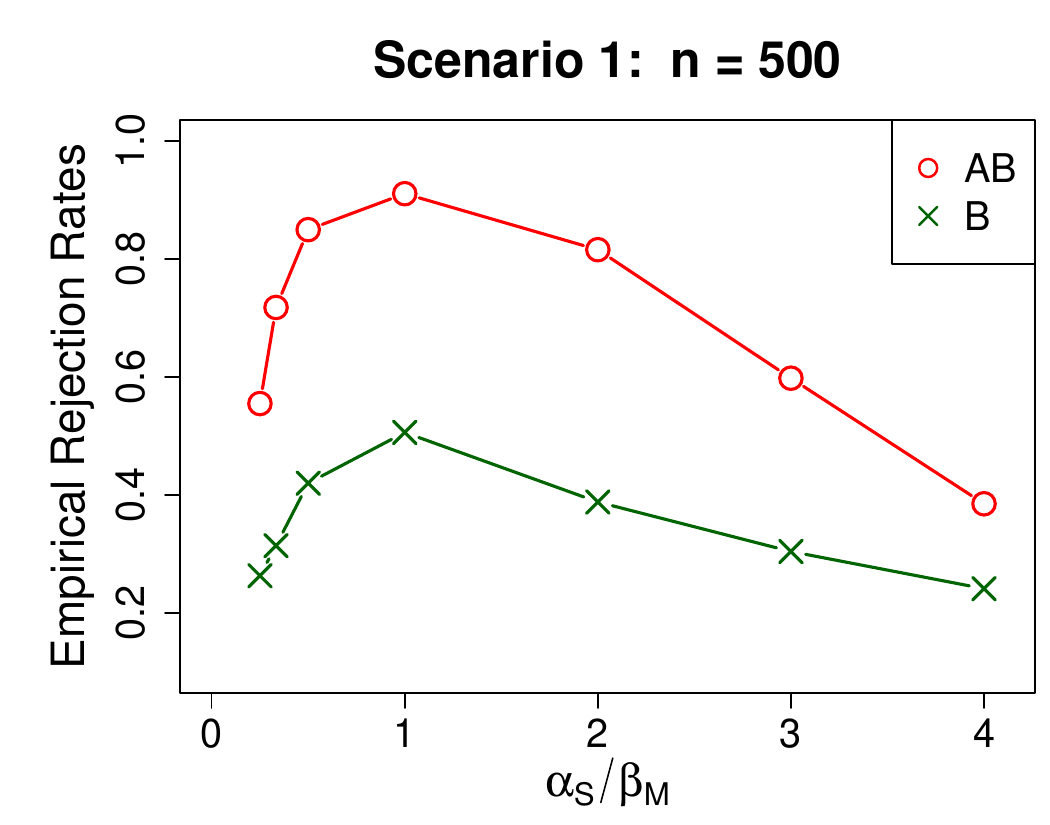}
   \end{subfigure}\quad  
  \begin{subfigure}[b]{0.48\textwidth}
     \includegraphics[width=\textwidth]{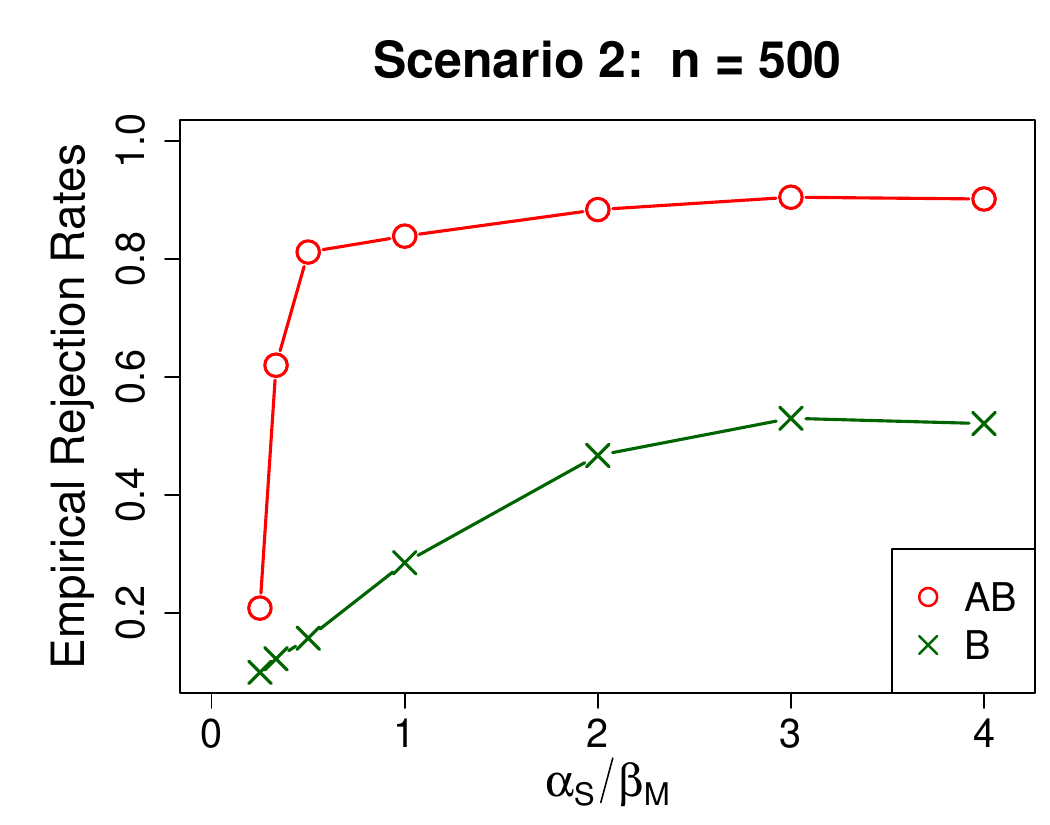}
   \end{subfigure}
\end{figure}

\subsection{Proof of Proposition \ref{prop:singularitylogiresponse}}
\paragraph{Proof of Part 1.}
Let $\phi(m; \nu)$ denote the conditional density of $M\mid (\tau_S s+ \X^{\mytrans}\alphaX = \nu)$. We have 
\begin{align*}
P_s:=&~P\left\{Y(s, M(s))=1 \mid  \X = \boldsymbol{x} \right\}=\int g( \betaM m + \mu_{1,s} ) \phi( m ; \alphaS s+\mu_{2} )	 \mathrm{d} m, \notag\\
P_{s^*}:=&~P\left\{Y(s, M(s^*))=1 \mid  \X= \boldsymbol{x}  \right\}=\int g( \betaM m + \mu_{1,s} ) \phi( m ; \alphaS s^*+\mu_{2} )	 \mathrm{d} m, 
\end{align*}
where we define  $\mu_{1,s}=\tau_S s + \boldsymbol{x}^{\mytrans}\betaX$, $\mu_{2}=\boldsymbol{x}^{\mytrans}\alphaX$, and  $g(x)=\mathrm{logit}^{-1}(x)=e^x/(1+e^x)$. 
$H_0$ in \eqref{eq:ornull} is equivalent to $P_{s}-P_{s^*}=0$. 
First, if $\betaM=0$, 
\begin{align*}
P_{s}-P_{s^*}	=g(\mu_{1,s})\int  \big\{\phi( m ; \alphaS s+\mu_{2} )-\phi( m ; \alphaS s^*+\mu_{2} )\big\} \mathrm{d} m =0. 
\end{align*}
Second, if $\betaM\neq 0$, $g( \betaM m + \mu_{1,s} ) >0$.
By $g(x) > 0$ and the integrated tail probability expectation formula, we have $\mathrm{E}_X\{g(X)\}=\int g(X)\mathrm{d}F(X)=\int_{0}^{\infty} P\{g(X) > t\} \mathrm{d}t$ for any integrable random variable $X$. It follows that
\begin{align*}
P_{s}-P_{s^*}=\int_{0}^{\infty} \Big[ P_{\alphaS s +\mu_2}\big\{g( \betaM M + \mu_{1,s} ) > t \big\} - P_{\alphaS s^* +\mu_2}\big\{g( \betaM M + \mu_{1,s} ) > t \big\}\Big]  \mathrm{d} t  	
\end{align*}
where for $\nu=\alphaS s +\mu_2$ or $\nu=\alphaS s^* +\mu_2$, we define
\begin{align*}
P_{\nu}\Big\{g( \betaM M + \mu_{1,s}) > t \Big\} =&~	 \int \mathrm{I}\big\{ g( \betaM M + \mu_{1,s}) > t \big\}  \phi(m;\nu)\mathrm{d} m\notag\\
=&~	P_{\nu}\Big\{ \betaM M > g^{-1}(t) -  \mu_{1,s}\Big\},
\end{align*}
where the second equation holds as $g(x)$ is strictly increasing.
By Condition \ref{cond:phiintegration}, given any $m$, $P_{\nu}(\betaM M>m)$ is strictly monotone in $\nu$. Therefore, when $\betaM\neq 0$,  $P_{s}-P_{s^*} = 0$ if and only if $\alphaS s +\mu_{1,s} = \alphaS s^* + \mu_{1,s} \Leftrightarrow \alphaS=0$. 
In summary, $H_0$ \eqref{eq:ornull} holds for $s\neq s^*$ if and only if $\alphaS=0$ or $\betaM=0$. 	

\paragraph{Proof of Part 2.}
For the simplicity of notation, we let 
$P_s = P\left\{Y(s, M(s))=1 \mid  \X \right\}$
and $P_{s^*} = P\left\{Y(s, M(s))=1 \mid  \X \right\}$.
For the parameter $\theta \in \{\alphaS, \betaM\}$, 
\begin{align}\label{eq:logniepartial}
\frac{\partial \log \mathrm{NIE}}{\partial \theta} =\frac{1}{P_s(1-P_s)}\frac{\partial P_s}{\partial \theta}-\frac{1}{P_{s^*}(1-P_{s^*})}\frac{\partial P_{s^*}}{\partial \theta}.
\end{align}	
\noindent (i.1) When $\alphaS=0$, $P_s= P_{s^*}$ and $\phi(m;\alphaS s + \mu_{2}) = \phi(m; \alphaS s^* + \mu_{2})$, where $\mu_2 = \X^{\mytrans}\alphaX$. By $\alphaS=0$, 
\begin{align*}
\frac{\partial P_s}{\partial \betaM} = &\int \frac{g(\betaM m +\mu_{1,s})}{\partial \beta_M} \phi(m; \alphaS s + \mu_2)  \mathrm{d}m = \int g'(\betaM m +\mu_{1,s}) m \phi(m;  \mu_2)  \mathrm{d}m, \\
\frac{\partial P_{s^*}}{\partial \betaM}=&\int \frac{g(\betaM m +\mu_{1,s})}{\partial \beta_M} \phi(m; \alphaS s^* + \mu_{2})  \mathrm{d}m =  \int g'(\betaM m +\mu_{1,s}) m \phi(m;  \mu_2)  \mathrm{d}m, 
\end{align*}
where
 $g'(x)=e^x/(1+e^x)^2$. 
It follows that $\frac{\partial P_s}{\partial \betaM}=\frac{\partial P_{s^*}}{\partial \betaM}\mid_{\alphaS=0}$.  
By \eqref{eq:logniepartial}, 
$\frac{\partial \log \mathrm{NIE}}{\partial \betaM} \mid_{\alphaS=0} = 0$.

\noindent (i.2) When $\betaM=0$,  we have 
$
	P_s=\int g(\mu_{1,s}) \phi(m; \alphaS s + \mu_2) \mathrm{d} m 
	=g(\mu_{1,s}),
$
where we use $\int \phi(m; \alphaS s + \mu_2) \mathrm{d} m =1$ by the property of  density. 
Similarly, we have $P_{s^*} = P_s=g(\mu_{1,s})$. 
Moreover, when $\betaM=0$, 
\begin{align*}
\frac{\partial P_s}{\partial \alphaS}\biggr|_{\betaM=0}=	&~\int g(\betaM m + \mu_{1,s}) \frac{\partial \phi(m; \alphaS s + \mu_{2})}{\partial \alphaS} \mathrm{d} m \biggr|_{\betaM=0}= g(\mu_{1,s})\int \frac{\partial \phi(m; \alphaS s+\mu_{2})}{\partial \alphaS} \mathrm{d} m,  
\end{align*} 
so 
$
	\frac{\partial P_s}{\partial \alphaS} = \frac{\partial P_{s^*}}{\partial \alphaS} \mid_{\betaM=0}.  
$ 
By \eqref{eq:logniepartial}, 
$\frac{\partial \log \mathrm{NIE}}{\partial \alphaS} \mid_{\betaM=0} = 0$. 

\begin{itemize}
\item[(ii)] When $\alphaS=0$ and $\betaM\neq 0$, we have $P_s=P_{s^*}$, and by \eqref{eq:logniepartial}, 
\begin{align*}
\frac{\partial \log \mathrm{NIE}}{\partial \alphaS} =\frac{1}{P_s(1-P_s)}\left( \frac{\partial P_s}{\partial \alphaS} -   \frac{\partial P_{s^*}}{\partial \alphaS} \right).
\end{align*} 
When $\alphaS=0$, we have 
\begin{align*}
\left(\frac{\partial P_s}{\partial \alphaS}-   \frac{\partial P_{s^*}}{\partial \alphaS} \right) \biggr|_{\alphaS=0}=(s-s^*)\int_{0}^{\infty} \frac{\partial P_{\nu}\big\{\betaM M > g^{-1}(t)-  \mu_{1,s}\big\}}{\partial \nu}\biggr|_{\nu=\mu_2} \mathrm{d}t \neq 0
\end{align*}
which follows by Condition \ref{cond:phiintegration}. 
\item[(iii)] When $\betaM=0$ and $\alphaS\neq 0$, we have $P_s=P_{s^*}$, 
\begin{align} 
\left(\frac{\partial P_s}{\partial \betaM}-   \frac{\partial P_{s^*}}{\partial \betaM} \right) \biggr|_{\betaM=0}=&g'(\mu_{1,s})\int  m\big\{ \phi(m; \alphaS s + \mu_2) - \phi(m; \alphaS s^* + \mu_2) \big\} \mathrm{d}m, \notag\\
=&g'(\mu_{1,s}) \big\{ h^{-1}(\alphaS s + \mu_2)-h^{-1}(\alphaS s^* + \mu_2) \big\} \label{eq:partialbetadiff}
\end{align}
which is obtained by the definition of calculating population mean and the model  \eqref{eq:extendmodel1}. 
When $h^{-1}(\cdot)$ is strictly monotone, by $g'(\mu_{1,s})>0$, $\eqref{eq:partialbetadiff}\neq 0$. 
\end{itemize}

\subsection{Proof of  Proposition \ref{prop:singularitylogitm}}
\paragraph{Proof of Part 1.} The conclusion follows by the form of $\mathrm{NIE}$ in \eqref{eq:h0binarymediator}. 
\paragraph{Proof of Part 2.}  Note that
\begin{align*}
\frac{\partial \mathrm{NIE}}{\partial \alphaS} = &~\betaM\Big\{g'\big(\alphaS s+ \boldsymbol{x}^{\mytrans}\alphaX  \big)\times s - g'\big(\alphaS s^*+ \boldsymbol{x}^{\mytrans}\alphaX \big)\times s^*\Big\}\\
\frac{\partial \mathrm{NIE}}{\partial \betaM} =&~ g\big(\alphaS s+ \boldsymbol{x}^{\mytrans}\alphaX  \big) -g\big(\alphaS s^*+ \boldsymbol{x}^{\mytrans}\alphaX \big) 
\end{align*}
where $g'(x)=e^x(1+e^x)^2$. \\
(i) $\frac{\partial \mathrm{NIE}}{\partial \alphaS}\mid_{\betaM=0} =0$ and $\frac{\partial \mathrm{NIE}}{\partial \alphaS}\mid_{\alphaS=0}=g(\boldsymbol{x}^{\mytrans}\alphaX )-g(\boldsymbol{x}^{\mytrans}\alphaX )=0$. \\
(ii) $\frac{\partial \mathrm{NIE}}{\partial \alphaS}\mid_{\alphaS=0, \betaM\neq 0} = \betaM g'( \boldsymbol{x}^{\mytrans}\alphaX)(s -s^*)\neq 0$ when $s\neq s^*$.\\
(iii) $\frac{\partial \mathrm{NIE}}{\partial \betaM}\mid_{\alphaS\neq 0, \betaM=0} = g(\alphaS s+ \boldsymbol{x}^{\mytrans}\alphaX  ) -g(\alphaS s^*+ \boldsymbol{x}^{\mytrans}\alphaX)  \neq 0 $, which follows by the strict monotonicity of the function  $g(x)$.

\subsection{Proof of Theorem \ref{prop:asymbinaryoutcome}} \label{sec:proofasymbinaryoutcome}
By the definition $P_s=P\left\{Y(s, M(s))=1 \mid  \X=\boldsymbol{x} \right\}$ and the model  \eqref{eq:extendmodel1local},  
\begin{align*}
P_s = 	&~\sum_{m\in \{0,1\}} P(Y(s,m)=1 \mid M(s)=m, \X=\boldsymbol{x})P(M(s)=m\mid \X=\boldsymbol{x})\notag\\
=&~\sum_{m\in \{0,1\}} g(\betaMn m + \boldsymbol{x}^{\mytrans} \boldsymbol{\beta}_{\X} + \tau_S s ) \big\{g\big(\alphaSn s+\boldsymbol{x}^{\mytrans}\alphaX\big)\big\}^m  \big\{1-g\big( \alphaSn s+\boldsymbol{x}^{\mytrans}\alphaX\big)\big\}^{1-m}\notag\\ 
=&~g(\betaMn + \boldsymbol{x}^{\mytrans} \boldsymbol{\beta}_{\X} + \tau_S s ) g\big(\alphaSn s+\boldsymbol{x}^{\mytrans}\alphaX\big)+g(\boldsymbol{x}^{\mytrans} \boldsymbol{\beta}_{\X} + \tau_S s ) \big\{1-g\big( \alphaSn s+\boldsymbol{x}^{\mytrans}\alphaX\big)\big\}\notag\\ 
=&~  d_{\beta,n} \times g\big( \alphaSn s+\boldsymbol{x}^{\mytrans}\alphaX\big) + g(\boldsymbol{x}^{\mytrans} \boldsymbol{\beta}_{\X} + \tau_S s ),
\end{align*}
where for the simplicity of notation, we define $ d_{\beta,n} =g(\betaMn + \boldsymbol{x}^{\mytrans} \boldsymbol{\beta}_{\X} + \tau_S s ) -g(\boldsymbol{x}^{\mytrans} \boldsymbol{\beta}_{\X} + \tau_S s )$. 
Similarly, by $P_{s^*}=P\left\{Y(s, M(s))=1 \mid  \X=\boldsymbol{x} \right\}$,
\begin{align*}
P_{s^*}= &~ \sum_{m\in \{0,1\}} P(Y(s,m)=1 \mid M(s^*)=m, \X=\boldsymbol{x})P(M(s^*)=m\mid \X=\boldsymbol{x})\notag\\
=&~\sum_{m\in \{0,1\}} g(\betaMn m + \boldsymbol{x}^{\mytrans} \boldsymbol{\beta}_{\X} + \tau_S s ) \big\{g\big(\alphaSn s^*+\boldsymbol{x}^{\mytrans}\alphaX\big)\big\}^m  \big\{1-g\big( \alphaSn s^*+\boldsymbol{x}^{\mytrans}\alphaX\big)\big\}^{1-m}\notag\\ 
=&~ d_{\beta,n}  \times g\big( \alphaSn s^*+\boldsymbol{x}^{\mytrans}\alphaX\big) + g(\boldsymbol{x}^{\mytrans} \boldsymbol{\beta}_{\X} + \tau_S s ). 
\end{align*}
By consistency of regression coefficients of the logistic regressions, we have $\hat{P}_r-P_r= O_p(n^{-1/2})$ for $r\in \{s,s^*\}$.

Let $l(x)=\log \frac{x}{1-x}$ and its derivative is  $l'(x)=\frac{1}{x(1-x)}$.  
By $\hat{P}_r - P_r = o_p(1)$ for $r\in \{s,s^*\}$ and Taylor's expansion,  
\begin{align}
\widehat{\mathrm{NIE}}-\mathrm{NIE}=&~l(\hat{P}_s)-l(\hat{P}_{s^*}) - l(P_{s}) + l(P_{s^*})\notag\\
=&~ \big\{l'(P_s)(\hat{P}_s - P_s) - l'(P_{s^*})(\hat{P}_{s^*} - P_{s^*})\big\}\times\{1+O_p(n^{-1/2})\}. \label{eq:loghatdiff}
\end{align}
In particular, for $r\in \{s,s^*\}$, we have
\begin{align*}
\hat{P}_r-P_r
= &~\hat{d}_{{\beta}}\times g\big( \hat{\alpha}_S r+\boldsymbol{x}^{\mytrans}\hat{\boldsymbol{\alpha}}_{\X}\big)  -d_{\beta,n}\times g\big( \alphaSn r+\boldsymbol{x}^{\mytrans}\alphaX\big) \notag\\
 &~ +g(\boldsymbol{x}^{\mytrans} \hat{\boldsymbol{\beta}}_{\X} + \hat{\tau}_S s )- g(\boldsymbol{x}^{\mytrans} \boldsymbol{\beta}_{\X} + \tau_S s ), \notag	
\end{align*}
where we define $\hat{d}_{{\beta}}=g(\hat{\beta}_M + \boldsymbol{x}^{\mytrans} \hat{\boldsymbol{\beta}}_{\X} + \hat{\tau}_S s ) -g(\boldsymbol{x}^{\mytrans}\hat{\boldsymbol{\beta}}_{\X} + \hat{\tau}_S s )$,
and $(\hat{\beta}_M, \hat{\beta}_{\X},\hat{\tau}_S)$ denote  sample estimates of $({\beta}_M, {\beta}_{\X}, {\tau}_S)$ under the logistic regression.  
As $P_s=P_{s^*}$ under $H_0$,  
\begin{align*}
\eqref{eq:loghatdiff}=&~ l'(P_s)\{(\hat{P}_s - P_s)- (\hat{P}_{s^*} - P_{s^*}) \}\times \{1+O_p(n^{-1/2})\}\notag\\
=&~l'(P_s)\big(\hat{d}_{{\alpha}}\hat{d}_{{\beta}} - d_{\alpha,n}d_{\beta,n} \big)\times \big\{1+O_p(n^{-1/2})\big\},
\end{align*}
where we define  
$d_{\alpha,n}=g(\alphaSn s + \boldsymbol{x}^{\mytrans}\alphaX)-g(\alphaSn s^* + \boldsymbol{x}^{\mytrans}\alphaX)$, 
 $\hat{d}_{\alpha}=g( \hat{\alpha}_S s+\boldsymbol{x}^{\mytrans}\hat{\boldsymbol{\alpha}}_{\X})-g( \hat{\alpha}_S s^*+\boldsymbol{x}^{\mytrans}\hat{\boldsymbol{\alpha}}_{\X})$, and $(\hat{\alpha}_S,\hat{\boldsymbol{\alpha}}_{\X})$ denote sample estimates of $({\alpha}_S,{\boldsymbol{\alpha}}_{\X})$ under the logistic regression. 
To analyze the asymptotic property of $\eqref{eq:loghatdiff}$,  we next discuss under three cases of $H_0$, respectively. 

\medskip
\noindent (i) When $\alphaS\neq 0$ and $\betaM=0$, we have 
$d_{\beta,n}\to d_{\beta}=0$, 
whereas $d_{\alpha,n} \to d_{\alpha}\neq 0$ by strict monotonicity of the function $g(\cdot)$.
Moreover, by  $d_{\beta}=0$,  
$l'(P_s)$ and $l'(P_{s^*}) \to \gamma_{0}=l'\{g(\boldsymbol{x}^{\mytrans} \boldsymbol{\beta}_{\X} + \tau_S s )\}\}\neq 0$ as $n\to \infty$.
Therefore, as $n\to \infty$, 
\begin{align*} 
\sqrt{n}\times \eqref{eq:loghatdiff}\to &~\gamma_{0} \times d_{\alpha}\times \sqrt{n}(\hat{d}_{{\beta}}-0) 
\xrightarrow{d}\gamma_{0}  \times d_{\alpha}\times Z_{\beta}.  
\end{align*}

\noindent (ii) When $\alphaS=0$, and $\betaM\neq 0$,  we have $d_{\alpha,n}\to d_{\alpha}=0$, whereas $d_{\beta,n}\to d_{\beta}\neq 0$ by strict monotonicity of the function $g(\cdot)$. 
Moreover, $l'(P_s)$ and $l'(P_{s^*})\to  \gamma_{0}= l'\{d_{\betaM}g(\boldsymbol{x}^{\mytrans}\alphaX\big) + g(\boldsymbol{x}^{\mytrans} \boldsymbol{\beta}_{\X} + \tau_S s )\}\neq 0$, and 
\begin{align*}
\sqrt{n}\times\eqref{eq:loghatdiff}\to &~\gamma_{0}\times  \sqrt{n}(\hat{d}_{{\alpha}}-0)\times d_{\beta} 
\xrightarrow{d} \gamma_{0}   \times Z_{\alpha}\times d_{\beta}.
\end{align*}

\noindent (iii) When $\alphaS=\betaM=0$, $l'(P_s)$ and $l'(P_{s^*})\to \gamma_0=l'\{g(\boldsymbol{x}^{\mytrans} \boldsymbol{\beta}_{\X} + \tau_S s )\}\neq 0$, and
\begin{align*}
\sqrt{n}d_{\alpha,n} = &\sqrt{n}\left\{ g(\alphaSn s + \boldsymbol{x}^{\mytrans}\alphaX)-g(\alphaSn s^* + \boldsymbol{x}^{\mytrans}\alphaX)\right\}\to g'(\boldsymbol{x}^{\mytrans}\alphaX)(s-s^*) \balpha=d_{\balpha}, \notag\\
\sqrt{n}d_{\beta,n} = &\sqrt{n}\left\{ g(\betaMn s + \boldsymbol{x}^{\mytrans}\betaX +\tau_S s)-g(\betaMn s^* + \boldsymbol{x}^{\mytrans}\betaX+\tau_S s)\right\} \notag\\
\to & g'(\boldsymbol{x}^{\mytrans}\betaX + \tau_S s)(s-s^*) \bbeta=d_{\bbeta}. \notag 	
\end{align*}
It follows that 
\begin{align*}
n\times \eqref{eq:loghatdiff}\to &~\gamma_{0} \Big\{d_{\balpha} \sqrt{n}(\hat{d}_{{\beta}}-0) + d_{\bbeta} \sqrt{n}(\hat{d}_{{\alpha}}-0) +  \sqrt{n}(\hat{d}_{{\alpha} }-0)\times\sqrt{n}(\hat{d}_{{\beta}}-0)\Big\}\notag\\
\xrightarrow{d}	&~\gamma _{0} \left( d_{\balpha} Z_{\beta} + d_{\bbeta}  Z_{\alpha}+ Z_{\alpha}Z_{\beta}\right). 
\end{align*}

\vspace{0.5em} 

\begin{lemma}[Individual limits of $\hat{d}_{\alpha}$ and $\hat{d}_{\beta}$.]\label{lm:speratedalphabetalimit}
Under the condition of Theorem \ref{prop:asymbinaryoutcome}, 
\begin{align*}
\sqrt{n}(\hat{d}_{{\alpha}}-d_{\alpha,n})\xrightarrow{d} Z_{\alpha},\hspace{2em}\sqrt{n}(\hat{d}_{{\beta}}-d_{\beta,n})\xrightarrow{d} 	Z_{\beta}, 
\end{align*}	
where  $Z_{\alpha}=W_{\alpha}^{\mytrans}B_{\alpha}$, $	Z_{\beta}=W_{\beta}^{\mytrans}B_{\beta}$,  $B_{\alpha}$ and $B_{\beta}$ represent mean-zero multivariate normal distributions specified in \eqref{eq:normallimitbinary2} and \eqref{eq:normallimitbinary2beta}, respectively,
and  $W_{\alpha}$ and $W_{\beta}$ are vectors defined in \eqref{eq:walphadefbinary}  and \eqref{eq:wbetadefbinary}, respectively. 
\end{lemma}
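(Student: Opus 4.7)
\textbf{Proof proposal for Lemma \ref{lm:speratedalphabetalimit}.}
The plan is to obtain each limit by a standard Delta-method argument applied to the smooth maps
\[
\Phi_{\alpha}(a,\boldsymbol{c}) = g(a s + \boldsymbol{x}^{\mytrans}\boldsymbol{c}) - g(a s^* + \boldsymbol{x}^{\mytrans}\boldsymbol{c}),\qquad
\Phi_{\beta}(b,\boldsymbol{c},t) = g(b + \boldsymbol{x}^{\mytrans}\boldsymbol{c} + t s) - g(\boldsymbol{x}^{\mytrans}\boldsymbol{c} + t s),
\]
evaluated at the (local) MLE for the two logistic regressions in \eqref{eq:extendmodel1local}. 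Note that $\hat d_{\alpha} = \Phi_{\alpha}(\hat\alpha_S, \hat{\boldsymbol{\alpha}}_{\boldsymbol{X}})$ with population centering $d_{\alpha,n} = \Phi_{\alpha}(\alpha_{S,n},\boldsymbol{\alpha}_{\boldsymbol{X}})$, and analogously for $\hat d_{\beta}$.

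First I would invoke standard asymptotic theory for the M-estimators of logistic regression. Under Condition \ref{cond:designmatrixlogistic}, the information matrices $I_{\alpha} = \mathrm{E}\{g_{\alpha}(1-g_{\alpha}) D_{\alpha}D_{\alpha}^{\mytrans}\}$ and $I_{\beta} = \mathrm{E}\{g_{\beta}(1-g_{\beta}) D_{\beta}D_{\beta}^{\mytrans}\}$ are positive definite with bounded eigenvalues, so the usual contiguity argument under the local model \eqref{eq:extendmodel1local} yields
\begin{align*}
\sqrt{n}\bigl((\hat\alpha_S,\hat{\boldsymbol{\alpha}}_{\boldsymbol{X}}^{\mytrans})^{\mytrans} - (\alpha_{S,n},\boldsymbol{\alpha}_{\boldsymbol{X}}^{\mytrans})^{\mytrans}\bigr) &\xrightarrow{d} B_{\alpha} \sim \mathcal{N}(\boldsymbol{0}, I_{\alpha}^{-1}),\\
\sqrt{n}\bigl((\hat\beta_M,\hat{\boldsymbol{\beta}}_{\boldsymbol{X}}^{\mytrans},\hat\tau_S)^{\mytrans} - (\beta_{M,n},\boldsymbol{\beta}_{\boldsymbol{X}}^{\mytrans},\tau_S)^{\mytrans}\bigr) &\xrightarrow{d} B_{\beta} \sim \mathcal{N}(\boldsymbol{0}, I_{\beta}^{-1}).
\end{align*}
These define the distributions announced in \eqref{eq:normallimitbinary2}--\eqref{eq:normallimitbinary2beta}.

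Next I would compute the gradients $W_{\alpha} = \nabla \Phi_{\alpha}(\alpha_{S},\boldsymbol{\alpha}_{\boldsymbol{X}})$ and $W_{\beta} = \nabla \Phi_{\beta}(\beta_{M},\boldsymbol{\beta}_{\boldsymbol{X}},\tau_S)$ that will appear in \eqref{eq:walphadefbinary}--\eqref{eq:wbetadefbinary}. Explicitly,
\[
W_{\alpha} = \bigl( g'(\alpha_S s + \boldsymbol{x}^{\mytrans}\boldsymbol{\alpha}_{\boldsymbol{X}}) s - g'(\alpha_S s^* + \boldsymbol{x}^{\mytrans}\boldsymbol{\alpha}_{\boldsymbol{X}}) s^*,\ \{g'(\alpha_S s + \boldsymbol{x}^{\mytrans}\boldsymbol{\alpha}_{\boldsymbol{X}}) - g'(\alpha_S s^* + \boldsymbol{x}^{\mytrans}\boldsymbol{\alpha}_{\boldsymbol{X}})\}\boldsymbol{x}^{\mytrans}\bigr)^{\mytrans},
\]
and analogously for $W_{\beta}$. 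Since $g$ is $C^{\infty}$ on $\mathbb{R}$ with bounded derivatives, $\Phi_{\alpha}$ and $\Phi_{\beta}$ are continuously differentiable in a neighborhood of the target. Because $\alpha_{S,n} - \alpha_S = b_{\alpha}/\sqrt{n} = o(1)$ and $\beta_{M,n} - \beta_M = b_{\beta}/\sqrt{n} = o(1)$, the gradients at the local truth converge to those at the fixed truth, so the Delta method (Theorem 3.1 of \cite{van2000asymptotic}) gives
\[
\sqrt{n}(\hat d_{\alpha} - d_{\alpha,n}) \xrightarrow{d} W_{\alpha}^{\mytrans} B_{\alpha} = Z_{\alpha},\qquad
\sqrt{n}(\hat d_{\beta} - d_{\beta,n}) \xrightarrow{d} W_{\beta}^{\mytrans} B_{\beta} = Z_{\beta}.
\]

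I do not anticipate a serious obstacle here: both statements are clean first-order Delta-method consequences of the logistic MLE CLT. The only point requiring minor care is to keep the local perturbation straight, i.e., that the centering is $d_{\alpha,n}$ (evaluated at $\alpha_{S,n}$) rather than at $\alpha_S$, which is harmless because the gradient is evaluated at $\alpha_S$ and the difference $\alpha_{S,n} - \alpha_S$ contributes only an $O(n^{-1/2})$ smooth-error term that is absorbed into the $o(1)$ remainder of the Taylor expansion. An analogous remark applies to $\hat d_{\beta}$ with the perturbation $\beta_{M,n} - \beta_M$.
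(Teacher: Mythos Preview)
Your proposal is correct and follows essentially the same route as the paper: the paper also establishes the logistic MLE central limit theorem for $(\hat\alpha_S,\hat{\boldsymbol{\alpha}}_{\boldsymbol X})$ and $(\hat\beta_M,\hat{\boldsymbol{\beta}}_{\boldsymbol X},\hat\tau_S)$ under Condition~\ref{cond:designmatrixlogistic}, then applies a first-order Taylor expansion (i.e., the Delta method) with the same gradient vectors $W_\alpha$ and $W_\beta$ you computed. Your remark about centering at $d_{\alpha,n}$ versus evaluating the gradient at $\alpha_S$ is exactly the minor point the paper glosses over with its $\{1+o_p(1)\}$ factor.
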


\noindent \textit{Proof of Lemma \ref{lm:speratedalphabetalimit}.} By Taylor's expansion and the property of logistic regression, we have
\begin{align}
\sqrt{n}\begin{pmatrix}
 \hat{\alpha}_S- \alphaSn \\ 
 \hat{\boldsymbol{\alpha}}_{\X}- \boldsymbol{\alpha}_{\X}
\end{pmatrix} 
=&\sqrt{n}\left\{\sum_{i=1}^n g_{\alpha,i} (1-g_{\alpha,i})D_{\alpha,i}D_{\alpha,i}^{\mytrans}\right\}^{-1} \left\{ \sum_{i=1}^n (M_i-g_{\alpha, i})D_{\alpha,i}\right\} \{1+o_p(1)\}\notag\\
\xrightarrow{d}& B_{\alpha}, \label{eq:normallimitbinary2}	
\end{align}
where in the first equation,  we define
$D_{\alpha,i}=(S_i, \X_i^{\mytrans})^{\mytrans}$ and  $g_{\alpha,i}=g(\S_i  \alphaS + \X_i^{\mytrans}\alphaX)$, and in \eqref{eq:normallimitbinary2}, 
 $B_{\alpha}$ represents a multivariate normal distribution with $\mathrm{E}(B_{\alpha})=\mathbf{0}$
 and $\mathrm{cov}(B_{\alpha})=\mathrm{cov}\{V_{\alpha}^{-1}(M-g_{\alpha})D_{\alpha}\}$, 
 where 
we define 
$D_{\alpha}=(S, \X^{\mytrans})^{\mytrans}$, 
 $g_{\alpha}=g(\S \alphaS + \X^{\mytrans}\alphaX)$,   and $ 
V_{\alpha}	= \mathrm{E}\big\{g_{\alpha}(1-g_{\alpha})D_{\alpha}D_{\alpha}^{\mytrans} \big\}  
$. 
By  Taylor's expansion, 
\begin{align*}
\sqrt{n}(\hat{d}_{{\alpha}}-d_{\alpha})=&~ {W}_{\alpha}^{\mytrans}\times \sqrt{n}\begin{pmatrix}
 \hat{\alpha}_S- \alphaSn \\ 
 \hat{\boldsymbol{\alpha}}_{\X}-  {\boldsymbol{\alpha}}_{\X,n}
\end{pmatrix}\{1+o_p(1)\}, 
\end{align*}
where we define $\mu_{s,\alpha}=\alphaS s + \boldsymbol{x}^{\mytrans}\alphaX$, $\mu_{s^*,\alpha}=\alphaS s^* + \boldsymbol{x}^{\mytrans}\alphaX$, and
\begin{align} \label{eq:walphadefbinary} 
{W}_{\alpha}^{\mytrans} = 	g'(\mu_{s,\alpha}) \times \big(
	s, \, 
	\boldsymbol{x}^{\mytrans}\big)
 - g'(\mu_{s^*,\alpha})\times \big(
	s^*, \, 
	\boldsymbol{x}^{\mytrans}\big). 
\end{align}
Therefore, by \eqref{eq:normallimitbinary2}, 
$\sqrt{n}(\hat{d}_{{\alpha}}-d_{\alpha})\xrightarrow{d} W_{\alpha}^{\mytrans}B_{\alpha}=Z_{\alpha}$. 

Similarly, by Taylor's expansion, 
\begin{align}
\sqrt{n}\begin{pmatrix}
 \hat{\beta}_M- \beta_{\M,n} \\ 
 \hat{\boldsymbol{\beta}}_{\X}-  {\boldsymbol{\beta}}_{\X}\\
  \hat{\tau}_S- \tau_S
\end{pmatrix} 
=&\left\{\sum_{i=1}^n g_{\beta,i} (1-g_{\beta,i})D_{\beta,i}D_{\beta,i}^{\mytrans}\right\}^{-1} \left\{ \sum_{i=1}^n (Y_i-g_{\beta, i})D_{\beta,i}\right\} \{1+o_p(1)\}\notag\\
\xrightarrow{d}& B_{\beta}, \label{eq:normallimitbinary2beta}	
\end{align} 
where in the first equation,  we define
$D_{\beta,i}=(M_i, S_i, \X_i^{\mytrans})^{\mytrans}$, $g_{\beta,i}=g(M_i \betaM +\S_i  \tau_S + \X_i^{\mytrans}\betaX)$, and in \eqref{eq:normallimitbinary2beta},  
 $B_{\beta}$ represents a normal distribution with $\mathrm{E}(B_{\beta})=\mathbf{0}$
 and $\mathrm{cov}(B_{\beta})=\mathrm{cov}\{V_{\beta}^{-1}(Y-g_{\beta})D_{\beta}\}$, 
 where 
we define 
 $g_{\beta}=g(M\betaM +\S \tau_S + \X^{\mytrans}\betaX)$,  $D_{\beta}=(M, S, \X^{\mytrans})^{\mytrans}$, and $ 
V_{\beta}	= \mathrm{E}\big\{g_{\beta}(1-g_{\beta})D_{\beta}D_{\beta}^{\mytrans} \big\}  
$. 
Moreover, by Taylor's expansion,
\begin{align*}
	\sqrt{n}(\hat{d}_{\beta}-d_{\beta,n}) = W_{\beta}^{\mytrans}\times \sqrt{n}\begin{pmatrix}
 \hat{\beta}_M- \beta_{\M,n} \\ 
 \hat{\boldsymbol{\beta}}_{\X}- {\boldsymbol{\beta}}_{\X}\\
 \hat{\tau}_S- \tau_S
\end{pmatrix}\times \{1+o_p(1)\}, 
\end{align*}
where we define $\mu_{0,\beta}=\boldsymbol{x}^{\mytrans}\alphaX + s\tau_S$, $\mu_{1,\beta}=\betaM + \mu_{0,\beta}$, 
and
\begin{align} \label{eq:wbetadefbinary}
{W}_{\beta}^{\mytrans} =  g'(\mu_{1,\beta})\times \big(1,\,  \boldsymbol{x}^{\mytrans},\, s\big) - g'(\mu_{0,\beta})\times \big(0, \, \boldsymbol{x}^{\mytrans}, \, s\big).
\end{align}
Therefore, by \eqref{eq:normallimitbinary2}, 
$\sqrt{n}(\hat{d}_{{\beta}}-d_{\beta,n})\xrightarrow{d} W_{\beta}^{\mytrans}B_{\beta}=Z_{\beta}$. 

\subsection{Proof of Theorem  \ref{thm:combinebootbinaryout}} \label{sec:pfbootbinaryboth}

\noindent \textit{Notation.}
Define $(\mathbb{Z}_{\alpha}^*, \mathbb{Z}_{\beta}^*)$  as the bootstrap counterparts of $(Z_{\alpha},Z_{\beta})$. 
In particular, 
by the definitions of $Z_{\alpha}$ and $Z_{\beta}$ in Lemma \ref{lm:speratedalphabetalimit}, 
 we let
$
	\mathbb{Z}_{\alpha}^*={W}_{\alpha}^{*\top}\left[\mathbb{P}_n^*\{ g_{\alpha}(1-g_{\alpha})D_{\alpha}D_{\alpha}^{\top}\}\right]^{-1}\times\mathbb{G}_n^*\{(M-g_{\alpha})D_{\alpha}\}
$ and 
$
	\mathbb{Z}_{\beta}^*={W}_{\beta}^{*\top}[\mathbb{P}_n^*\{ g_{\beta}(1-g_{\beta})D_{\beta}D_{\beta}^{\top}]^{-1}\times\mathbb{G}_n^*\{(M-g_{\beta})D_{\beta}\}, 
$
where $D_{\alpha}$, $D_{\beta}$, $g_{\alpha}$, and $g_{\beta}$ are defined in Condition \ref{cond:designmatrixlogistic},
and ${W}_{\alpha}^{*\mytrans}$ and ${W}_{\beta}^{*\mytrans}$ represent bootstrap estimators of 
${W}_{\alpha}^{\mytrans}$ in \eqref{eq:walphadefbinary} and ${W}_{\beta}^{\mytrans}$ in \eqref{eq:wbetadefbinary}, respectively.  
Specifically, ${W}_{\alpha}^{*\mytrans} = 	g'(\hat{\mu}_{s,\alpha}^*) \times (
	s, \, 
	\boldsymbol{x}^{\mytrans})
 - g'(\hat{\mu}_{s^*,\alpha}^*)\times (
	s^*, \, 
	\boldsymbol{x}^{\mytrans})$
and ${W}_{\beta}^{*\mytrans} =  g'(\hat{\mu}_{1,\beta}^*)\times (1,\,  \boldsymbol{x}^{\mytrans},\, s) - g'(\hat{\mu}_{0,\beta}^*)\times (0, \, \boldsymbol{x}^{\mytrans}, \, s)$,
where $(\hat{\mu}_{s^*,\alpha}^*,\hat{\mu}_{s^*,\alpha}^*,\hat{\mu}_{1,\beta}^*, \hat{\mu}_{0,\beta}^*)$ represent bootstrap estimators of  $({\mu}_{s^*,\alpha},{\mu}_{s^*,\alpha},{\mu}_{1,\beta}, {\mu}_{0,\beta})$. 
The definitions are similar to $ \mathbb{Z}_{\S,n}^*$ and $ \mathbb{Z}_{\M,n}^*$ in Section \ref{sec:abt}. 
Moreover, $\hat{\gamma}_{0}^*= \{\hat{P}_*^*(1-\hat{P}_*^*)\}^{-1}$,
where $\hat{P}_*^*=g(\boldsymbol{x}^{\mytrans} \hat{\boldsymbol{\beta}}_{\X}^* + \hat{\tau}_S^* s )$,
and $\hat{\boldsymbol{\beta}}_{\X}^*$ and $\hat{\tau}_S^*$ denote  non-parametric bootstrap estimators of $\boldsymbol{\beta}_{\X}$ and ${\tau}_S$, respectively. 

\medskip
\noindent\textit{Proof.} The proof is very similar to that of Theorem \ref{thm:bootstrapprodcomb}. 
We describe the key steps, and the details follow similarly to that in Section
\ref{sec:pfbootstrapprodcomb}. 
When $(\alphaS,\betaM)\neq (0,0)$, 
the bootstrap estimator $\widehat{\text{NIE}}^*$ is consistent by the asymptotic expansion in Section \ref{sec:proofasymbinaryoutcome} and asymptotic normality. 
When $(\alphaS,\betaM)=(0,0)$, the bootstrap estimator $
( d_{\balpha}  \mathbb{Z}_{\beta}^* +   d_{\bbeta} \mathbb{Z}_{\alpha}^*+ \mathbb{Z}_{\alpha}^*\mathbb{Z}_{\beta}^*)\hat{\gamma}_{0}^*	
$ is consistent by the asymptotic expansion in \ref{sec:proofasymbinaryoutcome} and its limit form. 
To prove  Theorem  \ref{thm:combinebootbinaryout}, 
results similar to \eqref{eq:indicatorconvg} can be established as under the logistic models by the asymptotic normality in  \eqref{eq:normallimitbinary2} and \eqref{eq:normallimitbinary2beta}. Then the proof follows by the arguments in Section \ref{sec:pfbootstrapprodcomb}.

\subsection{ Proof of Theorem  \ref{prop:asymbinarymediator} }
Note that 
$
\widehat{\mathrm{NIE}}-\mathrm{NIE}= \hat{\beta}_M \hat{d}_{\alphaS} - \betaM d_{\alphaS,n} 
$  
where $\hat{d}_{\alphaS}$ and $d_{\alphaS,n}$ are defined in Section \ref{sec:proofasymbinaryoutcome}. 
By the proof of Theorem \ref{thm:prodlimit2} in the main text, 
we have 
$ 
	\sqrt{n}(\hat{\beta}_M - \beta_M)\xrightarrow{d} Z_{\beta} 
$ where $Z_{\beta}$ denotes a mean-zero multivariate normal distribution with a covariance same as the random vector $V_M^{-1}\epsilon_YM_{\perp'}$ defined in Theorem \ref{thm:prodlimit2}. 
Moreover, by Lemma \ref{lm:speratedalphabetalimit}, 
$\sqrt{n}(\hat{d}_{\alphaS} - d_{\alphaS,n}) \xrightarrow{d} Z_{\alpha}$. 


\medskip
\noindent (i) When $\alphaS\neq 0$ and $\betaM=0$, we have $\betaMn \to \betaM = 0$ and $d_{\alphaS,n}\to d_{\alphaS} \neq 0$. 
Therefore, as $n\to \infty$,
$
\sqrt{n} (\widehat{\mathrm{NIE}}-\mathrm{NIE})\to d_{\alphaS} \times \sqrt{n}(\hat{\beta}_{M}- \betaM)\xrightarrow{d} d_{\alphaS} \times Z_{\beta}. 
$
\medskip

\noindent (ii) When $\alphaS = 0$ and $\betaM\neq 0$, 
we have $d_{\alphaS,n} \to d_{\alphaS} =0$ and $\hat{\beta}_M\to \betaM \neq 0$. 
Therefore, as $n\to \infty$,
$
\sqrt{n} (\widehat{\mathrm{NIE}}-\mathrm{NIE})\to 	\betaM \times \sqrt{n}(\hat{d}_{\alphaS} - d_{\alphaS,n} ) \xrightarrow{d} \betaM \times Z_{\alpha}. 
$
\medskip

\noindent (iii) When $\alphaS = \betaM = 0$, 
\begin{align*}
&~\sqrt{n} (\widehat{\mathrm{NIE}}-\mathrm{NIE}) \notag\\
= &~\sqrt{n}(\hat{\beta}_M -\betaMn )\sqrt{n}d_{\alphaS,n} + (\hat{d}_{\alphaS}-d_{\alphaS,n}) \sqrt{n}\betaMn +n(\hat{d}_{\alphaS}-d_{\alphaS,n})(\hat{\beta}_M -\betaMn ) \notag\\
\xrightarrow{d}&~ Z_{\beta}d_{\balpha} + Z_{\alpha} \bbeta + Z_{\alpha}Z_{\beta}. 
\end{align*}

\subsection{Proofs of Theorem \ref{thm:combinebootbinarymediator}}\label{sec:pfthemscenario2twoboot}
 
We let  $\mathbb{Z}_{\alpha}^*$  and  $\mathbb{Z}_{\beta}^*$  denote bootstrap counterparts of $Z_{\alpha}$ and $Z_{\beta}$, respectively. 
Similarly to Section \ref{sec:asymbinarymedout}, by the definition of $Z_{\alpha}$ in Lemma \ref{lm:speratedalphabetalimit}, 
we have
$
	\mathbb{Z}_{\alpha}^*=W_{\alpha}^{*\top}[\mathbb{P}_n^*\{ g_{\alpha}(1-g_{\alpha})D_{\alpha}D_{\alpha}^{\top}\}]^{-1}\mathbb{G}_n^*\{(M-g_{\alpha})D_{\alpha}\}.  
$
Moreover, we redefine $\mathbb{Z}_{\beta}^* = (\mathbb{V}_{M,n}^*)^{-1} \times\mathbb{G}^*(\hatenY M^*_{\perp'})$, which is same as $\mathbb{Z}_{M,n}^*$ 
in the main text. 
Theorem  \ref{thm:combinebootbinarymediator}  can be similarly obtained following the arguments in Sections \ref{sec:pfbootstrapprodcomb} and  \ref{sec:pfbootbinaryboth}. 
We therefore skip the details here.

\newpage


\section{Implementation Details}\label{sec:implement}

\subsection{Double Bootstrap for Choosing the Tuning Parameter}\label{sec:tuningpar}

\textit{Overview.} Double bootstrap (DB) has two layers of bootstrap. The first layer applies ordinary bootstrap to a given data $\mathcal{D}$, and the second layer applies AB to the bootstrapped data from the first layer and returns  an estimated $p$-value. 
Repeating the procedure multiple times yields a sample of estimated $p$-values,
which, intuitively, can approximate the  distribution of $p$-values given by directly applying AB to $\mathcal{D}$. 
Therefore, the $p$-values estimated by double bootstrap can guide the choice of tuning parameters. 

\bigskip
\noindent \textit{Implementation Details.} 
Our goal is to choose $\lambda$ value such that the AB test would return uniformly distributed $p$-values under $H_0: \alphaS\betaM=0$. 
Given observed data $\mathcal{D}_{obs}$, we mimic $H_0$ by processing the observed data $\mathcal{D}_{obs}$ so that the sample estimate of  mediation effect based on the processed data would be $0$.    
To achieve that, we specify two methods of  data  processing 
after which sample estimates of $\alphaS$ and $\betaM$ become zero, respectively. 
Technically,  
we define a projection mapping 
$\mathcal{P}_{S}^{\perp}(M)= \{\mathrm{I}- S(S^{\top}S)^{-1}S^{\top}\}M$, 
which denotes the projection of observations $M=(M_1,\ldots, M_n)^{\top}$ onto the space orthogonal to observations $S=(S_1,\ldots, S_n)^{\top}$. 
Two data processing methods are specified as follows. 
\begin{enumerate}\setlength{\itemsep}{-4pt}
    \item[(i)]  In the mediator-exposure model $M\sim S + \boldsymbol{X}$, replace $(M,\boldsymbol{X})$ by the projected data $(\mathcal{P}_{S}^{\perp}(M), \mathcal{P}_{S}^{\perp}(\boldsymbol{X}))$, and then the sample coefficient of $S$ is 0 by Section \ref{sec:preliminary}. 
    \item[(ii)]  In the outcome-mediator model $Y\sim M+S+\boldsymbol{X}$, replace $(Y, S, \boldsymbol{X})$ by the projected data $(\mathcal{P}_{M}^{\perp}(Y), \mathcal{P}_{M}^{\perp}(S), \mathcal{P}_{M}^{\perp}(\boldsymbol{X}))$. Then the sample coefficient of $M$ is 0 by Section \ref{sec:preliminary}. 
\end{enumerate}


 We let $\mathcal{D}_{{\alpha}}$ and $\mathcal{D}_{{\beta}}$ denote the processed data using the aforementioned methods (i) and (ii) only, respectively. 
 Moreover, we let $\mathcal{D}_{{\alpha,\beta}}$  denote the processed data using (i) and (ii) simultaneously. 
  A detailed double bootstrap procedure is specified as follows. 

\medskip
\noindent \textit{Step 1.} 
Given original data $\mathcal{D}_{obs}$, 
apply processing methods (i) and (ii) to obtain two processed data $\mathcal{D}_{{\alpha}}$ and $\mathcal{D}_{{\beta}}$, respectively. \\[4pt]
\textit{Step 2.} Apply DB to  $\mathcal{D}_{{\alpha}}$ and $\mathcal{D}_{{\beta}}$. For $b=1,\ldots, B$, 
\begin{itemize} \setlength{\itemsep}{-4pt}
    \item[--] apply ordinary bootstrap to  $\mathcal{D}_{{\alpha}}$ and $\mathcal{D}_{{\beta}}$ and obtain bootstrapped data $\mathcal{D}_{{\alpha},b}^*$ and $\mathcal{D}_{{\beta},b}^*$, respectively; 
    \item[--] apply adaptive bootstrap with fixed $\lambda = 0$ to $\mathcal{D}_{{\alpha},b}^*$ and $\mathcal{D}_{{\beta},b}^*$ to obtain estimated $p$-values $p_{\alpha,b}^*(0)$ and  $p_{\beta,b}^*(0)$, respectively.  
\end{itemize} 
Let $\mathcal{P}^*_{\alpha}(0)=\{p_{\alpha,b}^*(0): b=1,\ldots, B \}$ and
 $\mathcal{P}^*_{\beta}(0)=\{p_{\beta,b}^*(0): b=1,\ldots, B \}$  denote two sets of estimated $p$-values. 
We would observe different patterns of  $\mathcal{P}^*_{\alpha}(0)$ and $\mathcal{P}^*_{\beta}(0)$ under different scenarios of the  true parameters.   
Specifically, if $\alphaS=\betaM=0$, both $\mathcal{P}^*_{\alpha}(0)$ and $\mathcal{P}^*_{\beta}(0)$ are conservative; if one of $\alphaS$ and $\betaM$ is non-zero, at least one of $\mathcal{P}^*_{\alpha}(0)$ and $\mathcal{P}^*_{\beta}(0)$ is non-conservative.
Step 3 take different strategies of parameter choice based on observed patterns of  $\mathcal{P}^*_{\alpha}(0)$ and $\mathcal{P}^*_{\beta}(0)$ in Step 2. 

\smallskip
\noindent \textit{Step 3.} 
\begin{itemize}
 \item If $\mathcal{P}^*_{\alpha}(0)$ and $\mathcal{P}^*_{\beta}(0)$ are conservative, both $\alphaS$ and $\betaM$ are likely to be 0. 
 To find a good tuning parameter when both parameters are $0$, we apply processing methods (i) and (ii) to $\mathcal{D}_{obs}$ simultaneously and obtain a processed data $\mathcal{D}_{{\alpha,\beta}}$, which satisfies $\hat{\alpha}_S(\mathcal{D}_{{\alpha,\beta}}) = \hat{\beta}_M(\mathcal{D}_{{\alpha,\beta}})=0$ and mimics the scenario $\alphaS=\betaM=0$.  
 Let $\mathcal{P}^*_{\alpha,\beta}(\lambda)$ denote the set of estimated $p$-values when applying the double bootstrap to  $\mathcal{D}_{{\alpha,\beta}}$ with a fixed $\lambda$. 
 We increase $\lambda$ until  $\mathcal{P}^*_{\alpha,\beta}(\lambda)$ is close to $U[0,1]$. 
 \item If at least one of $\mathcal{P}^*_{\alpha}(0)$ and $\mathcal{P}^*_{\beta}(0)$ is non-conservative, we can choose any $\lambda$ such that $\mathcal{P}^*_{\alpha}(\lambda)$ and $\mathcal{P}^*_{\beta}(\lambda)$ are similar to $\mathcal{P}^*_{\alpha}(0)$ and $\mathcal{P}^*_{\beta}(0)$, respectively. 
\end{itemize}
We point out that multiple $\lambda$ values may yield similar properties and are all acceptable.  
We next provide a simple numerical illustration on the use of the double bootstrap.

\subsubsection{Numerical Example of the Double Bootstrap}
We present a numerical illustration  
under the model in Section \ref{sec:sim} with $n=200$
and three scenarios including: 
(1) $\alphaS=\betaM=0$;
(2) $\alphaS=0.5$, $\betaM=0$;
(3) $\alphaS= 0 $, $\betaM=0.5$. 
In the double bootstrap, the number of resampling of the two layers of bootstrap are both 500.
Under each scenario, 
we present Q-Q plots of estimated $p$-values following the procedure above. 
As a comparison, we also simulate data from the underlying true model $M=500$ times, and apply the AB with the same fixed $\lambda$ to each simulated data to obtain estimated $p$-values $\mathcal{P}^*(\lambda)=\{p^*_m(\lambda): m=1,\ldots, M\}$.   
Confidence intervals presented in the figures are obtained through the  Kolmogorov–Smirnov test at the significance level $0.01$.  


\paragraph{Scenario 1: $\alphaS=\betaM=0$.} In (a) and (b) of Figure \ref{fig:tuning00_0},  both sets of $p$-values $\mathcal{P}_{\alpha}^*(0)$ and $\mathcal{P}_{\beta}^*(0)$ in Step 2 are  conservative.   
As a comparison, (c) presents the estimated distribution of $p$-values obtained from AB with $\lambda=0$, which can be viewed as the ground truth.   
We can see that (a) and (b) indeed captures the over-conservativness in (c).  
To find a good tuning parameter when $\alphaS=\betaM=0$, 
in Step 3, we process the data  to get $\mathcal{D}_{{\alpha, \beta}}$ and find that increasing $\lambda$ to $4$ in the double bootstrap can return uniformly distributed $p$-values.  
As a validation, (e) presents the estimated distribution of $p$-values obtained from AB with $\lambda=4$.  
We can see that  (d) indeed captures the uniform distribution (e), suggesting $\lambda=4$ is a good tuning parameter in this case. 
 \begin{figure}[!htbp]
 \graphicspath{{figures/for_comments/R2_C3_tuning/}} 
     \centering
       \caption{Scenario 1: $\alphaS=\betaM=0$. Q-Q plots of  sampled $p$-values.} \label{fig:tuning00_0}
  \begin{subfigure}[b]{0.19\textwidth}
  \caption{$\mathcal{P}_{\alpha}^*(0)$}
   \includegraphics[width=\textwidth]{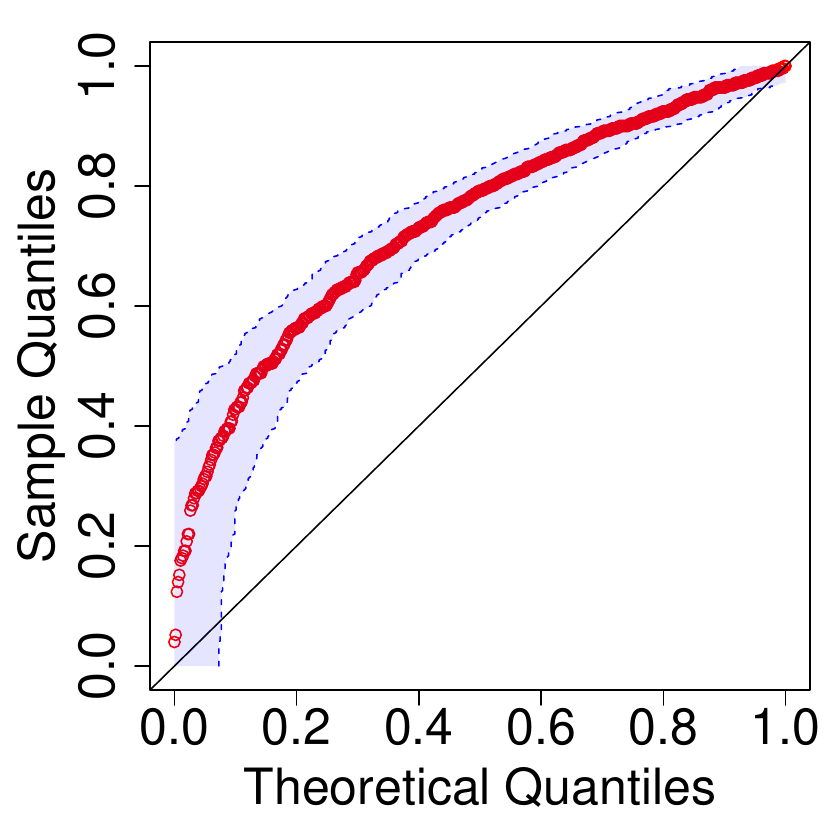}
  \end{subfigure}\ 
   \begin{subfigure}[b]{0.19\textwidth}
    \caption{$\mathcal{P}_{\beta}^*(0)$}
   \includegraphics[width=\textwidth]{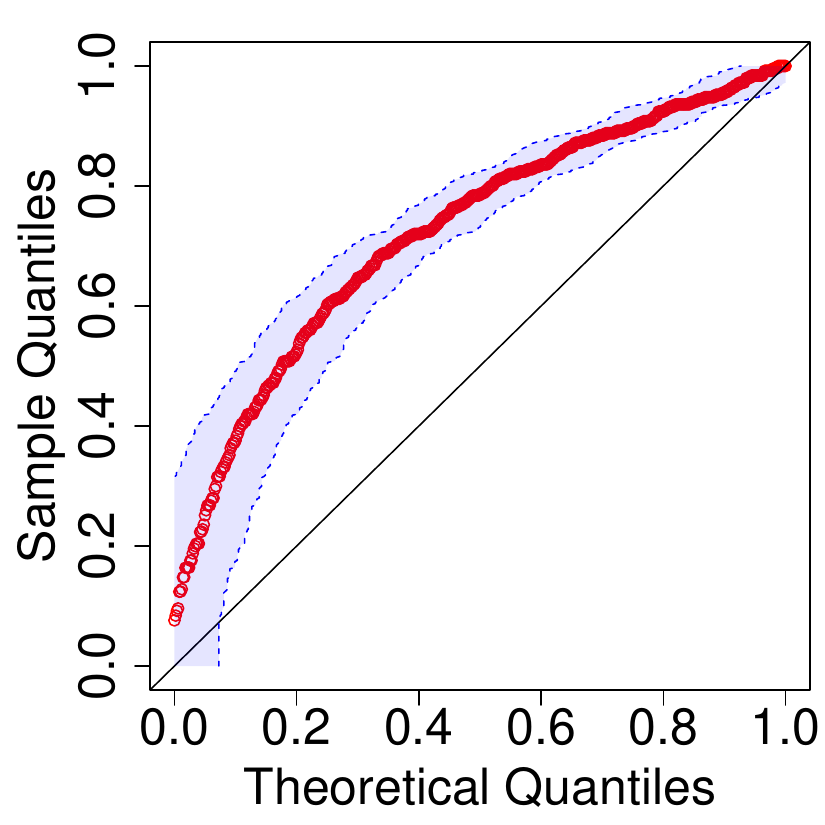}
  \end{subfigure}\ 
    \begin{subfigure}[b]{0.19\textwidth}
        \caption{ $\mathcal{P}^*(0)$} 
   \includegraphics[width=\textwidth]{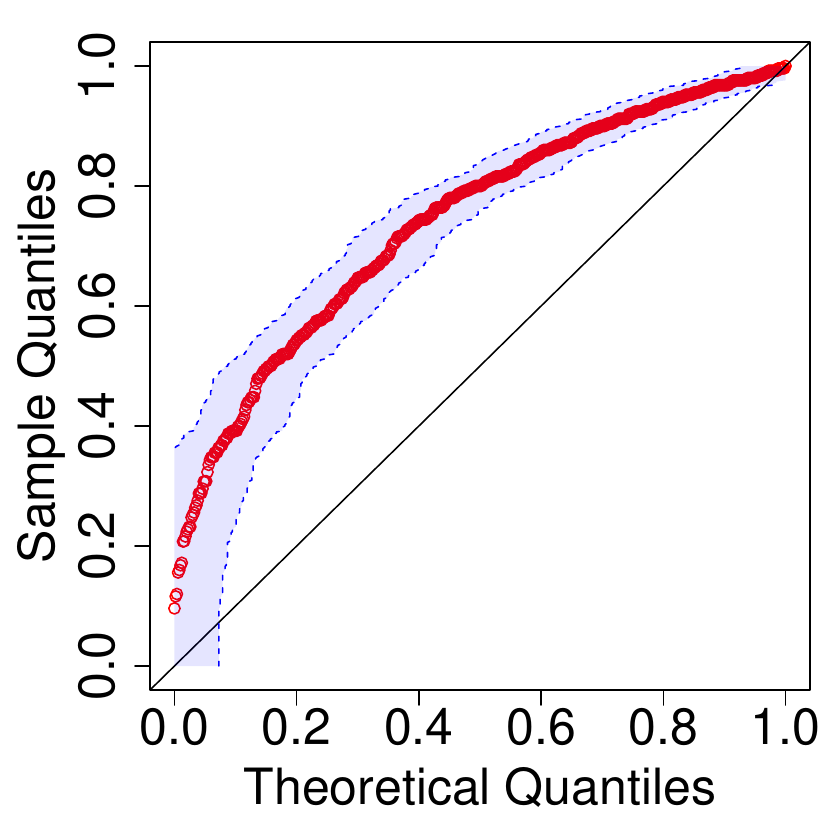}
  \end{subfigure}
   \begin{subfigure}[b]{0.19\textwidth}
    \caption{$\mathcal{P}_{\alpha,\beta}^*(4)$}
   \includegraphics[width=\textwidth]{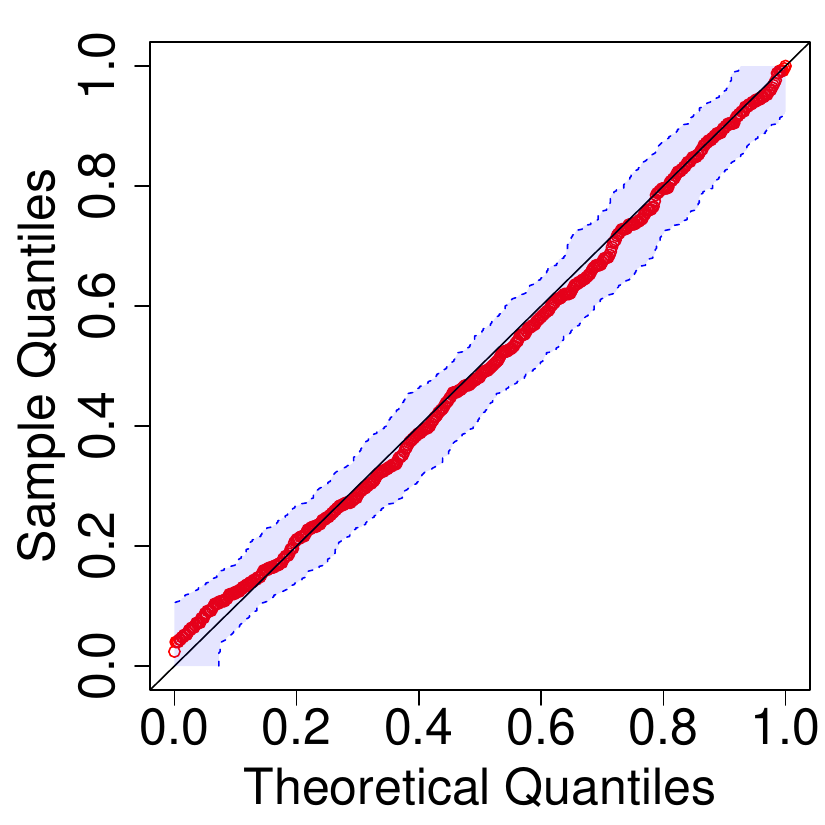}
  \end{subfigure}\ 
    \begin{subfigure}[b]{0.19\textwidth}
        \caption{ $\mathcal{P}^*(4)$} 
   \includegraphics[width=\textwidth]{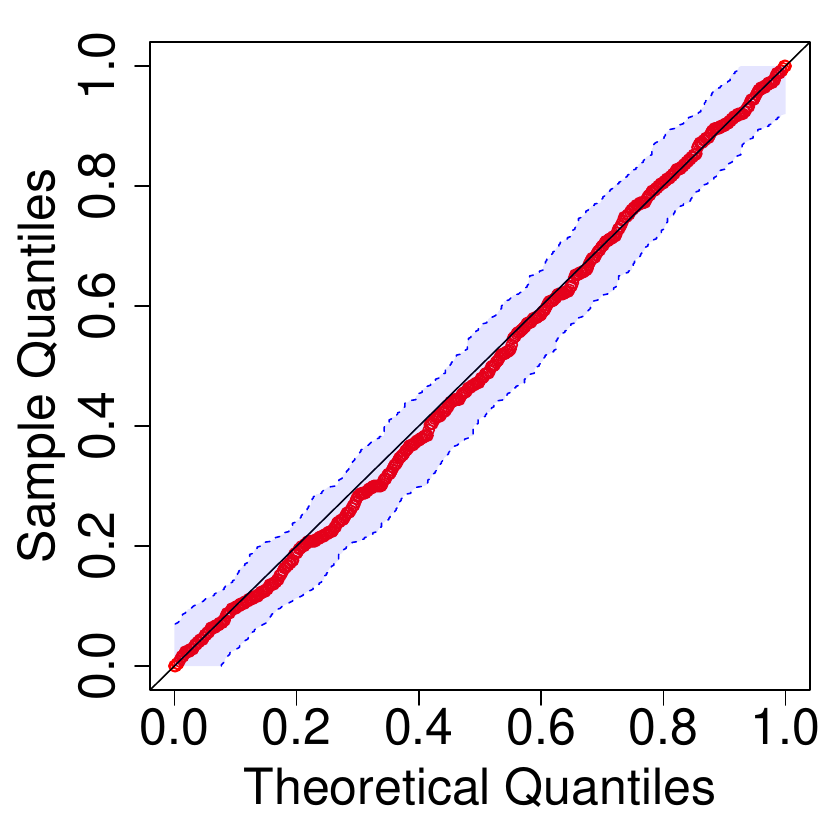}
  \end{subfigure} 
\end{figure}

\paragraph{Scenario 2: $\alphaS=0.5$ and $\betaM=0$.}  
By (a) and (b)  of Figure \ref{fig:tuning005_0},  $\mathcal{P}_{\alpha}^*(0)$ and $\mathcal{P}_{\beta}^*(0)$ in Step 2 are conservative and non-conservative, respectively. 
This suggests that at least one of the true coefficients is non-zero. 
Thus, we can choose any $\lambda$ such that the double bootstrap yields  estimated $p$-values similar to those in Step 2. 
The similarity between (d)--(e) and  (a)--(b) indicates that $p$-values estimated by AB with  $\lambda=2$ are similar to those in Step 2.  
This can be validated by comparing (c) and (f)   of Figure \ref{fig:tuning005_0}, which  show that  increasing $\lambda$ to $2$ in AB still yields uniformly distributed $p$-values similar to those with $\lambda=0$.

\begin{figure}[!htbp]
     \centering
       \caption{Scenario 2: $\alphaS=0.5$ and $\betaM=0$.} \label{fig:tuning005_0} 
  \begin{subfigure}[b]{0.16\textwidth}
  \caption{$\mathcal{P}_{\alpha}^*(0)$}
   \includegraphics[width=\textwidth]{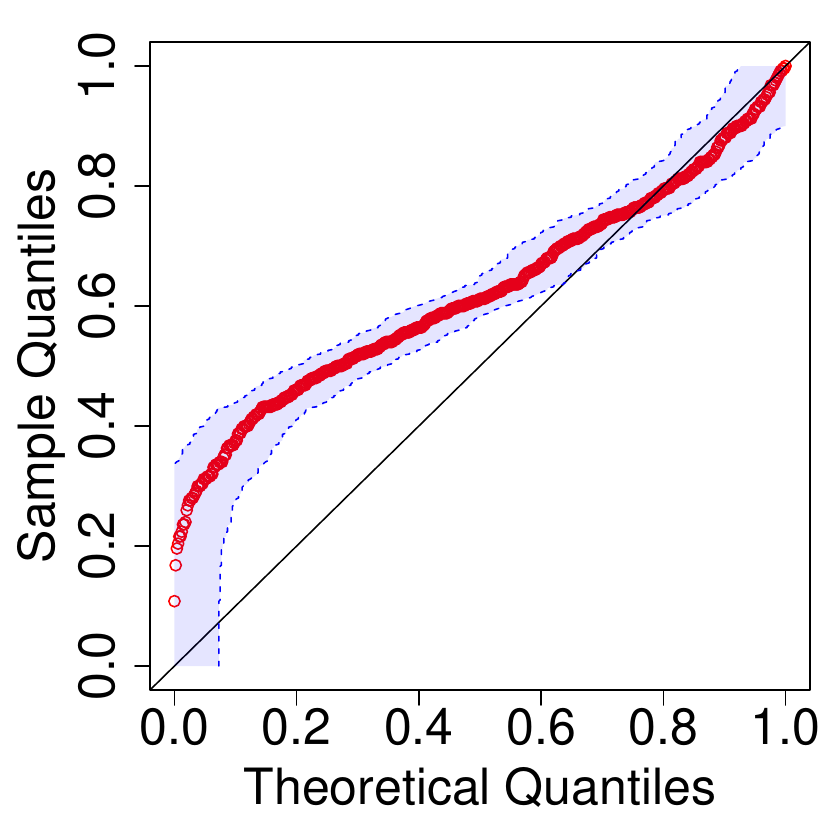}
  \end{subfigure}\   
    \begin{subfigure}[b]{0.16\textwidth}
  \caption{$\mathcal{P}_{\beta}^*(0)$}
   \includegraphics[width=\textwidth]{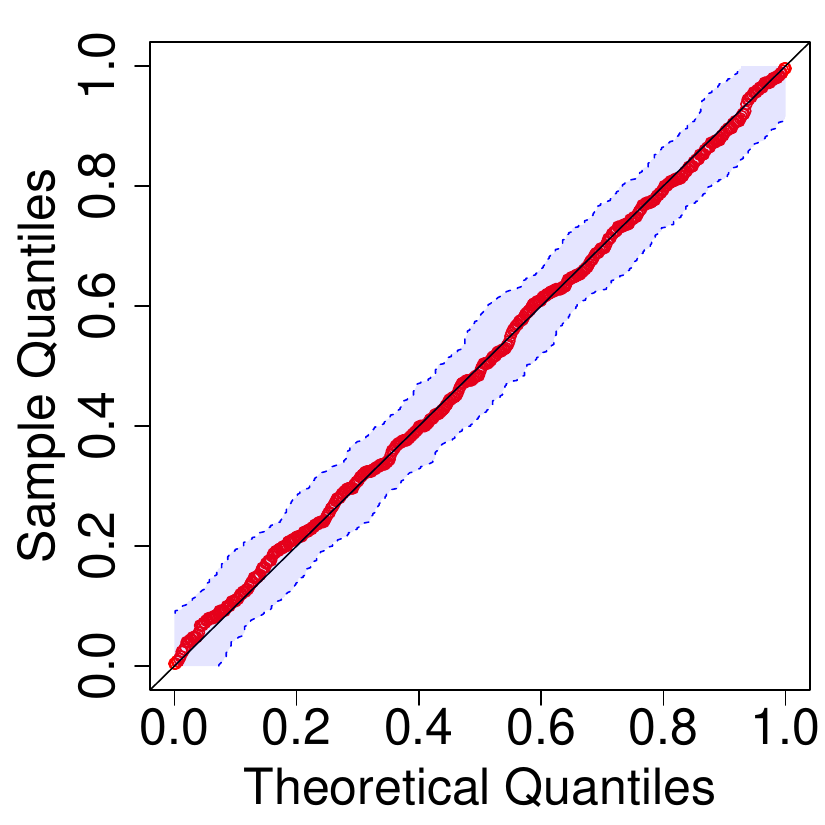}
  \end{subfigure}\  
      \begin{subfigure}[b]{0.16\textwidth}
  \caption{$\mathcal{P}^*(0)$}
   \includegraphics[width=\textwidth]{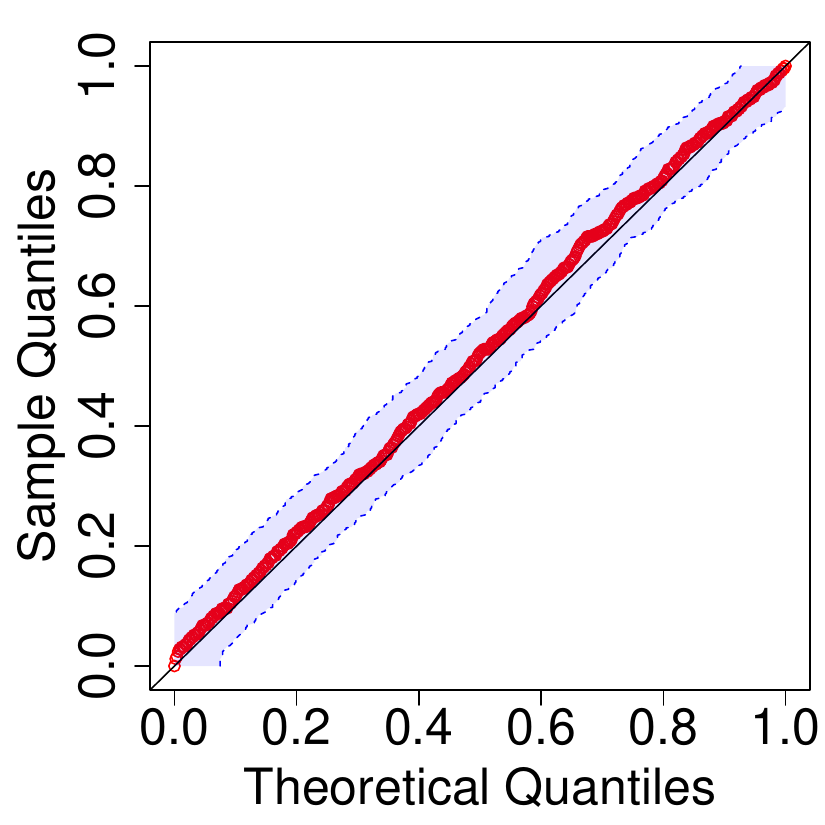}
  \end{subfigure} 
    \begin{subfigure}[b]{0.16\textwidth}
  \caption{$\mathcal{P}_{\alpha}^*(2)$}
   \includegraphics[width=\textwidth]{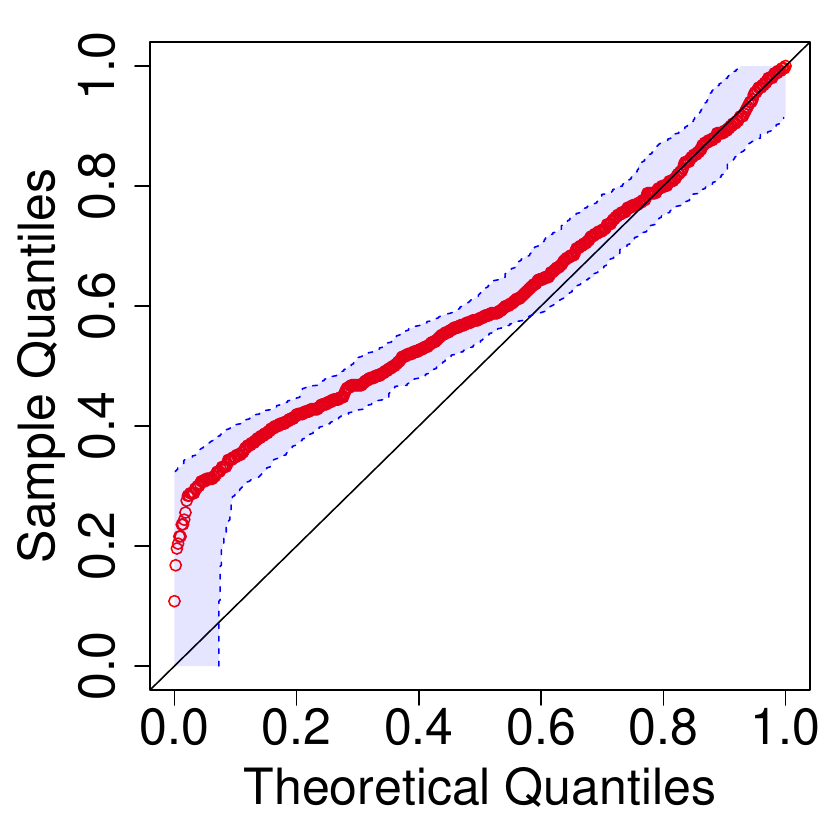}
  \end{subfigure}\  
     \begin{subfigure}[b]{0.16\textwidth}
  \caption{$\mathcal{P}_{\beta}^*(2)$}
   \includegraphics[width=\textwidth]{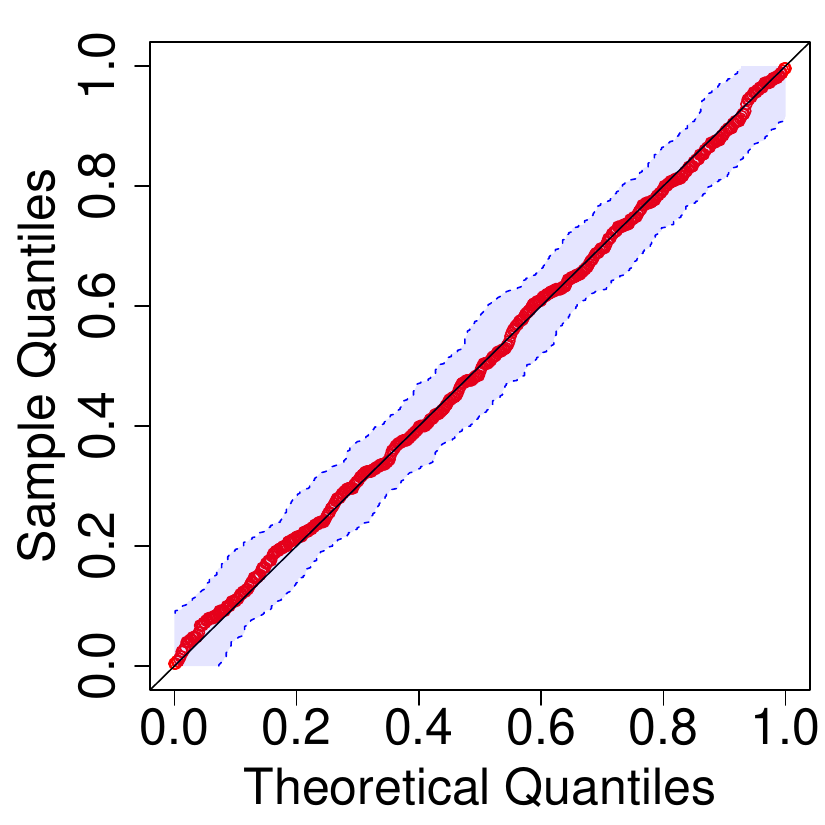}
  \end{subfigure}\  
      \begin{subfigure}[b]{0.16\textwidth}
  \caption{$\mathcal{P}^*(2)$}
   \includegraphics[width=\textwidth]{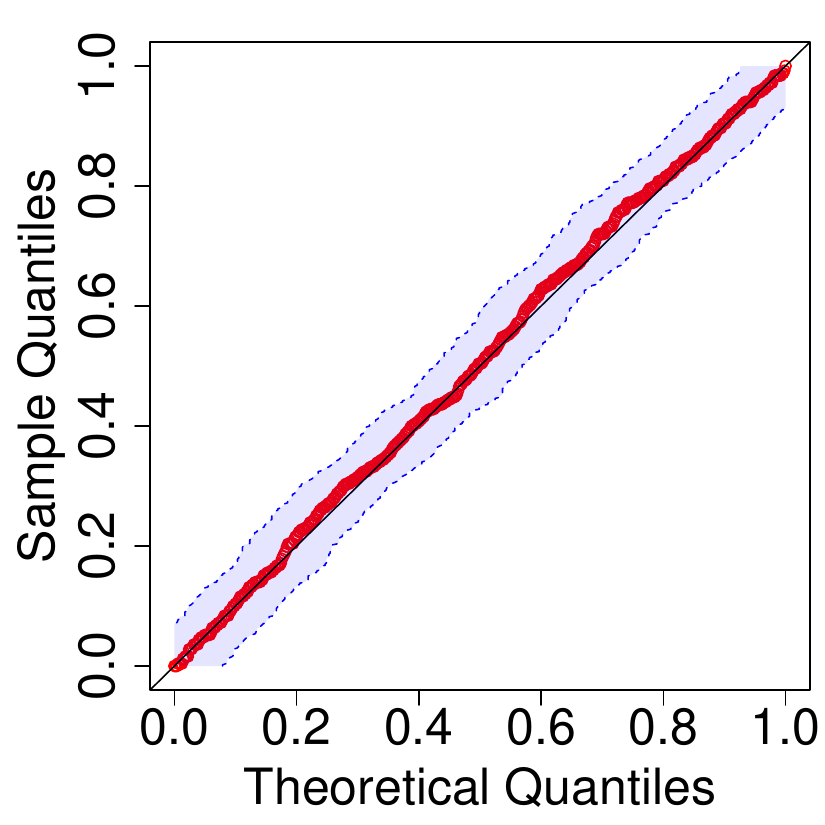}
  \end{subfigure} 
\end{figure} 

\paragraph{Scenario 3: $\alphaS=0$ and $\betaM=0.5$.} 
The results are similar to those under Scenario 2, and similar analysis applies. 
\begin{figure}[!htbp]
     \centering
       \caption{Scenario 3: $\alphaS=0$ and $\betaM=0.5$.} \label{fig:tuning005_0}
  \begin{subfigure}[b]{0.16\textwidth}
  \caption{$\mathcal{P}_{\alpha}^*(0)$}
   \includegraphics[width=\textwidth]{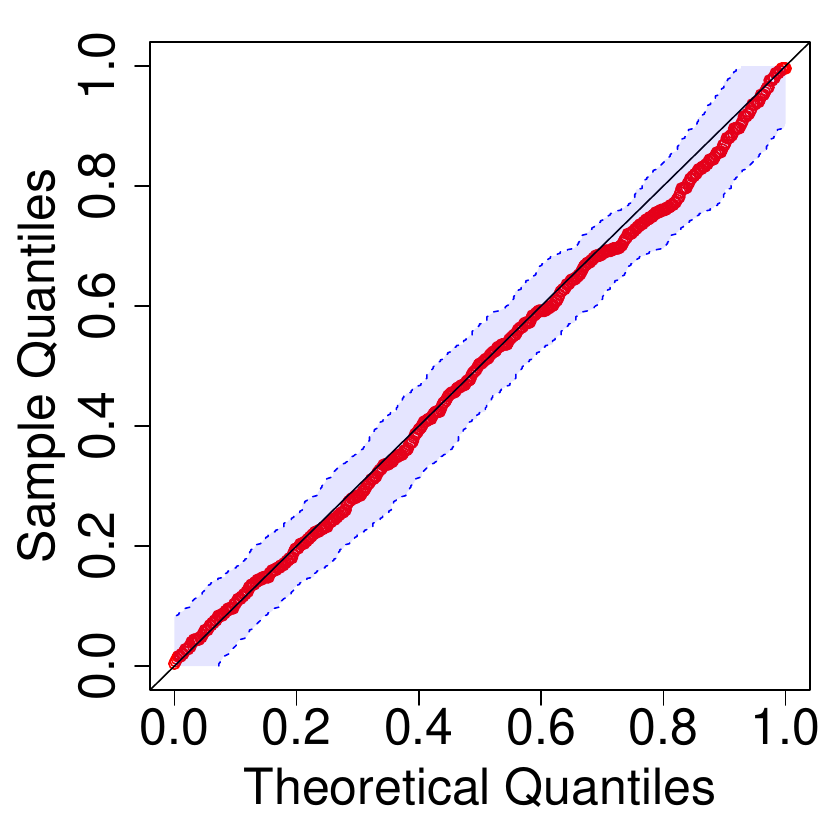}
  \end{subfigure}\   
    \begin{subfigure}[b]{0.16\textwidth}
  \caption{$\mathcal{P}_{\beta}^*(0)$}
   \includegraphics[width=\textwidth]{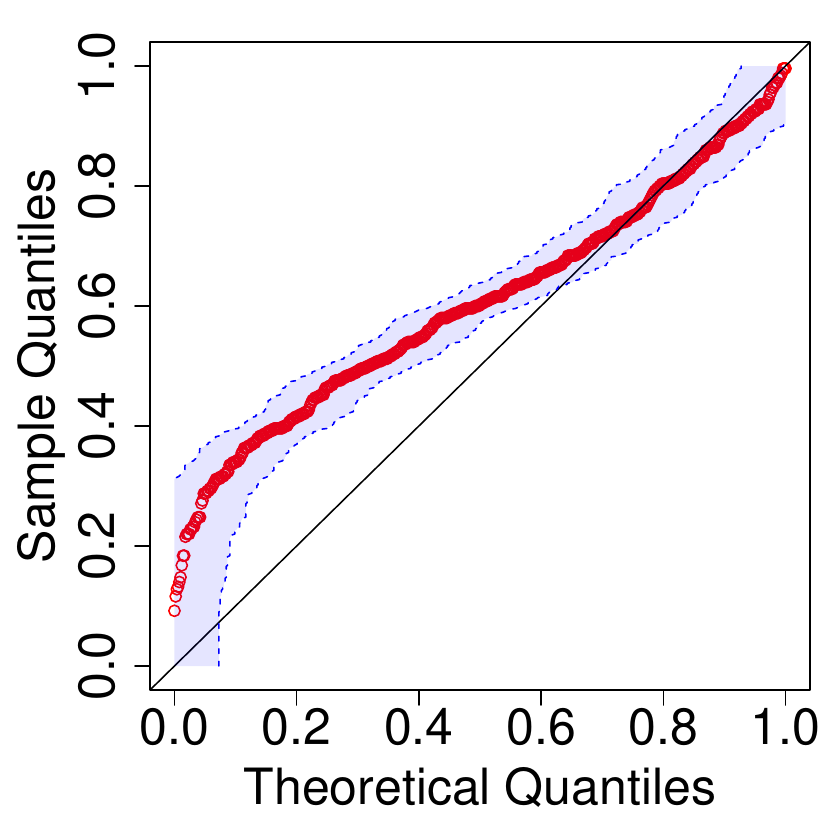}
  \end{subfigure}\  
      \begin{subfigure}[b]{0.16\textwidth}
  \caption{$\mathcal{P}^*(0)$}
   \includegraphics[width=\textwidth]{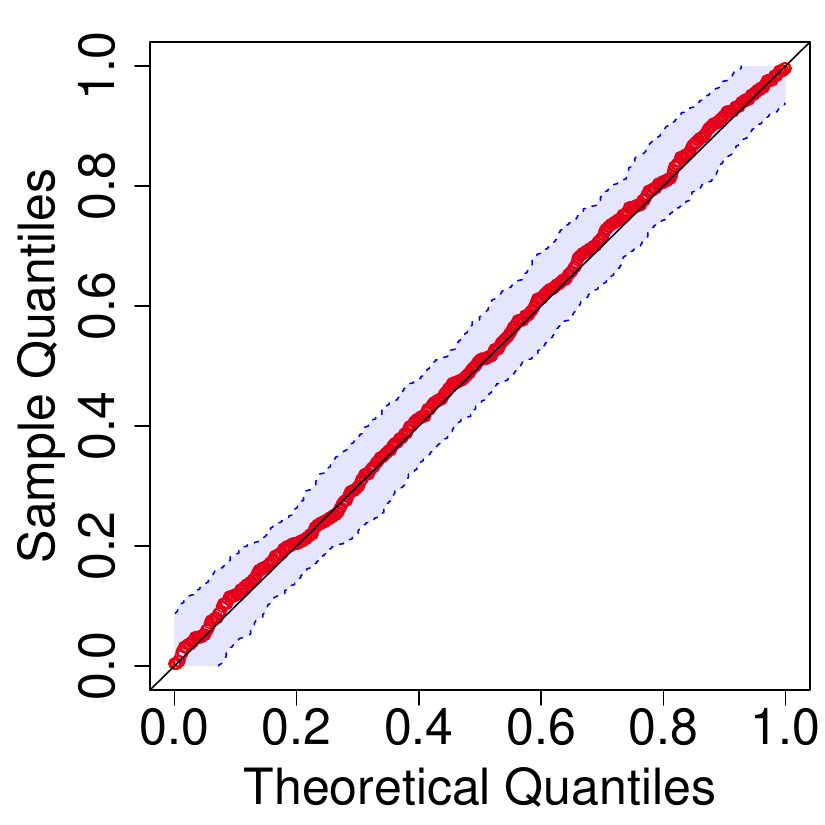}
  \end{subfigure}
  \begin{subfigure}[b]{0.16\textwidth} 
  \caption{$\mathcal{P}_{\alpha}^*(2)$}
   \includegraphics[width=\textwidth]{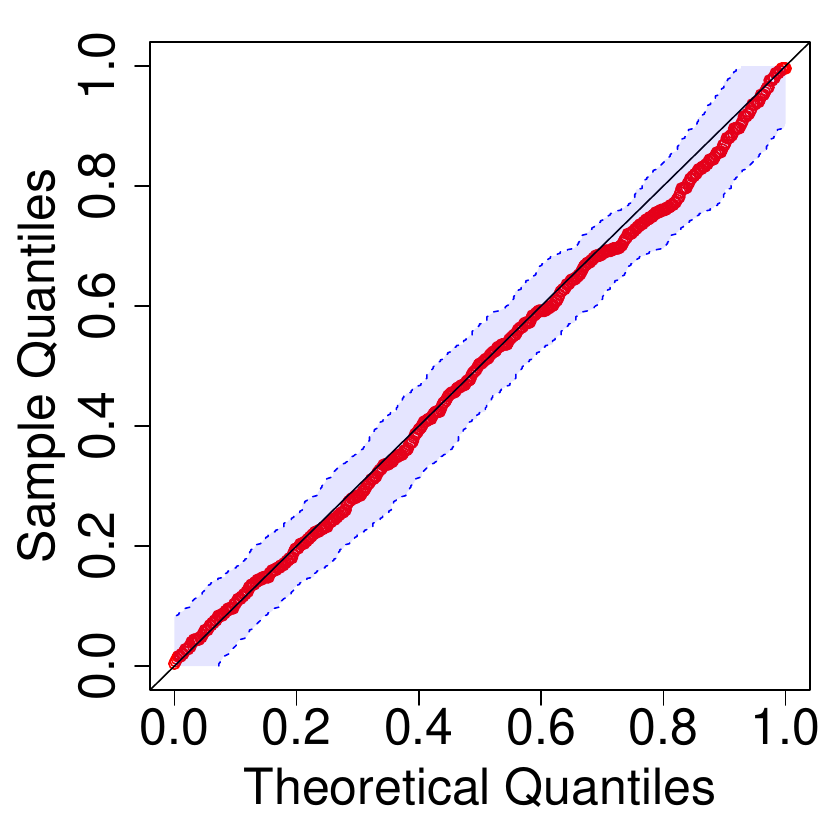}
  \end{subfigure}\ 
  \begin{subfigure}[b]{0.16\textwidth}
  \caption{$\mathcal{P}_{\beta}^*(2)$}
   \includegraphics[width=\textwidth]{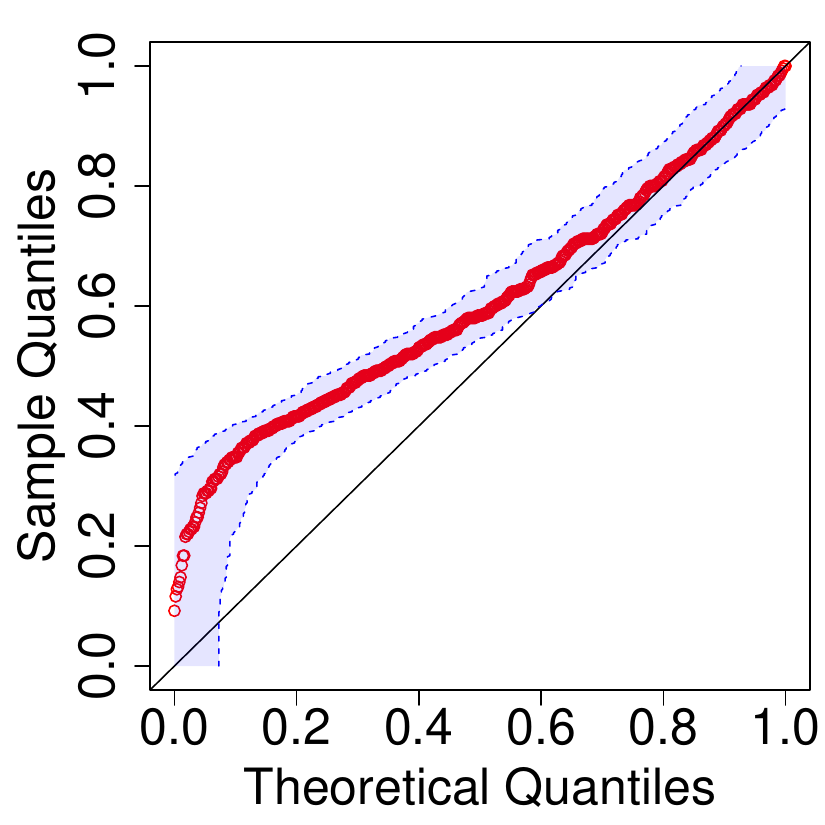}
  \end{subfigure}\ 
   \begin{subfigure}[b]{0.16\textwidth}
  \caption{$\mathcal{P}^*(2)$} 
   \includegraphics[width=\textwidth]{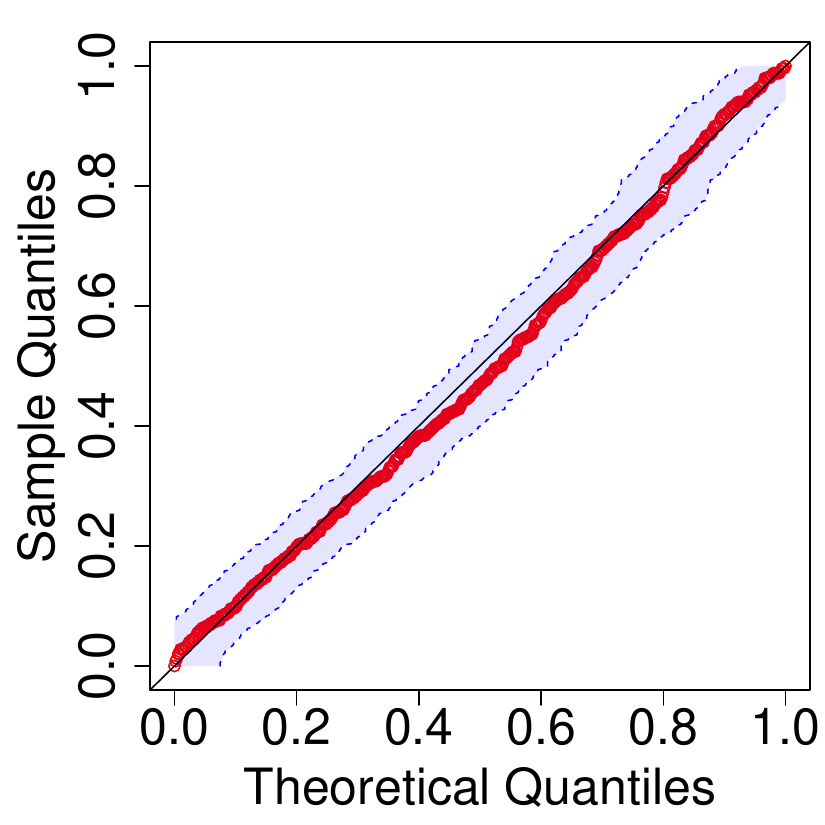}
  \end{subfigure}
 \end{figure}


\subsection{Computation Facilitation: Projected Bootstrap}\label{sec:suppcomputation}

In this section, we propose a projected bootstrap procedure to facilitate the computation. 
In Theorem  \ref{thm:bootstrapprodcomb}, 
we establish the bootstrap consistency results for $\hatalphaSn^*,$ $\hatbetaMn^*,$ $\hatsigmaalphan^*,$ $\hatsigmabetan^*$, and $\mathbb{R}_{n}^*(\balpha, \bbeta)$ that are computed from the nonparametric bootstrap, i.e., paired bootstrap in the regression settings.
Particularly, for a bootstrapped index set $\mathcal{I} = \{k_1, \ldots, k_n \}$, where $k_j\in \{1,\ldots, n\}$ for $j=1,\ldots, n$, 
the nonparametric bootstrap estimates
 are computed from the ordinary least squares regressions based on the bootstrapped data $\{(\S_i, \X_i, \M_i, \Y_i ): i \in \mathcal{I} \}$. 
However, since the ordinary least squares regressions require matrix inversion, 
repeating this procedure for the structural equation models in each bootstrap can be computationally intensive.

Alternatively, inspired by the formulae in \eqref{eq:olsfwlform}, 
 we introduce a bootstrap procedure with a projection step to facilitate the computation. 
Specifically, 
we first compute the projected observations $\{(\hatSperpi, \hatMperpi, \hatMperppi, \hatYperppi): i = 1, \ldots, n \}$ as defined in Section \ref{sec:preliminary}. 
Then in the projected bootstrap procedure, 
with a bootstrapped index set $\mathcal{I}=\{k_1,\ldots, k_n\}$, 
we compute the coefficients by
\begin{align*}
\hatalphaSgn^* = \frac{\sum_{i\in \mathcal{I}} \hatSperpi \hatMperpi}{\sum_{i\in \mathcal{I}} \hatSperpi^2} = \frac{\PPboot(\hatSperp \hatMperp) }{\PPboot(\hatSperp^2)}, \hspace{2em} \hatbetaMgn^* = \frac{\sum_{i\in \mathcal{I}}\hatMperppi \hatYperppi }{\sum_{i\in \mathcal{I}} \hatMperppi^2}=\frac{\PPboot( \hatMperpp \hatYperpp ) }{\PPboot( \hatMperpp^2)}, 	
\end{align*}
and obtain the residuals by
 $\hatenMperpbooti = \hatMperpi - \hatSperpi \hatalphaSgn $ and $\hatenYperpbooti = \hatYperppi - \hatMperppi \hatbetaMgn $ for $i\in \mathcal{I}$. 
Moreover, we define
\begin{align*}
(\hatsigmaalphaperpboot)^2 =& {\sum_{i\in \mathcal{I}}\hatenMperpbooti^2}/{n},\ &	\VVSperpbootn =& \sum_{i\in \mathcal{I}} \hatSperpi^2/n,\ & \ZZSperpbootn =&\sum_{i\in \mathcal{I}} \hatenMperpbooti \hatSperpi/n, & \notag \\
(\hatsigmabetaperpboot)^2 =& {\sum_{i\in \mathcal{I}}\hatenYperpbooti^2}/{n},\ &	\VVMperpbootn =& \sum_{i\in \mathcal{I}} \hatMperppi^2/n,\ & \ZZMperpbootn =&\sum_{i\in \mathcal{I}} \hatenYperpbooti \hatMperppi/n,
\end{align*}
which can be viewed as projected bootstrap versions of $\hatsigmaalphan^*$, $\hatsigmabetan^*$,  $\ZZSbootn$, $\ZZMbootn$, $\VVSbootn$, and $\VVMbootn$,  
respectively, 
where we replace $(\hatenM, \hatenY, \Sperp^*, \Mperpp^*)$ with $(\hatenMperpboot, \hatenYperpboot, \hatSperp, \hatMperpp)$.

In the proposed projected bootstrap, matrix inversion is only required in the projection step, and not repeated.  
Therefore, the computational cost can be significantly reduced. 
Theoretically, we can prove the following Lemma \ref{lm:perpconsisall},
and then by Slutsky's lemma, the bootstrap consistency results in Theorems  \ref{thm:bootstrapprodcomb} and \ref{thm:bootstrapmaxp} still hold  for the projected bootstrap procedure. 
The proof of Lemma \ref{lm:perpconsisall} is given in Section \ref{sec:pfperpconsisall}. 

\begin{lemma}\label{lm:perpconsisall}
Under Condition \ref{cond:inversemoment}, 
\begin{enumerate}
	\item[(1)] $\sqrt{n}(\hatalphaSn^* -  \hatalphaSgn^* )\overset{\mathrm{P}^*}{\leadsto} 0$ and  $\sqrt{n}(\hatbetaMn^* -  \hatbetaMgn^* )\overset{\mathrm{P}^*}{\leadsto} 0$;
	\item[(2)] $ \ZZSbootn- \ZZSperpbootn \overset{\mathrm{P}^*}{\leadsto} 0$ and $\ZZMbootn - \ZZMperpbootn\overset{\mathrm{P}^*}{\leadsto} 0$;
	\item[(3)] $\VVSbootn - \VVSperpbootn \overset{\mathrm{P}^*}{\leadsto}  0$ and $\VVMbootn - \VVMperpbootn\overset{\mathrm{P}^*}{\leadsto} 0$;
	\item[(4)] $ (\hatsigmaalphan^*)^2 - (\hatsigmaalphaperpboot)^2 \overset{\mathrm{P}^*}{\leadsto} 0$ and $ (\hatsigmabetan^*)^2 - (\hatsigmabetaperpboot)^2\overset{\mathrm{P}^*}{\leadsto}  0$.  
\end{enumerate}	
\end{lemma}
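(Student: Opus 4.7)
\textbf{Proof sketch for Lemma \ref{lm:perpconsisall}.}
The plan is to reduce each of the four claims to a bilinear or quadratic form in the Q-moment discrepancies $\hat{Q}_{1,\S} - Q_{1,\S}^{*}$, $\hat{Q}_{1,\M} - Q_{1,\M}^{*}$, $\hat{Q}_{2,\M} - Q_{2,\M}^{*}$, $\hat{Q}_{2,\S} - Q_{2,\S}^{*}$, each of which is $o_{\mathrm{P}^{*}}(1)$ (in fact $O_{\mathrm{P}^{*}}(n^{-1/2})$) by Lemma \ref{lm:consistencyqmoments} together with the almost-sure convergence $\hat{Q}_{\cdot,\cdot}\to Q_{\cdot,\cdot}$. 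The common device is the Frisch--Waugh--Lovell identity of Lemma \ref{lm:regresstrans}, which writes both the standard bootstrap quantities (formed from the reprojected variables $\Sperpboot,\Mperpboot,\Mperppboot,\Yperppboot$) and the projected bootstrap quantities (formed from $\hatSperp,\hatMperp,\hatMperpp,\hatYperpp$) as ratios and sums of empirical moments, so that their differences are visibly linear in the Q-discrepancies.

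I would handle Parts (3), (1), and (2) in that order, each relying on the previous. For Part (3), expanding $\VVSbootn - \VVSperpbootn = \PPboot\{(\Sperpboot-\hatSperp)(\Sperpboot+\hatSperp)\}$ and using the orthogonality $\PPboot\{(\Sperpboot-\hatSperp)\Sperpboot\}=0$ (which holds by the defining equation of $Q_{1,\S}^{*}$) collapses the difference to a quadratic form $-(\hat{Q}_{1,\S}-Q_{1,\S}^{*})^{\mytrans}\PPboot(\X\X^{\mytrans})(\hat{Q}_{1,\S}-Q_{1,\S}^{*})$, which vanishes. For Part (1), I multiply $\sqrt{n}(\hatalphaSn^{*}-\hatalphaSgn^{*})$ by the common denominator $\PPboot\{(\Sperpboot)^{2}\}\PPboot(\hatSperp^{2})$ and split the result into $\sqrt{n}[\PPboot(\Sperpboot\Mperpboot)-\PPboot(\hatSperp\hatMperp)]\PPboot(\hatSperp^{2})$ plus a term from the denominator discrepancy; each bracket expands by the same orthogonality trick into a bilinear form like $(\hat{Q}_{1,\S}-Q_{1,\S}^{*})^{\mytrans}\PPboot(\X\X^{\mytrans})(\hat{Q}_{1,\M}-Q_{1,\M}^{*})$, and the $\sqrt{n}$ factor is absorbed because each Q-discrepancy is $O_{\mathrm{P}^{*}}(n^{-1/2})$. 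Part (2) is analogous: $\ZZSperpbootn - \ZZSbootn = \GGboot\{\hatenM(\hatSperp-\Sperpboot)\}$, and after substituting $\hatenM = \hatMperp - \hatalphaSn\hatSperp$ and $\hatSperp-\Sperpboot = (Q_{1,\S}^{*}-\hat{Q}_{1,\S})^{\mytrans}\X$, the expression becomes $(Q_{1,\S}^{*}-\hat{Q}_{1,\S})$ contracted against bootstrap empirical moments such as $\GGboot(\X\X^{\mytrans})$, $\GGboot(\Mperp\X^{\mytrans})$, and $\GGboot(\Sperp\X^{\mytrans})$, which are $O_{\mathrm{P}^{*}}(1)$.

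The main obstacle is Part (4), because the two residual systems satisfy different orthogonality relations: the standard-bootstrap residual $\hatenM^{*} = \Mperpboot - \hatalphaSn^{*}\Sperpboot$ obeys $\PPboot(\X\hatenM^{*}) = \mathbf{0}$ and $\PPboot(\S\hatenM^{*})=0$ by the OLS normal equations on the fully reprojected design, whereas the projected-bootstrap residual $\hatenMperpboot = \hatMperp - \hatalphaSgn^{*}\hatSperp$ only satisfies $\PPboot(\hatSperp\hatenMperpboot)=0$. I would bridge them by writing
\[
\hatenMperpboot - \hatenM^{*} = (Q_{1,\M}^{*}-\hat{Q}_{1,\M})^{\mytrans}\X + (\hatalphaSn^{*}-\hatalphaSgn^{*})\S - (\hatalphaSn^{*}Q_{1,\S}^{*} - \hatalphaSgn^{*}\hat{Q}_{1,\S})^{\mytrans}\X,
\]
then expanding $\PPboot\{(\hatenMperpboot)^{2}-(\hatenM^{*})^{2}\} = \PPboot\{(\hatenMperpboot-\hatenM^{*})(\hatenMperpboot+\hatenM^{*})\}$ and exploiting the orthogonality of $\hatenM^{*}$ against $\X$ and $\S$ together with the already-established Part (1) to collapse every summand into Q-discrepancy bilinear forms. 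Dividing by $\PPboot(\hatSperp^{2})$ (bounded away from zero by Part (3) combined with the consistency $\PPboot(\hatSperp^{2}) \to \mathrm{E}(\Sperp^{2})$) then yields the stated $\alphaS$-variance comparison, and the $\betaM$-case is handled in parallel by replacing $(\S,\M,\X,Q_{1,\cdot})$ with $(\M,\Y,\tilde{\X},Q_{2,\cdot})$.
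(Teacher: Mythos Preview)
Your proposal is correct and matches the paper's proof essentially line for line: the same common-denominator split and bilinear Q-discrepancy collapse for Part~(1), the same $\GGboot\{\hatenM(\hatSperp-\Sperpboot)\}$ expansion for Part~(2), the same quadratic-form identity \eqref{eq:vsmomentasymconsis} for Part~(3), and the same residual-difference decomposition combined with the OLS orthogonality $\PPboot(\X\hatenM^{*})=\mathbf{0}$, $\PPboot(\S\hatenM^{*})=0$ for Part~(4). The only cosmetic difference is that you treat the parts in the order (3), (1), (2), (4) to make the dependencies explicit, whereas the paper presents them sequentially and refers back to \eqref{eq:vsmomentasymconsis} when needed.
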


\clearpage
\section{Additional Numerical Results}\label{sec:addnumres}

In this section, Section \ref{sec:suppfigsim} presents figures supplementary to Section \ref{sec:nulltype1error} in the main text.  Section \ref{sec:largeeffectsample} presents additional simulation experiments examining the effects of varying signal sizes and sample sizes.  
Section \ref{sec:splitanalysis} presents additional data analysis results including marginal screening,  the joint testing, a sensitivity analysis, interpretation of data analysis results, and a confirmatory analysis,. 


\subsection{QQ-Plots Supplementary to Section \ref{sec:nulltype1error}}\label{sec:suppfigsim}

  \begin{figure}[!htbp]
 \captionsetup[subfigure]{labelformat=empty} 
  \centering
       \caption{Q-Q plots of $p$-values under the fixed null with $n = 500$. } \label{fig:fixnulln500}
  \begin{subfigure}[b]{0.29\textwidth}
      \caption{\footnotesize{\quad \  $H_{0,1}: (\alphaS, \betaM)=(0, 0.5)$}}
  \includegraphics[width=\textwidth]{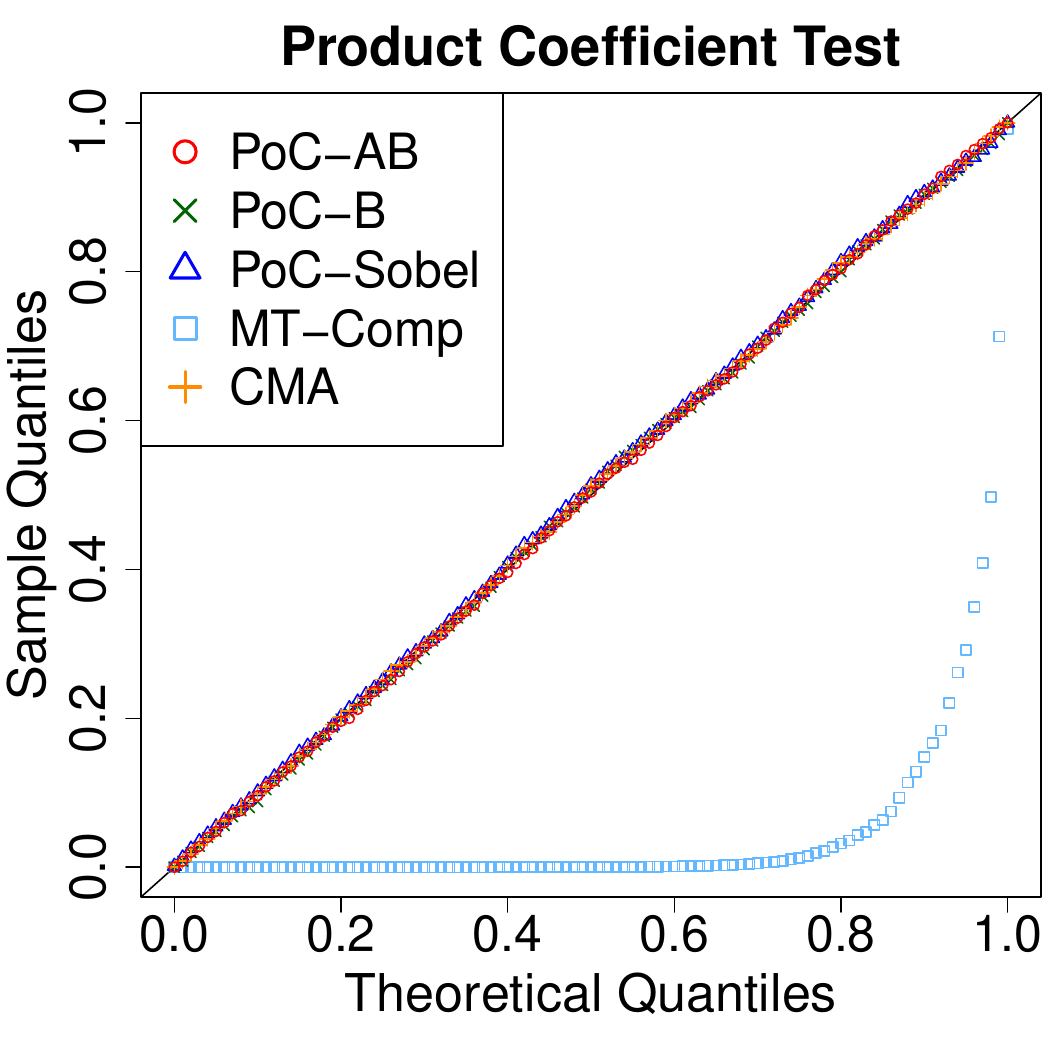}
     \end{subfigure} \  
       \begin{subfigure}[b]{0.29\textwidth}
   \caption{\footnotesize{\quad \  $H_{0,2}: (\alphaS, \betaM)=(0.5, 0)$}}
     \includegraphics[width=\textwidth]{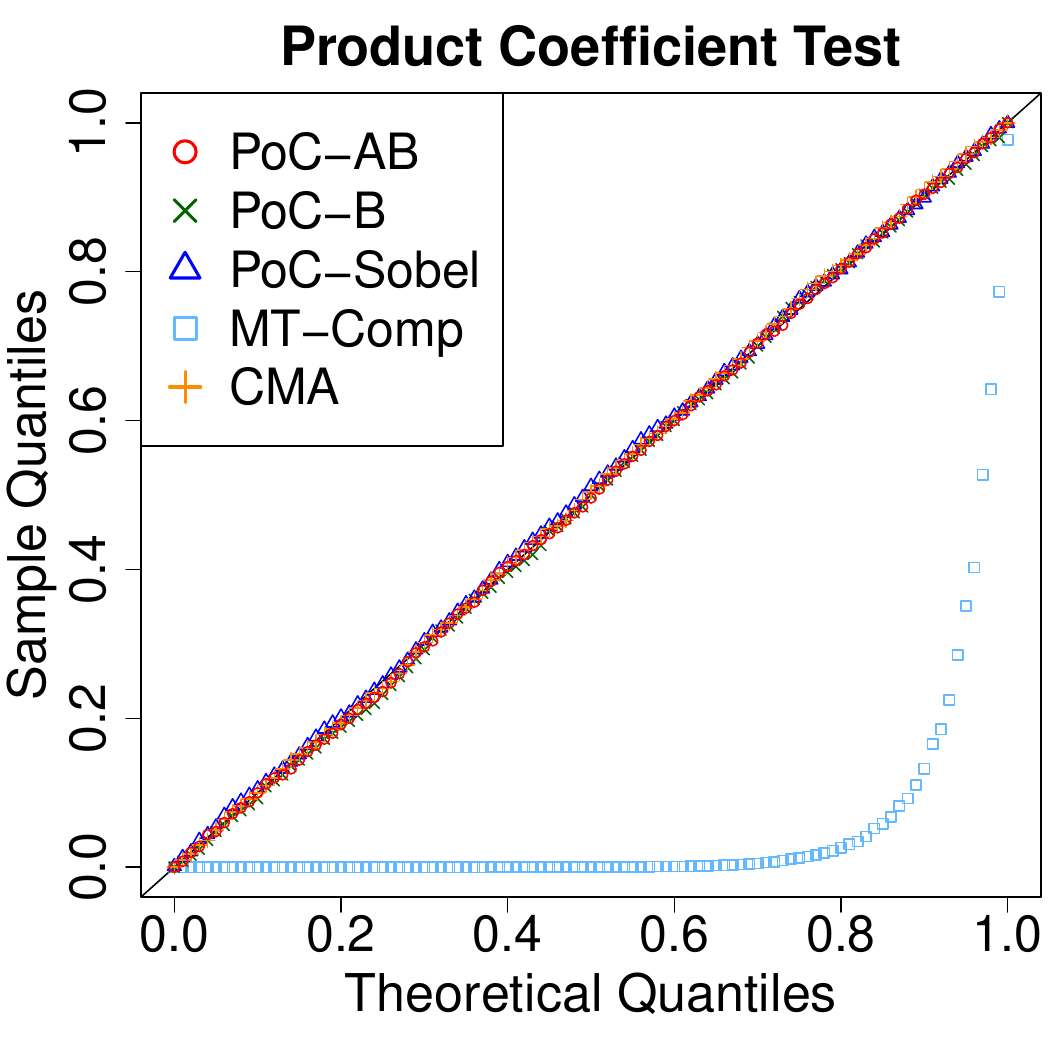}
     \end{subfigure} \
       \begin{subfigure}[b]{0.29\textwidth}
   \caption{\footnotesize{\quad \  $H_{0,3}: (\alphaS, \betaM)=(0,0)$}}
     \includegraphics[width=\textwidth]{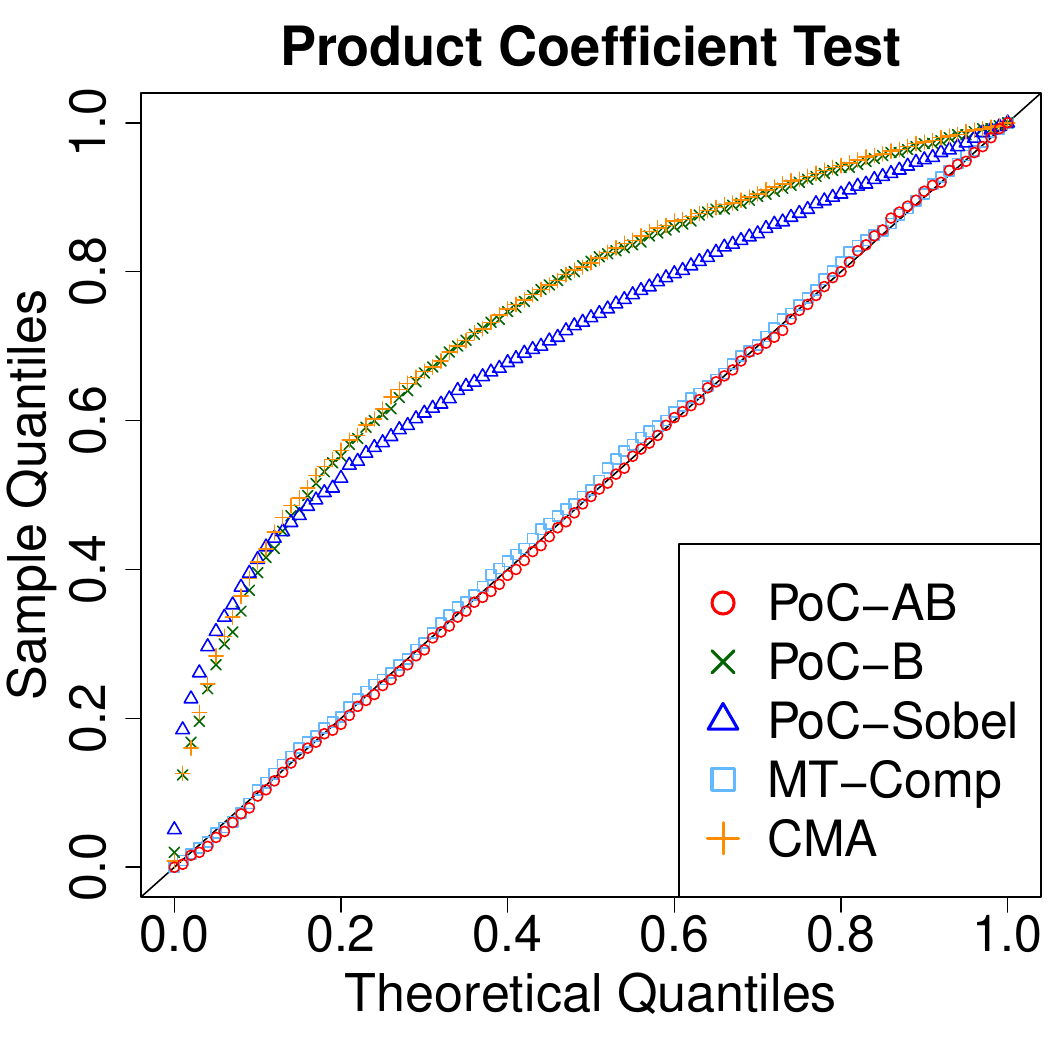}
   \end{subfigure} \\ 
       \begin{subfigure}[b]{0.29\textwidth}
             \includegraphics[width=\textwidth]{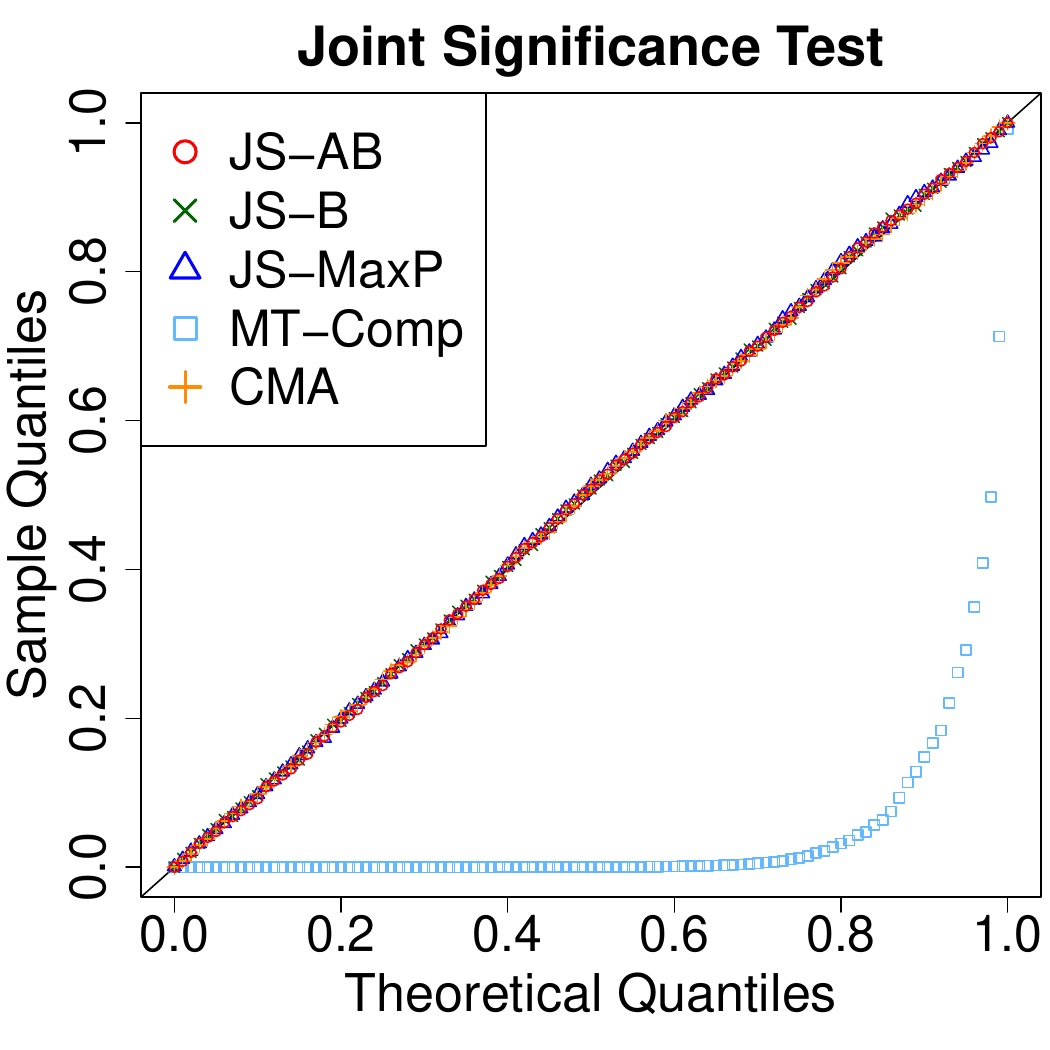}
       \end{subfigure}\
       \begin{subfigure}[b]{0.29\textwidth}
             \includegraphics[width=\textwidth]{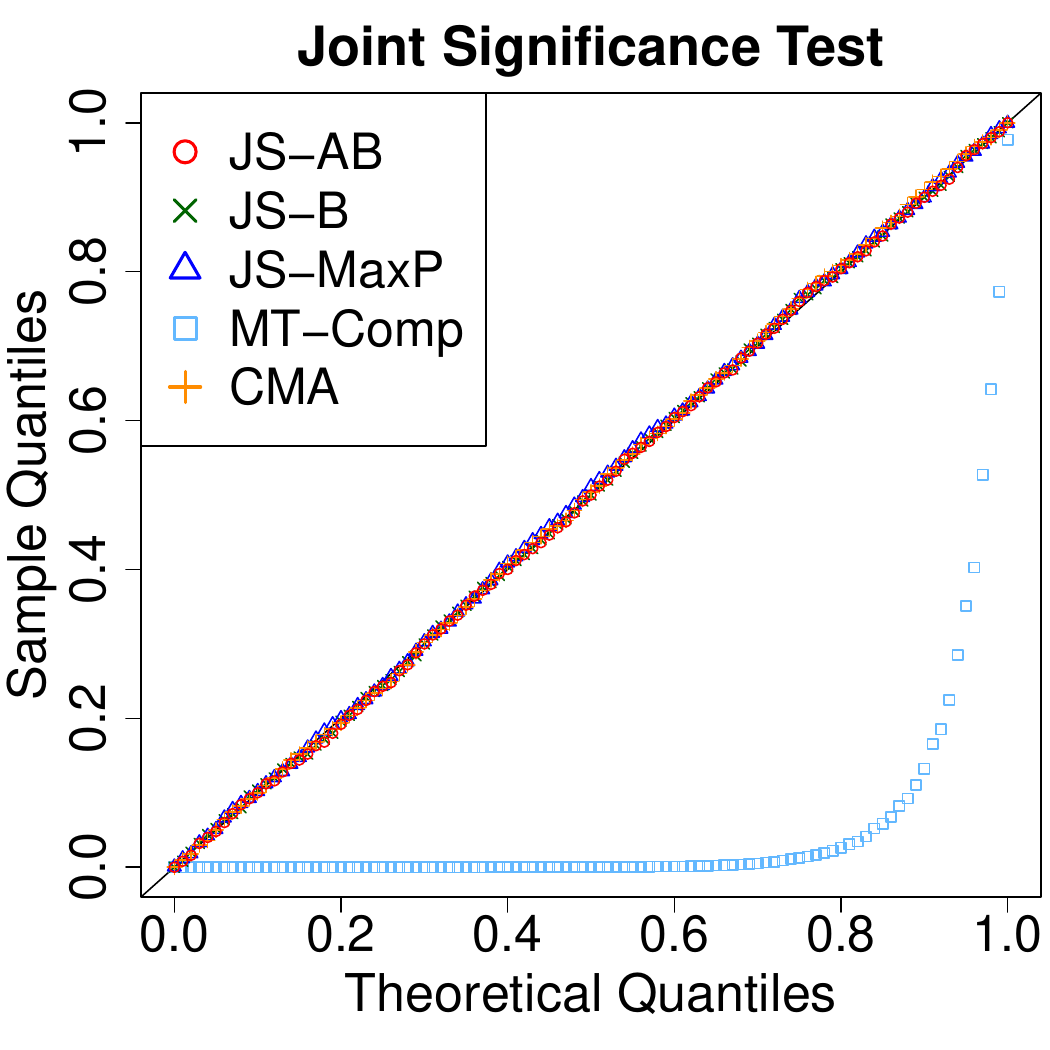}
       \end{subfigure} \ 
   \begin{subfigure}[b]{0.29\textwidth}
             \includegraphics[width=\textwidth]{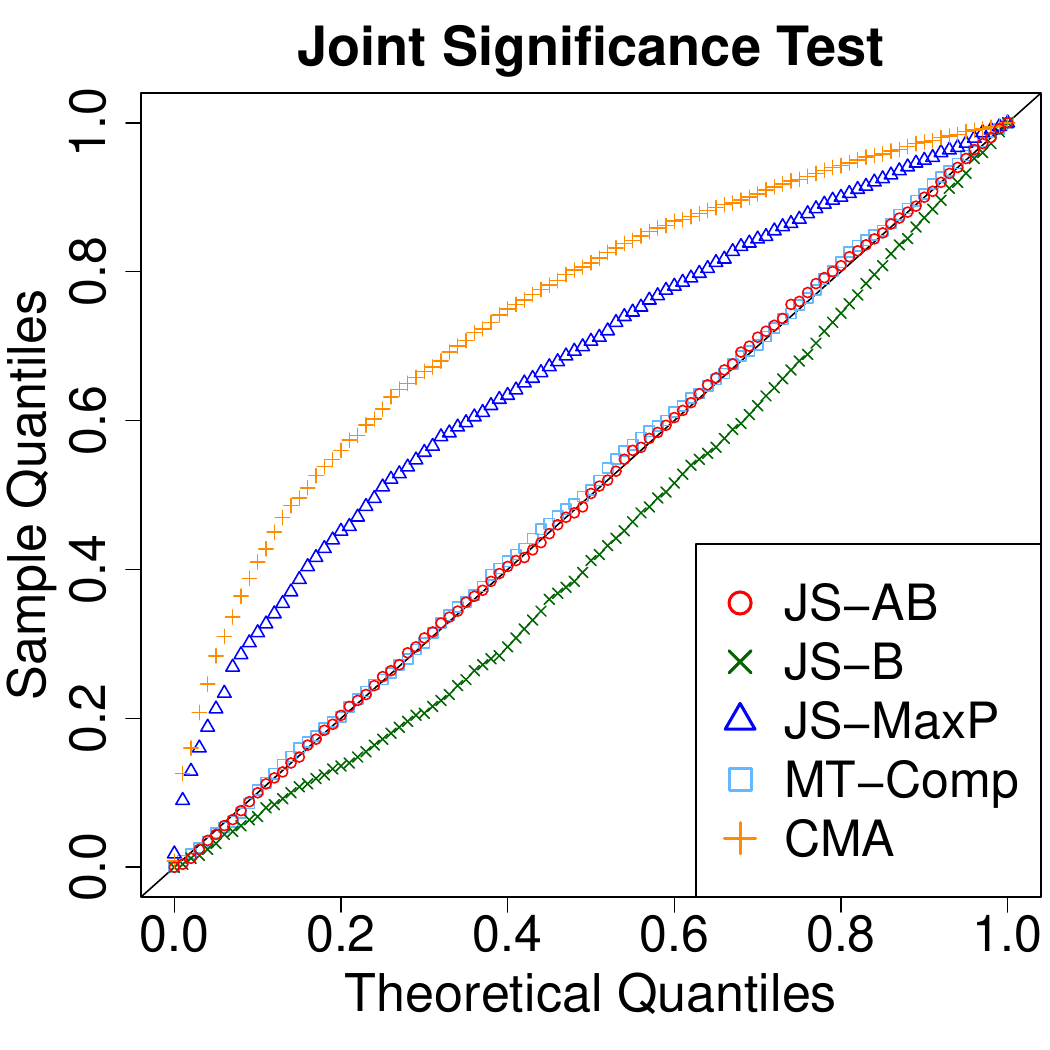}
       \end{subfigure} 
 \end{figure}

 \begin{figure}[!htbp]
 \captionsetup[subfigure]{labelformat=empty} 
  \centering
       \caption{Q-Q plots of $p$-values under the mixture of  nulls: $n = 500$.} \label{fig:mixnulln500}
   \begin{subfigure}[b]{0.29\textwidth}
   \caption{\footnotesize{\quad \  (I) $(1/3, 1/3, 1/3)$}}
     \includegraphics[width=\textwidth]{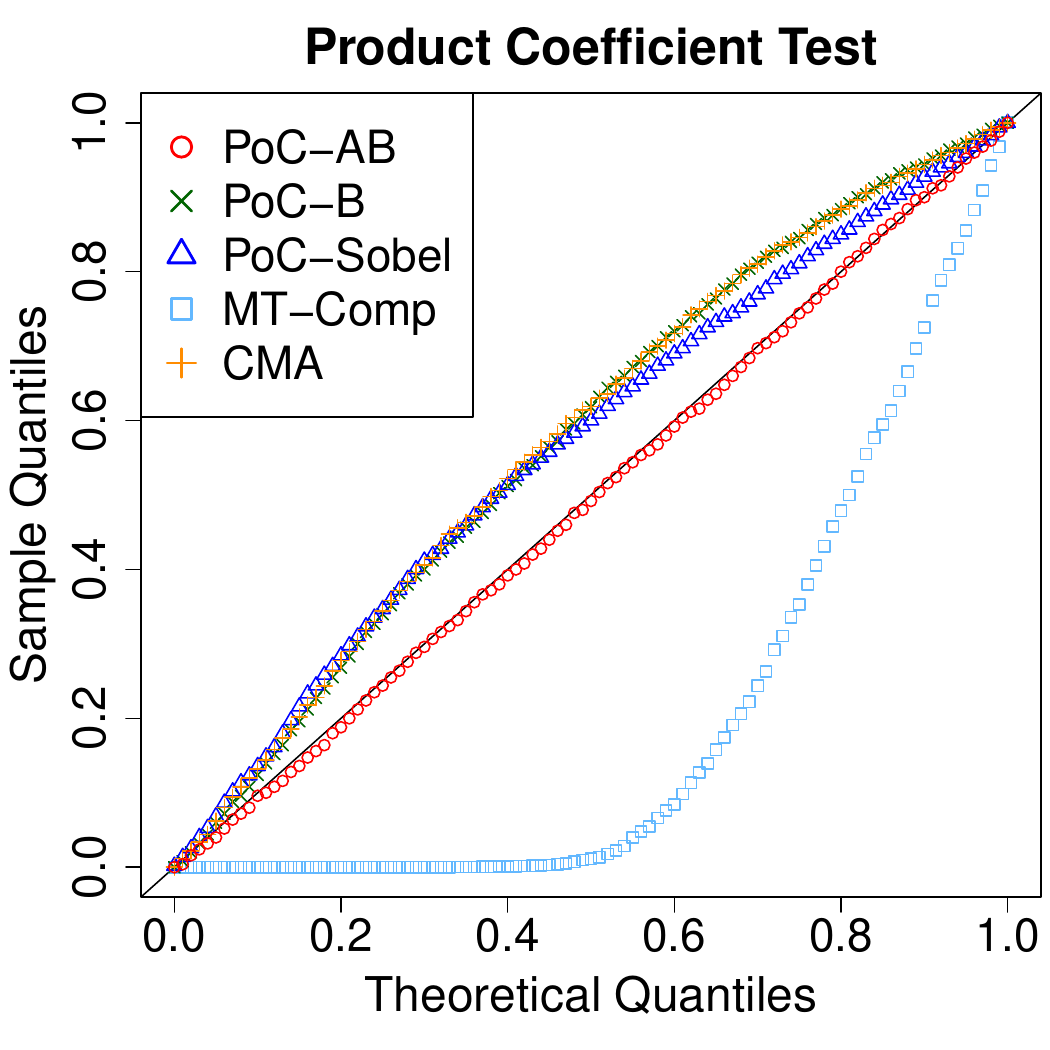}
   \end{subfigure} \    
  \begin{subfigure}[b]{0.29\textwidth}
      \caption{\footnotesize{\quad \  (II) $(0.2, 0.2, 0.6)$}}
  \includegraphics[width=\textwidth]{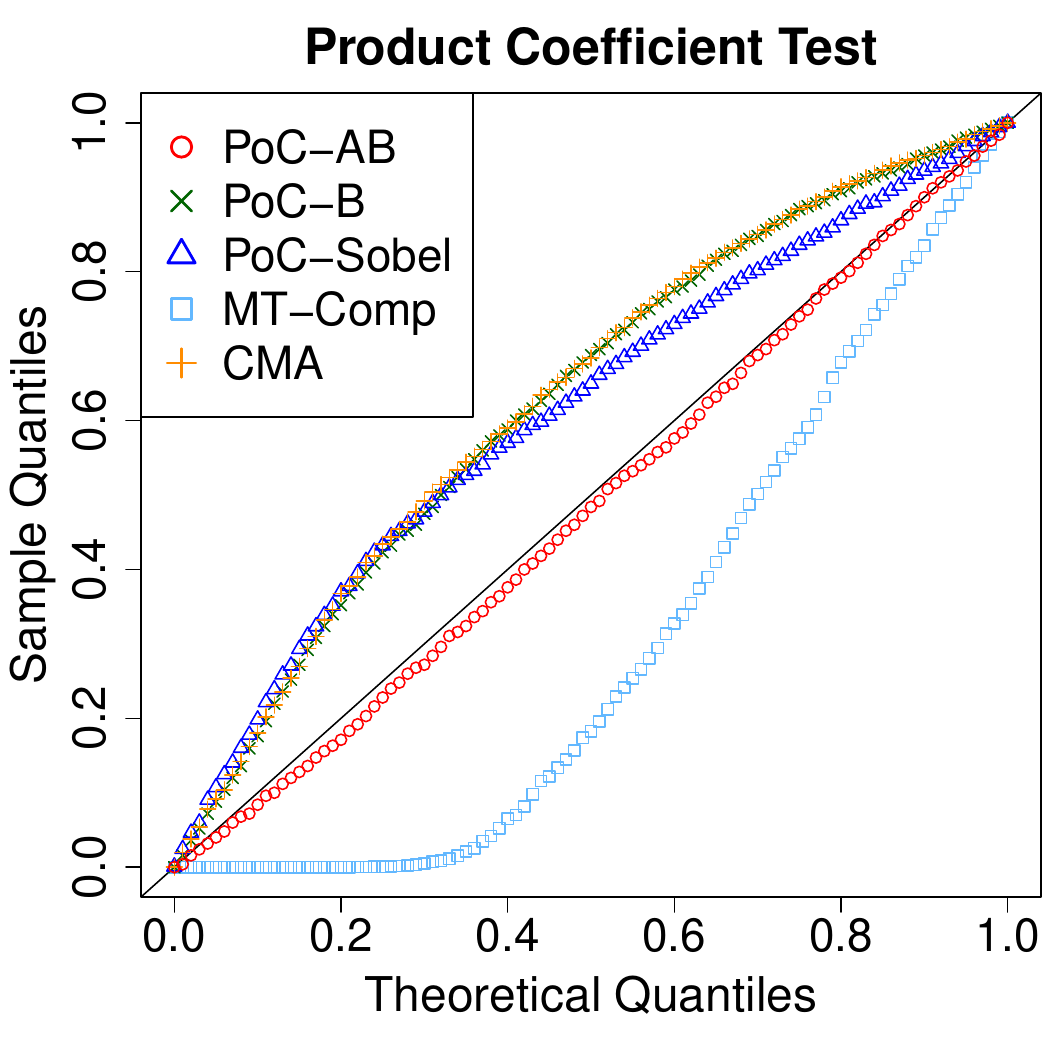}
     \end{subfigure} \  
       \begin{subfigure}[b]{0.29\textwidth}
   \caption{\footnotesize{\quad \  (III) $(0.05, 0.05, 0.9)$}}
     \includegraphics[width=\textwidth]{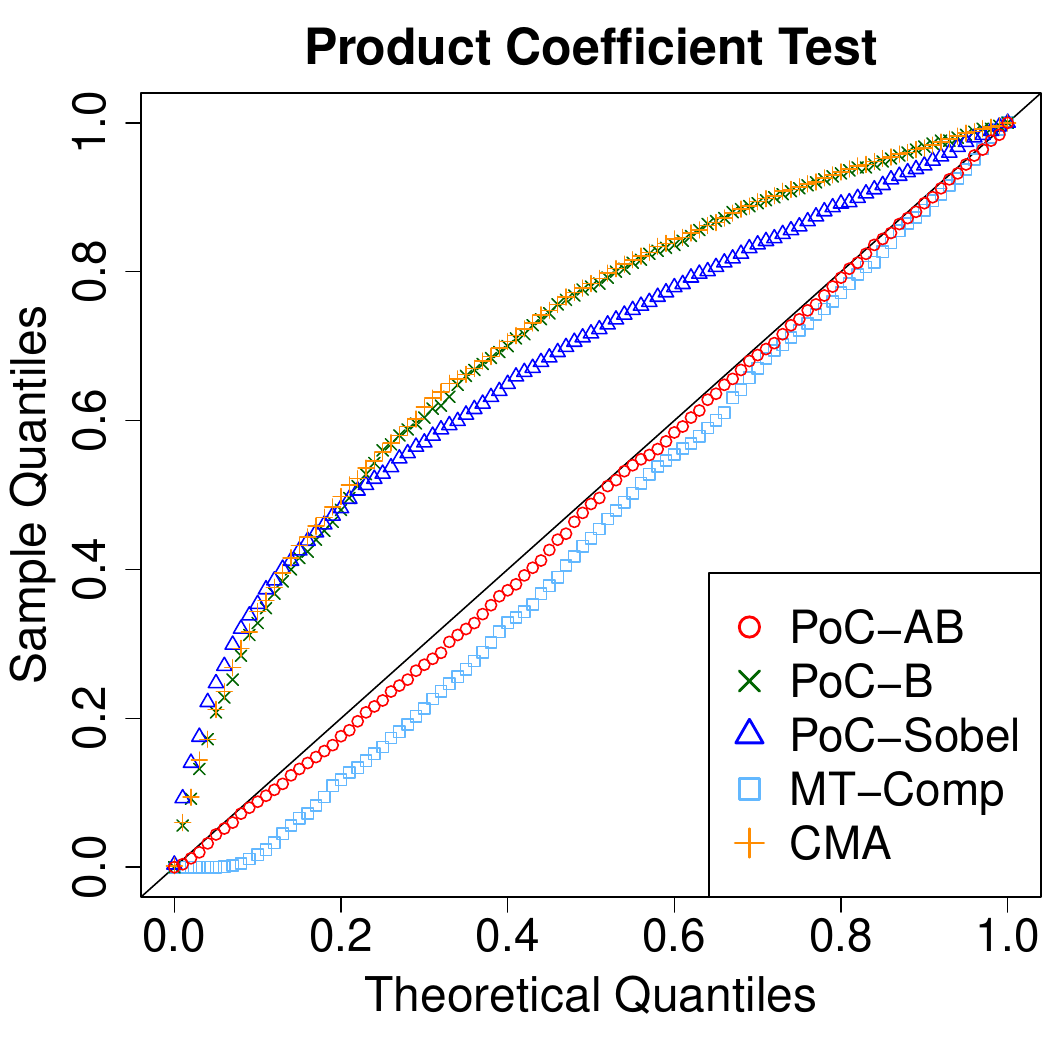}
     \end{subfigure} \\ 
       \begin{subfigure}[b]{0.29\textwidth}
             \includegraphics[width=\textwidth]{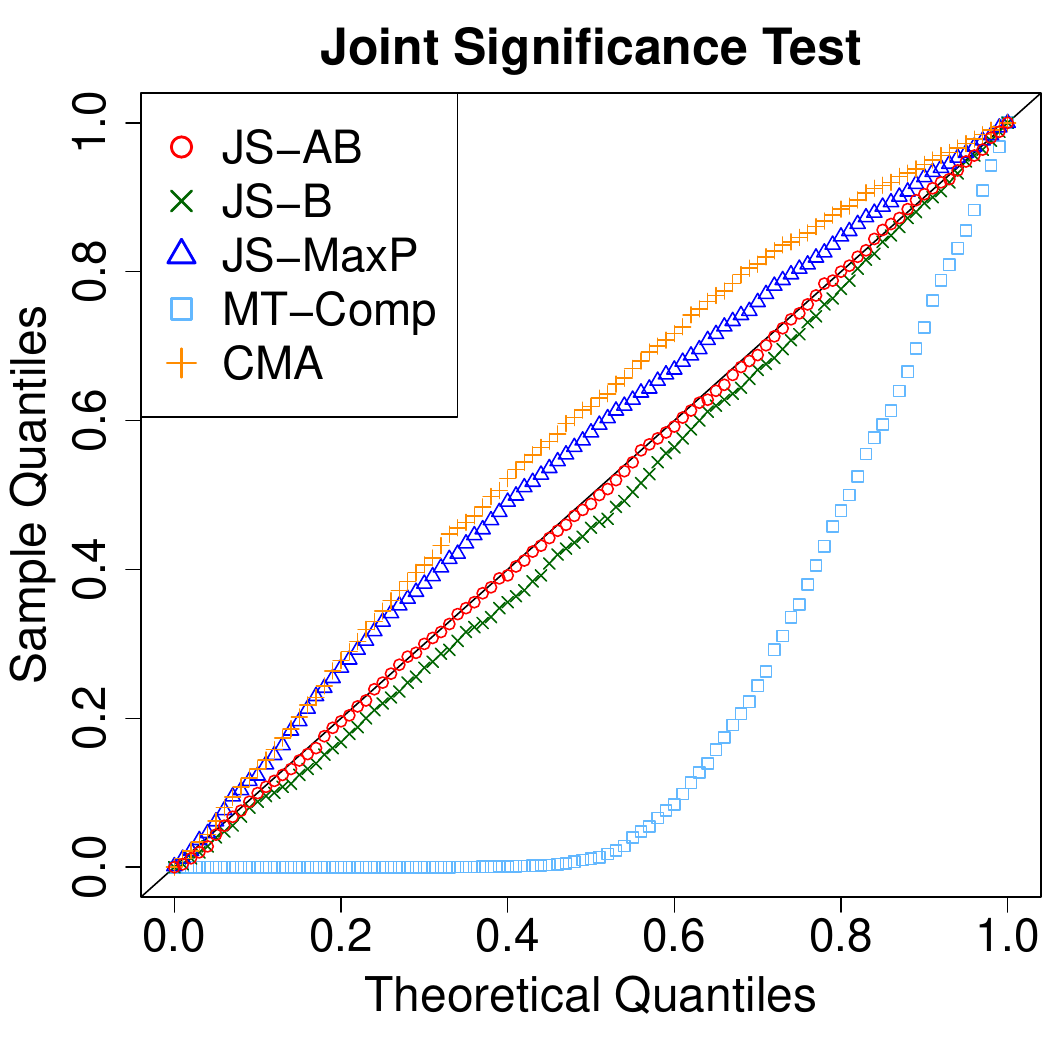}
       \end{subfigure} \  
       \begin{subfigure}[b]{0.29\textwidth}
             \includegraphics[width=\textwidth]{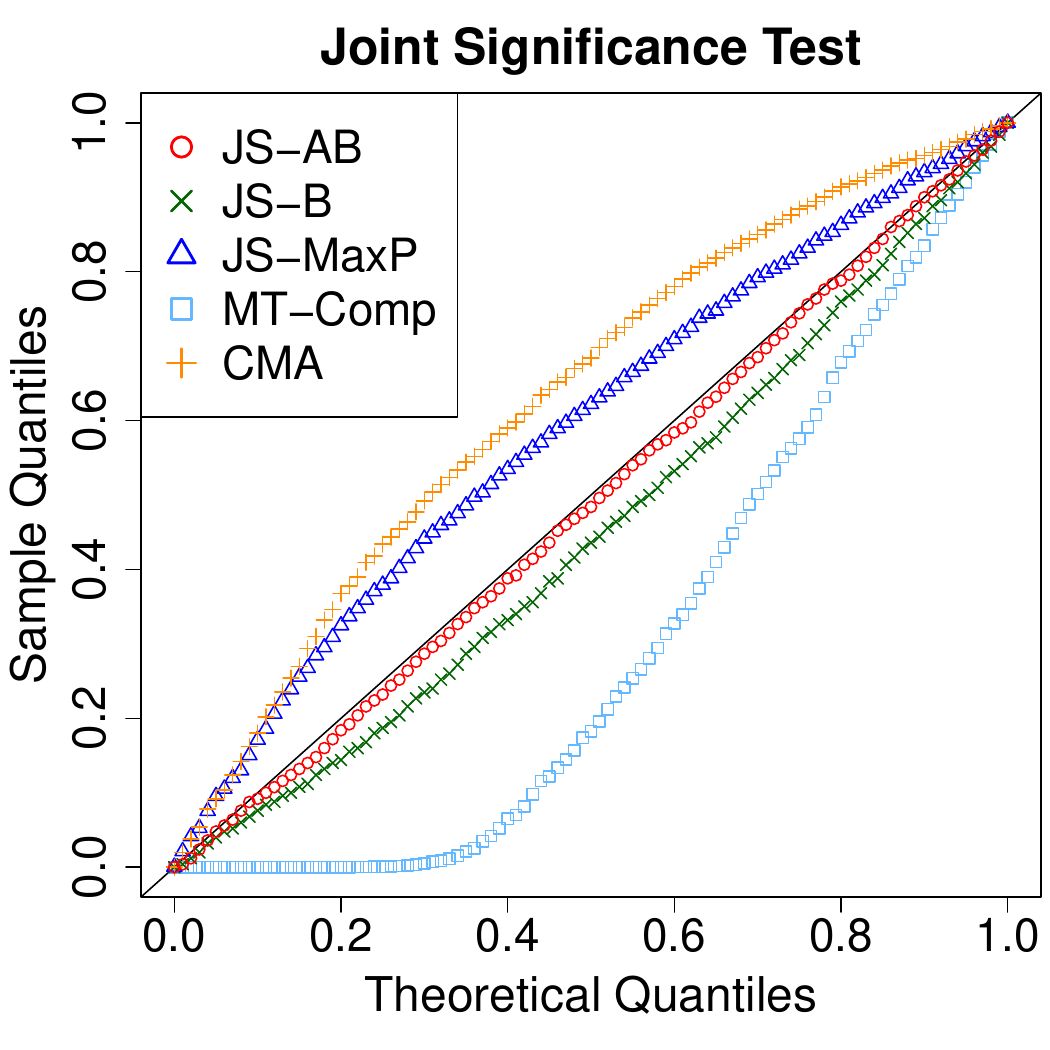}
       \end{subfigure}\
       \begin{subfigure}[b]{0.29\textwidth}
             \includegraphics[width=\textwidth]{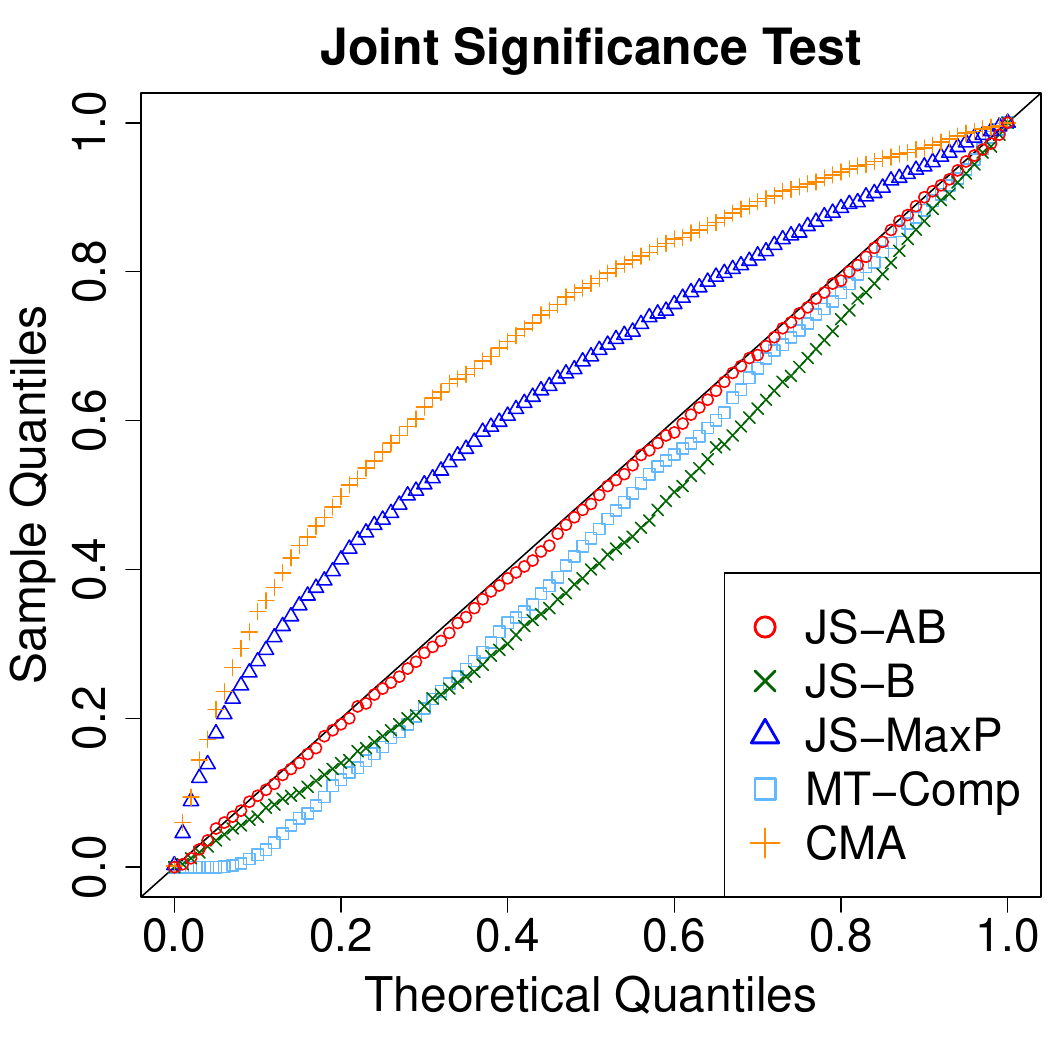}
       \end{subfigure}     
 \end{figure} 

\newpage
\subsection{Additional Simulations: Varying Effects and Sample Sizes}\label{sec:largeeffectsample} 
We illustrate how the proposed method performs in terms of the type-I error control when the effect sizes and sample sizes become larger. 
In particular,  we generate data following  the model 
$
\M = \alphaS \S + \alpha_I +\eM,	
$ and $\Y = \betaM \M + \beta_I + \S + \eY$,
where the exposure variable $\S$ is simulated from a Bernoulli distribution with the success probability equal to 0.5, and $\eM$ and $\eY$ are simulated independently from $\mathcal{N}(0,\sigma^2)$ with $\sigma=0.5$.   
To evaluate how the varying effect sizes and sample sizes influence the type-\RNum{1} errors, we consider two cases under $H_0$:  $\alphaS\betaM=0$: 
\begin{enumerate}
	\item[(i)] Fix $\alphaS=0$, take $\betaM=\exp(k)$ for $k\in \{-\infty, 0, 1, 2, 3, 4, 5\}$. 
	\item[(ii)] Fix $\betaM=0$, take $\alphaS =\exp(k)$ for $k\in \{-\infty, 0, 1, 2, 3, 4, 5\}$. 
\end{enumerate}
Estimated type-I errors of different tests under cases (i) and (ii) are presented in Figures \ref{fig:typeialpha0} and \ref{fig:typeibeta0}, respectively. 
We can see that the AB tests control the type-\RNum{1} errors well under different values of the non-zero coefficients, whereas the other tests can deviate from the nominal significance level when both coefficients are 0.

\newpage

\begin{figure}[!htbp]
\captionsetup[subfigure]{labelformat=empty}
\centering
\caption{When $\alphaS=0$, estimated type-I errors} \label{fig:typeialpha0}
\vspace{5pt}
\caption*{(a)  PoC-tests with significance level 0.05}
\begin{subfigure} {0.32\textwidth}
\caption{(a.1) $n=200$}
\includegraphics[width=\textwidth]{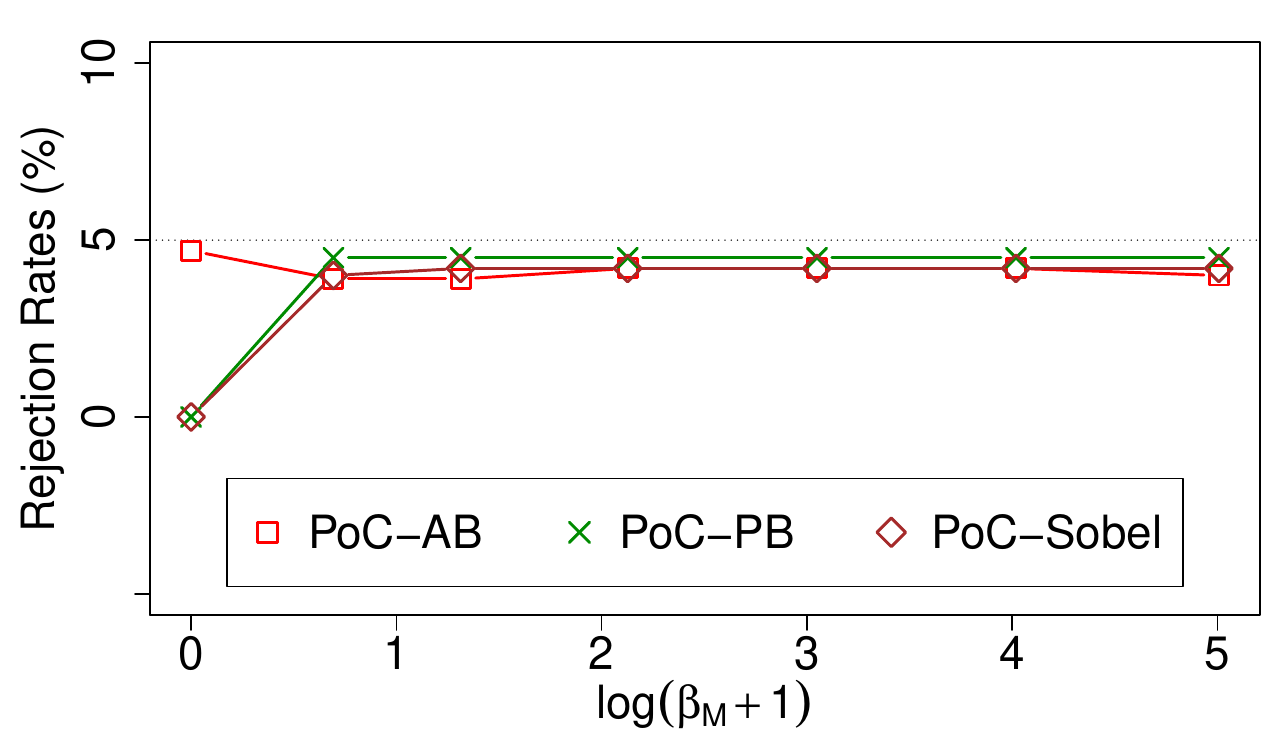} 
\end{subfigure} 
\begin{subfigure} {0.32\textwidth}
\caption{(a.2) $n=500$}
\includegraphics[width=\textwidth]{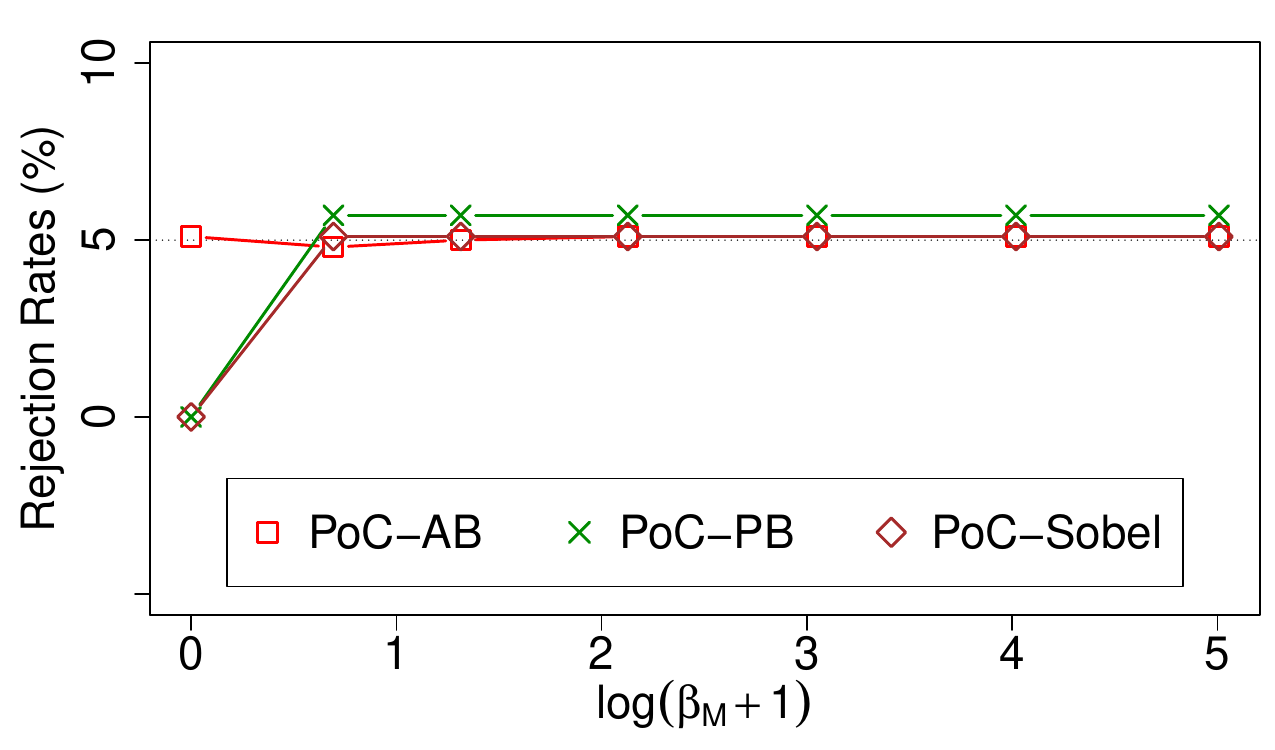} 
\end{subfigure} 
\begin{subfigure} {0.32\textwidth}
\caption{(a.3) $n=1000$}
\includegraphics[width=\textwidth]{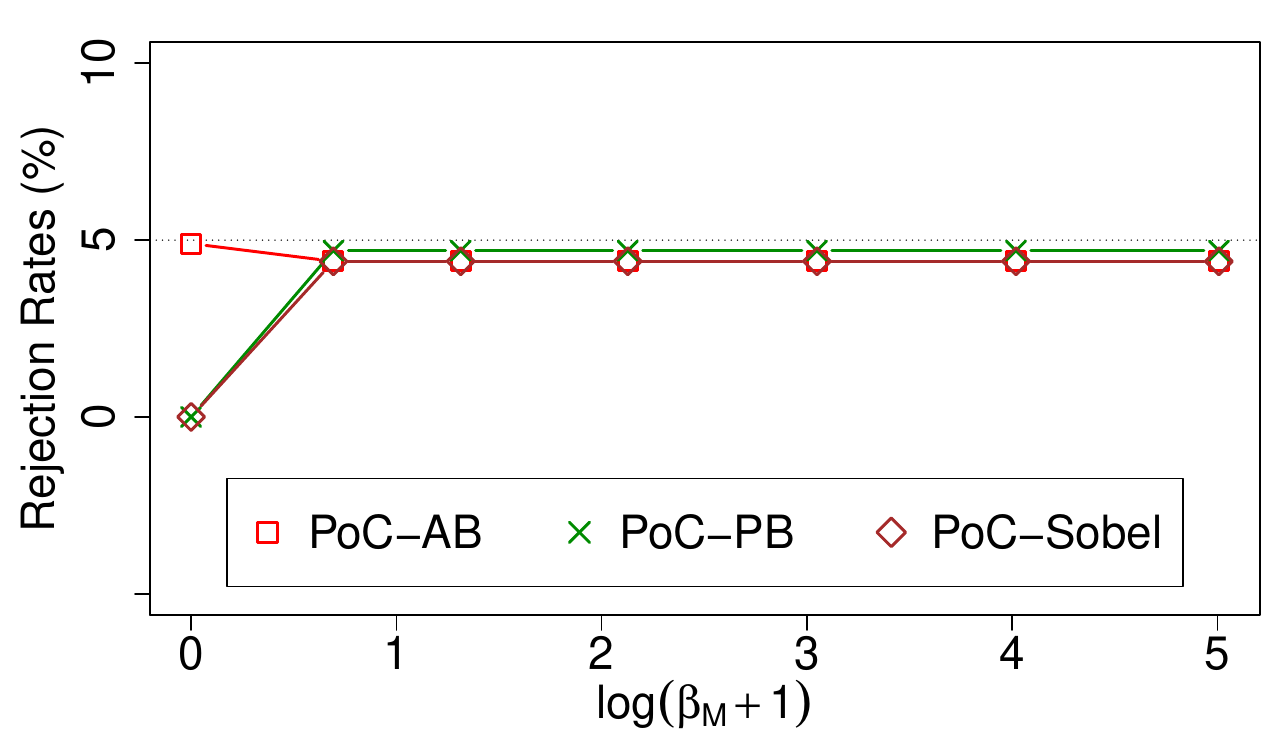} 
\end{subfigure} 

\vspace{5pt}
\caption*{(b) PoC-tests with significance level 0.1}
\begin{subfigure} {0.32\textwidth}
\caption{(b.1) $n=200$}
\includegraphics[width=\textwidth]{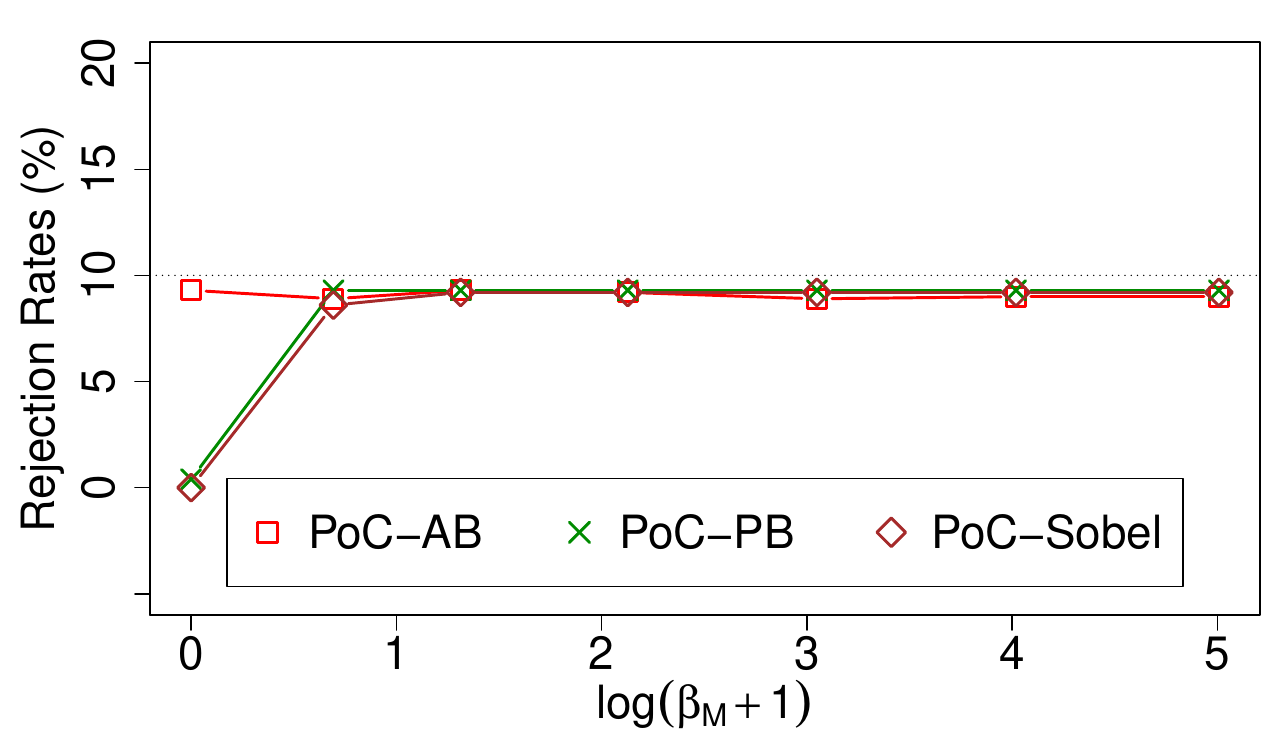} 
\end{subfigure} 
\begin{subfigure} {0.32\textwidth}
\caption{(b.2) $n=500$}
\includegraphics[width=\textwidth]{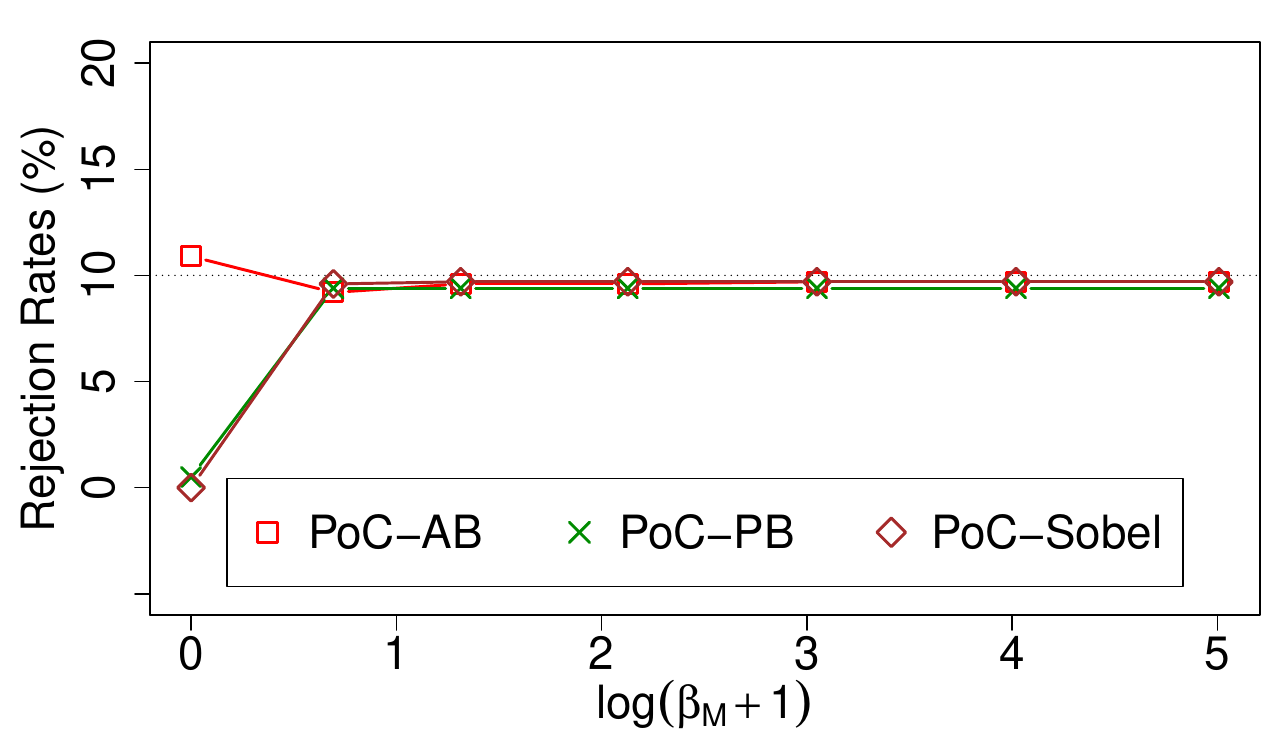} 
\end{subfigure} 
\begin{subfigure} {0.32\textwidth}
\caption{(b.3) $n=1000$}
\includegraphics[width=\textwidth]{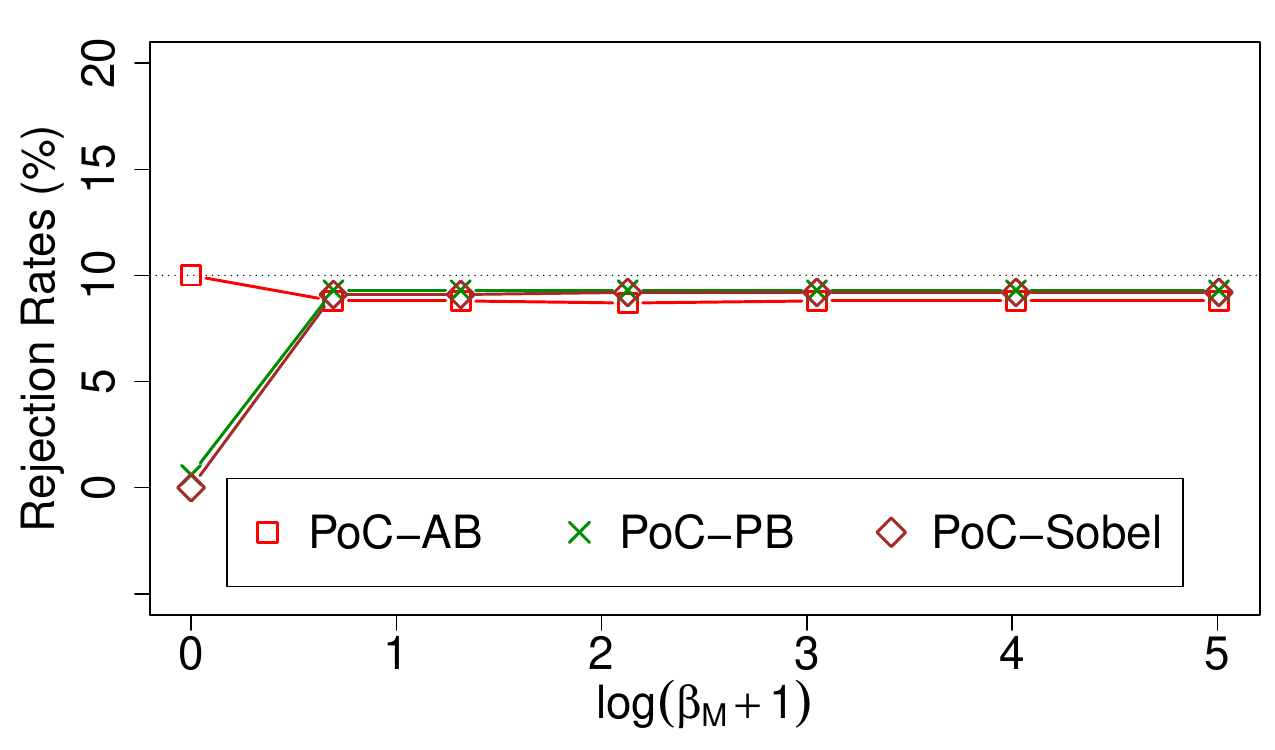} 
\end{subfigure} 

\vspace{5pt}
\caption*{(c) JS-tests with significance level $0.05$}
\begin{subfigure} {0.32\textwidth}
\caption{(c.1)  $n=200$}
\includegraphics[width=\textwidth]{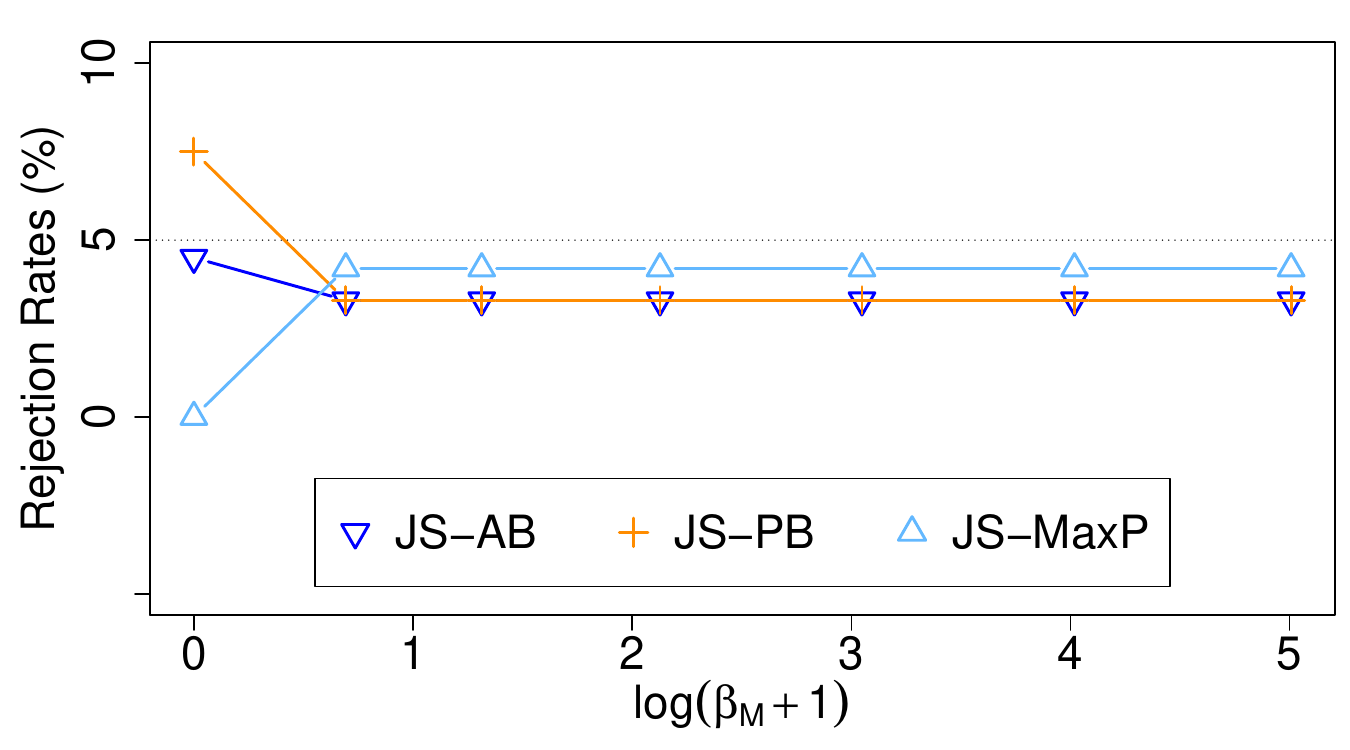} 
\end{subfigure}
\begin{subfigure} {0.32\textwidth}
\caption{(c.2)  $n=500$}
\includegraphics[width=\textwidth]{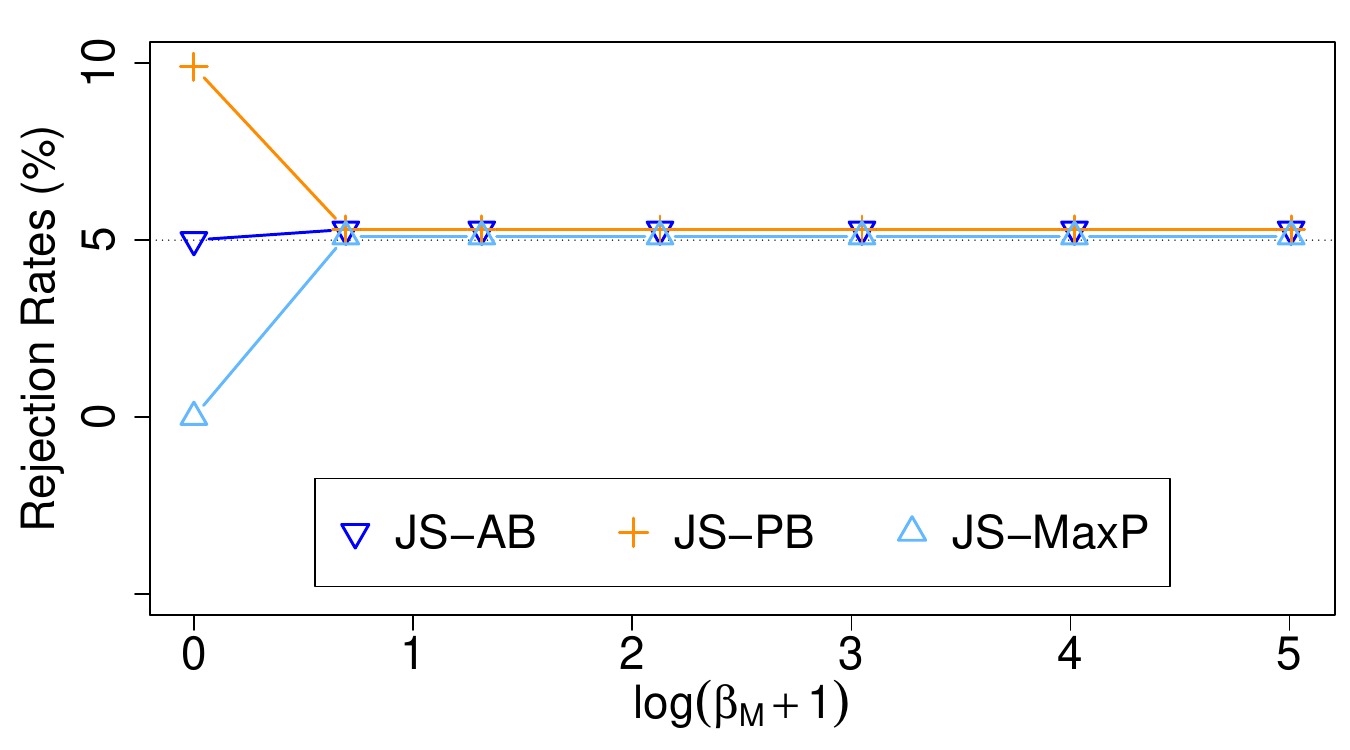} 
\end{subfigure}
\begin{subfigure} {0.32\textwidth}
\caption{(c.3)  $n=1000$}
\includegraphics[width=\textwidth]{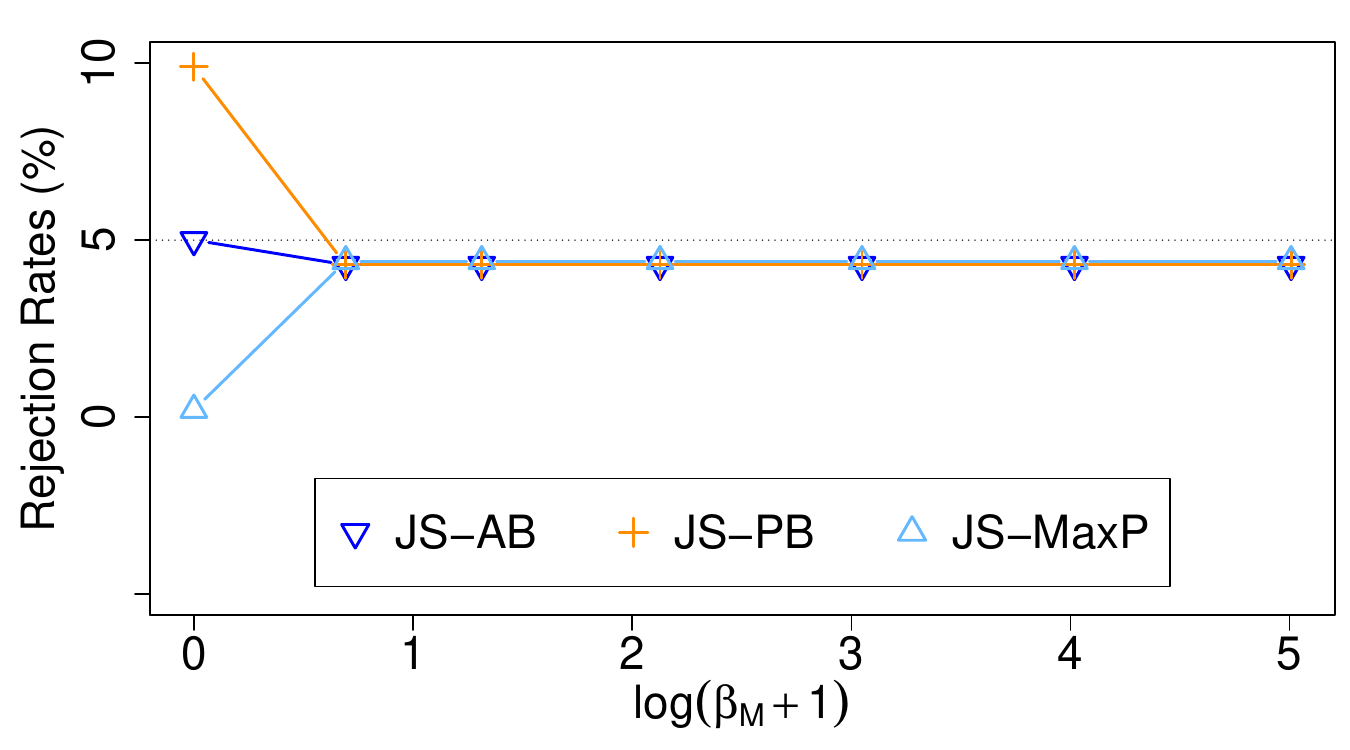} 
\end{subfigure}

\vspace{5pt}
\caption*{(d) JS-tests with significance level $0.05$}
\begin{subfigure} {0.32\textwidth}
\caption{(d.1)  $n=200$}
\includegraphics[width=\textwidth]{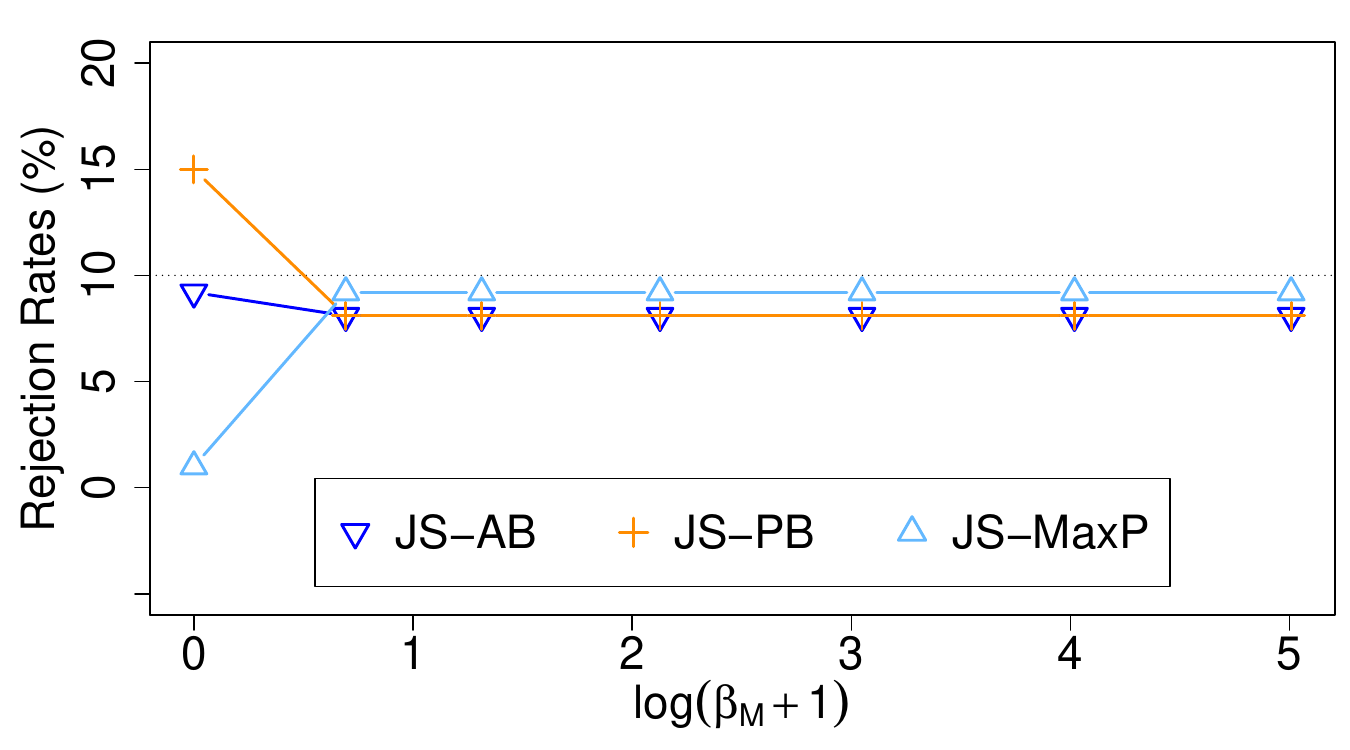} 
\end{subfigure}
\begin{subfigure} {0.32\textwidth}
\caption{(d.2)  $n=500$}
\includegraphics[width=\textwidth]{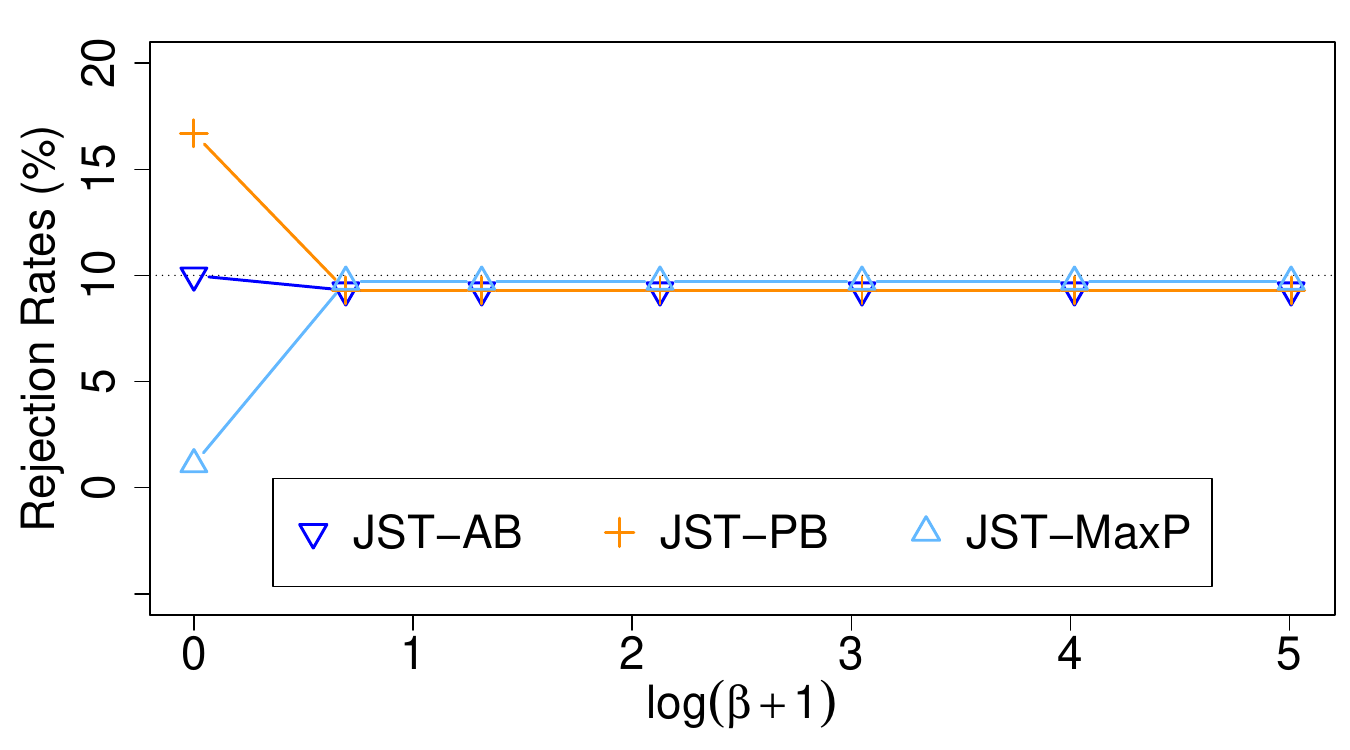} 
\end{subfigure}
\begin{subfigure} {0.32\textwidth}
\caption{(d.3)  $n=1000$}
\includegraphics[width=\textwidth]{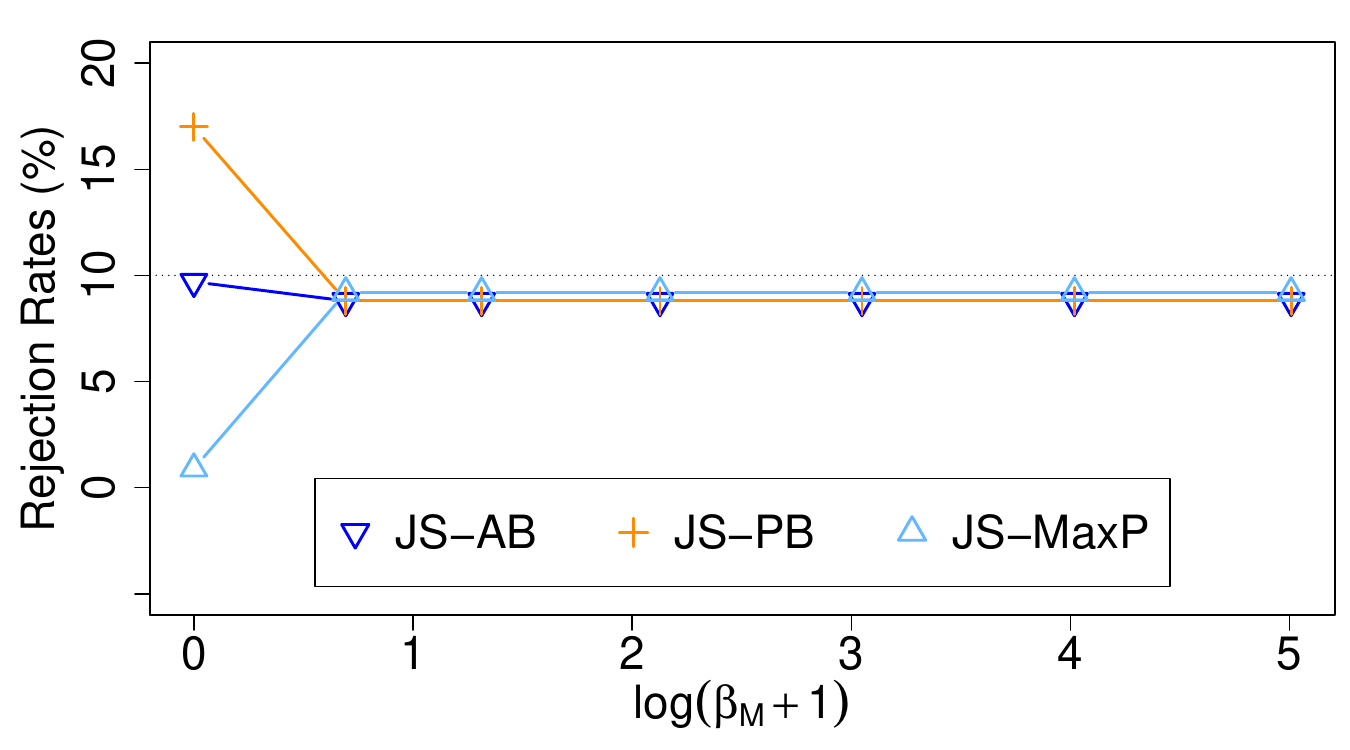} 
\end{subfigure}
\end{figure} 

\newpage

\begin{figure}[!htbp]
\captionsetup[subfigure]{labelformat=empty}
\centering
\caption{When $\betaM=0$, estimated type-I errors} \label{fig:typeibeta0}
\vspace{5pt}
\caption*{(a)  PoC-tests with significance level 0.05}
\begin{subfigure} {0.32\textwidth}
\caption{(a.1) $n=200$}
\includegraphics[width=\textwidth]{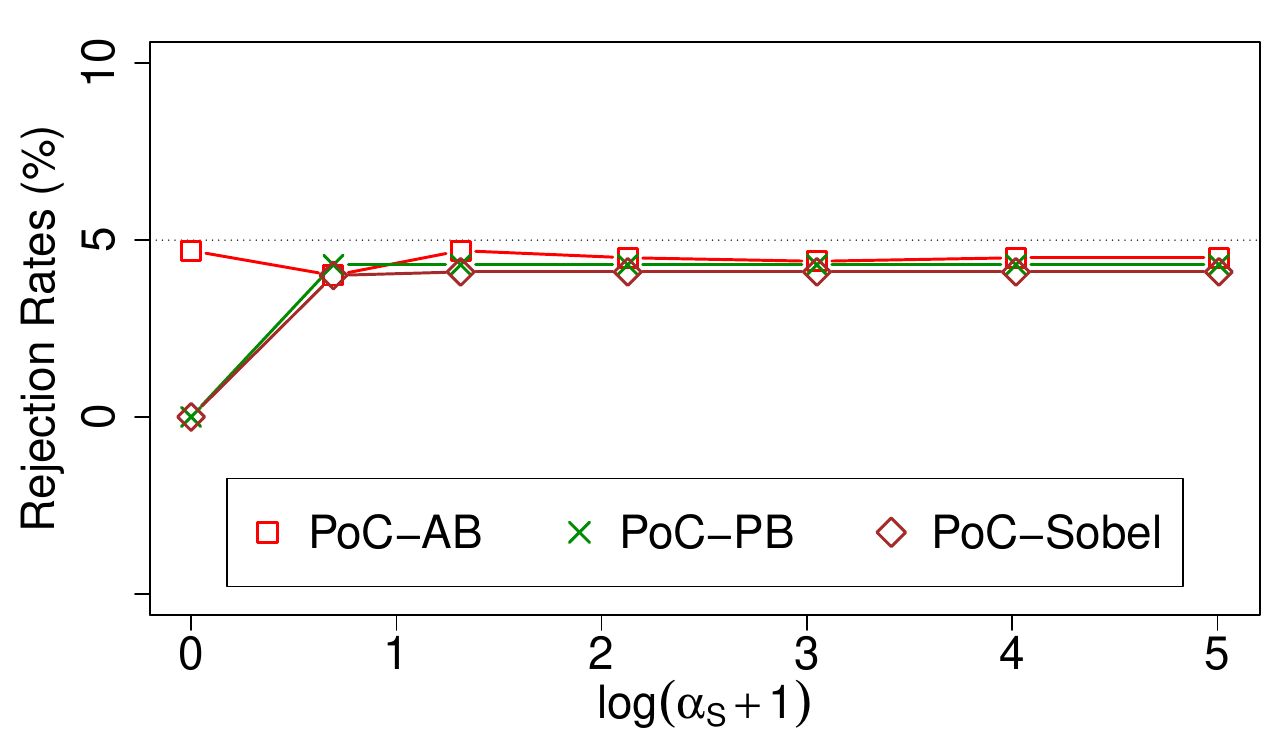} 
\end{subfigure} 
\begin{subfigure} {0.32\textwidth}
\caption{(a.2) $n=500$}
\includegraphics[width=\textwidth]{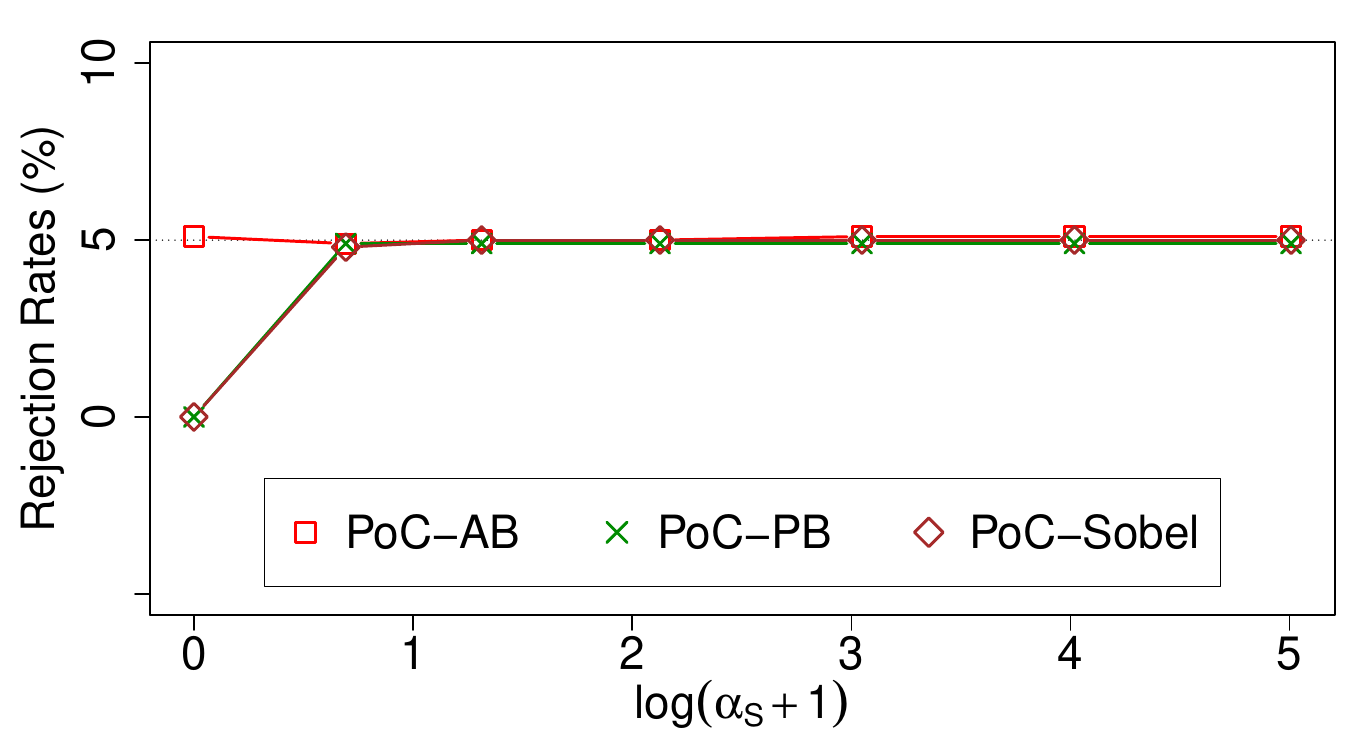} 
\end{subfigure} 
\begin{subfigure} {0.32\textwidth}
\caption{(a.3) $n=1000$}
\includegraphics[width=\textwidth]{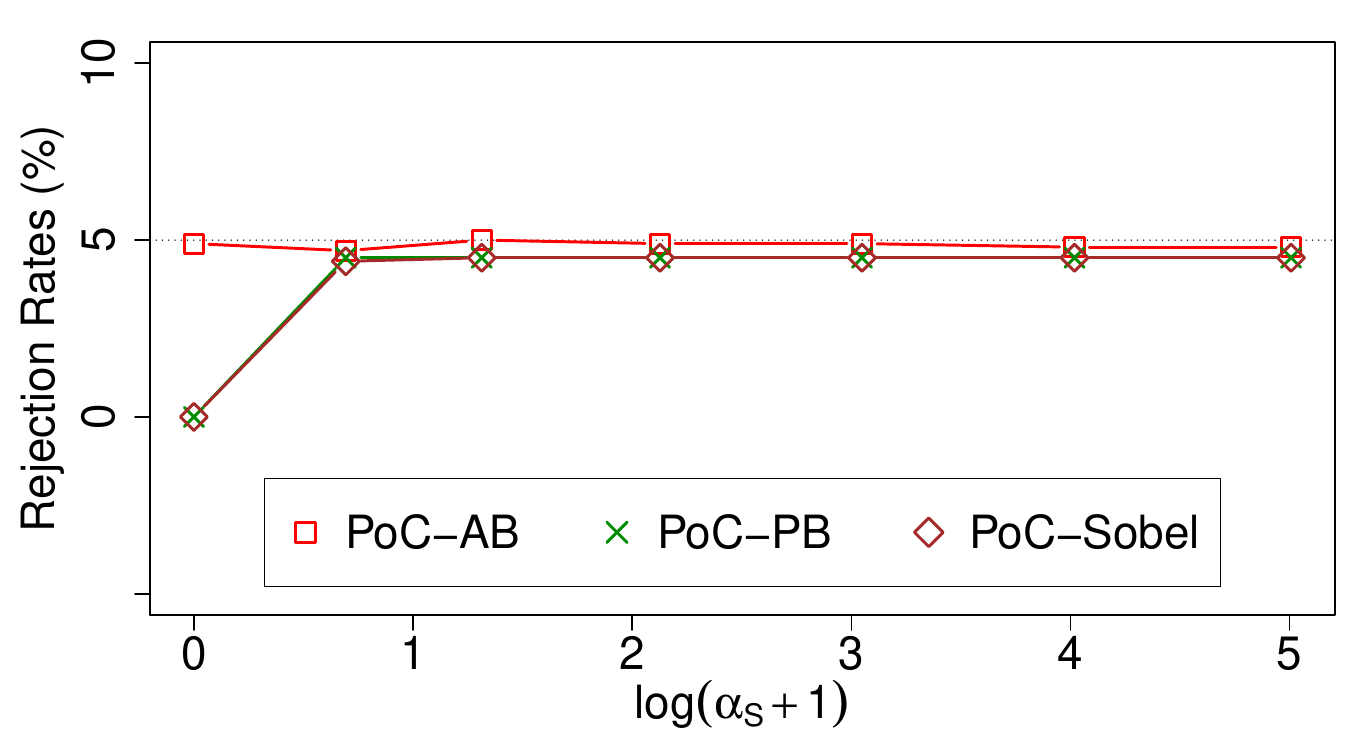} 
\end{subfigure} 

\vspace{5pt}
\caption*{(b) PoC-tests with significance level 0.1}
\begin{subfigure} {0.32\textwidth}
\caption{(b.1) $n=200$}
\includegraphics[width=\textwidth]{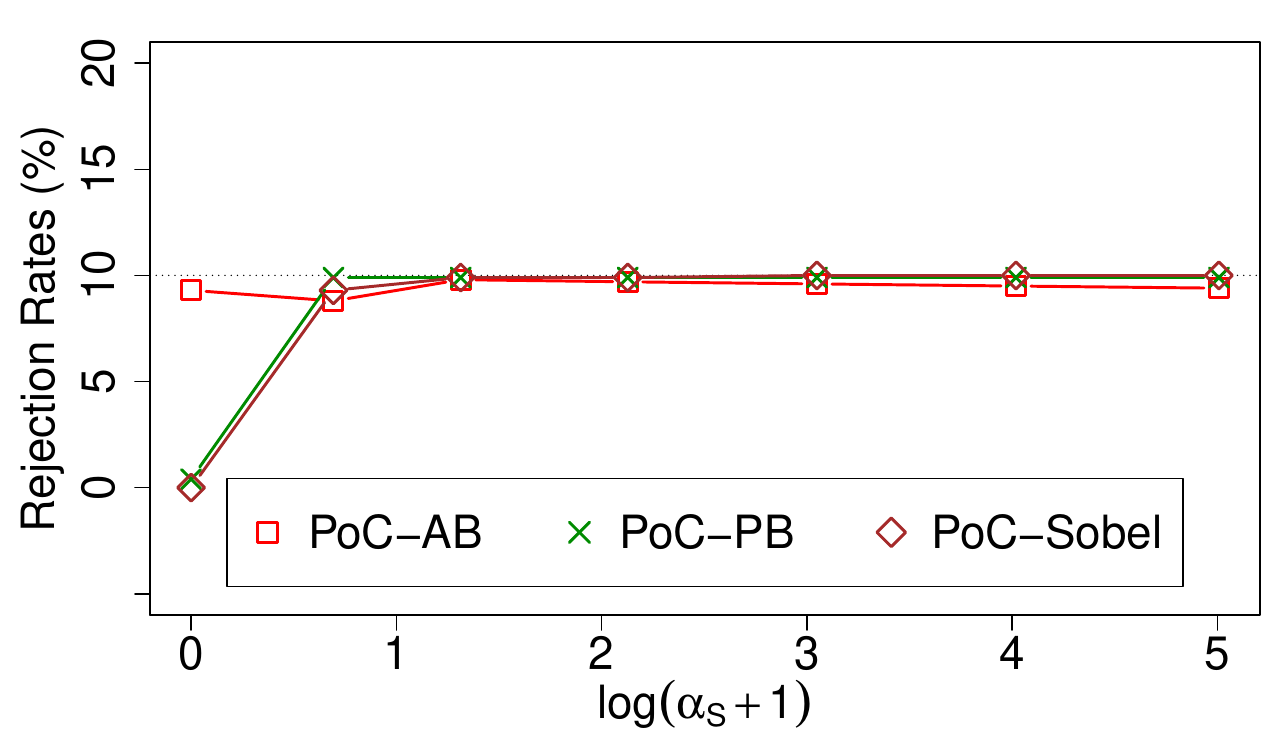} 
\end{subfigure} 
\begin{subfigure} {0.32\textwidth}
\caption{(b.2) $n=500$}
\includegraphics[width=\textwidth]{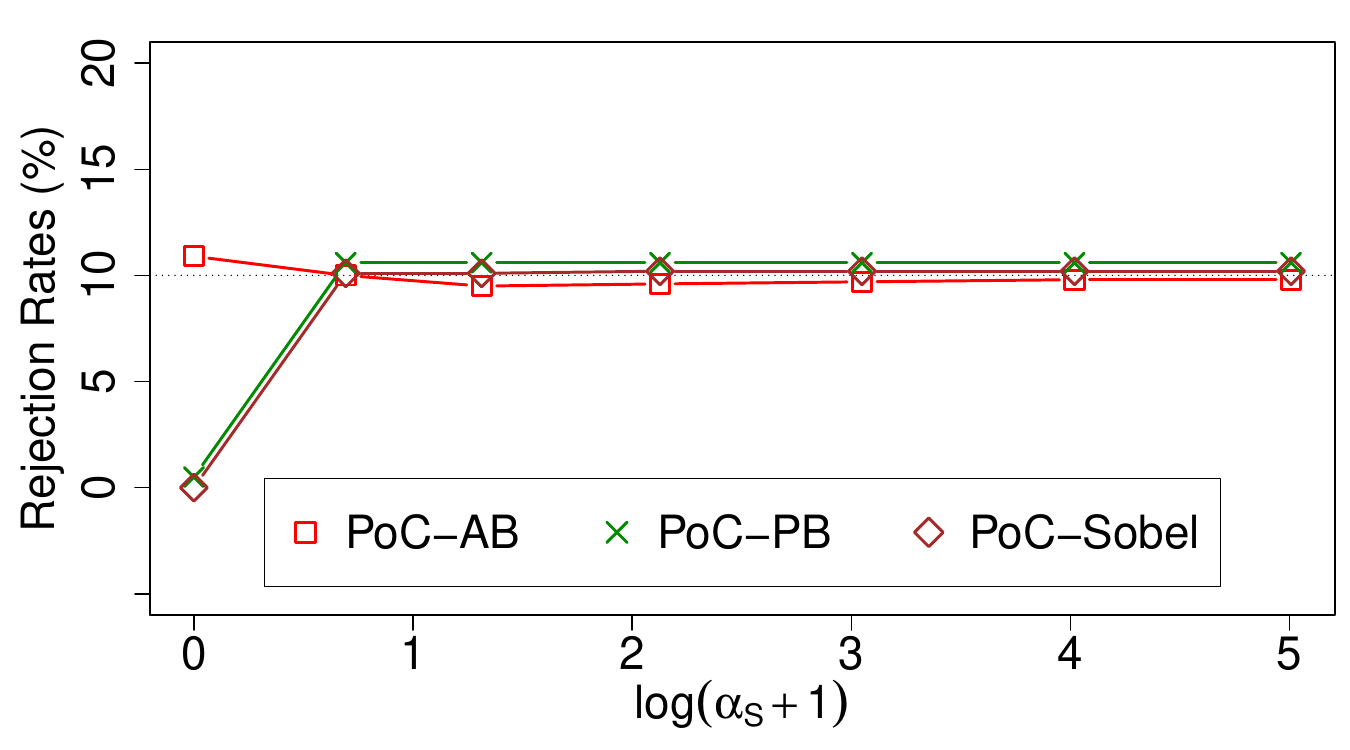} 
\end{subfigure} 
\begin{subfigure} {0.32\textwidth}
\caption{(b.3) $n=1000$}
\includegraphics[width=\textwidth]{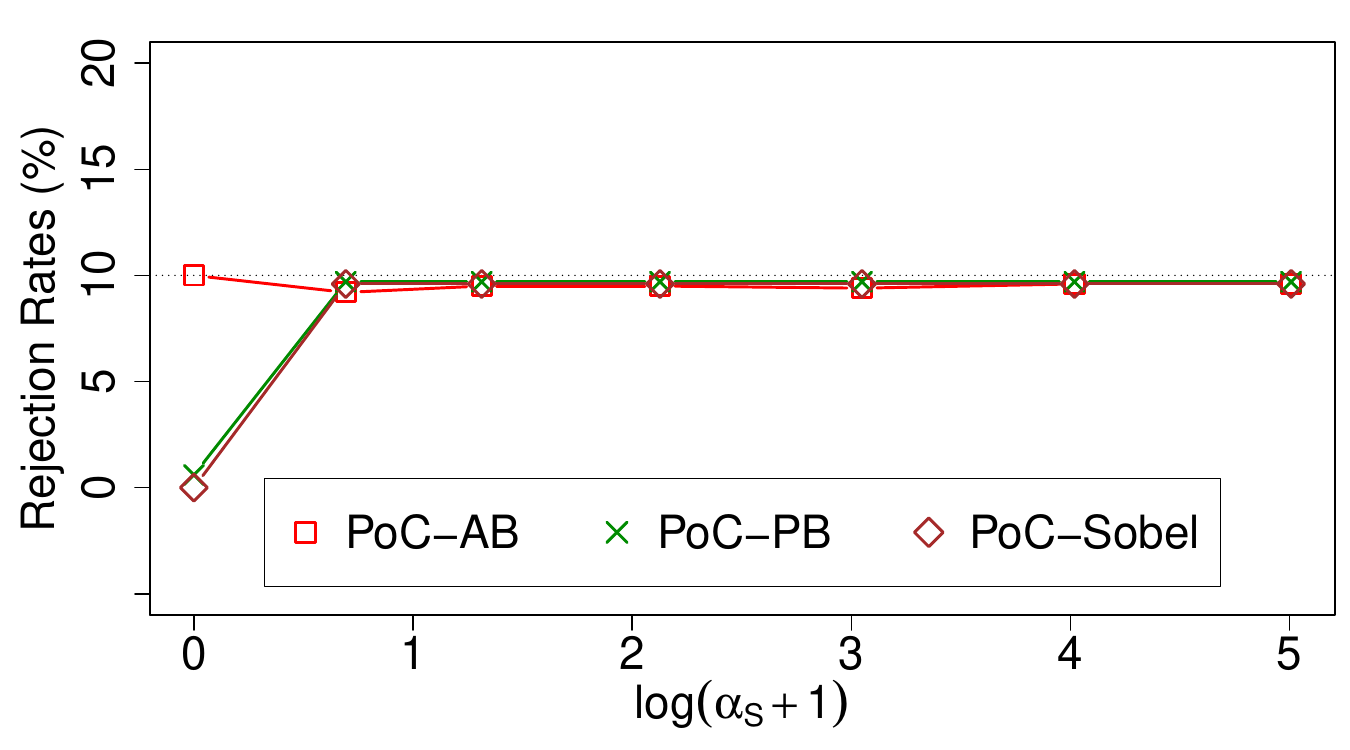} 
\end{subfigure} 

\vspace{5pt}
\caption*{(c) JS-tests with significance level $0.05$}
\begin{subfigure} {0.32\textwidth}
\caption{(c.1)  $n=200$}
\includegraphics[width=\textwidth]{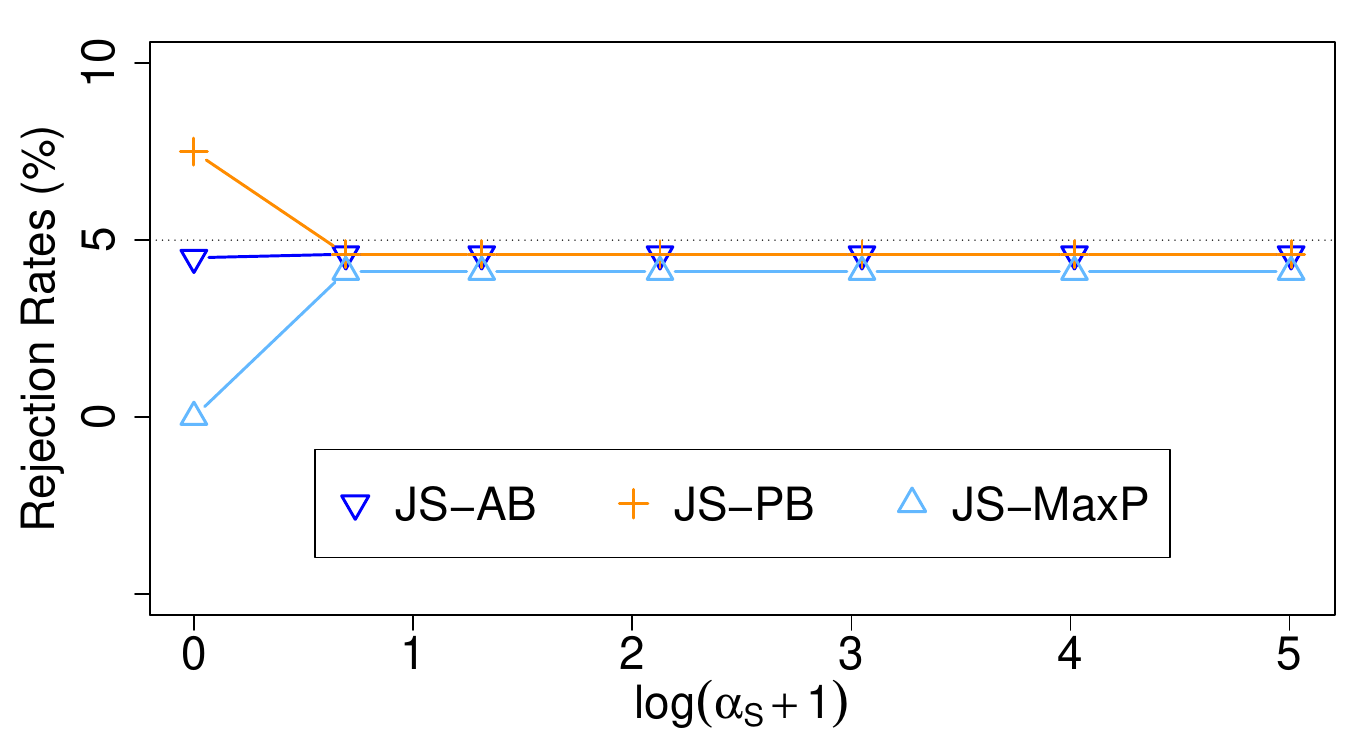} 
\end{subfigure}
\begin{subfigure} {0.32\textwidth}
\caption{(c.2)  $n=500$}
\includegraphics[width=\textwidth]{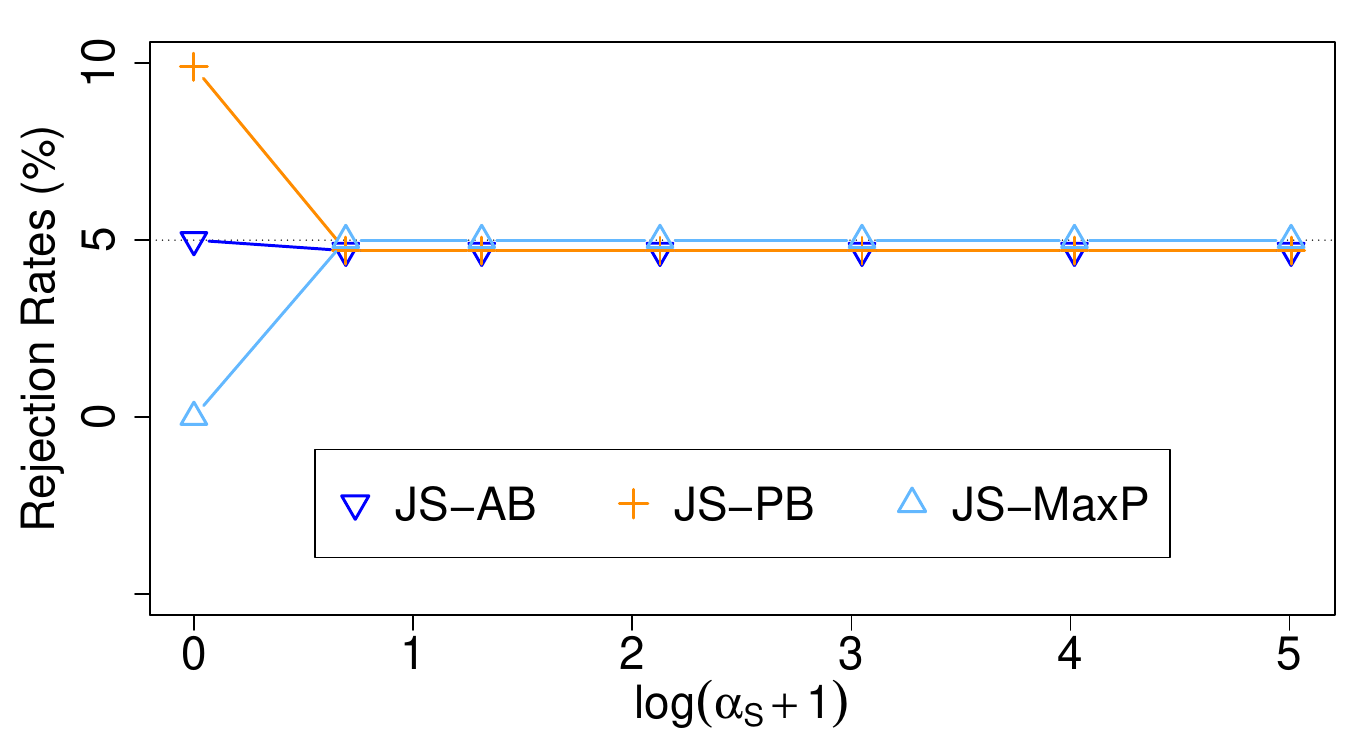} 
\end{subfigure}
\begin{subfigure} {0.32\textwidth}
\caption{(c.3)  $n=1000$}
\includegraphics[width=\textwidth]{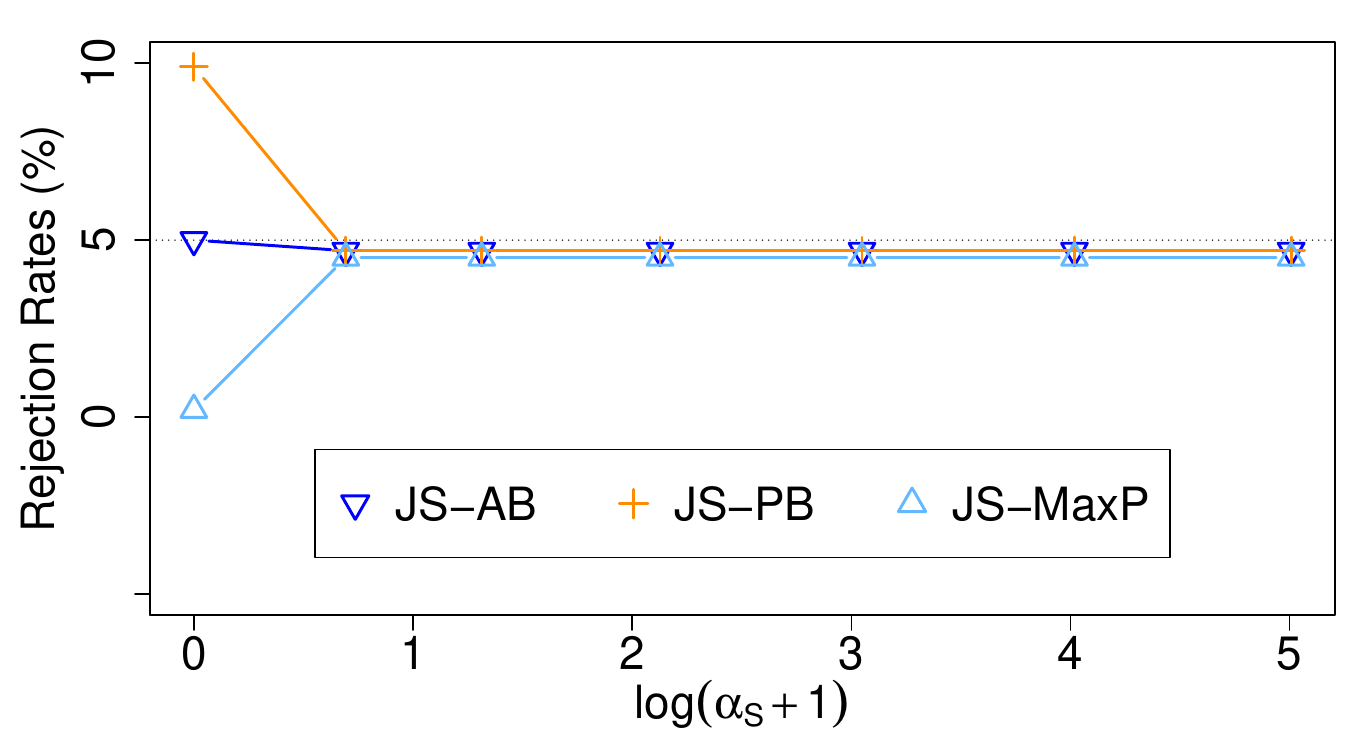} 
\end{subfigure}

\vspace{5pt}
\caption*{(d) JS-tests with significance level $0.1$}
\begin{subfigure} {0.32\textwidth}
\caption{(d.1)  $n=200$}
\includegraphics[width=\textwidth]{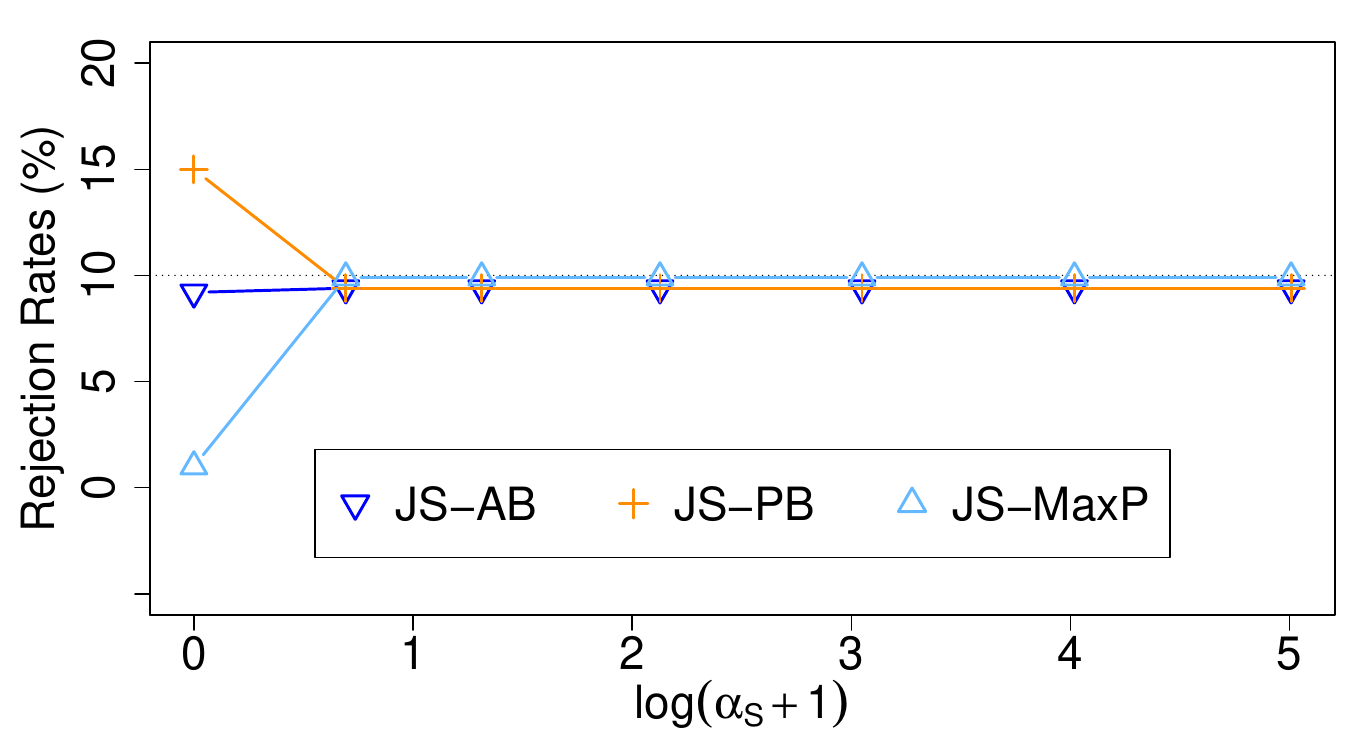} 
\end{subfigure}
\begin{subfigure} {0.32\textwidth}
\caption{(d.2)  $n=500$}
\includegraphics[width=\textwidth]{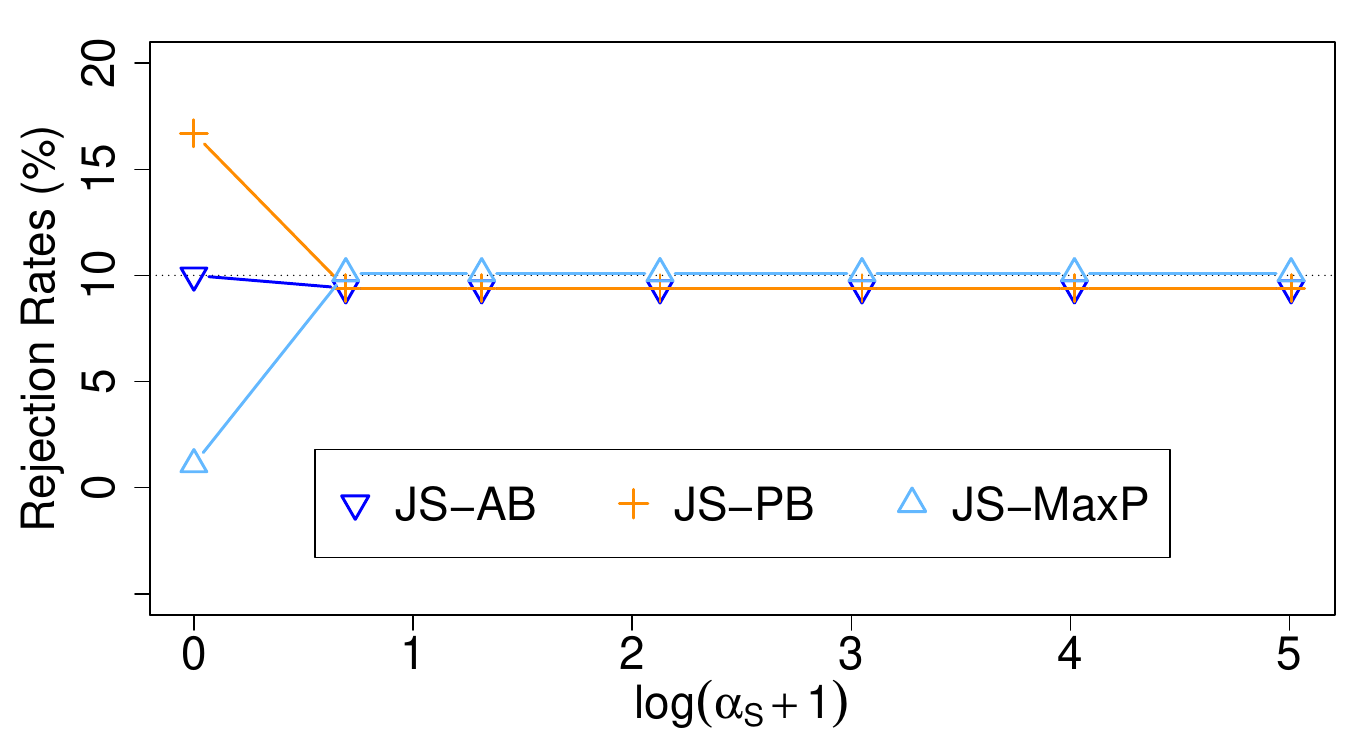} 
\end{subfigure}
\begin{subfigure} {0.32\textwidth}
\caption{(d.3)  $n=1000$}
\includegraphics[width=\textwidth]{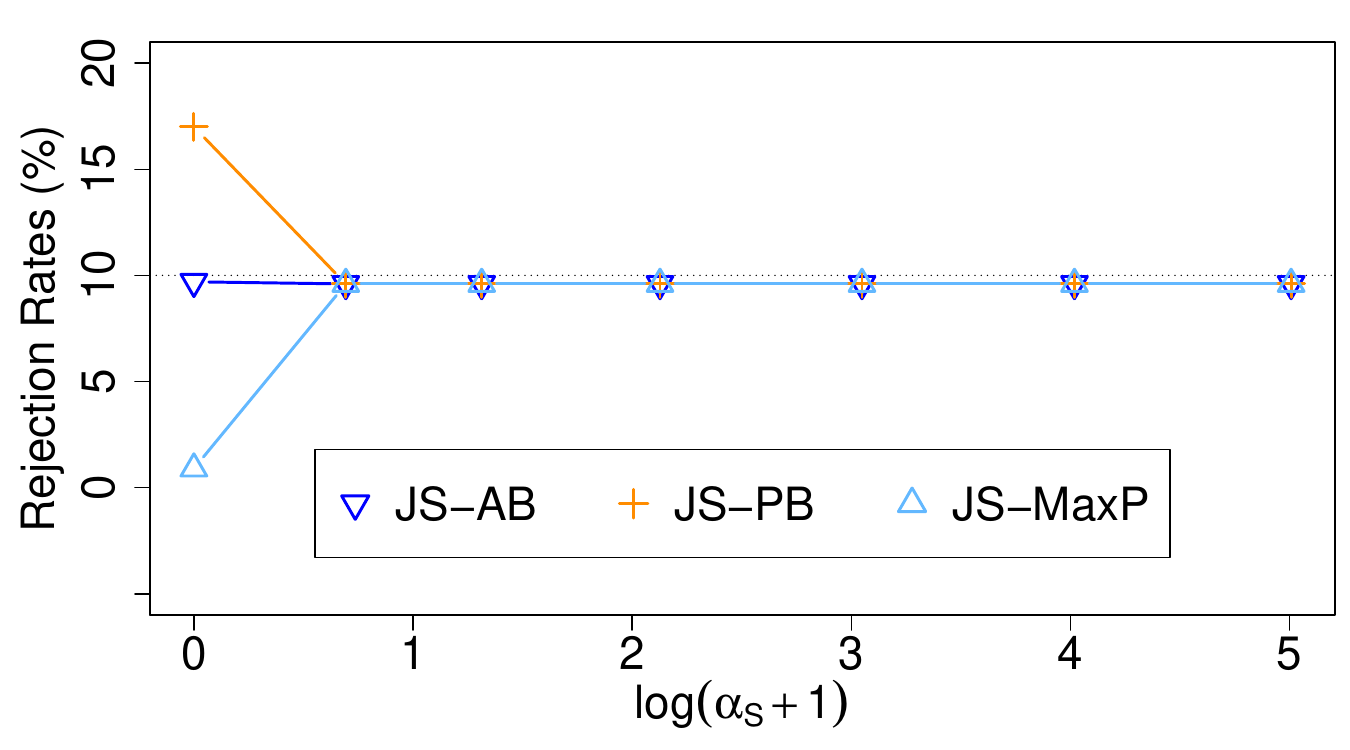} 
\end{subfigure}
\end{figure}

\newpage

\subsection{Data Analysis: Supplementary Results}\label{sec:splitanalysis}

\subsubsection{Additional Two-Step Data Analysis Results}\label{sec:changescreen}
In Section \ref{sec:datanalysis}, we conducted a two-step data analysis  by retaining 10\% of lipids with the smallest $p$-values in the first step.  
In the following, we extend our analysis  by varying the proportion ($q\%$) of lipids retained in the first step. 
 Figures \ref{fig:thres8}--\ref{fig:thres38} present the results when $q\in \{5,10,15,20,25\}$, corresponding to 8, 15, 22, 30, and 38 lipids  with the smallest $p$-values retained after the initial screening. 
Each figure showcases the top five mediators most frequently selected  over 400 random splits and six tests. 
Notably, regardless of the screening percentage, L.A and FA.7 consistently emerge as the most frequently selected mediators. This suggests that our results are robust to 
  the specific choice of screening threshold used in the first step. 

\vspace{2em}
\begin{figure}[!htbp]
 \captionsetup[subfigure]{labelformat=empty}
    \centering
      \begin{subfigure}[b]{0.32\textwidth}
      \caption{JS-AB}
      \includegraphics[width=0.95\textwidth]{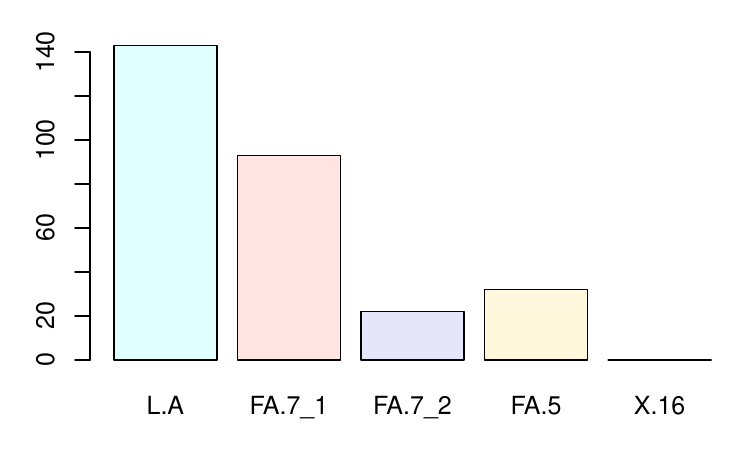}
   \end{subfigure}    
      \begin{subfigure}[b]{0.32\textwidth}
      \caption{JS-MaxP}
      \includegraphics[width=0.95\textwidth]{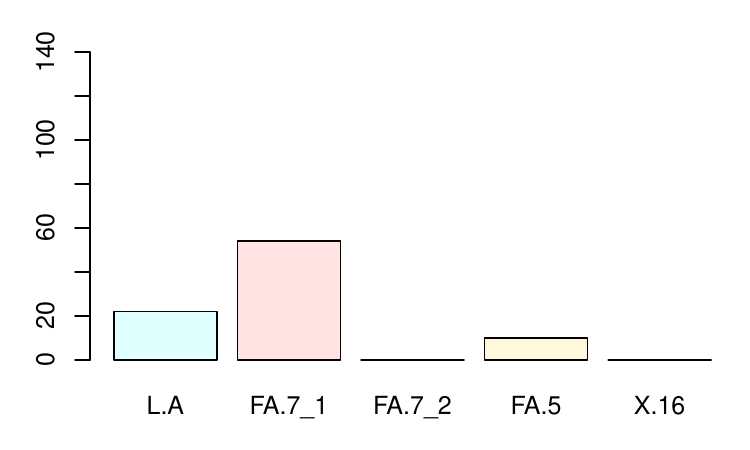}
   \end{subfigure}    
       \begin{subfigure}[b]{0.32\textwidth}
        \caption{CMA}
      \includegraphics[width=0.95\textwidth]{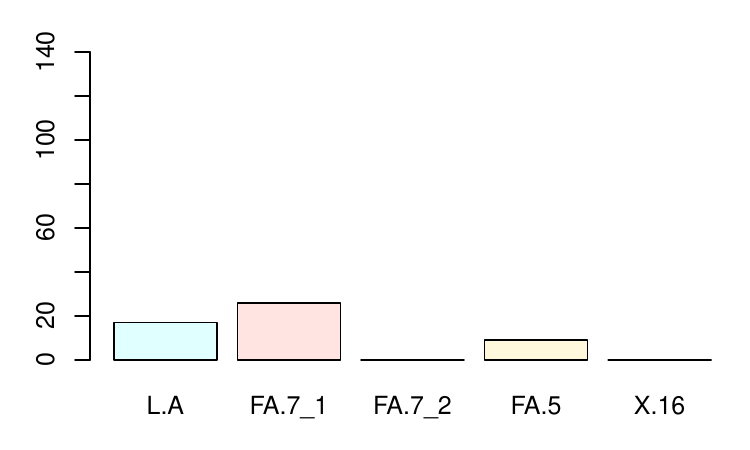}
   \end{subfigure}   
   \begin{subfigure}[b]{0.32\textwidth}
   \caption{PoC-AB}
      \includegraphics[width=0.95\textwidth]{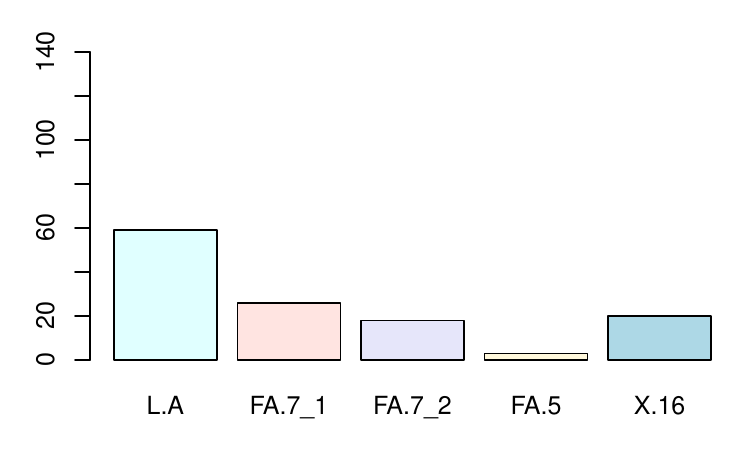}
   \end{subfigure}    
      \begin{subfigure}[b]{0.32\textwidth}
      \caption{PoC-B}
      \includegraphics[width=0.95\textwidth]{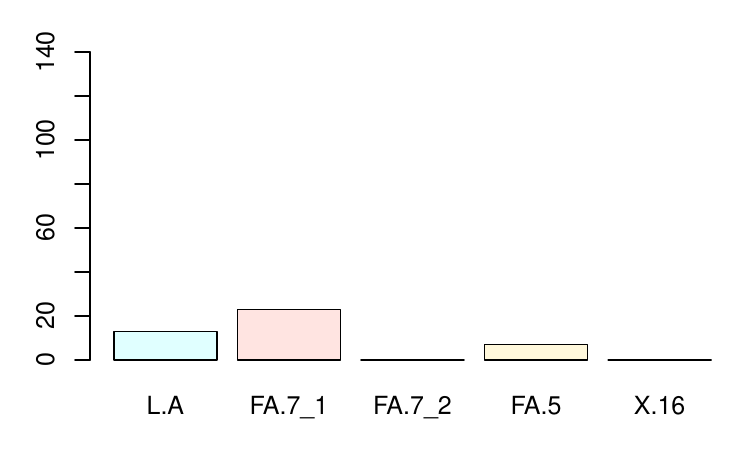}
   \end{subfigure}    
      \begin{subfigure}[b]{0.32\textwidth}
      \caption{PoC-Sobel}
      \includegraphics[width=0.95\textwidth]{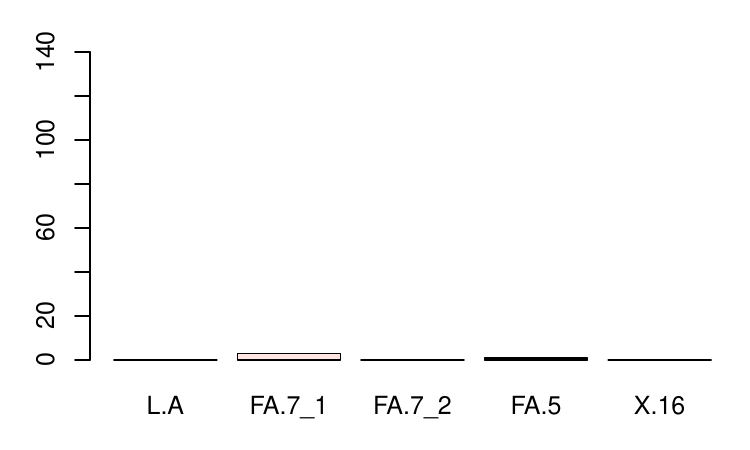}
   \end{subfigure}    
 \caption{Times of mediators being selected in Step 2 by the six tests when FDR$=0.10$ over 400 random splits of the data.  Keep 5\% of lipids with the smallest $p$-values in Step 1. (Abbreviations: LAURIC.ACID (L.A); FA.7.0-OH\_1 (FA.7\_1); FA.5.0-OH (FA.5); FA.7.0-OH\_2 (FA.7\_2); X16.0.LYSO.PC\_2 (X.16).} \label{fig:thres8}
\end{figure}

\begin{figure}[!htbp]
 \captionsetup[subfigure]{labelformat=empty}
    \centering
      \begin{subfigure}[b]{0.32\textwidth}
      \caption{JS-AB}
      \includegraphics[width=0.95\textwidth]{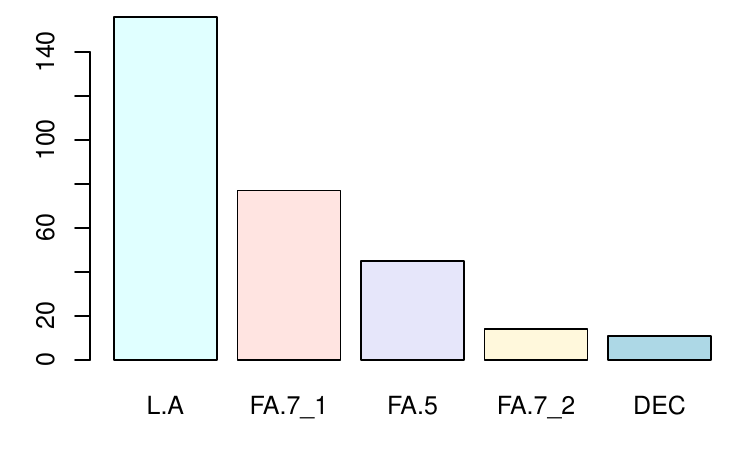}
   \end{subfigure}    
      \begin{subfigure}[b]{0.32\textwidth}
      \caption{JS-MaxP}
      \includegraphics[width=0.95\textwidth]{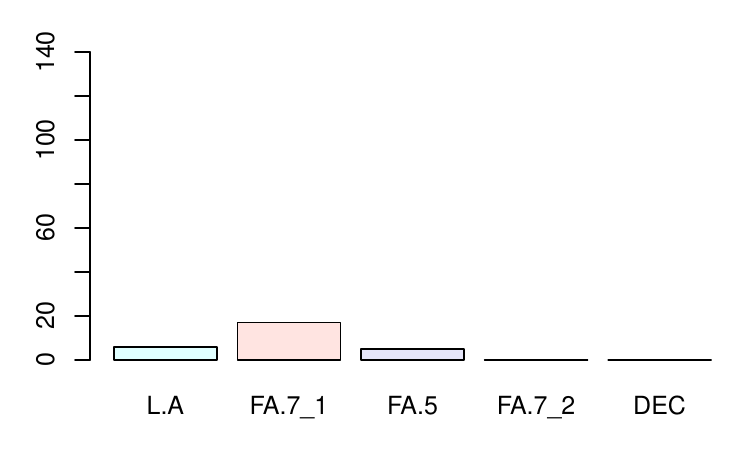}
   \end{subfigure}    
       \begin{subfigure}[b]{0.32\textwidth}
        \caption{CMA}
      \includegraphics[width=0.95\textwidth]{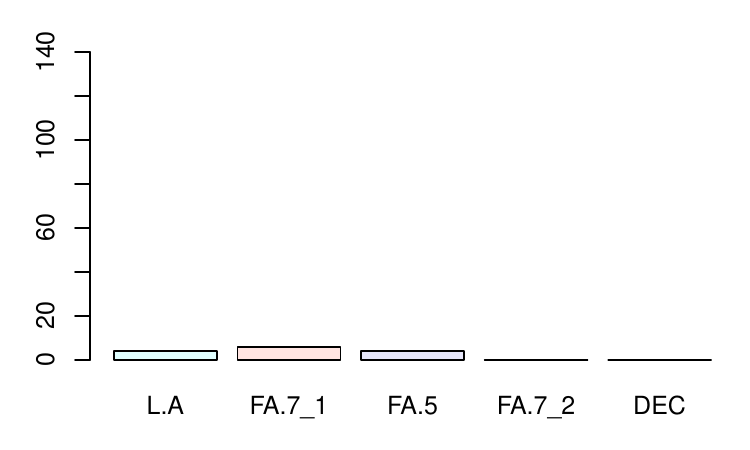}
   \end{subfigure}   
   \begin{subfigure}[b]{0.32\textwidth}
   \caption{PoC-AB}
      \includegraphics[width=0.95\textwidth]{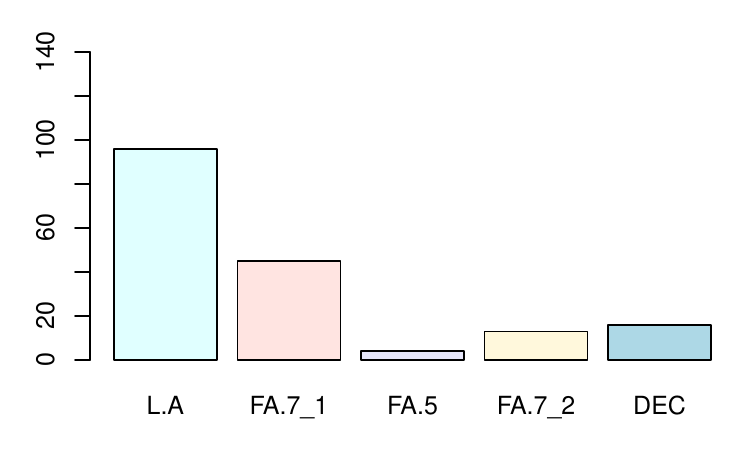}
   \end{subfigure}    
      \begin{subfigure}[b]{0.32\textwidth}
      \caption{PoC-B}
      \includegraphics[width=0.95\textwidth]{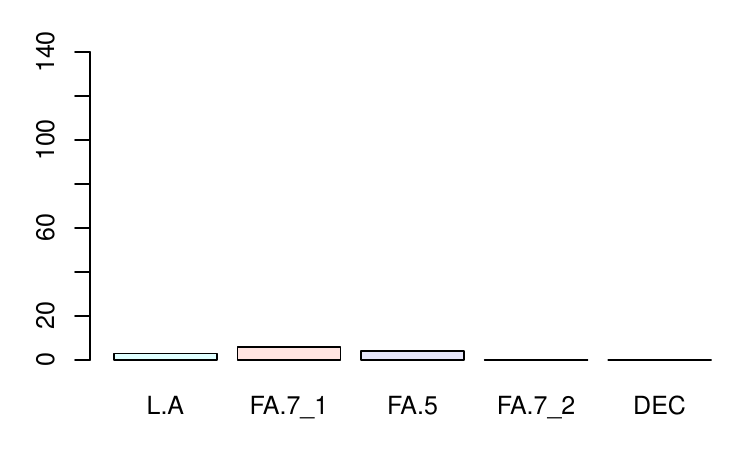}
   \end{subfigure}    
      \begin{subfigure}[b]{0.32\textwidth}
      \caption{PoC-Sobel}
      \includegraphics[width=0.95\textwidth]{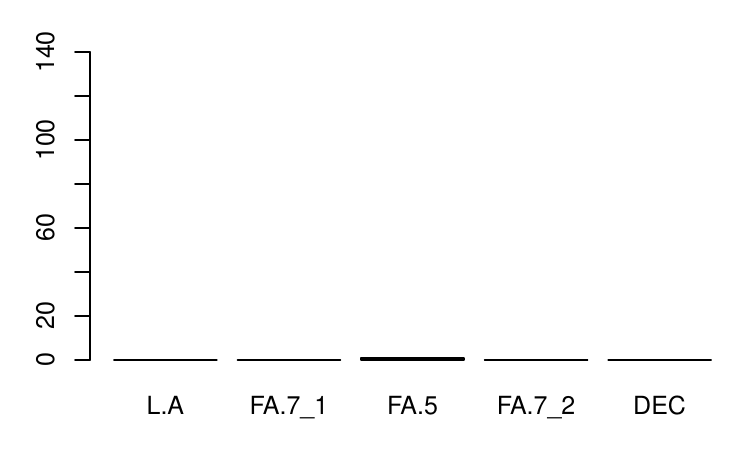}
   \end{subfigure}    
\caption{ Times of mediators being selected in Step 2 by the six tests when FDR$=0.10$ over 400 random splits of the data.  Keep 10\% of lipids with the smallest $p$-values in Step 1. (Abbreviations: LAURIC.ACID (L.A); FA.7.0-OH\_1 (FA.7\_1); FA.5.0-OH (FA.5); FA.7.0-OH\_2 (FA.7\_2); DECENOYLCARNITINE (DEC).} \label{fig:thres15}
\end{figure}

\begin{figure}[!htbp]
 \captionsetup[subfigure]{labelformat=empty}
    \centering
      \begin{subfigure}[b]{0.32\textwidth}
      \caption{JS-AB}
      \includegraphics[width=0.95\textwidth]{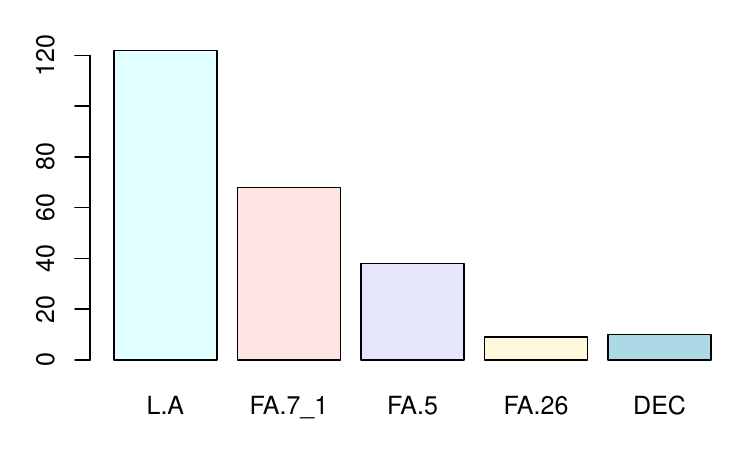}
   \end{subfigure}    
      \begin{subfigure}[b]{0.32\textwidth}
      \caption{JS-MaxP}
      \includegraphics[width=0.95\textwidth]{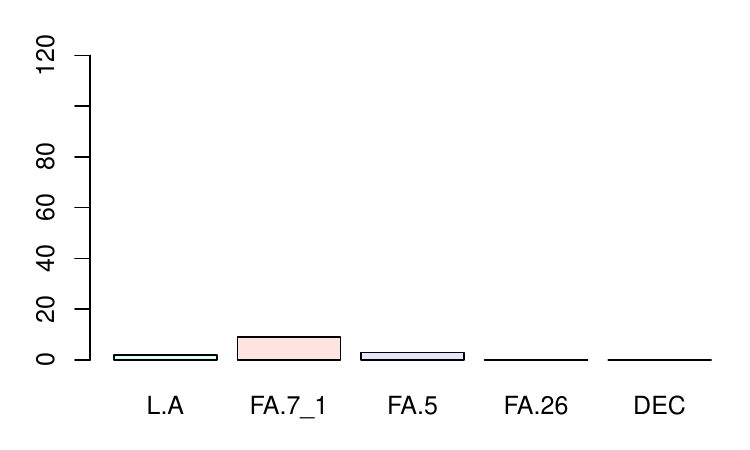}
   \end{subfigure}    
       \begin{subfigure}[b]{0.32\textwidth}
        \caption{CMA}
      \includegraphics[width=0.95\textwidth]{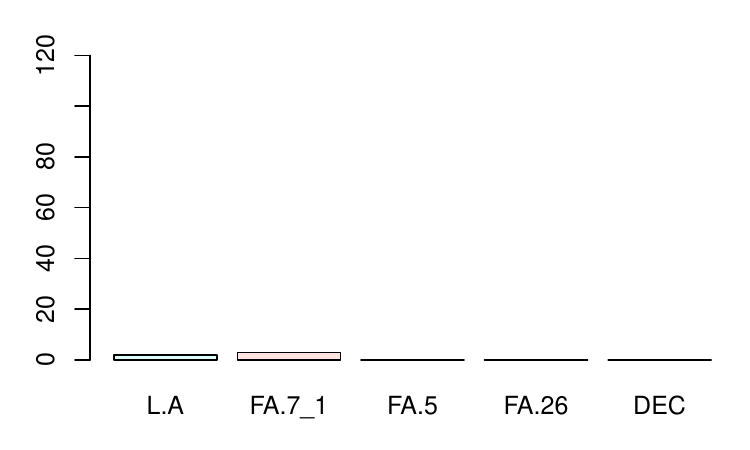}
   \end{subfigure}   
   \begin{subfigure}[b]{0.32\textwidth}
   \caption{PoC-AB}
      \includegraphics[width=0.95\textwidth]{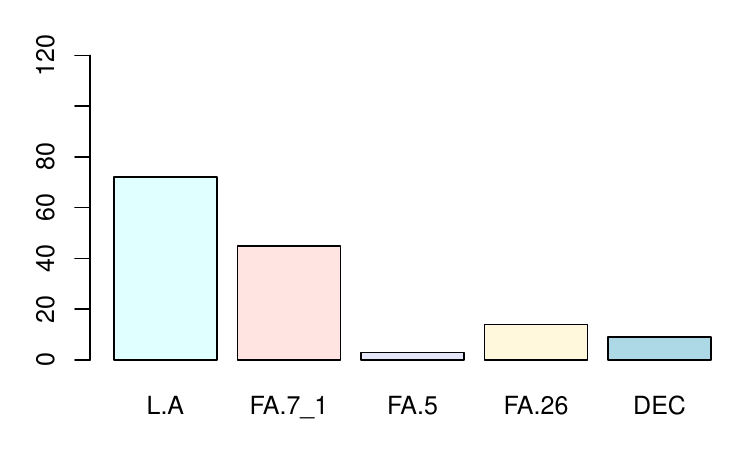}
   \end{subfigure}    
      \begin{subfigure}[b]{0.32\textwidth}
      \caption{PoC-B}
      \includegraphics[width=0.95\textwidth]{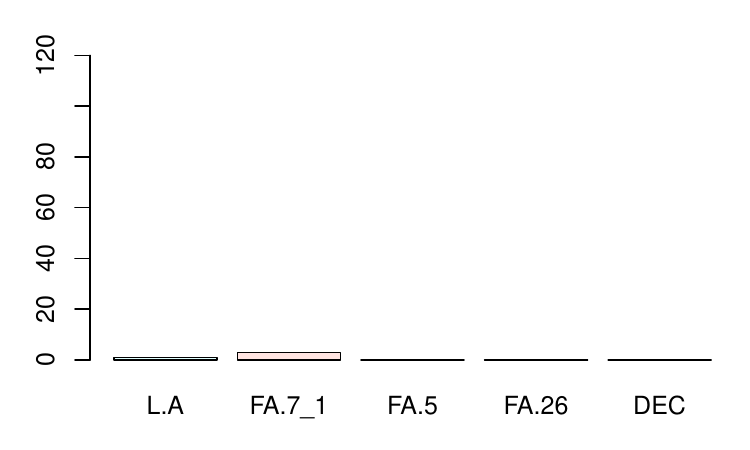}
   \end{subfigure}    
      \begin{subfigure}[b]{0.32\textwidth}
      \caption{PoC-Sobel}
      \includegraphics[width=0.95\textwidth]{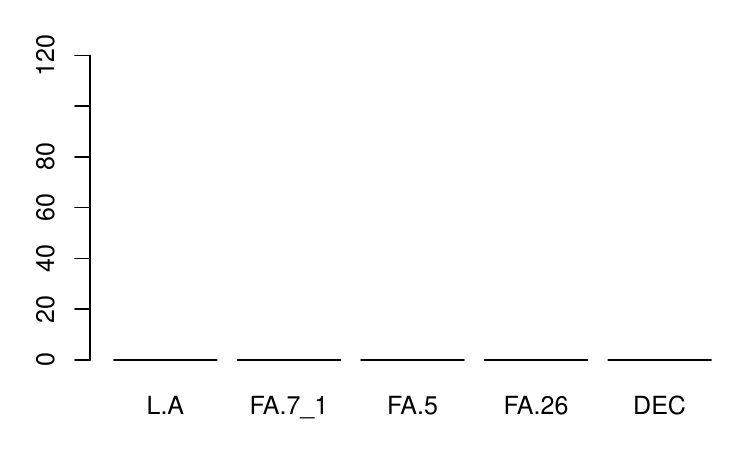}
   \end{subfigure}    
 \caption{ Times of mediators being selected in Step 2 by the six tests when FDR$=0.10$ over 400 random splits of the data.  Keep 15\% of lipids with the smallest $p$-values in Step 1. (Abbreviations: LAURIC.ACID (L.A); FA.7.0-OH\_1 (FA.7\_1); FA.5.0-OH (FA.5); FA.26 (FA.26.0-OH); DECENOYLCARNITINE (DEC).}  \label{fig:thres22}
\end{figure}

\begin{figure}[!htbp]
 \captionsetup[subfigure]{labelformat=empty}
    \centering
      \begin{subfigure}[b]{0.32\textwidth}
      \caption{JS-AB}
      \includegraphics[width=0.95\textwidth]{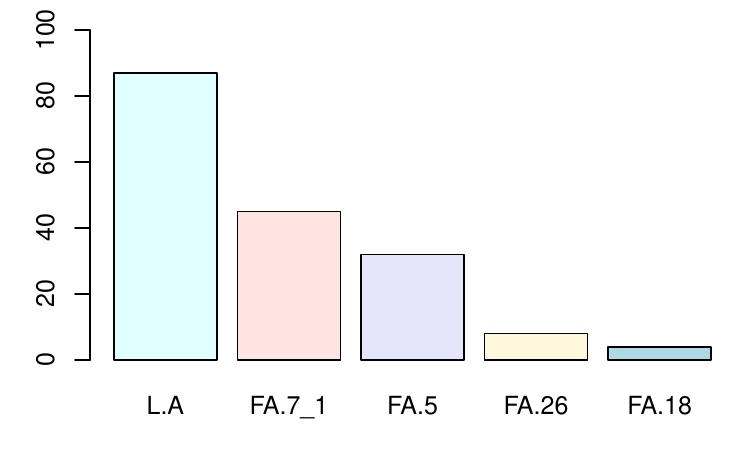}
   \end{subfigure}    
      \begin{subfigure}[b]{0.32\textwidth}
      \caption{JS-MaxP}
      \includegraphics[width=0.95\textwidth]{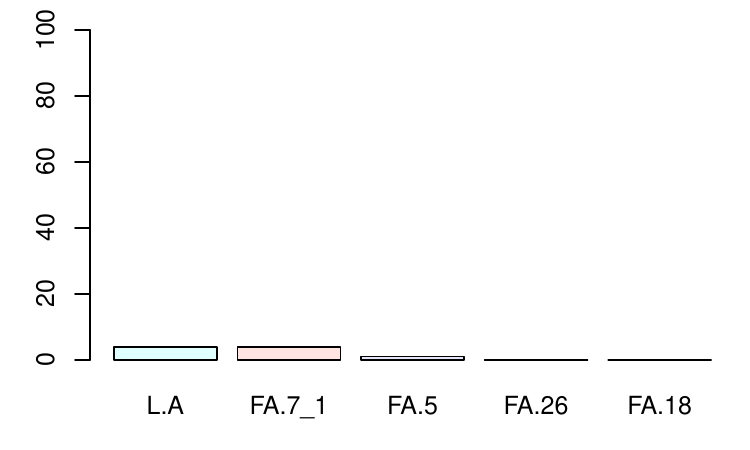}
   \end{subfigure}    
       \begin{subfigure}[b]{0.32\textwidth}
        \caption{CMA}
      \includegraphics[width=0.95\textwidth]{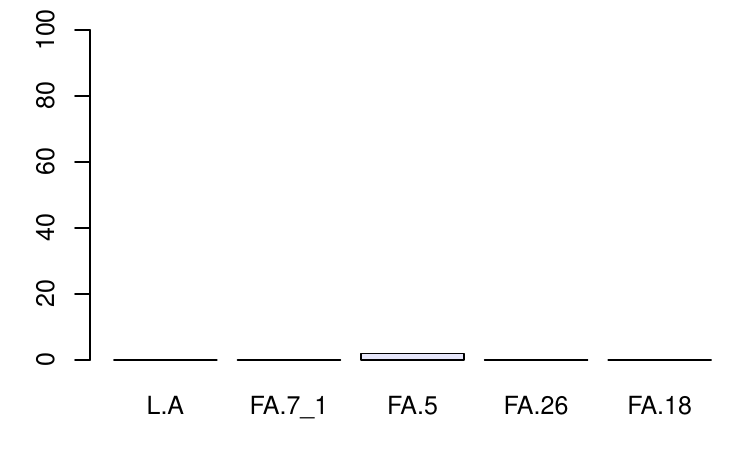}
   \end{subfigure}   
   \begin{subfigure}[b]{0.32\textwidth}
   \caption{PoC-AB}
      \includegraphics[width=0.95\textwidth]{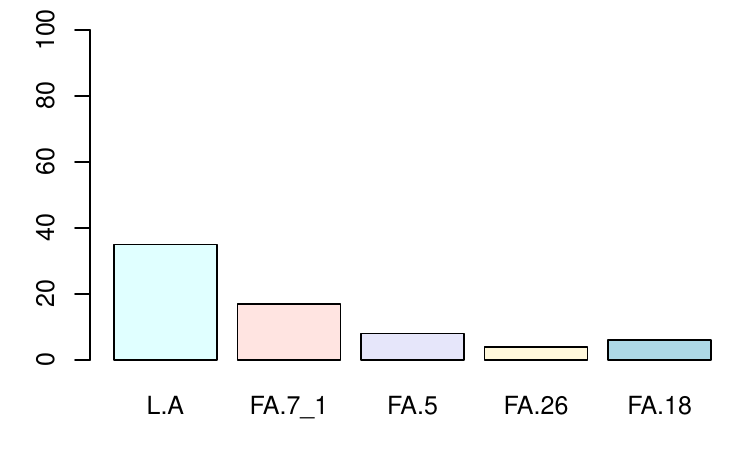}
   \end{subfigure}    
      \begin{subfigure}[b]{0.32\textwidth}
      \caption{PoC-B}
      \includegraphics[width=0.95\textwidth]{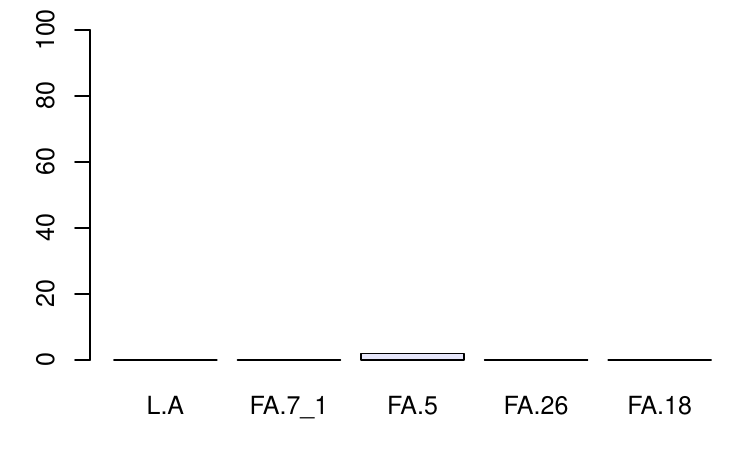}
   \end{subfigure}    
      \begin{subfigure}[b]{0.32\textwidth}
      \caption{PoC-Sobel}
      \includegraphics[width=0.95\textwidth]{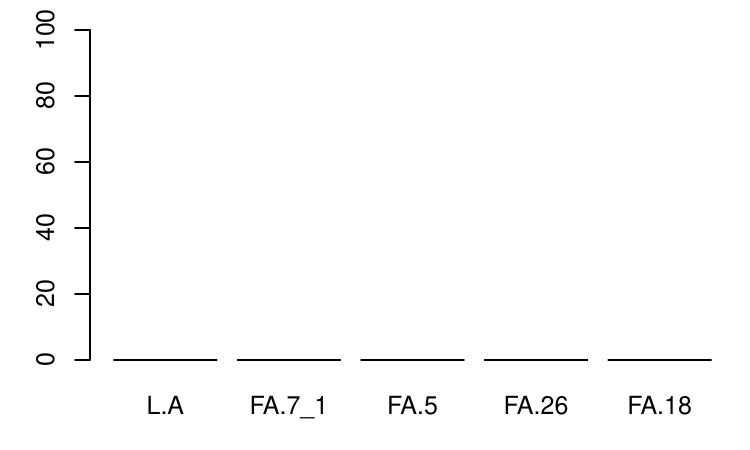}
   \end{subfigure}    
 \caption{ Times of mediators being selected in Step 2 by the six tests when FDR$=0.10$ over 400 random splits of the data.  Keep 20\% of lipids with the smallest $p$-values in Step 1. (Abbreviations: LAURIC.ACID (L.A); FA.7.0-OH\_1 (FA.7\_1); FA.5.0-OH (FA.5); FA.26 (FA.26.0-OH); FA.18 (FA.18.1\_2).}  \label{fig:thres30} 
\end{figure}

\begin{figure}[!htbp]
 \captionsetup[subfigure]{labelformat=empty}
    \centering
      \begin{subfigure}[b]{0.32\textwidth}
      \caption{JS-AB}
      \includegraphics[width=0.95\textwidth]{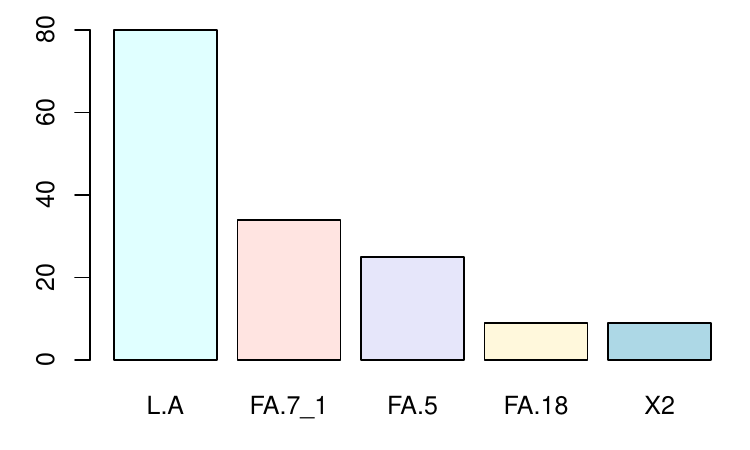}
   \end{subfigure}    
      \begin{subfigure}[b]{0.32\textwidth}
      \caption{JS-MaxP}
      \includegraphics[width=0.95\textwidth]{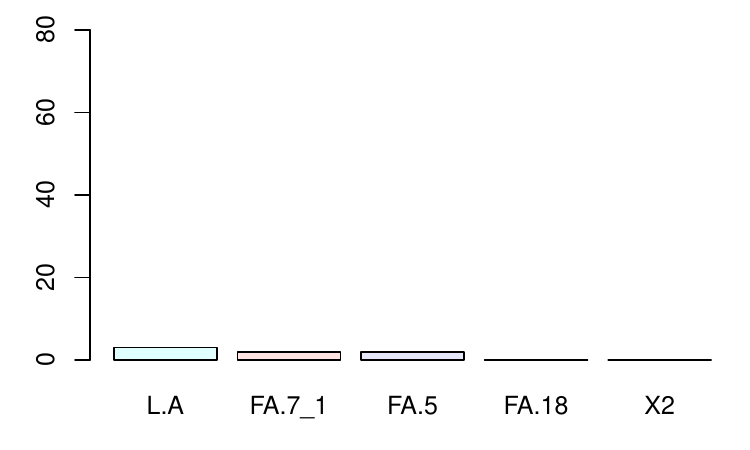}
   \end{subfigure}    
       \begin{subfigure}[b]{0.32\textwidth}
        \caption{CMA}
      \includegraphics[width=0.95\textwidth]{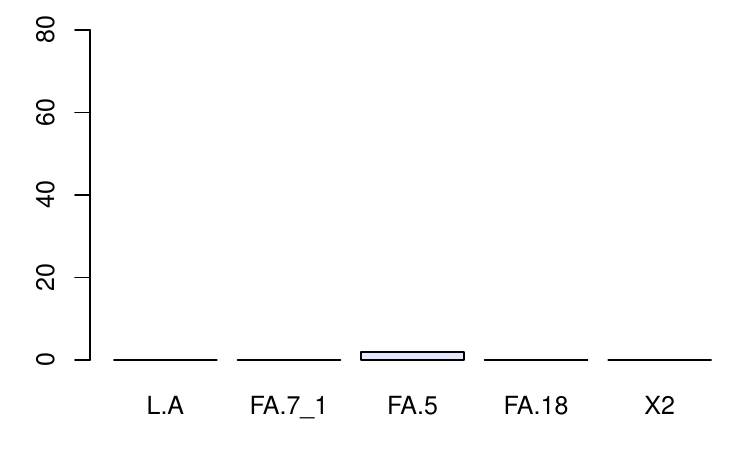}
   \end{subfigure}   
   \begin{subfigure}[b]{0.32\textwidth}
   \caption{PoC-AB}
      \includegraphics[width=0.95\textwidth]{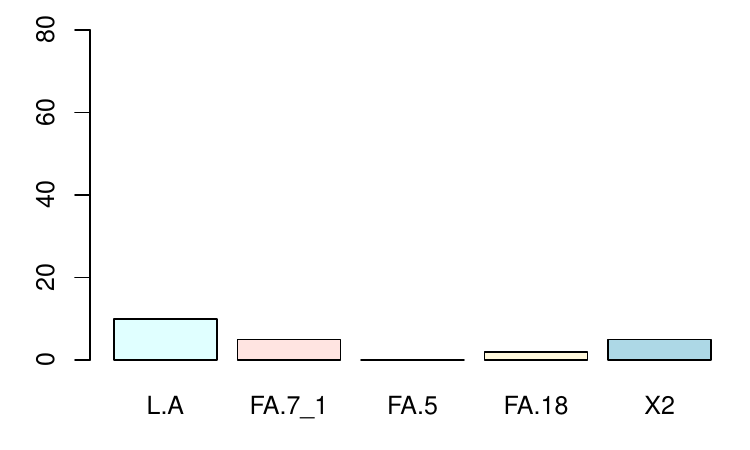}
   \end{subfigure}    
      \begin{subfigure}[b]{0.32\textwidth}
      \caption{PoC-B}
      \includegraphics[width=0.95\textwidth]{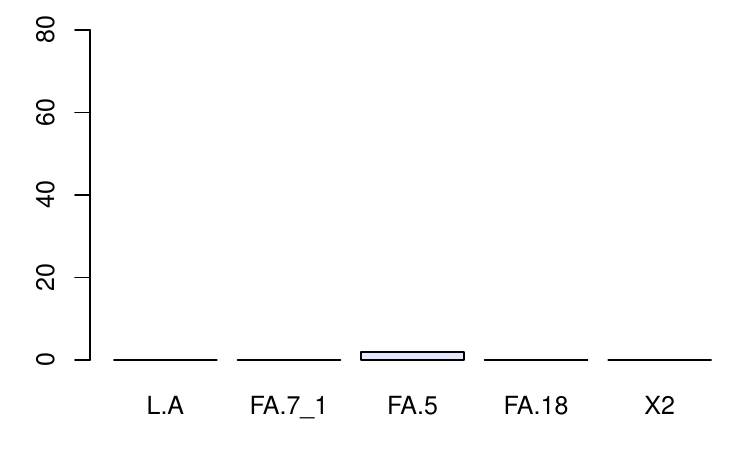}
   \end{subfigure}    
      \begin{subfigure}[b]{0.32\textwidth}
      \caption{PoC-Sobel}
      \includegraphics[width=0.95\textwidth]{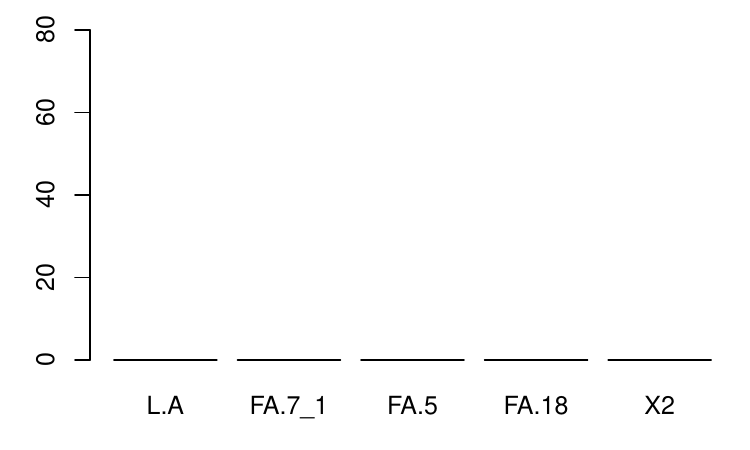}
   \end{subfigure}    
 \caption{ Times of mediators being selected in Step 2 by the six tests when FDR$=0.10$ over 400 random splits of the data.  Keep 25\% of lipids with the smallest $p$-values in Step 1. (Abbreviations: LAURIC.ACID (L.A); FA.7.0-OH\_1 (FA.7\_1); FA.5.0-OH (FA.5); FA.18 (FA.18.1\_2); X.2 (X2.OCTENOYLCARNITINE.CAT).}  \label{fig:thres38} 
\end{figure}

\newpage
\subsubsection{Testing the Joint Mediation Effect}\label{sec:datagroup}

We evaluate the joint mediation effect following the discussions in Section \ref{sec:jointestmulti}. 
Similarly to Section \ref{sec:datanalysis}, 
we first apply
a screening analysis to identify a subset of lipids as potential candidates,
and then test the joint mediation effect of the chosen lipids  in the second step.
To prevent potential issues arising from double dipping the data, we randomly split the data into two parts, which are used in the two steps, respectively. 
Besides the joint AB test in Section \ref{sec:jointestmulti},
we also include three existing approaches to testing the group-level mediation effect:  
Product Test based on Normal Product distribution (PT-NP) \citep{huang2016hypothesis}
 Product Test based on Normality (PT-N) \citep{huang2016hypothesis}, and the Simultaneous Likelihood Ratio (SLR) Test in \cite{hao2022simultaneous}. 

Table \ref{tb:jointtestdata} presents $p$-values of the four tests under a single random split, as an illustrative example. 
The proposed AB test returns the most significant $p$-value, and it rejects the null hypothesis of no joint mediation effect at  the 0.05 significance level.
We further replicate the two-step analysis by randomly splitting the data 400 times. 
For each test, Figure \ref{fig:histpvaljme} presents a histogram of 400 $p$-values obtained from 400 random splits.
All the four histograms are right skewed.  
The AB test shows a higher chance of yielding smaller $p$-values compared the other existing tests. 
This aligns with our observation that  the AB test achieves  high statistical power in simulations in Section  \ref{sec:numericresultsmulti}.

\begin{table}[!htbp]
\centering
\caption{{Results of testing the joint mediation effect after the first screening step.}} \label{tb:jointtestdata} 
\setlength{\tabcolsep}{13pt}
\begin{tabular}{c|c c c c} \hline
tests  & AB Test  & PT-N & PT-NP & SLR \\  \hline 
$p$-values  &0.0166   & 0.0856   & 0.1020  & 0.0726\\  \hline 
\end{tabular} 
\end{table} 
\begin{figure}[!htbp]
 \captionsetup[subfigure]{labelformat=empty}
    \centering
   \begin{subfigure}[b]{0.243\textwidth}
   \caption{AB Test}
      \includegraphics[width=\textwidth]{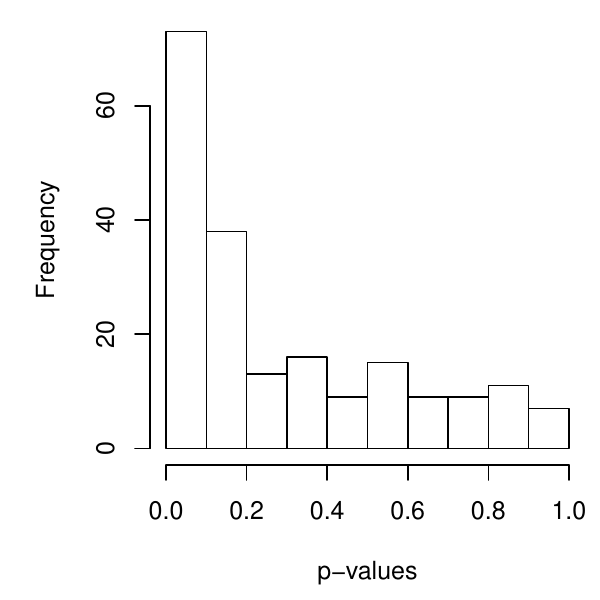}
   \end{subfigure}  
           \begin{subfigure}[b]{0.243\textwidth}
   \caption{PT-N}
      \includegraphics[width=\textwidth]{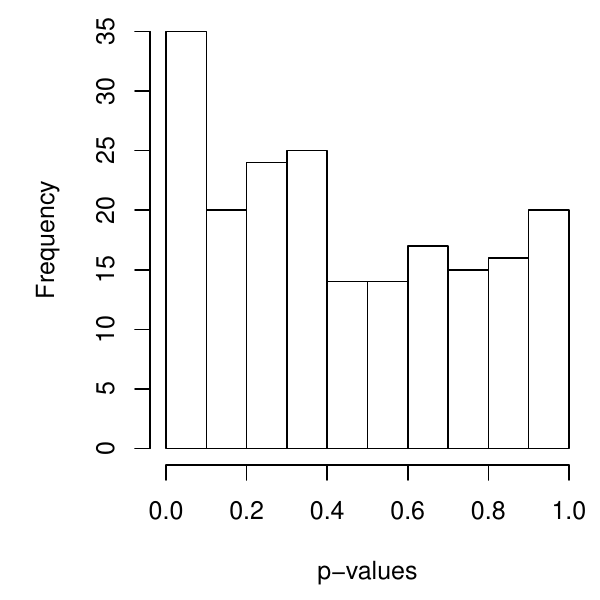}
   \end{subfigure} 
            \begin{subfigure}[b]{0.243\textwidth}
   \caption{PT-NP}
      \includegraphics[width=\textwidth]{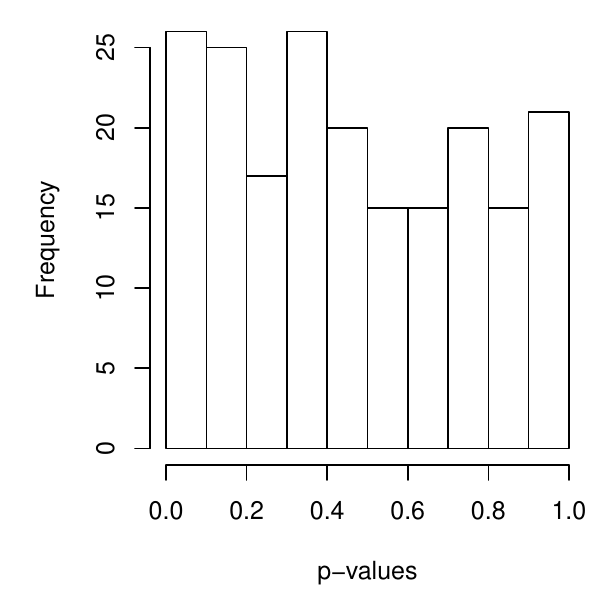}
   \end{subfigure} 
       \begin{subfigure}[b]{0.243\textwidth}
   \caption{SLR} 
      \includegraphics[width=\textwidth]{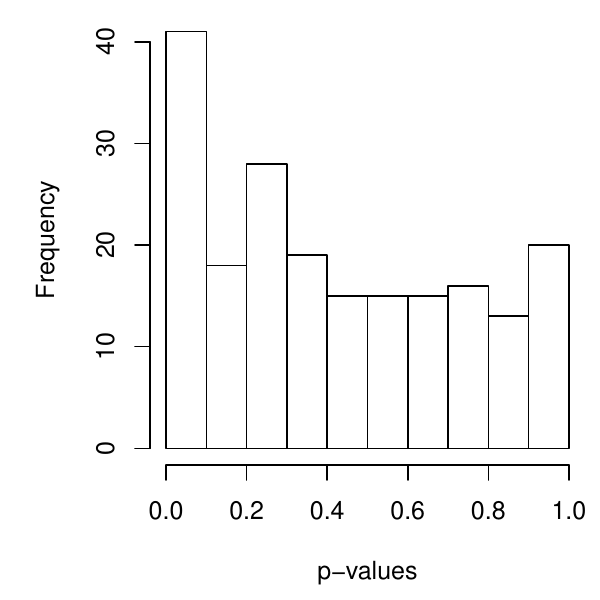}
   \end{subfigure}  

 \caption{Histogram of p-values of testing joint mediation effect in Step 2.} \label{fig:histpvaljme}
\end{figure}

\medskip

\subsubsection{{Data Analysis: Interpretation of Results in the Second Step }}\label{sec:datastep2int}
In the second step of the data analysis, 
 let $\{\text{lipid}_j: j=1,\ldots, 15\}$ denote the selected lipids of interest. 
Our analysis considers the linear and additive mean regression model:   
\begin{align}
    \operatorname{BMI} \sim &~  \sum_{j=1}^{15} \beta_j \operatorname{lipid}_j + \tau \operatorname{Exposure} + \boldsymbol{X}^{\top}\beta_X, \label{eq:datastep2model}\\
 \operatorname{lipid}_j \sim &~ \alpha_j  \operatorname{Exposure} + \boldsymbol{X}^{\top}\alpha_{X,j},\quad \quad \text{for } j=1,\ldots, 15, \notag
\end{align}
where $\boldsymbol{X}=(1, \text{age}, \text{gender})^{\top}$ denotes the baseline covariates.  
For one candidate lipid of interest, \textit{say}, lipid$_1$, we test $H_0: \alpha_1\beta_1 = 0$ in the above multivariate structural equation model   \eqref{eq:datastep2model}. 
This hypothesis pertains to two possible causal paths for interpretation along with the discussion on 
Page \pageref{page:discussionmultimed} of the main text.
First, if the mediators follow the parallel path model in Section \ref{sec:parallelmediator}, the individual mediation path  
$\text{Exposure}\to \text{lipid}_1 \to \text{BMI}$ can be identified. 
In this case, rejecting $H_0: \alpha_1\beta_1=0$ indicates that there exists a mediation effect through the path $\text{Exposure}\to \text{lipid}_1 \to \text{BMI}$. 
Second, if lipid${}_1$ is is causally correlated with other lipids, 
according to Section \ref{sec:intervenindirect}, 
the coefficients-product 
$\alpha_1\beta_1$  
may be interpreted as the interventional indirect effect, 
which is the combined mediation effects along all (unknown) causal pathways via  $\text{lipid}_1$ as well as any other lipids that causally precede $\text{lipid}_1$. 
Also see Section  \ref{sec:differentypesofeffects} for a detailed introduction on the two scenarios. 
Given the fact that these lipids are in the same biological pathway and highly likely to be causally related, we are inclined to draw conclusion of our analysis using the second interpretation, namely the interventional indirect effect.

\medskip
\subsubsection{Sensitivity Analysis} \label{sec:sensanalysis} 

Recall that the screening step in Section \ref{sec:datanalysis}   considers one mediator at a time in the outcome model.  
We conduct sensitivity analyses to evaluate the effects of unadjusted  mediators  \citep{imai2010identification,liu2020large}. 
The first-step   estimates 
 are identified if the sequential ignorability assumption in Section \ref{sec:review} holds.
\cite{imai2010identification} proposed to use the correlation between the error terms in the Y-M model and the M-S model as a sensitivity parameter.
As an instance,
when only considering one mediator $M_j$, 
we can equivalently rewrite 
outcome model in \eqref{eq:fullmodmult1}   as $Y=\beta_{M,j}M_j+\boldsymbol{X}^{\top}\boldsymbol{\beta}_{\boldsymbol{X}} + \tau_{S}S+ \epsilon_{Y,j}$, where $\epsilon_{Y,j}=\sum_{k\neq j}\beta_{M,k}M_k + \epsilon_Y$. 
When the sequential  ignorability assumption is violated, 
the correlation $\rho_j=\mathrm{corr}(\epsilon_{Y,j},\epsilon_{M,j})$ is likely to be nonzero, and vice versa. 
Following \cite{imai2010identification}, we hypothetically vary the value of $\rho_j$ and compute the corresponding estimate of the mediation effect. 
When $|\rho_j|$ deviates from 0 to certain value, the obtained  mediation effects could be explained away by the bias from unadjusted mediators. 
For each tested mediator $M_j$, we compute the minimum value of $|\rho_j|$ such that the observed mediation effect becomes 0 
through the  \textsf{R} package \textsf{mediation} \citep{tingley2014mediation}.

Table \ref{tb:sensres}  presents the sensitivity analysis results for the mediators with absolute mediation effects greater than 0.05.  
We discuss the results of the mediator  LAURIC.ACID as an example. Table \ref{tb:sensres} suggests that the  bias from the  correlation between the two error terms $\mathrm{corr}(\epsilon_{M,j}, \eY)$ needs to be at least 0.16 such that  the mediation effect  becomes 0. On the other hand, the sample correlation between the two residual terms  is -4.83e-17, which is  much smaller than $\rho_{\min}=0.16$. 
This suggests that the bias from error correlation could be negligible. 
Similarly for other selected mediators, the residual correlation are very close to zero and much smaller than the corresponding confounding bias measured $\rho_{\min}$. Therefore, the sensitivity analysis results show that mediation analysis results for this ELEMENT dataset can be robust to the bias from the potential error correlation.

\begin{table}[!htbp]
\centering
\caption{Sensitivity analysis of selected mediators in the ELEMENT study. ME represents  estimated mediation effects. $\hat{\alpha}$ and $\hat{\beta}$ represent samples estimates of  $\alphaS$ and $\betaM$, respectively.   $p_{\alpha}$ and $p_{\beta}$ represent the $p$-values of  coefficients $\alphaS$ and $\betaM$, respectively.  $\rho_{\min}$ represents the  bias measured by $\mathrm{corr}(\eM, \eY)$ at which NIE$=0$, where we use 0.01 increment. $\hat{\rho}$ stands for sample Pearson's correlation between two error terms $\eM$ and $\eY$.}\label{tb:sensres} 
\setlength{\tabcolsep}{10pt}
\begin{tabular}{rccccccc}
  \hline
 & ME & $\hat{\alpha}$ & $p_{\alpha}$ & $\hat{\beta}$ & $p_{\beta}$ & $\rho_{min}$ & $\hat{\rho}$ \\ 
  \hline
FA 7:0-OH.\_1 & 0.18 & 0.20 & 0.00 & 0.87 & 0.00 & 0.21 & 4.59e-17 \\ 
  FA 7:0-OH.\_2 & 0.12 & 0.15 & 0.00 & 0.82 & 0.00 & 0.20 & 1.86e-17 \\ 
  LPC 16:1\_3 & 0.06 & 0.18 & 0.00 & 0.32 & 0.13 & 0.08 & -1.27e-17 \\ 
  LPC 18:2\_2 & 0.05 & 0.17 & 0.00 & 0.31 & 0.14 & 0.08 & 3.91e-17 \\ 
  LPC 18:3\_1 & 0.06 & 0.11 & 0.02 & 0.57 & 0.01 & 0.13 & 7.70e-17 \\ 
  X10.HYDRO-H2O & -0.08 & 0.26 & 0.00 & -0.30 & 0.16 & -0.07 & -7.91e-18 \\ 
  DECENOYL & -0.05 & 0.14 & 0.01 & -0.38 & 0.07 & -0.09 & -1.00e-17 \\ 
  FA 18:0-DiC. & -0.05 & 0.14 & 0.01 & -0.37 & 0.07 & -0.09 & 1.18e-16 \\ 
  FA 5:0-OH. & 0.11 & 0.09 & 0.08 & 1.20 & 0.00 & 0.30 & 9.11e-18 \\ 
  GLY & 0.06 & 0.12 & 0.02 & 0.50 & 0.02 & 0.12 & 1.26e-16 \\ 
  GLY-H2O & 0.06 & 0.15 & 0.00 & 0.44 & 0.04 & 0.11 & -3.99e-17 \\ 
  LAURIC.ACID & 0.14 & 0.21 & 0.00 & 0.66 & 0.00 & 0.16 & -4.83e-17 \\ 
   \hline
\end{tabular}
\end{table}

\subsubsection{Volcano Plots in the Screening Step}

In the first screening step, we obtain the estimated mediation effects and corresponding $p$-values of all the mediators. 
Figure \ref{fig:volcano} presents volcano plots of $-\log_{10}(p \text{-values})$ versus estimated mediation effects of the PoC-type and the JS-type tests,  respectively. 
The smaller $p$ value favors more the presence of mediation effect.   
The left panel in  Figure \ref{fig:volcano} compares our PoC-AB test  with the standard Sobel's test,
 and the right panel compares our JS-AB test with the existing MaxP test. It is evident that the two proposed AB tests yield more significant $p$-values than the two popular methods do generally.  
 
\begin{figure}[!htbp]
 \captionsetup[subfigure]{labelformat=empty}
    \centering
   \begin{subfigure}[b]{0.47\textwidth}
     \includegraphics[width=\textwidth]{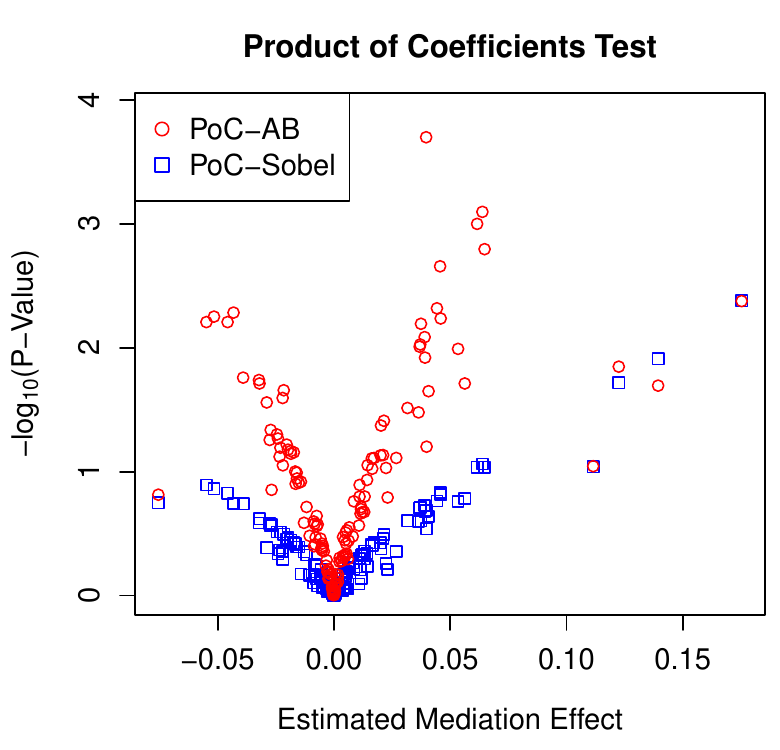}
   \end{subfigure} \quad   
  \begin{subfigure}[b]{0.47\textwidth}
  \includegraphics[width=\textwidth]{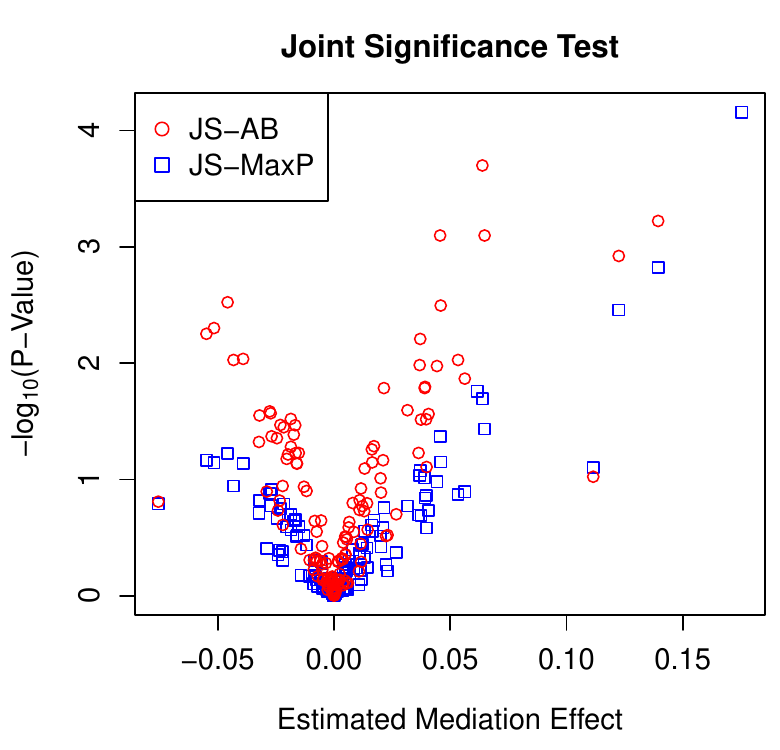}
     \end{subfigure} 
 \caption{Volcano plots: $-\log_{10}(p \text{-values})$ versus their estimated mediation effects.} \label{fig:volcano}
\end{figure}

\newpage

\subsection{Confirmatory Analysis of Data via Double Bootstrap} \label{sec:confirmdata}

We conduct  a   confirmatory analysis below, which aims to provide additional evidence of   testing results yielded by AB methods. 
The confirmatory procedure leverages the 
the double bootstrap (DB) strategy  as outlined in Section \ref{sec:tuningpar}. 
Analyzing the $p$-values yielded by DB can offer us additional insights into the underlying model. 
Let $\mathcal{D}_{obs}$ denote an observed dataset. 
We apply the two data processing methods (i) and (ii) in Section \ref{sec:tuningpar} and obtain two  processed datasets, denoted as $\mathcal{D}_{\alpha}$ and $\mathcal{D}_{\beta}$, respectively. 
\paragraph{Confirmatory Analysis Procedure} \quad \\[-8pt] 

\noindent \textit{Step 1.} 
Given an observed dataset  $\mathcal{D}_{obs}$, 
obtain two processed datasets $\mathcal{D}_{{\alpha}}$ and $\mathcal{D}_{{\beta}}$. \\
\textit{Step 2.} Apply DB to  $\mathcal{D}_{{obs}}$, i.e.,  
for $b=1,\ldots, B$, 
\vspace{-4pt}
\begin{itemize} \setlength{\itemsep}{-4pt}
    \item[--] apply ordinary bootstrap to  $\mathcal{D}_{{obs}}$, and  let $\mathcal{D}_{{obs},b}^*$ denote the  bootstrapped data;  
 \item[--] apply AB with $\lambda = 0$ to $\mathcal{D}_{obs,b}^*$  to obtain an estimated $p$-value  $p_{obs,b}^*$.  \vspace{-3pt}
\end{itemize} 
\ \ \, Let $\mathcal{P}^*_{obs}=\{p_{obs,b}^*: b=1,\ldots, B \}$ be the set of estimated $p$-values. \\[4pt]
\noindent \textit{Step 3.} Apply DB to $\mathcal{D}_{\alpha}$ similarly to Step 2 above and obtain $\mathcal{P}^*_{\alpha}=\{p_{\alpha,b}^*: b=1,\ldots, B \}$. \\[4pt]
\noindent \textit{Step 4.} Apply DB to $\mathcal{D}_{\beta}$ similarly to Step 2 above  and obtain $\mathcal{P}^*_{\beta}=\{p_{\beta,b}^*: b=1,\ldots, B \}$. \medskip


\vspace{-0.8em}
\paragraph{Interpretation of Results}
We would observe different properties of $\mathcal{P}^*_{obs}$, $\mathcal{P}^*_{\alpha}$ and $\mathcal{P}^*_{\beta}$ under different scenarios of the  true parameters.   
Specifically, 
\begin{itemize}
\setlength{\itemsep}{0pt}
    \item  when $\alphaS=\betaM=0$, QQ-plots of $\mathcal{P}^*_{obs}$, $\mathcal{P}^*_{\alpha}$ and $\mathcal{P}^*_{\beta}$ are all conservative; 
    \item when $\alphaS\neq 0$ and $\betaM=0$, QQ-plots of $\mathcal{P}^*_{obs}$ and $\mathcal{P}^*_{\beta}$ would be close to diagonal, whereas QQ-plot of $\mathcal{P}^*_{\alpha}$ would be conservative; 
    \item when $\alphaS=0$ and $\betaM\neq 0$, QQ-plots of $\mathcal{P}^*_{obs}$ and  $\mathcal{P}^*_{\alpha}$ would be close to diagonal, whereas QQ-plot of $\mathcal{P}^*_{\beta}$ would be conservative. 
    \item when  $\alphaS\neq 0$ and $\betaM\neq 0$, QQ-plot of $\mathcal{P}^*_{obs}$ would bend upward, and QQ-plots of both  $\mathcal{P}^*_{\alpha}$ are  $\mathcal{P}^*_{\beta}$ close to the diagonal. 
\end{itemize}
In another word, the observed patterns of $\mathcal{P}^*_{obs}$, $\mathcal{P}^*_{\alpha}$ and $\mathcal{P}^*_{\beta}$ can provide us additional credibility about the underlying true parameters. 

\medskip
We next apply the above DB  procedure to our analyzed data. 
As previously reported in Section \ref{sec:datanalysis} of the main text, the mediator LAURIC.ACID (L.A) was selected by the AB test. As a comparison, we randomly pick another mediator, FA.12.0-OH (F.12), from the data that was not selected by the AB test. To affirm the testing results, we carry out the procedure above to obtain estimated $p$-values of testing the mediation effects via L.A and F.12 separately, presented in Figures \ref{fig:qqplotdatadblatest} and \ref{fig:qqplotdatadbfa12test} below. In particular, 

\begin{itemize}
\setlength{\itemsep}{0pt}
    \item For the selected mediator L.A, Figure \ref{fig:qqplotdatadblatest}  exhibits patterns that are   expected when data are generated from $\alphaS\neq 0$ and $\betaM \neq 0$, i.e. alternative hypotheses.   
\item For the non-selected mediator F.12,  Figure \ref{fig:qqplotdatadblatest}  exhibits patterns that are   expected when data are generated from $\alphaS= 0$ and $\betaM \neq 0$, i.e. a null hypothesis.
\end{itemize} 
Similar patterns have been observed across all the mediators. 
 In short, our confirmatory analysis of the data substantiates that the chosen mediators indeed align with alternative hypotheses, reinforcing the credibility of our analysis results. 
The codes for reproducing the confirmatory analysis analysis are provided on our GitHub repository \url{https://github.com/yinqiuhe/ABtest}.  
Given the inherent complexity of real-world data, we recommend that practitioners exercise caution and make decisions in conjunction with domain-specific knowledge when interpreting results. Nevertheless, the confirmatory analysis via the double bootstrap paradigm provides a powerful tool that enables us to gather additional evidence ensuring our discoveries.

\begin{figure}[!htbp]
    \centering
\caption{QQ-plots of $p$-values by DB for L.A (a selected mediator) with 95\% confidence bands.} \label{fig:qqplotdatadblatest}
\begin{subfigure}{0.3\textwidth}
 \caption{$\mathcal{P}_{obs}^*$}
  \centering
\includegraphics[width=0.75\linewidth]{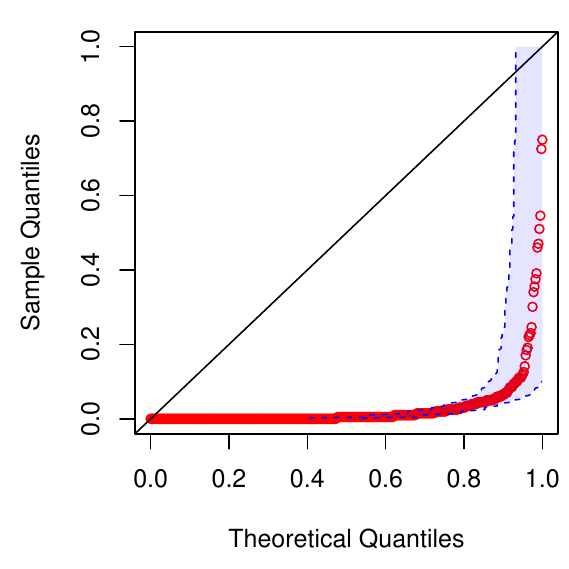}
\end{subfigure}
\begin{subfigure}{0.3\textwidth}
 \caption{$\mathcal{P}_{\alpha}^*$}
  \centering
\includegraphics[width=0.75\linewidth]{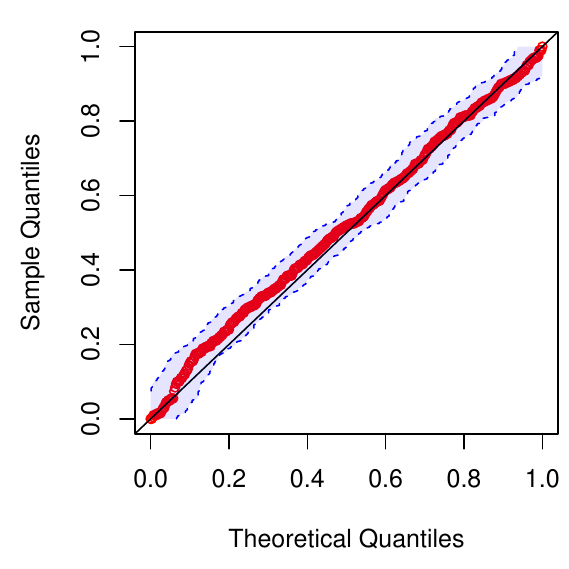}
\end{subfigure}
\begin{subfigure}{0.3\textwidth}
 \caption{$\mathcal{P}_{\beta}^*$}
  \centering
\includegraphics[width=0.75\linewidth]{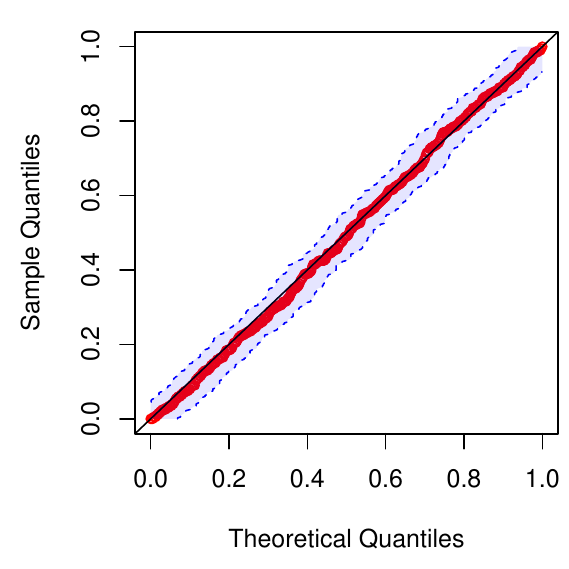}
\end{subfigure} 
\end{figure}
\begin{figure}[!htbp]
    \centering
\caption{QQ-plots of $p$-values by DB for F.12  (a non-selected mediator)  with 95\% confidence bands.} \label{fig:qqplotdatadbfa12test}
\begin{subfigure}{0.3\textwidth}
 \caption{$\mathcal{P}_{obs}^*$}
  \centering
\includegraphics[width=0.75\linewidth]{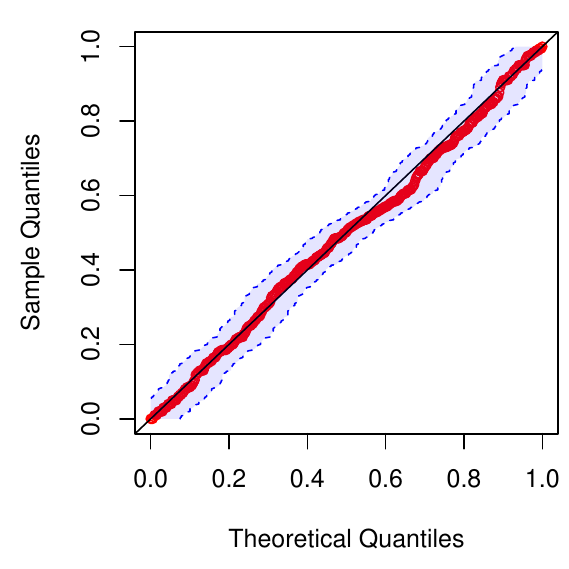} 
\end{subfigure}
\begin{subfigure}{0.3\textwidth}
 \caption{$\mathcal{P}_{\alpha}^*$}
  \centering
\includegraphics[width=0.75\linewidth]{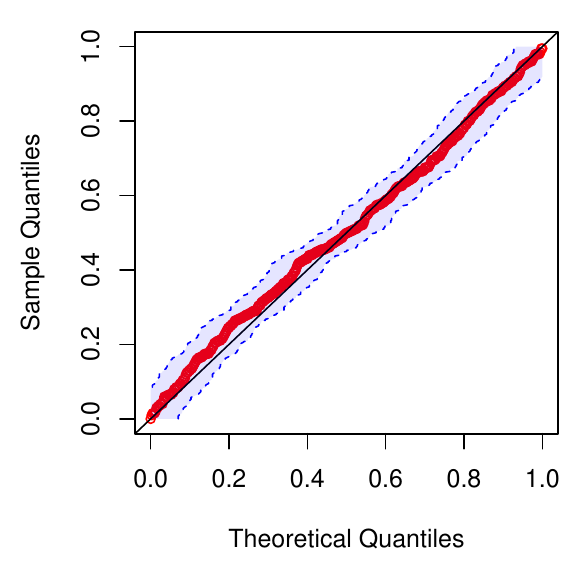}
\end{subfigure}
\begin{subfigure}{0.3\textwidth}
 \caption{$\mathcal{P}_{\beta}^*$}
  \centering
\includegraphics[width=0.75\linewidth]{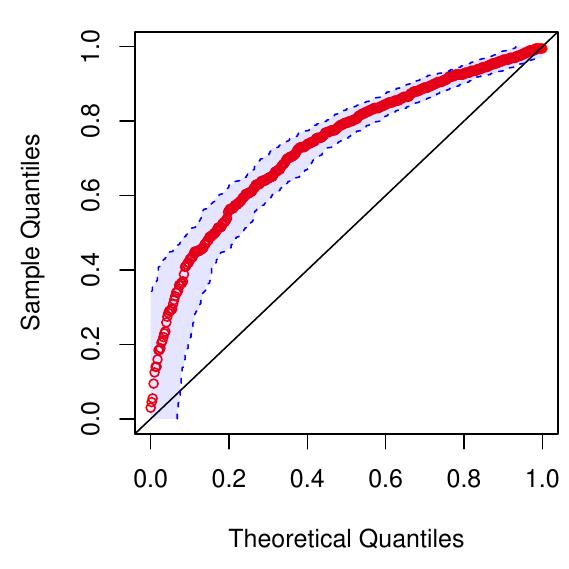}
\end{subfigure} 
\end{figure}

\newpage
\section{Conservatism under Partially Linear Model}\label{sec:partiallinear}
Similar conservatism issue of  testing no-mediation effect has been observed in certain partially linear model. 
In particular, 
\cite{hines2021robust} studied the identification and estimation of natural indirect effect under the following partially linear model
\begin{align}
    \mathrm{E}(M\mid S, \boldsymbol{X})=&~\alphaS S+f(\boldsymbol{X}),\label{eq:partiallinearmodelm}\\
    \mathrm{E}(Y\mid S, M, \boldsymbol{X}) =&~ \betaM M + g(S, \boldsymbol{X} ),\label{eq:partiallinearmodely}
\end{align}
where $g(s,\boldsymbol{x})$ and $f(\boldsymbol{x})$ are arbitrary functions. 
 Under the model  \eqref{eq:partiallinearmodelm}--\eqref{eq:partiallinearmodely} and standard identifiability assumptions (see details in Section 2 of \cite{hines2021robust}), \cite{hines2021robust} showed that $\mathrm{NIE}_{s \mid s^*}(s, \boldsymbol{x})=\mathrm{E}\left\{Y(s, M(s))-Y\left(s, M\left(s^*\right)\right) \mid \boldsymbol{X}=\boldsymbol{x}\right\}=\alphaS\betaM(s-s^*)$. 
\cite{hines2021robust} proposed  a G-estimator, i.e., obtains estimators of coefficients $(\hatalphaS,\hatbetaM)$ based on G-moment conditions. 
\cite{hines2021robust} showed that  
$\hatalphaS\hatbetaM$  is a consistent  and  asymptotic  normal estimator  for $\alphaS\betaM$ when \eqref{eq:partiallinearmodelm}--\eqref{eq:partiallinearmodely} hold and either
\begin{enumerate}
\setlength{\itemsep}{0pt}
    \item[(i)] the model for $f(\boldsymbol{X})$ is correctly  specified; 
    \item[(ii)] $g(S, \boldsymbol{X} )=\tau_S S + g( \boldsymbol{X} )$ and the models for  $ g( \boldsymbol{X} )$ and $\mathrm{E}(S\mid \boldsymbol{X})$ are both correctly specified. 
\end{enumerate}
Numerical studies in Figure 4 of \cite{hines2021robust} showed that 
when testing no-mediation hypothesis, 
all the testing methods are conservative, which is similar to the observations under linear models. 

When there exists exposure-mediator interaction in the outcome model, 
Section 8 of \cite{hines2021robust} briefly discussed the following mean outcome model\textcolor{red}{:} 
\begin{align}\label{eq:partialinter}
    \mathrm{E}(Y\mid S, M, \boldsymbol{X}) = \betaM M + \theta S M + \tau_S S + g(\boldsymbol{X}).
\end{align}
Under \eqref{eq:partiallinearmodelm} and \eqref{eq:partialinter} and standard identification  assumptions,  
 it is shown that 
\begin{align}
    \mathrm{NIE}_{s \mid s^*}(s, \boldsymbol{x})=\Psi(s)(s-s^*), \quad \text{where}\quad \Psi(s)=\alphaS(\betaM + \theta s ),\label{eq:nieinteraction}
\end{align}
which is the same as that under the classical additive linear model with interaction term; see, e.g., Eq. (15) in  \cite{imai2010identification}. 
Given a specific value of  $s$, we conjecture that classical  tests would be conservative  in the case of  both $\alphaS=0$ and $\betaM+\theta s =0$. 
To illustrate this conservatism empirically, we conducted a simulation experiment  under the model \eqref{eq:partiallinearmodelm} and \eqref{eq:nieinteraction} with $f = g = 0$. 
We use the \textsf{R} package \textsf{mediation} with nonparametric bootstrap to test no mediation effect $\Psi(s)=0$  at   $s=0$ and $s=1$ separately. 
Results in Figure \ref{fig:inter} aligns with  our conjecture and shows  clearly  that  the values of $(\alphaS, \betaM)$  leading to conservative performances    vary with respect to $s$.  
We anticipate that the adaptive bootstrap test could be extended while considering  a certain  value of $s$. 
 When the model functions are misspecified, 
the package of \cite{hines2021robust}  cannot be applied directly. 
It would be an interesting future direction to develop AB-type estimation and inference tools under the  partially linear  model with exposure-mediator  interactions.

\begin{figure}[!htbp]
    \centering 
\captionsetup{width=.96\linewidth} 
\begin{subfigure}{0.34\textwidth}
\includegraphics[width=1\linewidth]{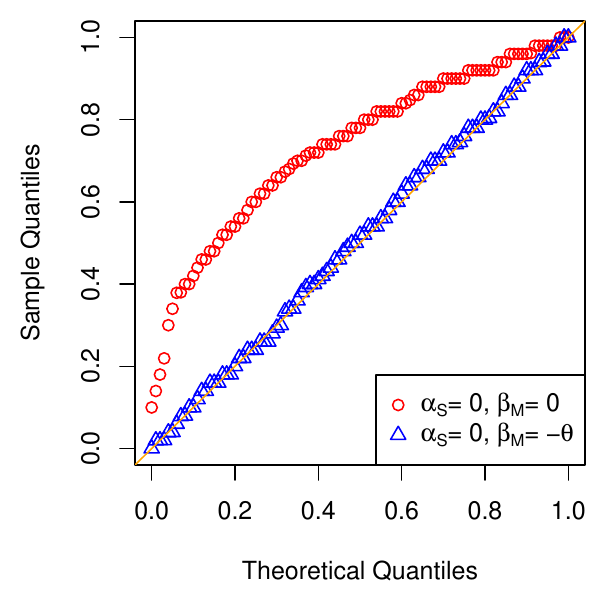}
       \caption{Test $H_0:\Psi(0)=0$.}
\end{subfigure}\quad 
\begin{subfigure}{0.34\textwidth}
\includegraphics[width=1\linewidth]{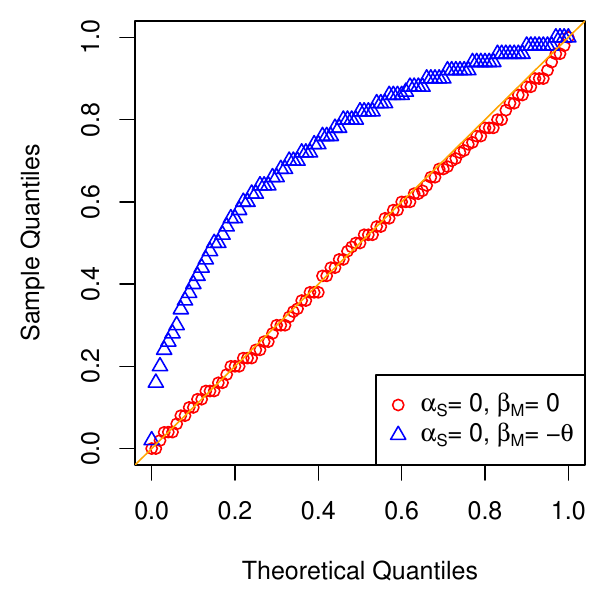}
   \caption{Test $H_0:\Psi(1)=0$.} 
\end{subfigure}
    \caption{QQ-plots of p-values    
    under the model with exposure-mediator interaction and $\theta=1$.} 
    \label{fig:inter}
\end{figure}

\bibliographystyle{apalike}
\bibliography{example.bib}
 
\end{document}